\documentclass[book,authoryear,11pt]{elsarticle}
\usepackage{amsmath,amssymb,epsfig,amsfonts,epsfig,graphicx}


\makeatletter
\def\ps@pprintTitle{
\let\@oddhead\@empty
\let\@evenhead\@empty
\let\@oddfoot\@empty
\let\@evenfoot\@oddfoot
}
\makeatother

\usepackage{amsmath,amssymb,epsfig,amsfonts,epsfig,graphicx}
\usepackage{epsfig}
\usepackage{epstopdf}
\usepackage{multicol}
\usepackage{amsthm}
\usepackage{color}
\usepackage[normalem]{ulem}
\usepackage{fancyhdr}



\newtheorem{theorem}{Theorem}[section]
\newtheorem{lemma}[theorem]{Lemma}
\newtheorem{corollary}{Corollary}[section]


\newcommand{\beq}{\begin{equation}}
\newcommand{\eeq}{\end{equation}}
\newcommand{\bea}{\begin{eqnarray}}
\newcommand{\eea}{\end{eqnarray}}
\newcommand{\beas}{\begin{eqnarray*}}
\newcommand{\eeas}{\end{eqnarray*}}



\oddsidemargin=0cm
\marginparsep=0cm
\marginparwidth=0cm
\footskip=0.75cm
\hoffset=0cm
\voffset=-1cm
\topmargin=0cm
\headheight=0cm
\headsep=0.5cm
\marginparpush=0cm
\textwidth=16cm
\textheight=24cm
\paperwidth=21cm
\paperheight=29.7cm




\makeatletter
\def\ps@pprintTitle{%
\let\@oddhead\@empty
\let\@evenhead\@empty
\let\@oddfoot\@empty
\let\@evenfoot\@oddfoot
}
\makeatother


\begin{document}

\pagestyle{fancy}
\lhead{}
\chead{}
\rhead{}
\lfoot{G. Carpentieri, R.E. Skelton, F. Fraternali, \textit{Minimum Mass Tensegrity Bridges}}
\cfoot{}
\rfoot{\thepage}
\renewcommand{\headrulewidth}{0.6pt}
\renewcommand{\footrulewidth}{0.6pt}

\begin{frontmatter}

\title{Parametric Design of Minimal Mass Tensegrity Bridges Under Yielding and Buckling Constraints}

\author{G.~Carpentieri}
\ead{gcarpentieri@unisa.it}
\address{Department of Civil Engineering, University of Salerno, 84084 Fisciano (SA), Italy}

\author{R.E.~Skelton}
\ead{bobskelton@ucsd.edu}
\address{UCSD, MAE, 9500 Gilman Dr., La Jolla, CA 92093, USA}

\author{F.~Fraternali}
\ead{f.fraternali@unisa.it}
\address{Department of Civil Engineering, University of Salerno, 84084 Fisciano (SA), Italy}

\begin{abstract}
This study investigates the use of the most fundamental elements; cables for tension and bars for compression, in the search for the most efficient bridges. Stable arrangements of these elements are called tensegrity structures. We show herein the minimal mass arrangement of these basic elements to satisfy both yielding and buckling constraints. We show that the minimal mass solution for a simply-supported bridge subject to buckling constraints matches Michell's 1904 study which treats the case of only yield constraints, even though our boundary conditions differ. The necessary and sufficient condition is given for the minimal mass bridge to lie totally above (or below) deck. Furthermore this condition depends only on material properties. If one ignores joint mass, and considers only bridges above deck level, the optimal complexity (number of elements in the bridge) tends toward infinity (producing a material continuum). If joint mass is considered then the optimal complexity is finite. The optimal (minimal mass) bridge below deck has the smallest possible complexity (and therefore cheaper to build), and under reasonable material choices, yields the smallest mass bridge.

\end{abstract}

\begin{keyword}
Tensegrity structures \sep form-finding \sep minimum mass \sep optimal complexity  \sep yielding \sep buckling

\end{keyword}
\end{frontmatter}

\pagebreak

\tableofcontents

\pagebreak

\section{Introduction}
\label{intro}

Tensegrity structures are axially loaded prestressable structures. Motivated by nature, where tensegrity concepts appear in every cell, in the molecular structure of the spider fiber, and in the arrangement of bones and tendons for control of locomotion in animals and humans, Engineers have only recently developed efficient analytical methods to exploit tensegrity concepts in engineering design. Previous attempts to judge the suitability of tensegrity for engineering purposes have simply evaluated the tensegrity produced as art-forms, but then judges them according to a different (engineering) criteria. The development of  "tensegrity engineering" should allow tensegrity concepts to be optimized for the particular engineering problem at hand, rather than simply evaluating a structure designed only for artistic objectives. The development of such tensegrity engineering methods is the continuing goal of our research. 

The tensegrity paradigm used for bridges in this study allows the marriage of composite structures within the design. Our tensegrity approach creates a network of tension and compressive members distributed throughout the system at many different scales (using tensegrity fractals generates many different scales). Furthermore, these tension and compression members can simultaneously serve multiple functions, as load-carrying members of the structure, and as sensing and actuating functions. Moreover, the choice of materials for each member of the network can form a system with special electrical properties, special acoustic properties, special mechanical properties (stiffness, etc). The mathematical tools of this study can be used therefore to design metamaterials and composite materials (cf., e.g., \cite{DaraioPRL10, Ngo12}) with unusual and very special properties not available with normal design methods.

This study focuses on bridge design for minimal mass. The subject of form-finding of tensegrity structures continues to be an active research area \citep{Koohestani2012,IanSmith2010b,Sakamoto2008,Sokof2012,Tibert2011,Yamamoto2011}, due to the special ability of such structures to serve as controllable systems (geometry, size, topology and prestress control), and
also because the tensegrity architecture provides minimum mass structures for a variety of loading conditions, \cite{Skelton2010a, Skelton2010b, Skelton2010c,Nagase2014}.
Particularly interesting is the use of fractal geometry as a form-finding method for tensegrity structures, which is well described in \cite{Skelton2010a, Skelton2010b, Skelton2010c,Fraternali2011}.
Such an optimization strategy exploits the use of fractal geometry to design tensegrity structures, through a finite or infinite number of self-similar subdivisions of  basic modules. The strategy looks for the optimal number of self-similar iterations to achieve minimal mass or other design criteria. This number is called the optimal \emph{complexity}, since this number fixes the total number of parts in the structure.

The self-similar tensegrity design presented in \cite{Skelton2010a, Skelton2010b, Skelton2010c} is primarily focused on  the generation of \textit{minimum mass} structures, which are of great technical relevance when dealing with tensegrity bridge structures (refer, e.g., to \cite{IanSmith2010a}).
The `fractal' approach to tensegrity form-finding paves the way to an effective implementation of the tensegrity paradigm in \textit{parametric architectural design} \citep{Sakamoto2008,IanSmith2010b,Phocas2012,Baker2013}.

Designing tensegrity for engineering objectives has produced minimal mass solutions for five fundamental (but planar) problems in engineering mechanics. Minimal mass for tensile structures, (subject to a stiffness constraint) was motivated by the molecular structure of spider fiber, and may be found in \citep{Skelton2012}. Minimal mass for compressive loads may be found in \citep{Skelton2010a}. Minimal mass for cantilevered bending loads may be found in \citep{Skelton2010b}. Minimal mass for torsional loads may be found in \citep{Skelton2010c}. Discussions of minimal mass solutions for distributed loads on simply-supported spans, where significant structure is not allowed below the roadway, may be found in \citep{Skelton2014}. 

This study finds the minimum mass design of tensegrity structures carrying simply supported and distributed bending loads. In   \cite{Skelton2014} numerical solutions where found for a specified topology, without any theoretical guarantees that those topologies produced minimal mass. This study provides more fundamental proofs that provide necessary and sufficient conditions for minimal mass.

It is also worth noting that tensegrity structures can serve multiple functions. While a cable is a load-carrying member of the structure, it might also serve as a sensor or actuator to measure or modify tension or length.  Other advantages of tensegrity structures  are related to
the possibility to integrate control functions within the design of the structure. A grand design challenge in tensegrity engineering is to coordinate the structure and control designs to minimize the control energy and produce a structure of minimal mass. This would save resources (energy and mass) in two disciplines, and therefore "integrate" the disciplines \citep{Skelton2002}.

The remainder of the study is organized as follows. Section 2 provides some basic knowledges on the mode of failure of tensile and compressive members. Section 3 describes the topology of the tensegrity bridge under examination. For a simply-supported structure of the simplest complexity,  Section 4 describes the minimal mass bridge when the admissible topology allows substructure and superstructure (that is, respectively, structure below and above the roadbed). Section 5 provides closed-form solutions to the minimal mass bridge designs (of complexity $n = 1$) when only sub- or super-structure is allowed. Section 6 first defines deck mass and provides closed-form solutions to the minimal mass bridge designs (of complexity $n, p = q = 1$) when only sub- or super-structure is allowed. This finalizes the proof that the minimal mass bridge is indeed the substructure bridge. Section 6 also adds joint mass and shows that the optimal complexity is finite. Conclusions are offered at the end.


\section{Properties of Tensile and Compressive Components of the Tensegrity Structure}

The tensegrity structures in this study will be composed of rigid compressive members called \emph{bars}, and elastic tensile members called \emph{cables}. We will assume that a tensile member obeys Hooke's law, 

\begin{align}
t_s=k(s-s_0),
\end{align}

where $k$ is cable stiffness, $t_s$ is tension in the cable, $s$ is the length of the cable, and $s_0<s$ is the rest length of the cable. The tension members cannot support compressive loads. For our purposes, a compressive member is a solid cylinder, called a bar. All results herein are trivially modified to accommodate pipes, tubes of any material, but the concepts are more easily demonstrated and the presentation is simplified by using the solid bar in our derivations.  The minimal mass of a cable with loaded length $s$, yield strength $\sigma_s$, mass density $\rho_s$, and maximal tension $t_s$ is

\beq
\label{m_s}
m_s=\frac{\rho_s}{\sigma_s}t_s s.
\eeq

To avoid yielding, a bar of length $b$, yield strength $\sigma_b$, mass density $\rho_b$ with compression force $f_b$, has the minimal mass

\beq
\label{m_b}
m_{b,Y}=\frac{\rho_b}{\sigma_b}f_b b.
\eeq

To avoid buckling, the minimal mass of a round bar of length $b$, modulus of elasticity $E_b$, and maximal force $f_b$ is 

\beq
\label{m_b}
m_{b,B}=2\rho_b b^2\sqrt{\frac{f_b}{\pi E_b}}.
\eeq

The actual mode of failure (buckling or yielding) of a compressive member can be identified by using the following well-know facts that give the basis to a correct design of the bar radius $r_b$. Define $r_Y$, the bar radius that satisfies yielding constraints, and $r_B$, the radius that satisfies buckling constraints, by

\begin{align}
r_Y=\sqrt{\frac{f_b}{\pi \sigma_b}}, \ \ \
r_B= \sqrt[4]{\frac{4 b^2 f_b}{\pi^3 E_b}}.
\end{align}

The following are well known facts:

\begin{lemma} \label{theo:yield_design_1}

Designs subject to only yield constraints (hence $r_b=r_Y$) fail to identify the actual mode of failure (buckling) if $ r_Y  < r_B$, or equivalently if,

\beq
\label{ry<rb}
\frac{f_b}{b^2} < \frac{4  \sigma_b^2}{\pi E_b} .
\eeq 

\end{lemma}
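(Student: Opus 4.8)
The plan is to read the statement as an algebraic equivalence supplemented by a short physical justification of the failure criterion. Since the design parameters $f_b$, $b$, $\sigma_b$ and $E_b$ are all strictly positive, the two radii $r_Y$ and $r_B$ are well defined and positive, so the comparison $r_Y<r_B$ is meaningful and every power manipulation below preserves the direction of the inequality. First I would pin down precisely what ``fails to identify buckling'' means, and then reduce $r_Y<r_B$ to the closed-form condition \eqref{ry<rb}.

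For the physical step, recall that a solid round bar of radius $r_b$ avoids yielding exactly when $f_b/(\pi r_b^2)\le\sigma_b$, i.e.\ when $r_b\ge r_Y$, and that it avoids Euler buckling exactly when the applied force does not exceed the critical load $\pi^3 E_b r_b^4/(4b^2)$, i.e.\ when $r_b\ge r_B$. A design that enforces only the yield constraint sets $r_b=r_Y$; such a bar is therefore safe against buckling if and only if $r_Y\ge r_B$. Consequently the yield-only design misses the true (buckling) failure mode precisely when $r_Y<r_B$, which is the first form of the criterion.

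It then remains to verify that $r_Y<r_B$ is equivalent to \eqref{ry<rb}. I would substitute the definitions of $r_Y$ and $r_B$, raise both (positive) sides to the fourth power to clear the radicals, obtaining $\bigl(f_b/(\pi\sigma_b)\bigr)^2<4b^2 f_b/(\pi^3 E_b)$, then cancel one factor of $f_b>0$ and rearrange to isolate $f_b/b^2$, which gives $f_b/b^2<4\sigma_b^2/(\pi E_b)$ directly. I do not expect any genuine obstacle here: the only points needing care are the positivity of all parameters, which guarantees that taking fourth powers and cancelling $f_b$ are order-preserving, and the correct bookkeeping of the powers of $\pi$, both of which are routine.
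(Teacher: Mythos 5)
Your proof is correct. The paper states this lemma without proof, introducing it as one of several ``well known facts,'' and your argument supplies exactly the justification it implicitly relies on: the yield-sized radius $r_Y=\sqrt{f_b/(\pi\sigma_b)}$ and the Euler-buckling radius $r_B=\sqrt[4]{4b^2f_b/(\pi^3E_b)}$ (the latter consistent with the paper's bar-mass formula $m_{b,B}=2\rho_b b^2\sqrt{f_b/(\pi E_b)}$), followed by the order-preserving fourth-power comparison and cancellation of $f_b>0$, which yields \eqref{ry<rb} exactly.
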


\begin{lemma} \label{theo:yield_design_2}

Designs subject to only yield constraints ($r_b=r_Y$) automatically also satisfy buckling constraints if $r_Y  > r_B$, or equivalently if,

\beq
\label{ry>rb}
\frac{f_b}{b^2} > \frac{4  \sigma_b^2}{\pi E_b}. 
\eeq 

\end{lemma}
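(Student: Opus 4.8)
The plan is to exploit the fact that $r_Y$ and $r_B$ are \emph{threshold} radii: a solid round bar avoids yielding precisely when its radius is at least $r_Y$, and it avoids buckling precisely when its radius is at least $r_B$. Once this monotonicity is established, the lemma collapses to a one-line comparison of the design radius $r_b=r_Y$ against $r_B$, followed by a routine algebraic rearrangement that recasts $r_Y>r_B$ in the stated material-and-force form.

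First I would record the two monotonicity statements. For yielding, the axial stress in a solid cylinder of radius $r$ carrying force $f_b$ is $f_b/(\pi r^2)$, which is strictly decreasing in $r$; requiring it not to exceed $\sigma_b$ is therefore equivalent to $r\geq r_Y$. For buckling, the Euler critical load of a pinned bar of length $b$ scales as $\pi^3 E_b r^4/(4b^2)$, which is strictly increasing in $r$, so requiring the applied force $f_b$ not to exceed it is equivalent to $r\geq r_B$. Both thresholds are exactly the radii obtained by inverting the mass expressions for $m_{b,Y}$ and $m_{b,B}$ through the cylinder relation $m=\rho_b\pi r^2 b$, which is how $r_Y$ and $r_B$ were defined.

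Next I would apply these facts to the yield design. Setting $r_b=r_Y$ makes the yield constraint active, and by the second monotonicity statement buckling is satisfied if and only if $r_b\geq r_B$, i.e.\ if and only if $r_Y\geq r_B$. Under the hypothesis $r_Y>r_B$ this holds strictly, so the yield-only design automatically clears the buckling threshold. This is exactly the complementary situation to Lemma~\ref{theo:yield_design_1}, in which $r_Y<r_B$ forced the yield design to fall short of the buckling requirement.

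Finally I would verify the equivalence of $r_Y>r_B$ with the displayed inequality. Because both radii are positive, I would raise each side to the fourth power, substitute the definitions of $r_Y$ and $r_B$, cancel the common factor $f_b$, and collect the powers of $\pi$ and the remaining constants; this turns $r_Y>r_B$ into $f_b/b^2>4\sigma_b^2/(\pi E_b)$. I do not anticipate a real obstacle: the argument is a pair of monotonicity observations together with elementary algebra, and its only delicate point is preserving the direction of the inequality through the fourth-power step, which is legitimate here because every quantity in sight is strictly positive.
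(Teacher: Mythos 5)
Your proof is correct. Note that the paper itself offers no proof of this lemma at all --- it is stated (together with Lemmas \ref{theo:yield_design_1}, \ref{theo:buckling_design_1}, and \ref{theo:buckling_design_2}) as a ``well known fact'' immediately after the definitions of $r_Y$ and $r_B$ --- so your argument simply supplies the standard justification the paper implicitly relies on: $r_Y$ and $r_B$ are threshold radii by the monotonicity of axial stress $f_b/(\pi r^2)$ and of the Euler load $\pi^3 E_b r^4/(4b^2)$ in $r$, hence the design $r_b=r_Y$ satisfies buckling iff $r_Y\geq r_B$, and raising to the fourth power and cancelling $f_b>0$ turns $r_Y>r_B$ into $f_b/b^2>4\sigma_b^2/(\pi E_b)$ exactly as you say.
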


\begin{lemma} \label{theo:buckling_design_1}

Designs subject to only buckling constraints ($r_b=r_B$) fail to identify the actual mode of failure (yielding) if $r_B  < r_Y$,  or equivalently if,

\beq
\label{rb<ry}
\frac{f_b}{b^2} > \frac{4  \sigma_b^2}{\pi E_b}.
\eeq 

\end{lemma}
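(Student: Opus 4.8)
The plan is to reduce the physical statement to an elementary algebraic equivalence between the condition $r_B < r_Y$ and the displayed inequality $f_b/b^2 > 4\sigma_b^2/(\pi E_b)$, following exactly the pattern used in Lemmas~\ref{theo:yield_design_1} and \ref{theo:yield_design_2}. First I would unpack the meaning of the phrase ``fails to identify the actual mode of failure (yielding).'' By definition, $r_Y$ is the smallest bar radius for which the member does not yield under the compressive force $f_b$, so any radius strictly below $r_Y$ leaves the bar unsafe against yielding. A design that enforces only the buckling constraint selects $r_b = r_B$. Hence, whenever $r_B < r_Y$, the buckling-safe radius is too small to prevent yielding, and a designer checking buckling alone misses the true (yielding) mode of failure. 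This establishes the physical content of the lemma and reduces the remaining work to proving that $r_B < r_Y$ is equivalent to the stated inequality.

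Next I would carry out the algebra. Since $f_b$, $b$, $E_b$, $\sigma_b$ are all positive, both $r_B$ and $r_Y$ are positive, so $r_B < r_Y$ holds if and only if $r_B^4 < r_Y^4$. From the definitions one has $r_B^4 = 4 b^2 f_b/(\pi^3 E_b)$ and $r_Y^4 = f_b^2/(\pi^2 \sigma_b^2)$. Dividing the inequality $r_B^4 < r_Y^4$ through by the positive quantity $f_b$ and rearranging the remaining factors of $\pi$, $b$, $\sigma_b$, and $E_b$ then yields $4\sigma_b^2/(\pi E_b) < f_b/b^2$, which is precisely the claimed condition. Reading the chain of equivalences in reverse gives the converse implication, and the proof is complete.

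There is no genuine analytic obstacle here: the argument is a one-line monotonicity comparison obtained by raising the two radii to the fourth power. The only point requiring care is the interpretive step, namely making explicit that $r_Y$ is the yield-safe threshold so that $r_B < r_Y$ corresponds to an \emph{undetected} yield failure. It is also worth observing that the resulting inequality coincides with the one in Lemma~\ref{theo:yield_design_2}; this is the expected mirror image, since when yielding governs, a yield-only design is automatically buckling-safe, whereas a buckling-only design necessarily overlooks yielding.
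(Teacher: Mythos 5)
Your proposal is correct: the paper states this lemma (like the other three in Section 2) as a well-known fact with no written proof, and your argument supplies exactly the justification that is implicit there — the interpretation of $r_Y$ as the minimal yield-safe radius plus the fourth-power comparison $r_B^4 = 4b^2 f_b/(\pi^3 E_b) < f_b^2/(\pi^2\sigma_b^2) = r_Y^4$, which rearranges to $f_b/b^2 > 4\sigma_b^2/(\pi E_b)$. Your closing observation that this inequality coincides with the one in Lemma~\ref{theo:yield_design_2} is also consistent with the paper's intent: the two lemmas are mirror statements of the same threshold condition.
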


\begin{lemma} \label{theo:buckling_design_2}

Designs subject to only buckling constraints ($r_b=r_B$) automatically also satisfy yielding constraints if $r_B > r_Y$, or equivalently if,

\beq
\label{rb>ry}
\frac{f_b}{b^2} < \frac{4  \sigma_b^2}{\pi E_b}.
\eeq 

\end{lemma}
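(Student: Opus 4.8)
The plan is to reduce the statement to the elementary observation that the yield-safe radius $r_Y$ is, by its very definition, the smallest radius for which the axial stress in the bar does not exceed $\sigma_b$. First I would recall why this is so: for a solid cylinder of radius $r_b$ carrying the compression $f_b$, the axial stress is $f_b/(\pi r_b^2)$, so the no-yield requirement $f_b/(\pi r_b^2)\le\sigma_b$ is equivalent to $r_b\ge r_Y$. The key structural fact here is monotonicity: increasing the radius only lowers the stress, so any radius at least as large as $r_Y$ is automatically yield-safe.

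Next I would specialize to a buckling design, in which the radius is fixed at $r_b=r_B$. By the monotonicity just noted, this design satisfies the yielding constraint precisely when $r_B\ge r_Y$, which is the geometric hypothesis of the lemma. Thus the only remaining task is to translate the radius inequality $r_B>r_Y$ into the stated force/length inequality.

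This last step is a short computation using the two definitions. Since both radii are positive, $r_B>r_Y$ is equivalent to $r_B^4>r_Y^4$. Substituting $r_B^4=\frac{4b^2 f_b}{\pi^3 E_b}$ and $r_Y^4=\frac{f_b^2}{\pi^2\sigma_b^2}$, cancelling the common positive factor $f_b$, and clearing the remaining denominators yields exactly $\frac{f_b}{b^2}<\frac{4\sigma_b^2}{\pi E_b}$, as claimed. I would also point out that this inequality is the logical complement of the one appearing in Lemma \ref{theo:buckling_design_1}, so the two results are mutually consistent and in fact cover complementary halves of the parameter space.

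I do not expect a genuine obstacle here, since the content is elementary algebra combined with one monotonicity observation. The only points requiring care are (i) making the link between ``$r_b\ge r_Y$'' and ``the yielding constraint is satisfied'' explicit rather than assuming it, and (ii) tracking strict versus non-strict inequalities, since the lemma is phrased with the strict relation $r_B>r_Y$ whereas the yield constraint is in fact already met at the borderline case $r_B=r_Y$.
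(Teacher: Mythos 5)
Your proposal is correct: the monotonicity observation (stress $f_b/(\pi r_b^2)$ decreases with radius, so $r_b \ge r_Y$ is exactly the yield-safe condition) plus the algebraic equivalence $r_B > r_Y \Leftrightarrow r_B^4 > r_Y^4 \Leftrightarrow f_b/b^2 < 4\sigma_b^2/(\pi E_b)$ is precisely the standard justification. The paper itself states this lemma (and its three companions) as a ``well known fact'' with no written proof, so your argument simply makes explicit the reasoning the paper takes for granted, and it does so without error.
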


\section{Planar Topologies of the Tensegrity Bridges Under Study}

The planar bridge topology is considered here to elucidate the fundamental properties that are important in the vertical plane. We use the following nomenclature, referring to Fig. \ref{basic_module}:

\begin{itemize}

\item A \emph{superstructure} bridge has no structure below the deck level.
\item A \emph{substructure} bridge has no structure above the deck level.
\item A \emph{nominal} bridge contains both \emph{substructure} and \emph{superstructure}.
\item $Y$ means the design was constrained against yielding for both cables and bars.
\item$B$ means the design was constrained against yielding for cables and buckling for bars.
\item $n$ means the number of self-similar iterations involved in the design ($n=1$ in Fig \ref{basic_module}, and $n \ge 1$ in Fig. \ref{fig:dual_exemplaries}).
\item $p$ means the complexity of each iteration in the substructure ($p=1$ in Fig \ref{basic_module}c, and $p\ge1$ in Fig \ref{fig:dual_exemplaries}).
\item $q$ means the complexity of each iteration in the superstructure ($q=1$ in Fig \ref{basic_module}b, and $q\ge1$ in Fig \ref{fig:dual_exemplaries}).
\item $\alpha$ is the aspect angle of the \emph{superstructure} measured from the horizontal.
\item $\beta$ is the aspect angle of the \emph{substructure} measured from the horizontal.
\end{itemize}

\noindent For a tensegrity bridge with generic complexities $n$, $p$ and $q$ (see Fig. \ref{fig:dual_exemplaries}), the total number of nodes $n_{n}$ of each topology is given by:

\bea
\label{nnodes}
n_{n} =\left( p + q \right) \left( 2^{n} - 1 \right) + 2^{n} + 1.
\eea

\noindent For the \emph{substructure} bridge (that is $q=0$), the number of bars $n_b$ and the number of cables $n_s$ are:

\bea
\label{nconn_sub}
n_{b} = p \left( 2^{n} - 1 \right), \ \ \ 
n_{s} = \left( p + 1 \right) \left( 2^{n} - 1 \right) + 2^{n}.
\eea

\noindent For the \emph{superstructure} bridge (that is $p=0$), the number of bars $n_b$ and the number of cables $n_s$ are:

\bea
\label{nconn_super}
n_{b} = \left( q + 1 \right) \left( 2^{n} - 1 \right), \ \ \ 
n_{s} = q \left( 2^{n} - 1 \right) + 2^{n}.
\eea

\noindent For the \emph{nominal} bridge, the number of bars $n_b$ and the number of cables $n_s$ are:

\bea
\label{nconn_combo}
n_{b} = \left( p + q + 1 \right) \left( 2^{n} - 1 \right), \ \ \
n_{s} = \left( p + q + 1 \right) \left( 2^{n} - 1 \right) + 2^{n}.
\eea

We define the \emph{superstructure} bridge of complexity ($n$, $p=0$, $q$) by Fig. \ref{fig:dual_exemplaries} where the \emph{substructure} below is deleted. We define the \emph{substructure} bridge of complexity ($n$, $p$, $q=0$) by Fig. \ref{fig:dual_exemplaries} where the \emph{superstructure} above is deleted. 

\begin{figure}[hbt] 
\unitlength1cm
\begin{picture}(12,4.25)
\put(0,-0.5){\psfig{figure=./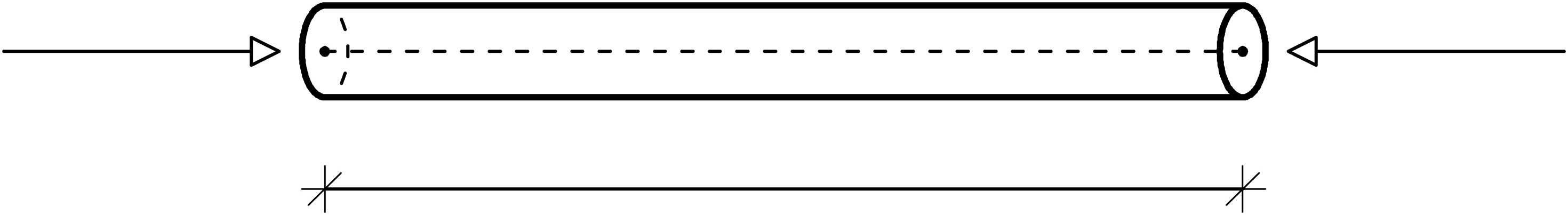,width=12cm}}
\put(0,2.5){\psfig{figure=./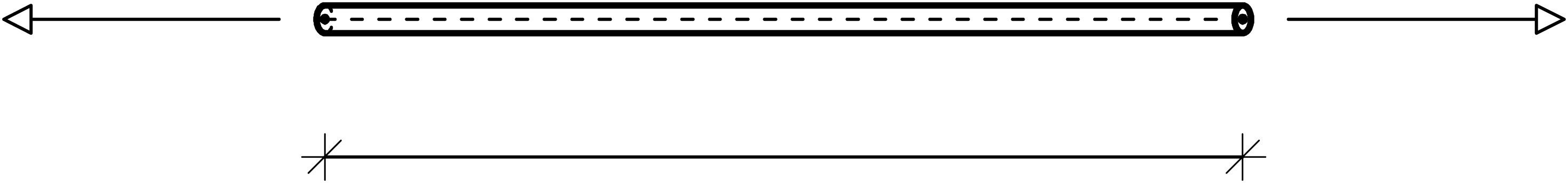,width=12cm}}
\put(5.5,4.25){{$\mbox{cable}$}}
\put(10.75,4){{$\mbox{$t_s$}$}}
\put(1,4){{$\mbox{$t_s$}$}}
\put(5.75,2.9){{$\mbox{$s$}$}}
\put(5.75,1.5){{$\mbox{bar}$}}
\put(10.75,1){{$\mbox{$f_b$}$}}
\put(1,1){{$\mbox{$f_b$}$}}
\put(5.75,-0.15){{$\mbox{$b$}$}}
\end{picture}
\medskip
\caption{Adopted notation for bars and cables of a tensegrity system.}
\label{tensegrity_model}
\end{figure}

\begin{figure}[hbt] 
\unitlength1cm
\scalebox{0.8}{
\begin{picture}(10,16)
\put(2.5,10.1){\psfig{figure=./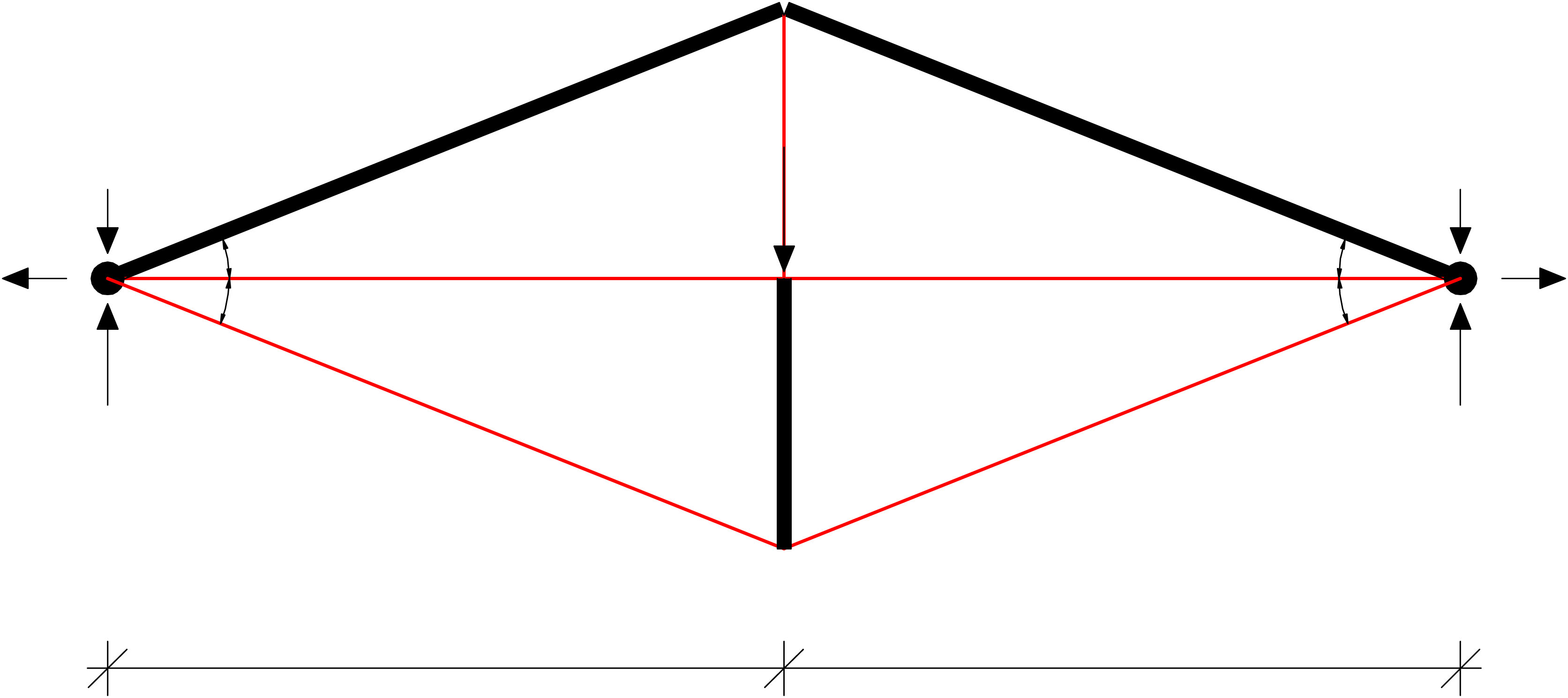,width=10.75cm}}
\put(2.5,5.1){\psfig{figure=./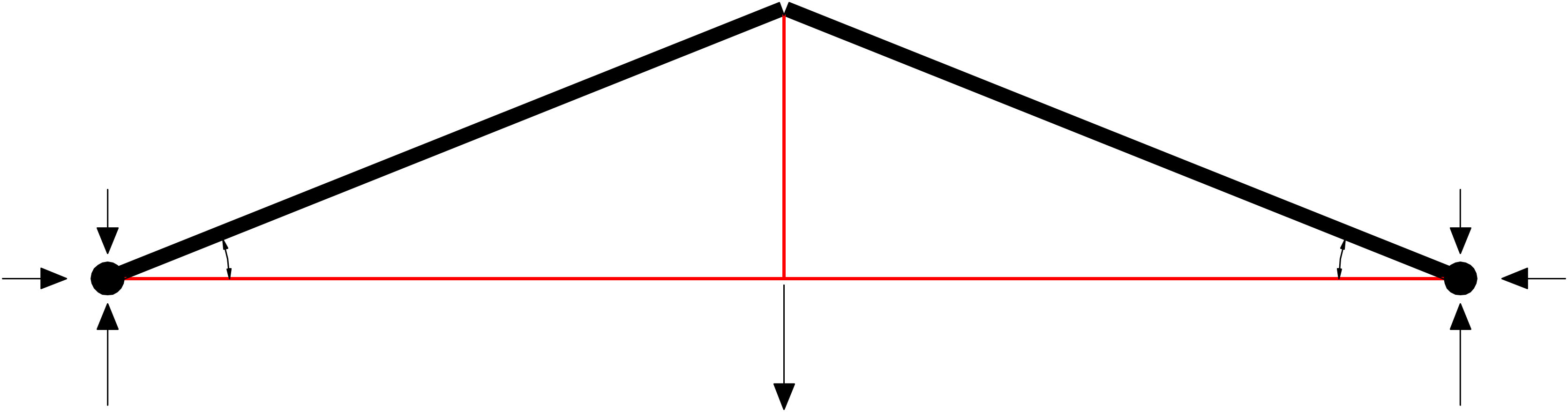,width=10.75cm}}
\put(2.5,0.1){\psfig{figure=./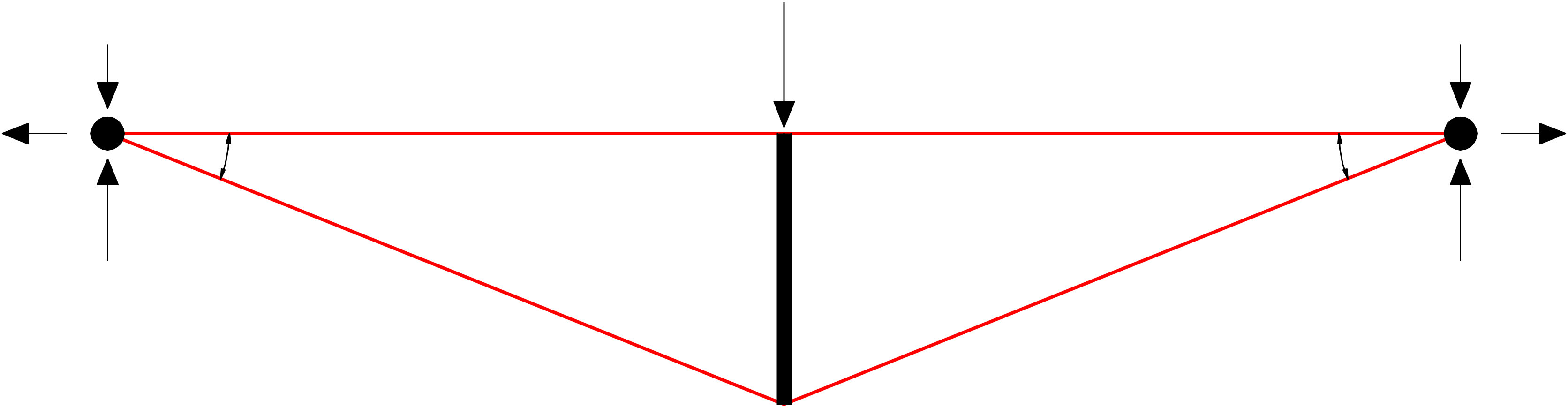,width=10.75cm}}
\put(2,15.5){{$\mbox{$a) \ nominal \ bridge$}$}}
\put(5.25,10.4){{$\mbox{$L/2$}$}}
\put(10.25,10.4){{$\mbox{$L/2$}$}}
\put(8.25,13.35){{$\mbox{$F/2$}$}}
\put(12.6,13.6){{$\mbox{$F/4$}$}}
\put(2.5,13.6){{$\mbox{$F/4$}$}}
\put(12.6,12){{$\mbox{$F/2$}$}}
\put(2.5,12){{$\mbox{$F/2$}$}}
\put(4.5,13.125){{$\mbox{$\alpha$}$}}
\put(11.25,13.125){{$\mbox{$\alpha$}$}}
\put(7.4,13.125){{$\mbox{$\textcircled{1}$}$}}
\put(12.8,12.5){{$\mbox{$\textcircled{2}$}$}}
\put(2.6,12.5){{$\mbox{$\textcircled{3}$}$}}
\put(7.5,15.125){{$\mbox{$\textcircled{4}$}$}}
\put(7.5,10.75){{$\mbox{$\textcircled{5}$}$}}
\put(4.5,12.6){{$\mbox{$\beta$}$}}
\put(11.25,12.6){{$\mbox{$\beta$}$}}
\put(10.25,11.6){{$\mbox{$s_3$}$}}
\put(5.5,11.6){{$\mbox{$s_3$}$}}
\put(10.25,14.25){{$\mbox{$b_1$}$}}
\put(5.5,14.25){{$\mbox{$b_1$}$}}
\put(8.2,14.15){{$\mbox{$s_2$}$}}
\put(8.2,12){{$\mbox{$b_2$}$}}
\put(10.25,12.6){{$\mbox{$s_1$}$}}
\put(5.5,12.6){{$\mbox{$s_1$}$}}
\put(2,12.85){{$\mbox{$w_x$}$}}
\put(13.3,12.85){{$\mbox{$w_x$}$}}
\put(2,8.5){{$\mbox{$b) \ superstructure $}$}}
\put(8.25,5.25){{$\mbox{$F/2$}$}}
\put(12.6,6.5){{$\mbox{$F/4$}$}}
\put(2.5,6.5){{$\mbox{$F/4$}$}}
\put(12.6,5){{$\mbox{$F/2$}$}}
\put(2.5,5){{$\mbox{$F/2$}$}}
\put(7.4,6.125){{$\mbox{$\textcircled{1}$}$}}
\put(12.8,5.5){{$\mbox{$\textcircled{2}$}$}}
\put(2.6,5.5){{$\mbox{$\textcircled{3}$}$}}
\put(7.4,8){{$\mbox{$\textcircled{4}$}$}}
\put(4.5,6.1){{$\mbox{$\alpha$}$}}
\put(11.25,6.1){{$\mbox{$\alpha$}$}}
\put(8.2,6.6){{$\mbox{$s_2$}$}}
\put(10.25,5.6){{$\mbox{$s_1$}$}}
\put(5.5,5.6){{$\mbox{$s_1$}$}}
\put(10.25,7.25){{$\mbox{$b_1$}$}}
\put(5.5,7.25){{$\mbox{$b_1$}$}}
\put(2,5.75){{$\mbox{$w_x$}$}}
\put(13.3,5.75){{$\mbox{$w_x$}$}}
\put(2,3.5){{$\mbox{$c) \ substructure$}$}}
\put(8.25,2.25){{$\mbox{$F/2$}$}}
\put(12.6,2.6){{$\mbox{$F/4$}$}}
\put(2.5,2.6){{$\mbox{$F/4$}$}}
\put(12.6,1){{$\mbox{$F/2$}$}}
\put(2.5,1){{$\mbox{$F/2$}$}}
\put(7.4,1.5){{$\mbox{$\textcircled{1}$}$}}
\put(12.8,1.5){{$\mbox{$\textcircled{2}$}$}}
\put(2.6,1.5){{$\mbox{$\textcircled{3}$}$}}
\put(7.4,-0.25){{$\mbox{$\textcircled{5}$}$}}
\put(4.5,1.55){{$\mbox{$\beta$}$}}
\put(11.25,1.55){{$\mbox{$\beta$}$}}
\put(10.25,0.65){{$\mbox{$s_3$}$}}
\put(5.5,0.65){{$\mbox{$s_3$}$}}
\put(10.25,2.125){{$\mbox{$s_1$}$}}
\put(5.5,2.125){{$\mbox{$s_1$}$}}
\put(8.25,1){{$\mbox{$b_2$}$}}
\put(2,1.75){{$\mbox{$w_x$}$}}
\put(13.3,1.75){{$\mbox{$w_x$}$}}
\end{picture}}
\caption{Basic modules of the tensegrity bridge with: a) nominal bridge: $n = q = p = 1$; b) superstructure: $n = q = 1$; c) substructure: $n = p = 1$.}
\label{basic_module}
\end{figure}

\begin{figure}[hbt]
\unitlength1cm
\scalebox{0.8}{
\begin{picture}(14,16)
\put(0,-0.5){\psfig{figure=./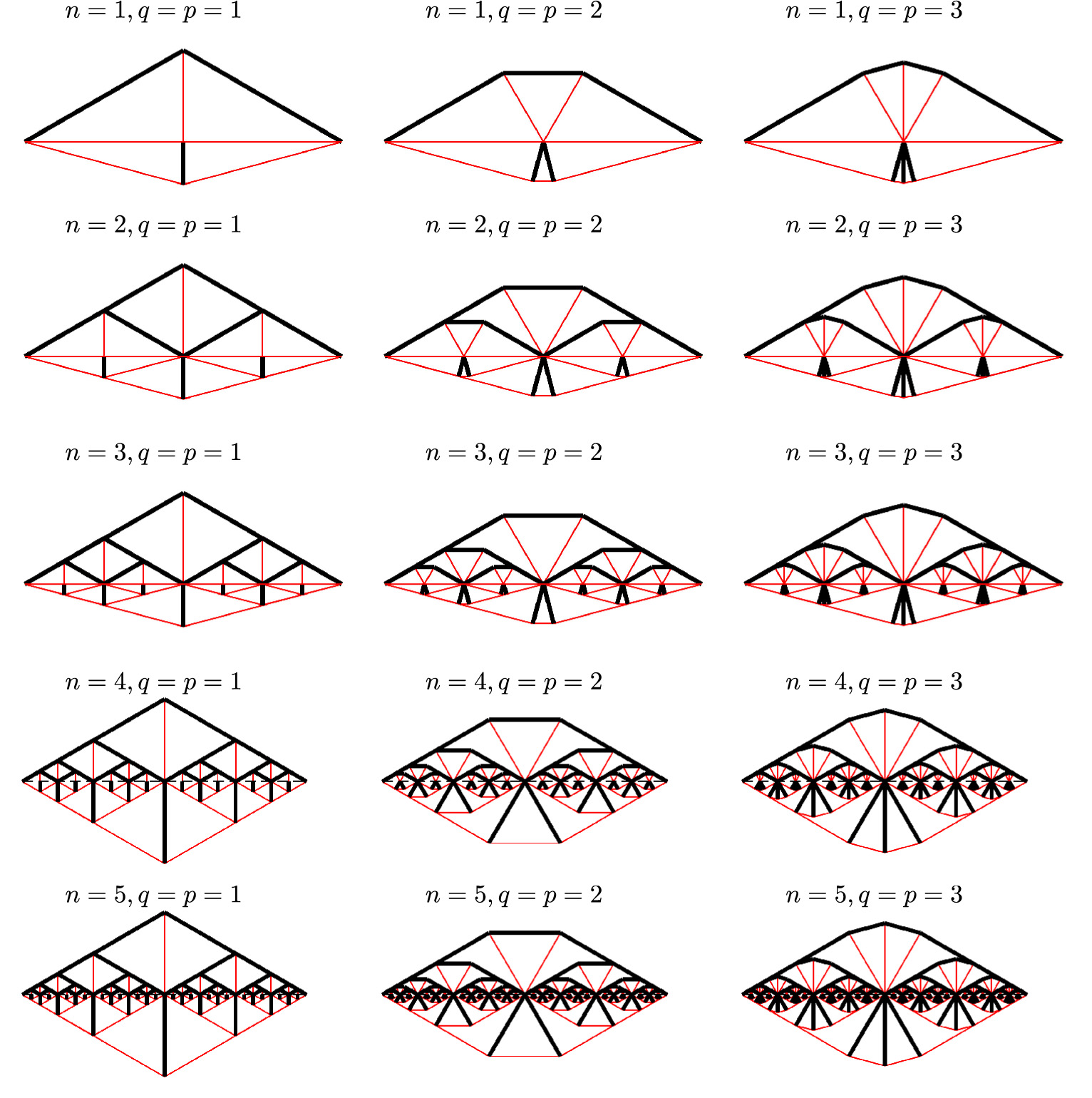,height=16cm}}
\end{picture}}
\caption{Exemplary geometries of the nominal bridges for different values of the complexity parameters $n$ (increasing downward) and $q$ (increasing leftward).}
\label{fig:dual_exemplaries}
\end{figure}

\begin{figure}[hbt]
\unitlength1cm
\scalebox{0.8}{
\begin{picture}(14,12.5)
\put(-0.5,-0.5){\psfig{figure=./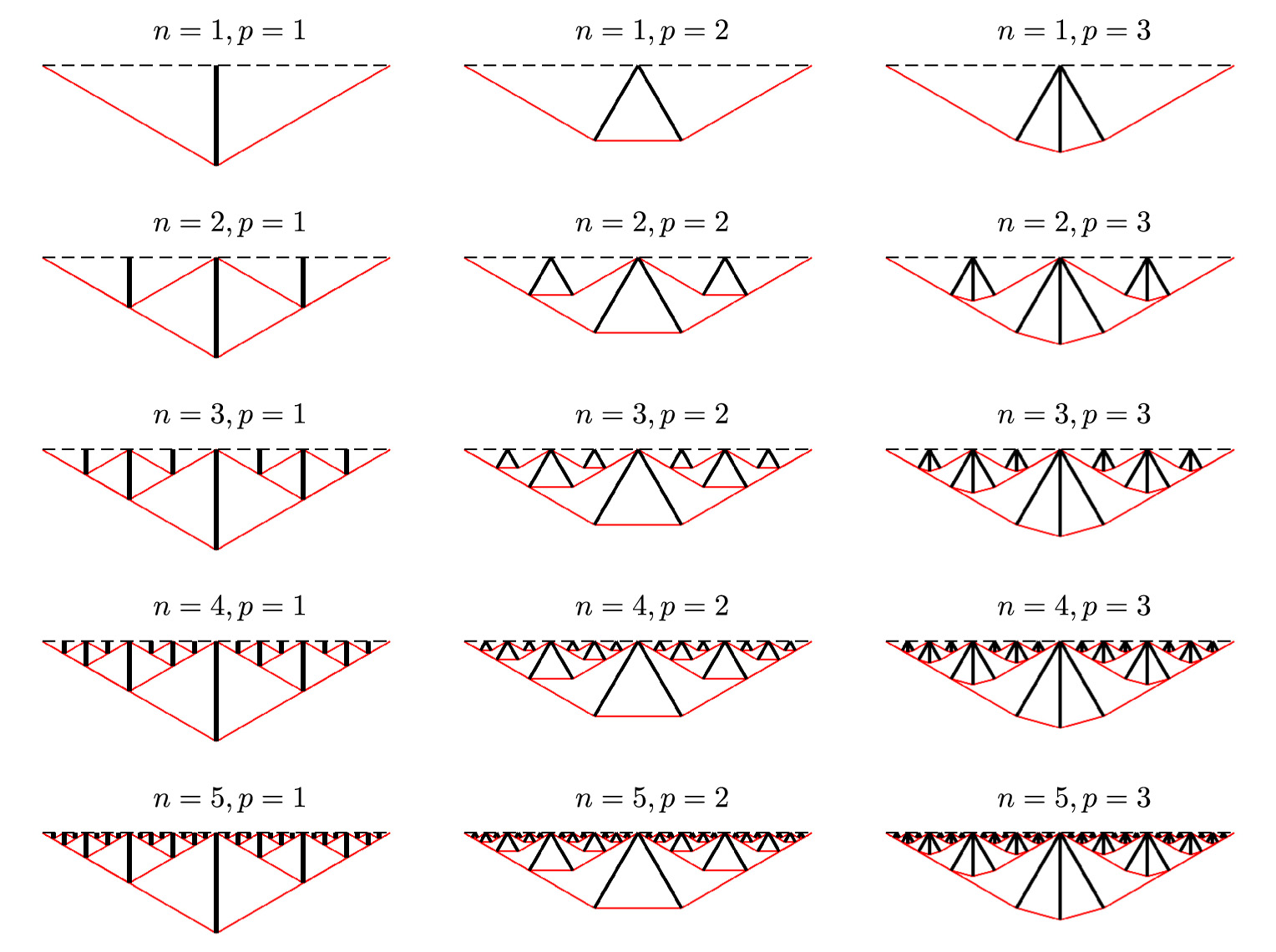,height=12.5cm}}
\end{picture}}
\caption{Exemplary geometries of the substructures for different values of the complexity parameters $n$ (increasing downward) and $p$ (increasing leftward).}
\label{fig:substructures_exemplaries}
\end{figure}

\begin{figure}[hbt]
\unitlength1cm
\scalebox{0.8}{
\begin{picture}(14,12.5)
\put(-0.5,-0.5){\psfig{figure=./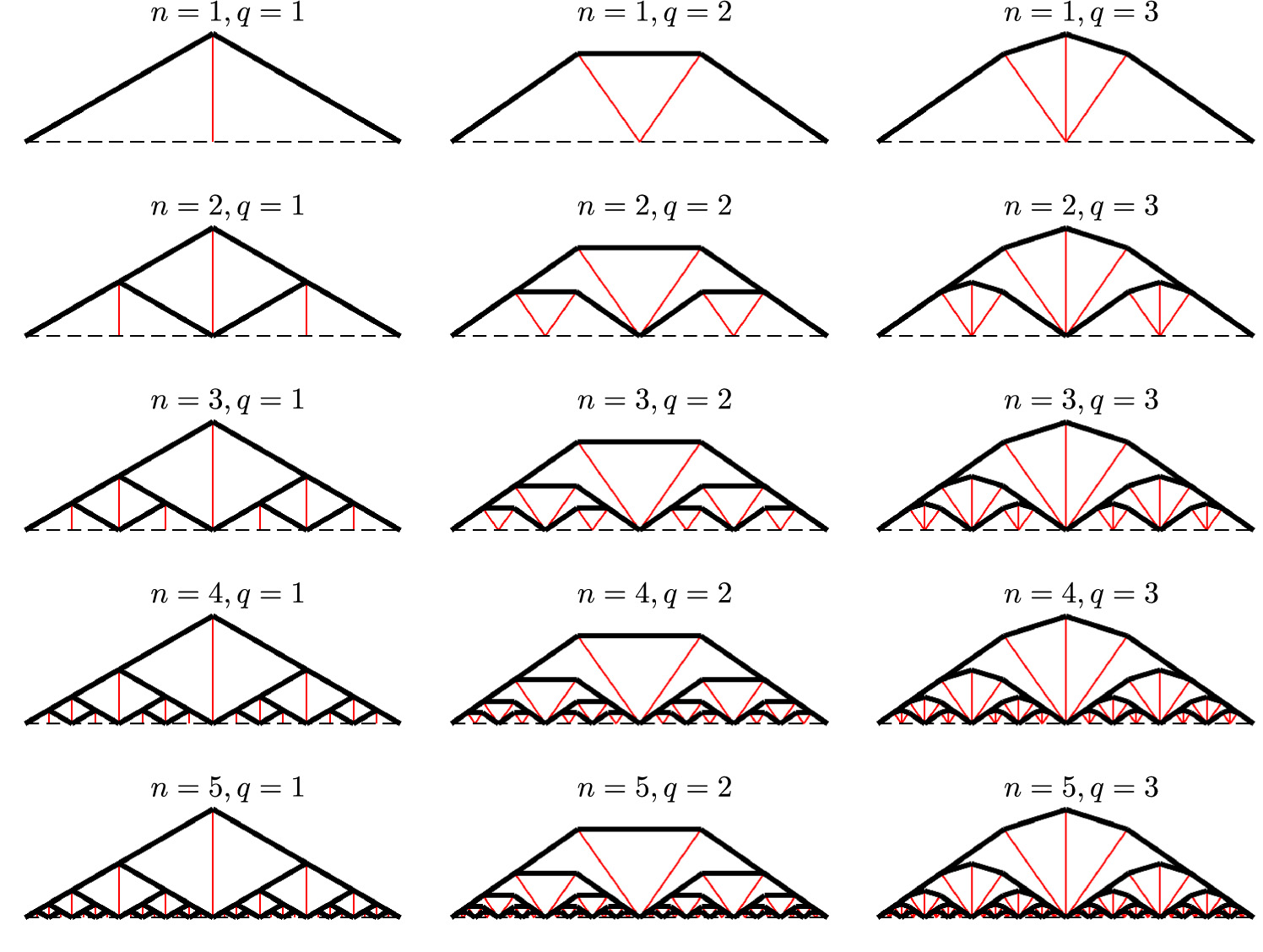,height=12.5cm}}
\end{picture}}
\caption{Exemplary geometries of the superstructures for different values of the complexity parameters $n$ (increasing downward) and $q$ (increasing leftward).}
\label{fig:superstructures_exemplaries}
\end{figure}


\section{Analysis of the Basic Modules ($n=1$, $p=1$ or $0$, $q=1$ or $0$)}

We first will examine the simplest of bridge concepts, as in Fig \ref{basic_module}.
Consider, first, the \emph{nominal} bridge, subject to yield constraints, with complexity $(n,p,q)=(1,1,1)$. This configuration, described by Fig \ref{basic_module}a, is composed of 5 cables and 3 bars. Let the bottom end of each compressive member above the deck be constrained by a hinge boundary condition, so as to allow rotation but not translation. Define $F$ as the total applied load, and $L$ as the span. All cables use the same material, and all bars use the same material. It will be convenient to define the following constants:

\bea
\rho = \frac{{\rho_b}/{\sigma_b}}{{\rho_s}/{\sigma_s}},
\label{rho}
\eea

\bea
\eta = \frac{\rho_b L}{\left( \rho_s/\sigma_s \right) \sqrt{\pi E_b F}}.
\label{eta}
\eea

\noindent Define a normalization of the system mass $m$ by the dimensionless quantity $\mu$:

\bea
\mu = \frac{m}{\left( \rho_s/\sigma_s \right) F L},
\label{mu}
\eea

\noindent where the mass $m$ at the yield condition is:

\bea
m=\frac{\rho_b}{\sigma_b}\sum{f_i b_i}+\frac{\rho_s}{\sigma_s}\sum{t_i s_i},
\label{mass}
\eea

\noindent where ($b_i$,$s_i$) is respectively the length of the $i^{th}$ bar or $i^{th}$ cable, and respectively ($f_i$,$t_i$) is the force in the $i^{th}$ bar or cable.

\noindent The mass of the \emph{nominal} bridge will be minimized over the choice of angles $\alpha$ and $\beta$. The lengths of the members are:

\begin{align}
\label{n1q1p1:topology}
\nonumber
s_1&=\frac{L}{2},\ \ \ s_2=\frac{L}{2} \tan \alpha,\ \ \ s_3=\frac{L}{2 \cos \beta} = \frac{L}{2} \sqrt{1+\tan^2 \beta}, \\ 
b_1& = \frac{L}{2 \cos \alpha} = \frac{L}{2} \sqrt{1+\tan^2 \alpha}, \ \ \ b_2 = \frac{L}{2} \tan \beta.  
\end{align}

\noindent The equilibrium equations at each node are:

\bea
t_1 + t_3 \cos \beta = w_{x} + f_1 \cos \alpha, \nonumber \\ 
\frac{F}{4} = f_1 \sin \alpha + t_3 \sin \beta, \nonumber \\ 
t_2 = 2 f_1 \sin \alpha, \nonumber \\ 
f_2 = 2 t_3 \sin \beta, \nonumber \\ 
\frac{F}{2} = t_2 + f_2.
\label{n1q1p1:equilibrium}
\eea

\noindent This system of equations  can be solved, choosing $t_1$ and $t_3$ are free independent parameters:

\bea
\frac{f_1}{F}=\frac{\sqrt{1+\tan^2 \alpha}}{4\tan\alpha}(1-\frac{t_3}{F}\frac{4\tan\beta}{\sqrt{1+\tan^2 \beta}}), \nonumber \\
\frac{f_2}{F}=\frac{t_3}{F}\frac{2 \tan\beta}{\sqrt{1+\tan^2 \beta}}, \nonumber \\
\frac{t_2}{F}=\frac{1}{2} -\frac{t_3}{F} \frac{2\tan\beta}{\sqrt{1+tan^2\beta}}\label{t2}, \nonumber \\
\frac{w_x}{F}=\frac{t_1}{F}+\frac{t_3}{F}\frac{\tan\alpha+\tan\beta}{\tan\alpha\sqrt{1+\tan^2\beta}}-\frac{1}{4\tan\alpha}.
\label{forces_t1_t3}
\eea

\subsection{Nominal Bridges under Yielding Constraints}
\label{n1q1p1_model}

\begin {theorem} \label{NY11}
Given the \emph{nominal} bridge with complexity $(n,p,q)=(1,1,1)$ (described in Fig. \ref{basic_module}a), with attendant data (\ref{n1q1p1:topology}), the minimal mass can be expressed in terms of independent variables $t_1$ and $t_3$:

\bea
\mu_{Y}(t_1,t_3) = \frac{t_1}{F} + \frac{t_3}{F} c_3\left(\alpha, \beta, \rho\right) +\frac{b_\alpha}{4}\label{MUY},
\label{muY_t1_t3}
\eea

\noindent where:

\bea
c_3(\alpha,\beta,\rho)=\frac{(1+\rho)\tan^2\beta-b_{\alpha}\tan\beta+1}{\sqrt{1+\tan^2\beta}},\qquad b_{\alpha}=\frac{\rho+(1+\rho)\tan^2 \alpha}{\tan\alpha}. \label{c_3}
\eea

\end{theorem}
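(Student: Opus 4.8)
The plan is to obtain (\ref{muY_t1_t3}) by direct substitution into the mass functional (\ref{mass}) and then collecting terms, treating $t_1$ and $t_3$ as the free parameters exactly as in (\ref{forces_t1_t3}). First I would write out the normalized mass $\mu$ of (\ref{mu}) explicitly in terms of the member forces and lengths. Accounting for the multiplicities in Fig.~\ref{basic_module}a (two cables of each type $s_1$ and $s_3$, one cable $s_2$, two bars $b_1$, one bar $b_2$), and inserting the lengths (\ref{n1q1p1:topology}) (each of which carries the common factor $L/2$ that cancels the $L$ in the normalization), together with the constant $\rho$ from (\ref{rho}), gives
\beq
\mu = \frac{1}{2}\left[\rho\left(2\frac{f_1}{F}\sqrt{1+\tan^2\alpha} + \frac{f_2}{F}\tan\beta\right) + 2\frac{t_1}{F} + \frac{t_2}{F}\tan\alpha + 2\frac{t_3}{F}\sqrt{1+\tan^2\beta}\right].
\eeq

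The second step is to eliminate the dependent forces $f_1$, $f_2$, and $t_2$ using the equilibrium solution (\ref{forces_t1_t3}). After this substitution $\mu$ is an affine function of $t_1/F$ and $t_3/F$, so I would regroup it as $\mu = (t_1/F) + (t_3/F)\,C + K$ and read off the three pieces: the coefficient of $t_1/F$ is manifestly $1$; the constant $K$ collects the $\alpha$-dependent terms surviving at $t_3 = 0$ (namely $2f_1\sqrt{1+\tan^2\alpha}$ and $t_2\tan\alpha$ with $f_1/F = \sqrt{1+\tan^2\alpha}/(4\tan\alpha)$ and $t_2/F = 1/2$); and the coefficient $C$ collects everything proportional to $t_3/F$.

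The core of the argument is then the two simplifications that identify $K$ and $C$ with the stated expressions. For the constant term, using $\big(\sqrt{1+\tan^2\alpha}\big)^2 = 1+\tan^2\alpha$ I expect $K = \tfrac14\big(\rho(1+\tan^2\alpha)+\tan^2\alpha\big)/\tan\alpha$, which is exactly $b_\alpha/4$ with $b_\alpha$ as in (\ref{c_3}). For the coefficient of $t_3/F$, I would place every term over the common denominator $\sqrt{1+\tan^2\beta}$, use $1+\tan^2\beta$ to fold the bare $2t_3\sqrt{1+\tan^2\beta}$ term into the numerator, and collect by powers of $\tan\beta$. The quadratic and constant coefficients give $(1+\rho)\tan^2\beta$ and $1$ immediately; the key observation is that the linear-in-$\tan\beta$ coefficient equals $-\big(\rho(1+\tan^2\alpha)/\tan\alpha + \tan\alpha\big)$, which factors as $-b_\alpha$ via the same identity $\rho(1+\tan^2\alpha)+\tan^2\alpha = \tan\alpha\,b_\alpha$. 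This yields $C = c_3(\alpha,\beta,\rho)$ as in (\ref{c_3}) and completes the derivation.

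The computation is entirely mechanical; the only real obstacle is bookkeeping. The one step that requires genuine care—and is the linchpin of the identity—is recognizing that the same combination $b_\alpha = \big(\rho+(1+\rho)\tan^2\alpha\big)/\tan\alpha$ appears both as the constant $b_\alpha/4$ and as the linear coefficient inside $c_3$; tracking the $\alpha$-terms carefully through the substitution of $f_1$ and $t_2$ into the $t_3$-dependent part is what makes this coincidence visible.
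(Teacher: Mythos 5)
Your proposal is correct and is exactly the paper's argument: the paper's own proof is the single remark that the theorem ``follows the mass calculation in (\ref{mu}), (\ref{mass}) after substituting the equilibrium forces given by (\ref{forces_t1_t3}),'' which is precisely the computation you carry out, with the right member multiplicities and lengths from (\ref{n1q1p1:topology}). Your bookkeeping checks out: the constant term reduces to $b_\alpha/4$, and the coefficient of $t_3/F$ collects to $\bigl((1+\rho)\tan^2\beta-b_\alpha\tan\beta+1\bigr)/\sqrt{1+\tan^2\beta}=c_3$, as claimed.
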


\noindent An alternate expression for the mass can be written by substituting  the relation between $t_2$ and $t_3$ from (27), to get an equivalent expression $\mu_Y(t_1,t_2)=\mu_Y(t_1,t_3)$, where:

\begin{align}
 \frac{t_3}{F}&=\frac{\sqrt{1+\tan^2\beta}(1-2t_2 /F)}{4\tan\beta}, \label{n1qp1_t2_t3} \\
 \mu_Y(t_1,t_2)&=\frac{t_1}{F}+\frac{t_2}{F}c_2(\alpha,\beta,\rho)+\frac{(1+\rho)\tan^2 \beta +1}{4\tan\beta},\\
 c_2(\alpha,\beta,\rho)&=-c_3\frac{\sqrt{1+tan^2\beta}}{2\tan\beta}\qquad =-\frac{(1+\rho)tan^2\beta-b_\alpha \tan\beta +1}{2\tan\beta} \label{c_2}.
 \end{align}

\noindent Hence it follows that the minimal mass solution requires $t_3>0$ if and only if  $c_3<0$ (equivalently $c_2>0$). Note also that $c_3<0$ if and only if:

\bea
\frac{1+(1+\rho)\tan^2 \beta}{\tan\beta}<\frac{\rho+(1+\rho)\tan^2 \alpha}{\tan\alpha}.
\eea

\noindent Conversely, minimal mass requires $t_3=0$ if $c_3>0$ (equivalently $c_2<0$). This event occurs if and only if:

\bea
\frac{1+(1+\rho)\tan^2 \beta}{\tan\beta}>\frac{\rho+(1+\rho)\tan^2 \alpha}{\tan\alpha}.
\eea

\noindent Finally, $c_3=0$ (and also $c_2=0$) if and only if:

\bea
\frac{1+(1+\rho)\tan^2 \beta}{\tan\beta}=\frac{\rho+(1+\rho)\tan^2 \alpha}{\tan\alpha}.
\eea

\noindent Note also that the requirement that $t_2$ and $t_3$ both be non-negative values limits the feasible range of $t_3$ such that:

\bea
0 \le t_3 \le \frac{F\sqrt{1+\tan^2\beta}}{4\tan\beta}.
\eea

\noindent Given the relation between $t_2$ and $t_3$ in (\ref{n1qp1_t2_t3}) we have the corresponding feasible range for $t_2$:

\bea
0 \le t_2 \le \frac{F}{2}.
\eea

\noindent The proof of the theorem follows the mass calculation in (\ref{mu}), (\ref{mass}) after substituting the equilibrium forces given by (\ref{forces_t1_t3}).

\begin{corollary} \label{theo:n1q1:min_yielding}

Consider a \emph{superstructure} bridge with complexity $(n,p,q) = (1,0,1)$ (topology is defined by Fig. \ref{basic_module}b). The minimal mass $\mu_Y$ requires the following aspect angle:

\bea
\alpha^*_{Y} =\arctan \left( \sqrt{ \frac{\rho}{1+\rho} } \right),
\label{n1q1:alphastY}
\eea

\noindent which corresponds to the following dimensionless minimal mass: 

\bea
\mu_{Y}^{*} =\frac{1}{2} \sqrt{ \rho \left( 1+\rho \right) }. 
\label{n1q1:mustY}
\eea

\end{corollary}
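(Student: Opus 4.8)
The plan is to obtain the superstructure result as a direct specialization of Theorem \ref{NY11}. Since the \emph{superstructure} bridge has $p=0$, the substructure members (cable $s_3$ and bar $b_2$) are absent, which forces $t_3=0$ in the nominal mass expression \ref{muY_t1_t3}. The mass then collapses to $\mu_Y = t_1/F + b_\alpha/4$, a function of the single free tension $t_1$ and the aspect angle $\alpha$ entering through $b_\alpha$ defined in \ref{c_3}.

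Next I would minimize over $t_1$. Because cables carry only tension we have $t_1 \ge 0$, and since $t_1$ enters $\mu_Y$ linearly with positive coefficient, the optimum is $t_1=0$. I would confirm admissibility by checking that setting $t_1=t_3=0$ in the equilibrium solution \ref{forces_t1_t3} leaves all remaining forces consistent with the sign conventions of a tensegrity (in particular $t_2=F/2>0$ and $f_1>0$), so that the zero-$t_1$ design is genuinely feasible and not merely mass-reducing. This reduces the problem to minimizing $\mu_Y=b_\alpha/4$ over $\alpha$ alone.

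The final step is the scalar optimization of $b_\alpha/4$. Writing $x=\tan\alpha$ gives $\mu_Y = \tfrac14\bigl(\rho/x + (1+\rho)x\bigr)$, which is strictly convex in $x>0$. Setting its derivative to zero yields $x^2=\rho/(1+\rho)$, i.e.\ the stated optimal angle \ref{n1q1:alphastY}; positivity of the second derivative (equivalently, convexity) certifies this is the global minimum. Evaluating at this $x$, or invoking the AM--GM inequality on the two terms $\rho/x$ and $(1+\rho)x$, gives the minimal mass \ref{n1q1:mustY}, namely $\tfrac12\sqrt{\rho(1+\rho)}$.

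I do not expect any deep obstacle here: once the topology is specialized, the result is essentially a one-variable calculus problem. The only points requiring care are the two reductions that precede the calculus, namely justifying $t_3=0$ from the absence of the substructure and verifying through \ref{forces_t1_t3} that $t_1=0$ is admissible, so that the unconstrained minimizer over $t_1$ actually lies in the feasible region rather than on an excluded boundary.
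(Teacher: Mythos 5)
Your proposal is correct and follows essentially the same route as the paper: specialize Theorem \ref{NY11} by setting $t_3=0$ (absent substructure) and $t_1=0$ (positive coefficient), then minimize the resulting single-variable expression $\mu_Y=\frac{1}{4}\left(\rho/\tan\alpha+(1+\rho)\tan\alpha\right)$ over $\tan\alpha$ to obtain \ref{n1q1:alphastY} and \ref{n1q1:mustY}. Your added feasibility check via \ref{forces_t1_t3} and the convexity/AM--GM certification are small refinements of rigor, not a different argument.
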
 \label{theo:n1q1:min_yielding}

\begin{proof}
The mass of the superstructure can be obtained from Theorem (\ref{NY11}) by setting $t_1 = 0$ since its coefficient is positive, and $t_3 = 0$ since the cable $s_3$ is absent. Thus,

\bea
\mu_{Y}= \frac{\tan \alpha}{4} + \rho \frac{ \left( 1+\tan^2 \alpha \right) }{4 \tan \alpha}.
\label{n1q1:muy}
\eea

\noindent This function has a unique minimum satisfying, 

\bea
\frac{\partial \mu_{Y}}{\partial\tan \alpha}= \frac{\tan^2 \alpha+\rho \left(\tan^2 \alpha -1 \right)}{4\tan^2 \alpha}=0,
\label{n1q1:dmuy}
\eea

\noindent producing the unique optimal angle (\ref{n1q1:alphastY}). Substituting this angle into (\ref{n1q1:muy}) concludes the proof.

\end{proof}

\begin{figure}[hb]
\unitlength1cm
\begin{picture}(10,5)
\put(0.25,1){\psfig{figure=./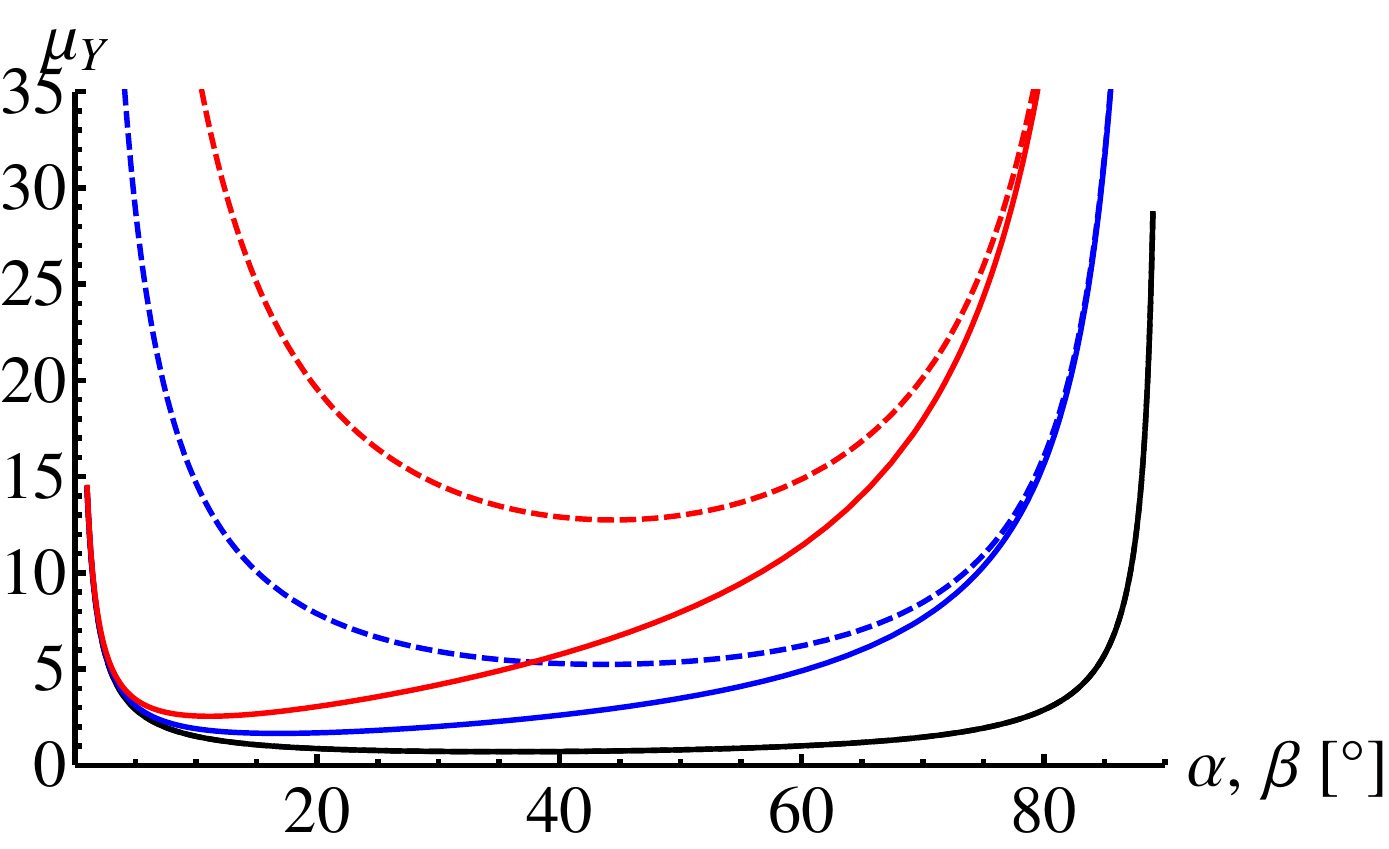,height=4.5cm}}
\put(8.25,1){\psfig{figure=./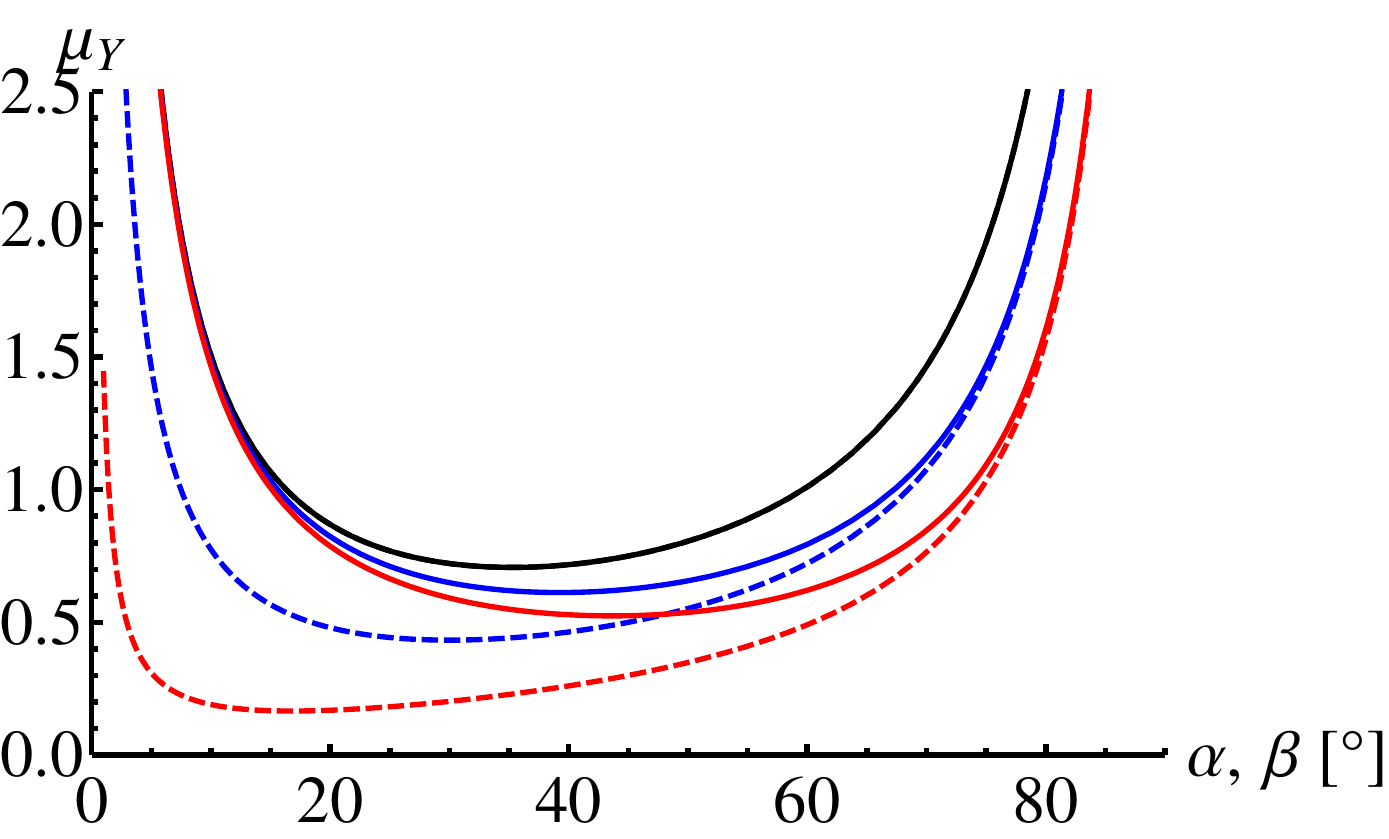,height=4.5cm}}
\put(0,-0.25){\psfig{figure=./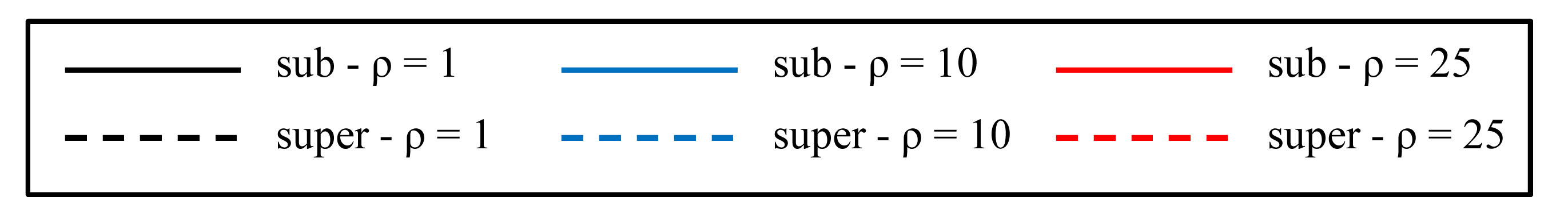,height=1cm}}
\put(8.5,-0.25){\psfig{figure=./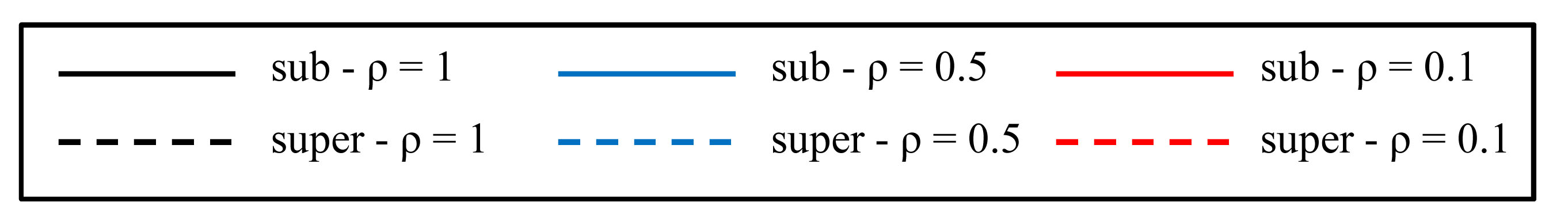,height=1cm}}
\end{picture}
\caption{Dimensionless masses of the substructure (continuous curves) and superstructure (dashed curves) for different values of the aspect angles (respectively $\beta$ or $\alpha$) and for values of the coefficient $\rho>1$ (left) and $\rho<1$ (right) under yielding constraints.}
\label{fig:n1pq1_yielding_graph}
\end{figure}

Fig \ref{fig:n1pq1_yielding_graph} plots the mass versus the angle $\beta$ and $\alpha$, yielding the minimum at the values given by (\ref{n1p1:betastY}) and (\ref{n1q1:alphastY}). All designs in this section assume failure by yielding. One must check that yielding is indeed the mode of failure.

\begin{corollary} \label{theo:n1p1:min_yielding}

Consider a \emph{substructure} bridge, with complexity $(n,p,q)=(1,1,0)$ (topology is defined by Fig. \ref{basic_module}c). The minimal mass design under only yield constraints is given by the following aspect angle:

\bea
\beta^*_{Y} =\arctan \left(\frac{1}{\sqrt{1+\rho } } \right),
\label{n1p1:betastY}
\eea

\noindent which corresponds to the following dimensionless minimal mass: 

\bea
\mu_{Y}^{*} =\frac{\sqrt{1+\rho}}{2} .
\label{n1p1:mustY}
\eea

\end{corollary}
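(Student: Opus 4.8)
The plan is to mirror the argument used for the superstructure in Corollary \ref{theo:n1q1:min_yielding}, but to work from the alternate mass representation $\mu_Y(t_1,t_2)$ rather than from $\mu_Y(t_1,t_3)$. The substructure bridge of complexity $(1,1,0)$ is obtained from the nominal bridge of Fig. \ref{basic_module}a by deleting the entire superstructure, namely the inclined bars $b_1$ and the top chord cable $s_2$. Since $s_2$ is the only member carrying the force $t_2$, its removal is encoded simply by the constraint $t_2=0$. My first step is therefore to substitute $t_2=0$ into the expression $\mu_Y(t_1,t_2)=\frac{t_1}{F}+\frac{t_2}{F}c_2(\alpha,\beta,\rho)+\frac{(1+\rho)\tan^2\beta+1}{4\tan\beta}$ that follows Theorem \ref{NY11}.

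Crucially, all of the dependence on the superstructure angle $\alpha$ in this representation is confined to the coefficient $c_2$ (see (\ref{c_2})), which multiplies $t_2$. Hence setting $t_2=0$ eliminates $\alpha$ altogether and leaves $\mu_Y=\frac{t_1}{F}+\frac{(1+\rho)\tan^2\beta+1}{4\tan\beta}$, a function of $\beta$ and the slack variable $t_1$ only. Because the coefficient of $t_1$ equals $1/F>0$ while $t_1\ge 0$ is required (the deck cable $s_1$ can only carry tension), minimal mass forces $t_1=0$, giving the objective $\mu_Y(\beta)=\frac{(1+\rho)\tan^2\beta+1}{4\tan\beta}$. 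I would emphasize that this is precisely the $\alpha$-independent residual term already present in the nominal formula, which is the structural reason the substructure mass decouples cleanly from the superstructure.

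The remaining step is a one-variable minimization. Writing $u=\tan\beta>0$, the objective becomes $\mu_Y=\frac{(1+\rho)u}{4}+\frac{1}{4u}$, which is strictly convex on $u>0$ and therefore has a unique minimizer. Setting $\frac{d\mu_Y}{du}=\frac{(1+\rho)u^2-1}{4u^2}=0$ yields $u^2=1/(1+\rho)$, i.e. $\tan\beta=1/\sqrt{1+\rho}$, which is (\ref{n1p1:betastY}). Back-substitution gives $\mu_Y^{*}=\frac{\sqrt{1+\rho}}{4}+\frac{\sqrt{1+\rho}}{4}=\frac{\sqrt{1+\rho}}{2}$, establishing (\ref{n1p1:mustY}).

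The only genuine subtlety, and the step I would be most careful about, is the bookkeeping of which terms survive the deletion of the superstructure. Working naively from the $\mu_Y(t_1,t_3)$ form would be misleading: one cannot simply drop the term $b_\alpha/4$, because removing the top chord forces $t_2=0$ and thereby fixes $t_3$ through (\ref{n1qp1_t2_t3}), so the load formerly carried by the superstructure must be rerouted through the substructure. Passing to the $\mu_Y(t_1,t_2)$ representation is exactly the device that makes this rerouting automatic, since there the physical condition ``$s_2$ absent'' is the single linear constraint $t_2=0$. As a cross-check I would also re-derive the same $\mu_Y(\beta)$ directly from the equilibrium equations (\ref{n1q1p1:equilibrium}) with $f_1=t_2=0$, which forces $t_3=F/(4\sin\beta)$ and $f_2=F/2$; once this agreement is confirmed, everything else is routine.
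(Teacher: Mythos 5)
Your proposal is correct and follows essentially the same route as the paper: the paper likewise sets $t_1=t_2=0$ in the $(t_1,t_2)$ representation following Theorem \ref{NY11}, obtains the objective $\mu_Y=\frac{1+\tan^2\beta}{4\tan\beta}+\frac{\rho}{4}\tan\beta$ (identical to your $\frac{(1+\rho)\tan^2\beta+1}{4\tan\beta}$), and minimizes over $\tan\beta$ to get (\ref{n1p1:betastY}) and (\ref{n1p1:mustY}). Your additions — justifying $t_1=0$ from the positivity of its coefficient, noting strict convexity for uniqueness, and the equilibrium cross-check $t_3=F/(4\sin\beta)$, $f_2=F/2$ — are sound refinements of the same argument rather than a different approach.
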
 \label{theo:n1p1:min_yielding}

\begin{proof}
The mass of the substructure can be obtained from Theorem (\ref{NY11}) with $t_1 = t_2 = 0$ to obtain,

\bea
\mu_{Y}= \frac{ \left( 1+\tan^2 \beta \right) }{4 \tan \beta} +\frac{\rho}{4} \tan \beta.
\label{n1p1:muy}
\eea

\noindent The the unique minimum satisfies,

\bea
\frac{\partial \mu_{Y}}{\partial \tan \beta}=  - \frac{1+\tan^2 \beta}{4 \tan^2 \beta}+\frac{1}{2}+\frac{\rho}{4}=0,
\label{n1p1:dmuy}
\eea

\noindent producing the optimal optimal angle of (\ref{n1p1:betastY}). Substituting this angle into (\ref{n1p1:muy}) concludes the proof.
\end{proof}

\begin{corollary} \label{theo:yielding_sup}

For the designs in this section, yielding is indeed the mode of failure if the following inequalities hold:

\bea
\frac{F}{L^2}>\frac{1}{2 \left( 1 + \rho \right)} \left( \frac{4 \sigma_b^2}{\pi E_b} \right), \ \ \ &if:& 0 < \rho \le \frac{1}{4} \left(\sqrt{3}-1\right), \label{chi_1} \\ 
\frac{F}{L^2}>\frac{ \sqrt{ \rho \left( 1 + 2 \rho \right) } }{1 + \rho}  \left( \frac{4 \sigma_b^2}{\pi E_b} \right), \ \ \ &if:& \rho > \frac{1}{4} \left(\sqrt{3}-1\right). 
\label{chi_2}
\eea

\noindent  In addition, if \ $0 < \rho \le \frac{1}{4} \left(\sqrt{3}-1\right)$ and (\ref{chi_1}) holds or if $\frac{1}{4} \left(\sqrt{3}-1\right) < \rho < 1$ and (\ref{chi_2}) holds, then the minimal mass of a \emph{superstructure} bridge is less than the minimal mass of a \emph{substructure} bridge. (In this event, the minimal mass bridge is superstructure only). If $\rho=1$ and (\ref{chi_2}) also holds, then the minimal mass of the substructure bridge is equal to the minimal mass of the superstructure bridge.  If $\rho>1$ and (\ref{chi_2}) also hold, then the minimal mass of the substructure bridge is less than the minimal mass of the superstructure bridge. (The minimal mass bridge is substructure only).

\end{corollary}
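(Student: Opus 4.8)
The plan is to reduce the buckling-versus-yielding question to the single compressive member that survives in each admissible minimal-mass design, apply the criterion of Lemma~\ref{theo:yield_design_2} (yield design also satisfies buckling iff $f_b/b^2 > 4\sigma_b^2/(\pi E_b)$) to each, and then read off the mass comparison from the closed forms already proved. First I would handle the \emph{superstructure}. Setting $t_1=t_3=0$ in the equilibrium solution (\ref{forces_t1_t3}) gives $f_1 = F/(4\sin\alpha)$, and since $b_1 = L/(2\cos\alpha)$ I obtain $f_1/b_1^2 = (F/L^2)\cos^2\alpha/\sin\alpha$. Evaluating at the optimal angle (\ref{n1q1:alphastY}), where $\sin^2\alpha_Y^* = \rho/(1+2\rho)$ and $\cos^2\alpha_Y^* = (1+\rho)/(1+2\rho)$, collapses this to $f_1/b_1^2 = (F/L^2)(1+\rho)/\sqrt{\rho(1+2\rho)}$. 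By Lemma~\ref{theo:yield_design_2} yielding governs the superstructure bar precisely when this exceeds $4\sigma_b^2/(\pi E_b)$, which rearranges to exactly (\ref{chi_2}).

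Next I would treat the \emph{substructure}. With $t_1 = t_2 = 0$ the last equilibrium relation in (\ref{n1q1p1:equilibrium}) forces $f_2 = F/2$, and with $b_2 = (L/2)\tan\beta$ this gives $f_2/b_2^2 = 2F/(L^2\tan^2\beta)$. At the optimal $\beta_Y^*$ of (\ref{n1p1:betastY}), where $\tan^2\beta_Y^* = 1/(1+\rho)$, I get $f_2/b_2^2 = 2F(1+\rho)/L^2$, and Lemma~\ref{theo:yield_design_2} returns exactly (\ref{chi_1}). Since the minimal-mass designs of this section are pure superstructure or pure substructure, certifying yielding throughout requires \emph{both} (\ref{chi_1}) and (\ref{chi_2}) to hold. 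I would therefore compare the two right-hand coefficients: the coefficient in (\ref{chi_2}) exceeds that in (\ref{chi_1}) iff $2\sqrt{\rho(1+2\rho)} > 1$, i.e.\ iff $8\rho^2 + 4\rho - 1 > 0$, whose positive root is $\rho = \tfrac14(\sqrt3-1)$. Hence for $0<\rho\le\tfrac14(\sqrt3-1)$ condition (\ref{chi_1}) is the binding one and implies (\ref{chi_2}), while for $\rho>\tfrac14(\sqrt3-1)$ condition (\ref{chi_2}) is binding and implies (\ref{chi_1}); this is precisely the stated piecewise criterion.

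For the mass comparison I would simply contrast the minima already established: the superstructure minimum (\ref{n1q1:mustY}) equals $\tfrac12\sqrt{\rho(1+\rho)}$ and the substructure minimum (\ref{n1p1:mustY}) equals $\tfrac12\sqrt{1+\rho}$. Dividing, the superstructure is lighter iff $\rho(1+\rho)<1+\rho$, i.e.\ iff $\rho<1$, with equality at $\rho=1$ and the reverse inequality for $\rho>1$. Overlaying this dichotomy with the yielding ranges produces the three asserted cases: whenever $\rho<1$ (the binding yield condition being (\ref{chi_1}) for $\rho\le\tfrac14(\sqrt3-1)$ or (\ref{chi_2}) for $\tfrac14(\sqrt3-1)<\rho<1$) the superstructure is strictly lighter; at $\rho=1$ with (\ref{chi_2}) the two masses coincide; and for $\rho>1$ with (\ref{chi_2}) the substructure is strictly lighter.

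The individual evaluations are routine substitutions into (\ref{forces_t1_t3}) at the two optimal angles, so the one genuinely load-bearing step is the recognition that the threshold $\rho=\tfrac14(\sqrt3-1)$ is nothing but the crossover point of the two yield conditions, together with the verification that the stronger condition there implies the weaker — which is what lets a single inequality certify yielding for every design in the section. A secondary point requiring a line of justification is that each structure contains only the one bar type actually checked ($b_1$ in the superstructure, $b_2$ in the substructure), so no further members need to be verified, and the mass comparison depends only on $\rho$ once yielding is secured.
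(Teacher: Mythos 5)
Your proposal is correct and follows essentially the same route as the paper: apply Lemma \ref{theo:yield_design_2} to the single bar of each optimal design ($b_1$ with $f_1=F/(4\sin\alpha)$ for the superstructure, $b_2$ with $f_2=F/2$ for the substructure), obtain (\ref{chi_2}) and (\ref{chi_1}) respectively, locate the crossover of the two coefficients at $\rho=\tfrac14(\sqrt3-1)$ via $8\rho^2+4\rho-1>0$, and settle the mass comparison by the ratio of (\ref{n1q1:mustY}) to (\ref{n1p1:mustY}), which is $\sqrt{\rho}$. Your write-up is in fact somewhat more explicit than the paper's (which states the reductions without showing the substitutions), but there is no substantive difference in approach.
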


\begin{proof}
Under yield constraints, if the design has the property ${f_{b,i}}/{b_i^2}> {4 \sigma_b^2}/({\pi E_b})$, then this guarantees that yielding is the mode of failure in bar $b_i$, and the buckling constraints are also satisfied (see lemma \ref{theo:yield_design_2}). For the superstructure, assuming the optimal angle (\ref{n1q1:alphastY}), the minimal mass (\ref{n1q1:mustY}), the force $f_1$  (\ref{forces_t1_t3}) and the length $b_1$ (\ref{n1q1p1:topology}), the lemma \ref{theo:yield_design_2} reduces to:

\bea
\frac{F}{L^2}>\frac{ \sqrt{ \rho \left( 1 + 2 \rho \right) } }{1 + \rho}  \left( \frac{4 \sigma_b^2}{\pi E_b} \right).
\label{FL_chi1}
\eea

\noindent Similarly, for the substructure, assuming the optimal angle (\ref{n1p1:betastY}), the minimal mass (\ref{n1p1:mustY}), the force $f_2$ (\ref{forces_t1_t3}) and the length $b_2$ (\ref{n1q1p1:topology}), the lemma (\ref{theo:yield_design_2}) reduces to:

\bea
\frac{F}{L^2}>\frac{1}{2 \left( 1 + \rho \right)}\left( \frac{4 \sigma_b^2}{\pi E_b} \right).
\label{FL_chi2}
\eea

\noindent Yielding is the mode of failure of superstructure and substructure desgns if both (\ref{FL_chi1}) and (\ref{FL_chi2}) hold or, equivalently, if the following holds:

\bea
\frac{F}{L^2}>max\left[ \frac{ \sqrt{ \rho \left( 1 + 2 \rho \right) } }{1 + \rho} , \frac{1}{2 \left( 1 + \rho \right)}\right] \left( \frac{4 \sigma_b^2}{\pi E_b} \right).
\label{FL_chi_max}
\eea

\noindent From the inequality $\frac{ \sqrt{ \rho \left( 1 + 2 \rho \right) } }{1 + \rho}  / \frac{1}{2 \left( 1 + \rho \right)}> 1$ we obtain the following conclusions:

\bea
\frac{ \sqrt{ \rho \left( 1 + 2 \rho \right) } }{1 + \rho}  < \frac{1}{2 \left( 1 + \rho \right)}, &if:&  0 < \rho \le \frac{1}{4} \left(\sqrt{3}-1\right), \label{chi_max_solution1} \\
\frac{ \sqrt{ \rho \left( 1 + 2 \rho \right) } }{1 + \rho}  >\frac{1}{2 \left( 1 + \rho \right)}, &if:&  \rho > \frac{1}{4} \left(\sqrt{3}-1\right).
\label{chi_max_solution2}
\eea

\noindent Equations (\ref{chi_max_solution1}) and (\ref{chi_max_solution2}) combined with (\ref{FL_chi_max}) give the conditions (\ref{chi_1}) and (\ref{chi_2}).
The mass of the substructure is shown to be less that the mass of the superstructure if $\rho>1$, a result that follows by taking the ratio between the optimal mass of the superstructure (\ref{n1q1:mustY}) and the optimal mass of the substructure (\ref{n1p1:mustY}). 

\end{proof}

As a practical matter, $\rho$ is almost always greater than $1$, since compressive members tend to have higher mass density than tension members ($\rho_b/\rho_s>1$), and the yield strength of tensile material tends to be greater than for compressive members ($\sigma_s/\sigma_b>1$). 

Thus far the conclusion is that if $\rho> \frac{1}{4} \left(\sqrt{3}-1\right)$ then the bridge in Fig. \ref{basic_module}a at its minimal mass configuration becomes the configuration of substructure in Fig. \ref{basic_module}c, if the bridge design is constrained against yielding. Furthermore, such a design will not buckle. Note that this design produced a topology constrained against yielding, and a design constrained against buckling might produce a different topology. Now  lets consider this possibility. 

\subsection{Nominal Bridges under Buckling Constraints}
\label{Sect:n1q1p1:B}

This section repeats all the designs of the previous section (for the three structures of Fig. \ref{basic_module}) with the added constraint that the bars cannot buckle.

\begin{theorem} \label{theo:n1q1p1:min_buckling}

Consider a\emph{ nominal} bridge of complexity $(n,p,q)=(1,1,1)$.  The minimal mass (the cable mass required at the yield conditions plus the bar mass required at the bar buckling conditions), is, in terms of $t_1$ and $t_3$:

\bea
\mu_{B}(t_1,t_3) = \frac{t_1}{F} + \frac{t_3}{F} \frac{\tan^2\beta-\tan\alpha \tan\beta+1}{\sqrt{1 + \tan^2 \beta}}+\frac{\tan \alpha}{4} \nonumber \\
+ \eta \left[ \frac{ \left( 1 + \tan^2 \alpha \right)^{5/4} }{2 \sqrt{\tan \alpha}} \left( 1 - \frac{t_3}{F} \frac{4 \tan \beta}{\sqrt{1 + \tan^2 \beta}} \right)^{1/2} + \frac{\tan^2 \beta}{\sqrt{2}} \sqrt{\frac{t_3}{F} \frac{\tan \beta}{ \left( 1 + \tan^2 \beta \right)^{1/2} } } \right],
\label{n1q1p1:muB_t1t3}
\eea

\noindent or, equivalently, in terms of $t_1$ and $t_2$:

\bea
\mu_{B}(t_1,t_2) = \frac{t_1}{F} + \frac{t_2}{F} \left[ \frac{\tan \alpha}{2} - \frac{ \left( 1+\tan^2 \beta \right) }{2 \tan \beta} \right]+\frac{ \left( 1+\tan^2 \beta \right) }{4 \tan \beta} \nonumber \\
+\eta \left[ \left( 1+\tan^2 \alpha \right)^{(5/4)} \sqrt{\frac{t_2}{2 F \tan \alpha}} + \frac{\tan^2 \beta}{2} \sqrt{\frac{1}{2}-\frac{t_2}{F}} \right].
\label{n1q1p1:muB_t1t2}
\eea

\end{theorem}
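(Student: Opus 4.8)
The plan is to compute the total mass directly from its definition as the sum of cable mass at the yield condition and bar mass at the buckling condition, and then rewrite everything in dimensionless form and simplify; no optimization is involved, so the statement is really an identity to be verified by substitution. The crucial observation is that the cable contribution is unchanged from the yield analysis of Theorem~\ref{NY11}, since cables always fail by yielding, so only the bar terms must be recomputed using the buckling mass formula (\ref{m_b}).

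First I would assemble the dimensionless cable mass. Recalling the member multiplicities (the cables $s_1$ and $s_3$ each appear twice, $s_2$ once) and substituting the lengths from (\ref{n1q1p1:topology}) into $\mu_{\text{cables}} = (2t_1 s_1 + t_2 s_2 + 2 t_3 s_3)/(FL)$, I obtain $\frac{t_1}{F} + \frac{t_2 \tan\alpha}{2F} + \frac{t_3\sqrt{1+\tan^2\beta}}{F}$. Eliminating $t_2$ via the equilibrium relation $t_2/F = \tfrac12 - (t_3/F)(2\tan\beta)/\sqrt{1+\tan^2\beta}$ from (\ref{forces_t1_t3}) and collecting the $t_3$ terms over the common denominator $\sqrt{1+\tan^2\beta}$ reproduces exactly the first three terms of (\ref{n1q1p1:muB_t1t3}).

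Next I would treat the bars. Using the buckling mass $m_{b,B} = 2\rho_b b^2\sqrt{f_b/(\pi E_b)}$ for the two $b_1$ bars and the single $b_2$ bar, dividing by $(\rho_s/\sigma_s)FL$, and inserting $b_1^2 = \tfrac{L^2}{4}(1+\tan^2\alpha)$ and $b_2^2 = \tfrac{L^2}{4}\tan^2\beta$, the material and span prefactors collapse through the identity $\frac{\rho_b L}{(\rho_s/\sigma_s)\sqrt{\pi E_b}} = \eta\sqrt{F}$, leaving a clean factor of $\eta$ multiplying $(1+\tan^2\alpha)\sqrt{f_1/F}$ and $\tfrac12\tan^2\beta\sqrt{f_2/F}$. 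Substituting the equilibrium forces $f_1/F$ and $f_2/F$ from (\ref{forces_t1_t3}) and extracting the square roots produces the $(1+\tan^2\alpha)^{5/4}$ and $\tan^2\beta/\sqrt2$ terms, completing (\ref{n1q1p1:muB_t1t3}).

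Finally, the equivalent form (\ref{n1q1p1:muB_t1t2}) follows by the change of variable (\ref{n1qp1_t2_t3}). Here the key simplifications are $f_1/F = \frac{\sqrt{1+\tan^2\alpha}}{2\tan\alpha}\,\frac{t_2}{F}$ and $f_2/F = \tfrac12 - t_2/F$, both of which follow from the relation $(t_3/F)(4\tan\beta)/\sqrt{1+\tan^2\beta} = 1 - 2t_2/F$. The only genuine obstacle is bookkeeping: one must keep the correct member multiplicities, the leading factor of $2$ in the buckling formula, and the fractional exponents straight, since the appearance of the $5/4$ power is precisely what confirms that the prefactors were absorbed correctly into $\eta$.
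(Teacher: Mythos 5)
Your proof is correct and follows essentially the same route as the paper's: compute the cable mass at yield and the bar mass via the buckling formula $m_{b,B}=2\rho_b b^2\sqrt{f_b/(\pi E_b)}$, substitute the equilibrium forces (\ref{forces_t1_t3}) and lengths (\ref{n1q1p1:topology}), normalize so the prefactors collapse into $\eta$, and then pass between the $(t_1,t_3)$ and $(t_1,t_2)$ forms via (\ref{n1qp1_t2_t3}). The only difference is that you spell out the member multiplicities and intermediate algebra that the paper's proof leaves implicit.
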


\begin{proof}

\noindent Given the solution (\ref{forces_t1_t3}), the total mass of bars is:

\bea
m_{b,B} = \frac{\rho_b L^2 \sqrt{F}}{\sqrt{\pi E_b}} \left[ \frac{ \left( 1 + \tan^2 \alpha \right)^{5/4}}{2 \sqrt{\tan \alpha}} \left( 1 - \frac{t_3}{F} \frac{4 \tan \beta}{ \sqrt{1 + \tan^2 \beta} } \right)^{1/2} + \frac{\tan^2 \beta}{\sqrt{2}} \sqrt{ \frac{t_3}{F} \frac{ \tan \beta}{ \left( 1 + \tan^2 \beta \right)^{1/2} } } \right]. \nonumber \\
\label{n1q1p1:mbB_t1t3}
\eea

\noindent Adding to (\ref{n1q1p1:mbB_t1t3}) the total mass of cables and using the (\ref{forces_t1_t3}), we obtain the total mass of (\ref{n1q1p1:muB_t1t3}) given in the theorem. It is also possible to write this mass in terms of  free parameters $t_t$ and $t_2$.

\bea
m_{b,B} = \frac{\rho_b L^2 \sqrt{F}}{\sqrt{\pi E_b}} \left[ \left( 1 + \tan^2 \alpha \right)^{5/4} \sqrt{\frac{t_2}{2 F \tan \alpha}} + \frac{\tan^2 \beta}{2} \sqrt{\frac{1}{2} - \frac{t_2}{F}} \right].
\label{n1q1p1:mbB_t1t2}
\eea

\noindent Adding to (\ref{n1q1p1:mbB_t1t2}) the total mass of cables and using the (\ref{forces_t1_t3}), we obtain the total mass of (\ref{n1q1p1:muB_t1t2}).

\end{proof}

The value of $\beta=4.25 \ deg$ minimizes  the mass (\ref{n1q1p1:muB_t1t2}) if the material choice is steel ($\rho = 7862 \ kg/m^3$; $\sigma = 6.9x10^8 \ N/m^2$; $E = 2.06x10^{11} \ N/m^2$). It will become clear that the minimal mass solution of the minimal bridge $\mu_B$, constrained against buckling, will reduce to only a \emph{substructure} (Fig 3c). It is straightforward to show that the mass of the bars is much greater than the mass of the cables under the usual condition:

\bea\label{eta2}
\eta \gg \frac{\tan^2\alpha}{2(1+\tan^2 \alpha)^{5/4}}.
\eea

To prepare for those insights, now consider the individual solutions for designs constrained to be only \emph{superstructure} or only \emph{substructure} in configuration.

\begin{figure}[hb]
\unitlength1cm
\begin{picture}(10,5)
\put(5,0){\psfig{figure=./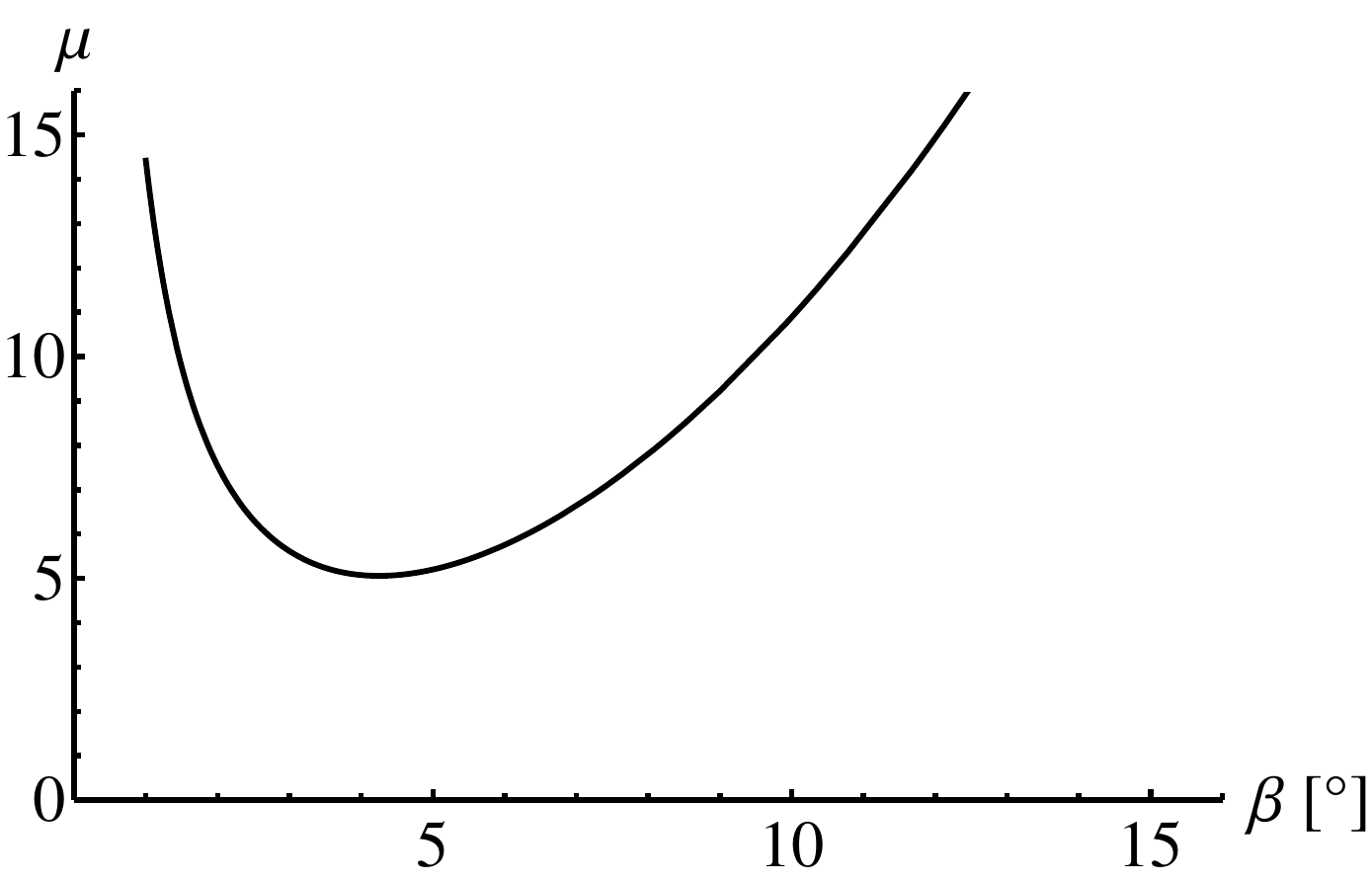,height=5cm}}
\end{picture}
\caption{Mass $\mu_B(t_1, t_2)$ (Eq. \ref{n1q1p1:muB_t1t2}) for different values of the aspect angles $\beta$ assuming steel bars and cables, $F = 1 \ N$, $L=1 \ m$ ($\eta = 857.71$), $t_1 = 0$ and $t_2 = 0$. The minimum value is $\mu^*_B = 5.0574$ at $\beta^*_B = 4.25$ $deg$.}
\label{fig:plot_muBt1t2}
\end{figure}

\begin{corollary} \label{theo:n1q1:min_buckling}

Consider a \emph{superstructure} bridge of complexity $(n,p,q)=(1,0,1)$, (Fig. \ref{basic_module}b). Suppose (\ref{eta2}) holds. The minimal mass design under yielding and buckling constraints is given by the following aspect angle:

\bea
\bar \alpha^*_{B} = \arctan \left( \frac{1}{2} \right), 
\label{n1q1:alphastB}
\eea

\noindent which corresponds to the following dimensionless minimal mass: 

\bea
\bar \mu_{B}^{*} =\frac{1}{8} \left( 1+5^{(5/4)} \eta \right).
\label{n1q1:mustB}
\eea

\end{corollary}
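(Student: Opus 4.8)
The plan is to specialize the nominal buckling mass of Theorem~\ref{theo:n1q1p1:min_buckling} to the superstructure, reduce it to a function of the single aspect angle $\alpha$, and then minimize. First I would observe that for a superstructure bridge ($p=0$) the cable $s_3$ and the bar $b_2$ are absent, so $t_3=0$; moreover the coefficient of $t_1$ in (\ref{n1q1p1:muB_t1t3}) equals $+1>0$, so minimal mass forces $t_1=0$ as well, exactly as in the yielding analysis of Corollary~\ref{theo:n1q1:min_yielding}. Substituting $t_1=t_3=0$ into (\ref{n1q1p1:muB_t1t3}) annihilates every $t_3$-dependent term (the factor $(1-0)^{1/2}=1$, while the cable term and the second bar term $\propto\sqrt{t_3}$ vanish), and $\beta$ drops out entirely, leaving the single-variable mass
\[
\bar\mu_B(\alpha)=\frac{\tan\alpha}{4}+\frac{\eta}{2}\,\frac{(1+\tan^2\alpha)^{5/4}}{\sqrt{\tan\alpha}}.
\]

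Next I would set $x=\tan\alpha$ and split $\bar\mu_B$ into a cable term $x/4$ and a buckling term $(\eta/2)\,h(x)$ with $h(x)=(1+x^2)^{5/4}x^{-1/2}$. Hypothesis (\ref{eta2}) states precisely that the bar (buckling) mass dominates the cable mass, so to leading order the location of the optimum is fixed by $h$ alone. Logarithmic differentiation gives $h'(x)/h(x)=\frac{5x}{2(1+x^2)}-\frac{1}{2x}$, which vanishes exactly when $5x^2=1+x^2$, i.e. $x=\tfrac12$. Since $h(x)\to\infty$ as $x\to0^+$ and as $x\to\infty$, this stationary point is the unique global minimum of $h$, so the optimal angle is $\bar\alpha^*_B=\arctan(\tfrac12)$, establishing (\ref{n1q1:alphastB}).

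I would then evaluate the full mass, retaining both the cable and bar contributions, at $x=\tfrac12$. Using $(1+\tfrac14)^{5/4}/\sqrt{\tfrac12}=5^{5/4}/4$, the cable term contributes $\tfrac18$ and the buckling term contributes $5^{5/4}\eta/8$, so that $\bar\mu_B^{*}=\tfrac18\bigl(1+5^{5/4}\eta\bigr)$, which is (\ref{n1q1:mustB}).

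The only delicate step, and the one I expect to require the most care, is the appeal to (\ref{eta2}) to locate the optimum from the buckling term alone. The exact stationarity condition for $\bar\mu_B$ is $\tfrac14+\tfrac{\eta}{2}h'(x)=0$ rather than $h'(x)=0$, so $x=\tfrac12$ is the minimizer only in the dominant-balance sense in which the constant $\tfrac14$ is negligible against the $\eta$-weighted buckling derivative (consistent with the bar notation $\bar\alpha^*_B,\bar\mu_B^{*}$ marking asymptotic values). I would make this explicit by linearizing $h'$ about $x=\tfrac12$ to exhibit the $O(1/\eta)$ shift of the true minimizer, checking that (\ref{eta2}) suppresses it, and I would emphasize that once $\alpha=\arctan(\tfrac12)$ is fixed the quoted mass follows by exact substitution into the full expression.
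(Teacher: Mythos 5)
Your proposal follows essentially the same route as the paper's proof: specialize (\ref{n1q1p1:muB_t1t3}) with $t_1=t_3=0$, invoke (\ref{eta2}) to neglect the cable term $\tan\alpha/4$ when locating the stationary point of the bar (buckling) term, obtain $\bar\alpha^*_B=\arctan(1/2)$ from $5\tan^2\alpha=1+\tan^2\alpha$, and then substitute back into the full two-term mass to get $\frac{1}{8}\bigl(1+5^{5/4}\eta\bigr)$. Your added remarks on the $O(1/\eta)$ shift of the true minimizer and on uniqueness of the minimum make explicit what the paper only asserts (via its second-variation comment after the corollary), but they do not change the argument.
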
 \label{theo:n1q1:min_buckling}

\begin{proof}
\noindent The mass of the \emph{superstructure} only case can be obtained from (\ref{n1q1p1:muB_t1t3}) assuming $t_1 = t_3 = 0$:

\bea
\mu_B = \frac{\tan \alpha}{4} +\eta \frac{ \left( 1+\tan^2 \alpha \right)^{(5/4)}}{2 \sqrt{\tan \alpha}}.
\label{n1q1:muB}
\eea

\noindent Assuming that the mass of the cables, which is the first term at the rhs of the (\ref{n1q1:muB}) is neglectable if compared with the mass of the bars, which is the second term at the rhs of (\ref{n1q1:muB}). Then the dimensionless mass becomes

\bea
\bar \mu_B = \eta \frac{ \left( 1+\tan^2 \alpha \right)^{(5/4)}}{2 \sqrt{\tan \alpha}}.
\label{n1q1:mubarB}
\eea

\noindent The solution for minimal mass can be achieved from the local minimum condition,
\bea
\frac{\partial \bar \mu_{B}}{\partial \tan\alpha}= \frac{5}{4} \eta \left( 1+ \tan^2\alpha \right)^{1/4} \sqrt{ \tan\alpha} - \eta \frac{\left(1+ \tan^2\alpha\right)^{5/4}}{4  \tan^{3/2}\alpha} =0,
\label{n1q1:dmuB}
\eea

\noindent to obtain the optimal angle (\ref{n1q1:alphastB}). Substituting it into (\ref{n1q1:muB}) yields (\ref{n1q1:mustB}).

\end{proof}

It is straightforward to show that the second variation of $\mu_B(\alpha)$ with respect to $\alpha$ is always positive, indicating that there is only one minimum described by (\ref{n1q1:alphastB}).

\begin{corollary} \label{theo:n1p1:min_buckling}

Consider a\emph{ substructure} bridge, with complexity $(n,p,q)=(1,1,0)$ (Fig. \ref{basic_module}c). The minimal mass design under yielding constraints and buckling constraints is given by the following aspect angle:

\bea
\beta^*_{B} = \arctan \left[ \frac{1}{6\eta} \left( \frac{1}{2^{(1/3)} \epsilon} + \frac{\epsilon}{2^{(2/3)}} - \frac{1}{\sqrt{2} } \right) \right],
\label{n1p1:betastB}
\eea

\noindent which corresponds to the following dimensionless minimal mass: 

\bea
\mu_{B}^{*} = \frac{1+\tan^2 \beta^*_{B}}{4 \tan \beta^*_{B}} + \frac{\eta}{2 \sqrt{2}}  \tan^2 \beta^*_{B}, 
\label{n1p1:mustB}
\eea

\noindent where:

\bea
\epsilon = \left[ 108 \sqrt{2} \eta^2 + \sqrt{23328 \eta^4 - 432 \eta^2} - \sqrt{2} \right]^{1/3}.
\label{n1p1:kappa}
\eea

\end{corollary}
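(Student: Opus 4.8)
The plan is to reduce the two-parameter buckling mass of Theorem \ref{theo:n1q1p1:min_buckling} to a single-variable minimization in $\beta$, and then to solve the resulting stationarity condition — a cubic in $\tan\beta$ — in closed form by Cardano's method. First I would specialize the $(t_1,t_2)$ expression (\ref{n1q1p1:muB_t1t2}) to the substructure. Since the superstructure cable $s_2$ and bar $b_1$ are absent, the equilibrium relation $t_2=2f_1\sin\alpha$ forces $t_2=0$ (equivalently $f_1=0$), which removes every $\alpha$-dependent term from (\ref{n1q1p1:muB_t1t2}); the remaining cable $s_1$ enters with the strictly positive coefficient $1/F$, so its minimal-mass value is $t_1=0$. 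Setting $t_1=t_2=0$ collapses (\ref{n1q1p1:muB_t1t2}) to the one-variable mass
\[
\mu_B(\beta)=\frac{1+\tan^2\beta}{4\tan\beta}+\frac{\eta}{2\sqrt2}\tan^2\beta,
\]
which is exactly (\ref{n1p1:mustB}) evaluated at the optimizer. Writing $x=\tan\beta$ this reads $\mu_B=\tfrac14(x^{-1}+x)+\tfrac{\eta}{2\sqrt2}x^2$.

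Next I would impose stationarity. Differentiating in $x$ and clearing the factor $4x^2>0$ yields the cubic
\[
2\sqrt2\,\eta\,x^3+x^2-1=0 .
\]
Dividing by $2\sqrt2\,\eta$ puts it in the form $x^3+Ax^2-A=0$ with $A=1/(2\sqrt2\,\eta)$, and the shift $x=y-A/3=y-1/(6\sqrt2\,\eta)$ produces the depressed cubic $y^3+py+q=0$ with $p=-A^2/3$ and $q=2A^3/27-A$. This already accounts for the $-1/\sqrt2$ term inside the bracket of (\ref{n1p1:betastB}), since $-A/3=-\tfrac{1}{6\eta}\cdot\tfrac1{\sqrt2}$.

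Then I would apply Cardano's formula $y=\sqrt[3]{-q/2+\sqrt{D}}+\sqrt[3]{-q/2-\sqrt{D}}$ with $D=q^2/4+p^3/27$. A short computation gives $D=A^2(729-108A^2)/2916$, which is positive in the material regime of interest (large $\eta$, hence small $A$), so the cubic has a single real root. Identifying the two cube-root summands with $\tfrac{1}{6\eta}\,\epsilon\,2^{-2/3}$ and $\tfrac{1}{6\eta}\,(2^{1/3}\epsilon)^{-1}$ — whose product $1/(72\eta^2)$ matches the required $uv=-p/3=A^2/9$ — and cubing, the argument $-q/2+\sqrt D$ simplifies (after substituting $A=1/(2\sqrt2\,\eta)$ and using $\sqrt{23328\eta^4-432\eta^2}=12\sqrt3\,\eta\sqrt{54\eta^2-1}$) exactly to $\epsilon^3$ as defined in (\ref{n1p1:kappa}). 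This reconstructs $\tan\beta^*_B=y-A/3$ as the bracketed expression of (\ref{n1p1:betastB}), and substituting $x^*=\tan\beta^*_B$ back into $\mu_B(x)$ reproduces (\ref{n1p1:mustB}).

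The main obstacle is the Cardano step: one must match the two real cube roots to the specific combination $2^{-1/3}\epsilon^{-1}+2^{-2/3}\epsilon$ and verify that the radicand collapses to the stated $\epsilon^3$. Keeping track of the $\sqrt2$ and $2^{1/3},2^{2/3}$ factors, and checking the sign of $D$ to confirm a single real (and positive) root — which is corroborated by $\mu_B''(x)=\tfrac{1}{2x^3}+\tfrac{\eta}{\sqrt2}>0$, so the stationary point is the unique minimum — is where the care is needed; the remainder is routine bookkeeping.
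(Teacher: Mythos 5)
Your proposal is correct and takes essentially the same route as the paper: set $t_1=t_2=0$ in (\ref{n1q1p1:muB_t1t2}) to obtain the one-variable mass, differentiate with respect to $\tan\beta$, and reduce stationarity to the cubic $4\eta\tan^3\beta+\sqrt{2}\tan^2\beta-\sqrt{2}=0$ (your form $2\sqrt{2}\,\eta\,x^3+x^2-1=0$ is the same equation divided by $\sqrt{2}$). The paper's proof in fact stops at that cubic, so the pieces you add --- the equilibrium justification of $t_1=t_2=0$, the explicit Cardano computation recovering $\epsilon$ of (\ref{n1p1:kappa}), and the convexity check $\mu_B''(x)>0$ guaranteeing a unique positive minimizer --- are details the paper leaves implicit, and they check out.
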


\begin{proof}
\noindent The mass of the \emph{substructure} bridge can be obtained from (\ref{n1q1p1:muB_t1t2}) assuming $t_1 = t_2 = 0$:

\bea
\mu_B = \frac{1+\tan^2 \beta}{4 \tan \beta} + \frac{\eta}{2 \sqrt{2}} \tan^2 \beta.
\label{n1p1:muB_B2}
\eea

\noindent The above function has its minimum value $\mu_{B}^{*}$ for an optimal angle $\beta^*_{B}$ that can be computed from the equation

\bea
\frac{\partial \mu_{B}}{\partial\tan{\beta}}= \frac{1}{2} - \frac{1+\tan^2{\beta}}{4 \tan^2{\beta}} +\frac{\eta}{\sqrt{2}}\tan{\beta}= 0.
\label{n1p1:dmuB_B2}
\eea

\noindent After rearranging (\ref{n1p1:dmuB_B2}), the optimal angle $\beta$ can be computed solving the following equation:

\bea
4 \eta \tan^3{\beta} + \sqrt{2}\tan^2{\beta}- \sqrt{2} = 0.
\label{n1p1:muB_B2}
\eea
\end{proof}

It is straightforward to show that the second variation of $\mu_B(\beta)$ with respect to $\beta$ is always positive, indicating a unique global optimal value of (\ref{n1p1:betastB}). Fig \ref{fig:n1pq1_buckling_graph} plots the mass versus the angle $\beta$ and $\alpha$, yielding the minimum at the values given by (\ref{n1q1:alphastB}) and (\ref{n1p1:betastB}). We must verify if buckling is indeed the mode of failure in the designs of this section.

\begin{figure}[hb]
\unitlength1cm
\begin{picture}(10,5.5)
\put(0.25,1.3){\psfig{figure=./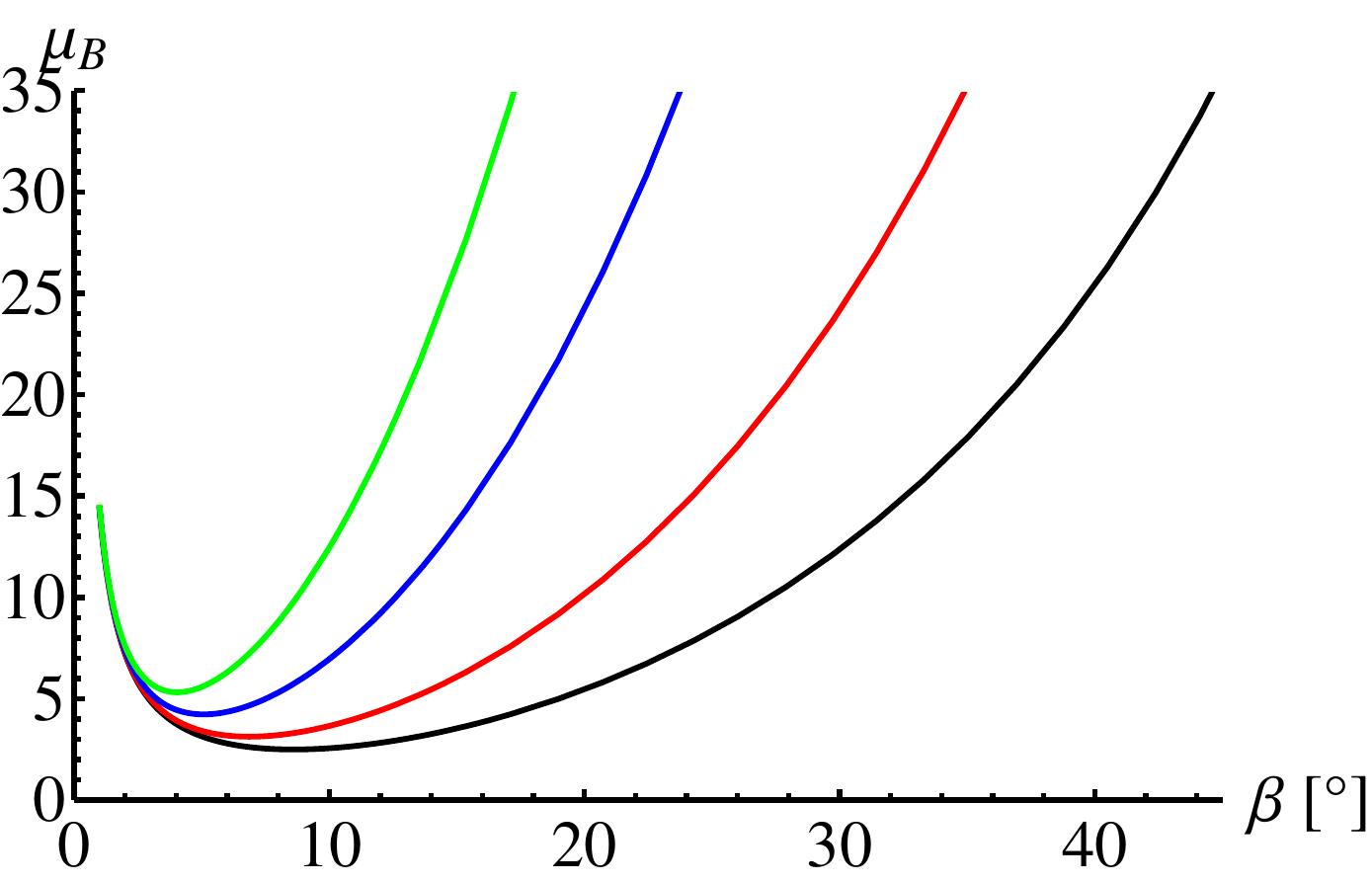,height=4.5cm}}
\put(8,1.3){\psfig{figure=./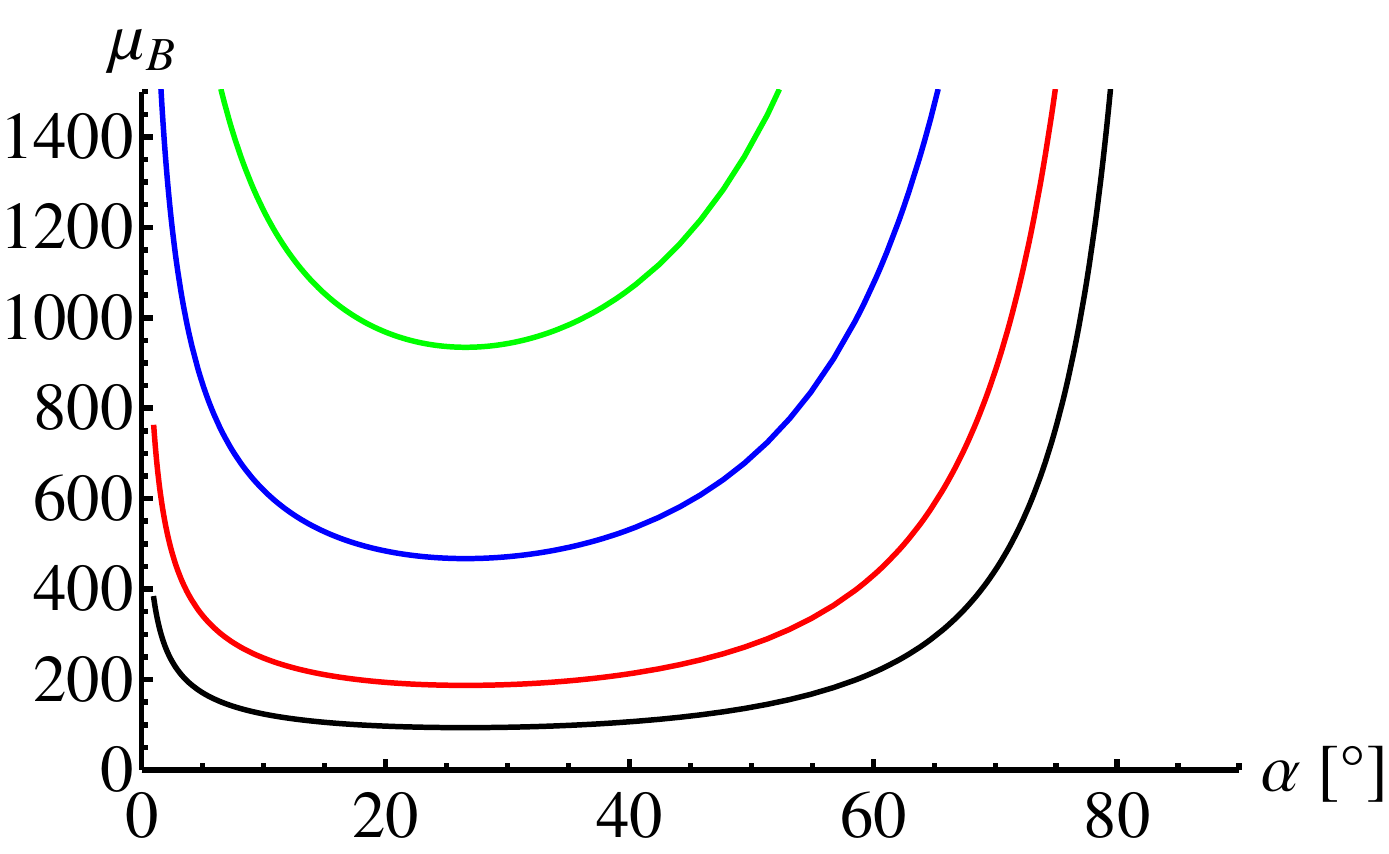,height=4.5cm}}
\put(1.8,-0.2){\psfig{figure=./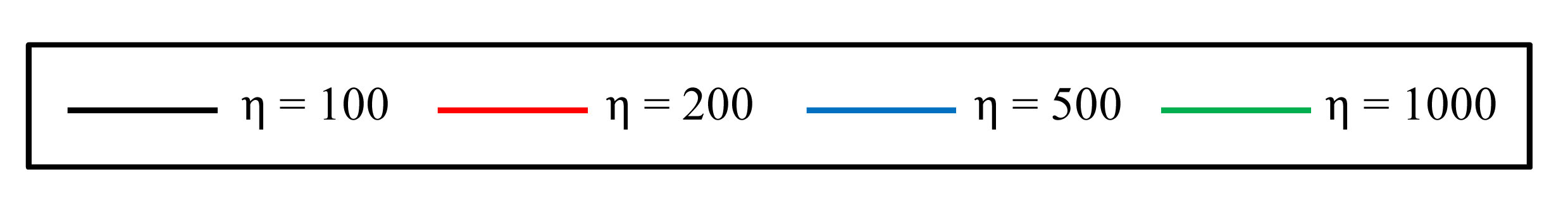,height=1.5cm}}
\end{picture}
\caption{Dimensionless masses of the substructure (left) and superstructure (right) under buckling constraints for different values of the aspect angles (respectively $\beta$ or $\alpha$) and different values of the parameter $\eta$.}
\label{fig:n1pq1_buckling_graph}
\end{figure}

\begin{corollary} \label{theo:buckling_sup_sub}

Suppose buckling constraints are considered in both the \emph{superstructure} and \emph{substructure} bridge designs. Then buckling is indeed the mode of failure if the following inequalities hold:

\bea
\frac{F}{L^2}< \tan \alpha \sqrt{1 + \tan^2 \alpha} \left( \frac{4 \sigma_b^2}{\pi E_b} \right), \ \ \ &if:& \bar \eta_{\alpha \beta} < 1, \label{chi_1_B} \\ 
\frac{F}{L^2}< \frac{\tan^2 \beta}{2} \left( \frac{4 \sigma_b^2}{\pi E_b} \right), \ \ \ &if:& \bar \eta_{\alpha \beta} > 1, 
\label{chi_2_B}
\eea

\noindent where:

\bea
\bar \eta_{\alpha \beta} = \frac{2 \tan \alpha \sqrt{1 + \tan^2 \alpha}}{\tan^2 \beta}.
\label{etabar_alpha_beta}
\eea

\noindent  In addition, if the following inequality holds:

\bea
\eta > \eta_{\alpha \beta} = \frac{- \left( \tan \alpha \right)^{(3/2)} \tan \beta + \left( 1 + \tan^2 \beta \right) \sqrt{\tan \alpha}}{2 \left( 1 + \tan^2 \beta \right)^{(5/4)} \tan \beta - \sqrt{2} \tan^3 \beta \sqrt{\tan \alpha}},
\label{eta_alpha_beta}
\eea
then the minimal mass of the \emph{substructure} bridge is less than the minimal mass of the \emph{superstructure} bridge. (The minimal mass of the \emph{nominal} bridge reduces to \emph{substructure} only. If $\eta=\eta_{\alpha \beta}$, (\ref{chi_1_B}) or (\ref{chi_2_B}) hold, then the minimal mass of the \emph{substructure} is equal to the minimal mass of the \emph{superstructure}. (The minimal mass of the \emph{nominal} bridge reduces to either \emph{superstructure} or \emph{substructure} only). If $\eta<\eta_{\alpha \beta}$, and (\ref{chi_1_B}) or (\ref{chi_2_B}) hold, then the minimal mass of the \emph{superstructure} is less than the minimal mass of the \emph{substructure}. (The minimal mass bridge is \emph{superstructure} only).

\end{corollary}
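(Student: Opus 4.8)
The plan is to treat the two assertions of the corollary separately, mirroring the yielding analysis of Corollary \ref{theo:yielding_sup} but invoking Lemma \ref{theo:buckling_design_2} in place of its yield counterpart. For the first assertion (that buckling is genuinely the failure mode), I would apply the criterion of Lemma \ref{theo:buckling_design_2}: a bar sized against buckling also survives yielding precisely when $f_b/b^2 < 4\sigma_b^2/(\pi E_b)$. First I would evaluate this ratio for the single compressive member of each reduced design. For the superstructure the relevant bar is $b_1$ carrying $f_1$; substituting the force from (\ref{forces_t1_t3}) with $t_3=0$ and the length $b_1$ from (\ref{n1q1p1:topology}) gives $f_1/b_1^2 = (F/L^2)/(\tan\alpha\sqrt{1+\tan^2\alpha})$, so the criterion collapses to (\ref{chi_1_B}). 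For the substructure, node equilibrium forces $f_2=F/2$ (since $t_2=0$) into the bar $b_2=\tfrac{L}{2}\tan\beta$, whence $f_2/b_2^2 = 2F/(L^2\tan^2\beta)$ and the criterion collapses to (\ref{chi_2_B}).

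Next I would combine the two. Since the reduced nominal bridge is either pure superstructure or pure substructure, buckling must be the failure mode of whichever member is present, so both bounds must hold simultaneously; equivalently $F/L^2$ must lie below the smaller of $\tan\alpha\sqrt{1+\tan^2\alpha}$ and $\tan^2\beta/2$, each times $4\sigma_b^2/(\pi E_b)$. Comparing these two coefficients through their ratio produces exactly $\bar\eta_{\alpha\beta}$ of (\ref{etabar_alpha_beta}): when $\bar\eta_{\alpha\beta}<1$ the superstructure bound (\ref{chi_1_B}) is binding, and when $\bar\eta_{\alpha\beta}>1$ the substructure bound (\ref{chi_2_B}) binds. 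This reproduces the stated dichotomy, with the min playing the role that max played in the yielding case precisely because Lemma \ref{theo:buckling_design_2} reverses the inequality direction.

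For the second assertion (the mass comparison), the plan is to set $t_1=0$, legitimate because its coefficient in the nominal mass is $+1/F>0$, and then to compare the superstructure mass (\ref{n1q1:muB}) with the substructure mass (\ref{n1p1:muB_B2}); these are in fact the two endpoints $t_3=0$ and $t_3=t_3^{\max}$ of the nominal family (\ref{n1q1p1:muB_t1t3}). Forming the difference $\mu_B^{\mathrm{sub}}-\mu_B^{\mathrm{super}}$ yields an expression affine in $\eta$, namely $C(\alpha,\beta)-\eta\,D(\alpha,\beta)$ with $C=\tfrac{1+\tan^2\beta}{4\tan\beta}-\tfrac{\tan\alpha}{4}$ and $D$ equal to the superstructure bar-mass coefficient of (\ref{n1q1:muB}) minus the substructure bar-mass coefficient of (\ref{n1p1:muB_B2}), i.e. $D=\tfrac{(1+\tan^2\alpha)^{5/4}}{2\sqrt{\tan\alpha}}-\tfrac{\tan^2\beta}{2\sqrt2}$. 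Setting the difference to zero and solving for $\eta$ gives the threshold $\eta_{\alpha\beta}=C/D$; clearing denominators and factoring $\sqrt{\tan\alpha}$ out of the numerator delivers the closed form (\ref{eta_alpha_beta}). Provided $D>0$, the difference $C-\eta D$ is strictly decreasing in $\eta$ and vanishes at $\eta_{\alpha\beta}$, so $\eta>\eta_{\alpha\beta}$ forces $\mu_B^{\mathrm{sub}}<\mu_B^{\mathrm{super}}$, $\eta=\eta_{\alpha\beta}$ gives equality, and $\eta<\eta_{\alpha\beta}$ reverses it, which are exactly the three cases claimed.

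The hard part will not be the threshold algebra, which is routine, but establishing rigorously the sign $D>0$ (equivalently that the denominator of (\ref{eta_alpha_beta}) is positive) throughout the admissible angle ranges, since this single sign fixes the direction of all three inequalities and the monotonicity in $\eta$. I would settle it by invoking the standing regime (\ref{eta2}) in which bar mass dominates and the below-deck bars are comparatively short, so that $\sqrt2\,(1+\tan^2\alpha)^{5/4}>\tan^2\beta\sqrt{\tan\alpha}$; this is most transparent at the optimal angles $\bar\alpha^*_B=\arctan(1/2)$ of Corollary \ref{theo:n1q1:min_buckling} and the small $\beta^*_B$ of Corollary \ref{theo:n1p1:min_buckling}, where the superstructure bar-mass coefficient exceeds the substructure one by a wide margin.
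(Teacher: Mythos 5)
Your proposal is correct and follows essentially the same route as the paper's proof: Lemma \ref{theo:buckling_design_2} applied to bar $b_1$ (superstructure, $t_3=0$) and bar $b_2$ (substructure, $t_2=0$) yields (\ref{chi_1_B}) and (\ref{chi_2_B}); requiring both simultaneously gives the $\min$ condition, with $\bar\eta_{\alpha\beta}$ selecting the binding bound; and the threshold $\eta_{\alpha\beta}$ comes from equating the superstructure mass (\ref{n1q1:muB}) with the substructure mass (\ref{n1p1:muB_B2}) (you form the difference, the paper takes the ratio — equivalent for the comparison).

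One caveat worth flagging: your algebra actually produces $2\left(1+\tan^2\alpha\right)^{5/4}\tan\beta$ in the denominator of the threshold, not the $2\left(1+\tan^2\beta\right)^{5/4}\tan\beta$ printed in (\ref{eta_alpha_beta}). Since the paper's own proof prescribes exactly the comparison you carried out, the printed formula appears to contain a typo ($\beta$ where $\alpha$ should appear), so you should not claim your computation ``delivers the closed form (\ref{eta_alpha_beta})'' verbatim; state your formula and note the discrepancy. Your additional attention to the sign of the denominator $D$ — which the paper passes over silently — is warranted, since that single sign fixes the direction of all three mass inequalities and the monotonicity in $\eta$; your argument that $D>0$ in the relevant regime (small $\beta^*_B$, so $\sqrt{2}\left(1+\tan^2\alpha\right)^{5/4} > \tan^2\beta\sqrt{\tan\alpha}$) closes a gap the paper leaves open.
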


\begin{proof}

Under buckling constraints, if the design has the property ${f_{b,i}}/{b_i^2}< {4 \sigma_b^2}/({\pi E_b})$, then this guarantees that buckling is the mode of failure in bar $b_i$, and the yielding constraints are also satisfied (see lemma \ref{theo:buckling_design_2}). For the \emph{superstructure}, assuming the force $f_1$  (\ref{forces_t1_t3}) and the length $b_1$ (\ref{n1q1p1:topology}), then lemma \ref{theo:buckling_design_2} reduces to (\ref{chi_1_B}). Similarly, for the \emph{substructure}, assuming the force $f_2$ (\ref{forces_t1_t3}) and the length $b_2$ (\ref{n1q1p1:topology}), then lemma (\ref{theo:buckling_design_2}) reduces to (\ref{chi_2_B}). Buckling is the mode of failure of \emph{superstructure} and \emph{substructure} designs if both (\ref{chi_1_B}) and (\ref{chi_2_B}) hold or, equivalently, if the following holds:

\bea
\frac{F}{L^2}<min \left[ \tan \alpha \sqrt{1 + \tan^2 \alpha} , \frac{\tan^2 \beta}{2} \right] \left( \frac{4 \sigma_b^2}{\pi E_b} \right).
\label{FL_chi_max_B}
\eea

\noindent From the inequality $\bar \eta_{\alpha \beta} > 1$ we obtain conditions (\ref{chi_1_B}) and (\ref{chi_2_B}). 

The mass of the \emph{substructure} is shown to be less then the mass of the \emph{superstructure} if $\eta>\eta_{\alpha \beta}$, a result that follows by taking the ratio between the mass of the \emph{superstructure }(\ref{n1q1:muB}) and the mass of the \emph{substructure} (\ref{n1p1:muB_B2}). 

\end{proof}

The left contour plot in Fig. \ref{fig:contour_etabar_eta} shows values of the function $\bar \eta_{\alpha \beta}$ for any angles $\alpha$ and $\beta$, indicating the range of $\alpha$ and $\beta$ for which $\bar \eta_{\alpha \beta}>1$, which in turn chooses the appropriate condition (\ref{chi_1_B}) or (\ref{chi_2_B}).
The trend of the function $\eta_{\alpha \beta}$ is shown in the right contour plot of Fig. \ref{fig:contour_etabar_eta}. The physical parameter $\eta$ is a positive number and Fig. \ref{fig:contour_etabar_eta} show the region for which the quantity $\eta_{\alpha \beta}$ is a negative number. We have shown earlier (\ref{n1q1:alphastB}) that the approximated $\alpha=26.56$ degrees. Furthermore Fig \ref{fig:n1pq1_buckling_graph} illustrates that $\alpha=26.56$ degrees is very close to the actual minimum over a very large range of the physical parameter $\eta$. Therefore, from  the right plot in Fig \ref{fig:contour_etabar_eta} any $\alpha$ in the range of the optimal value (~$26$ degrees) yields $\eta>\eta(\alpha,\beta)$. Hence, the \emph{substructure} bridge has the minimal mass.

\begin{figure}[hb]
\unitlength1cm
\begin{picture}(8,5.75)
\put(0.5,0){\psfig{figure=./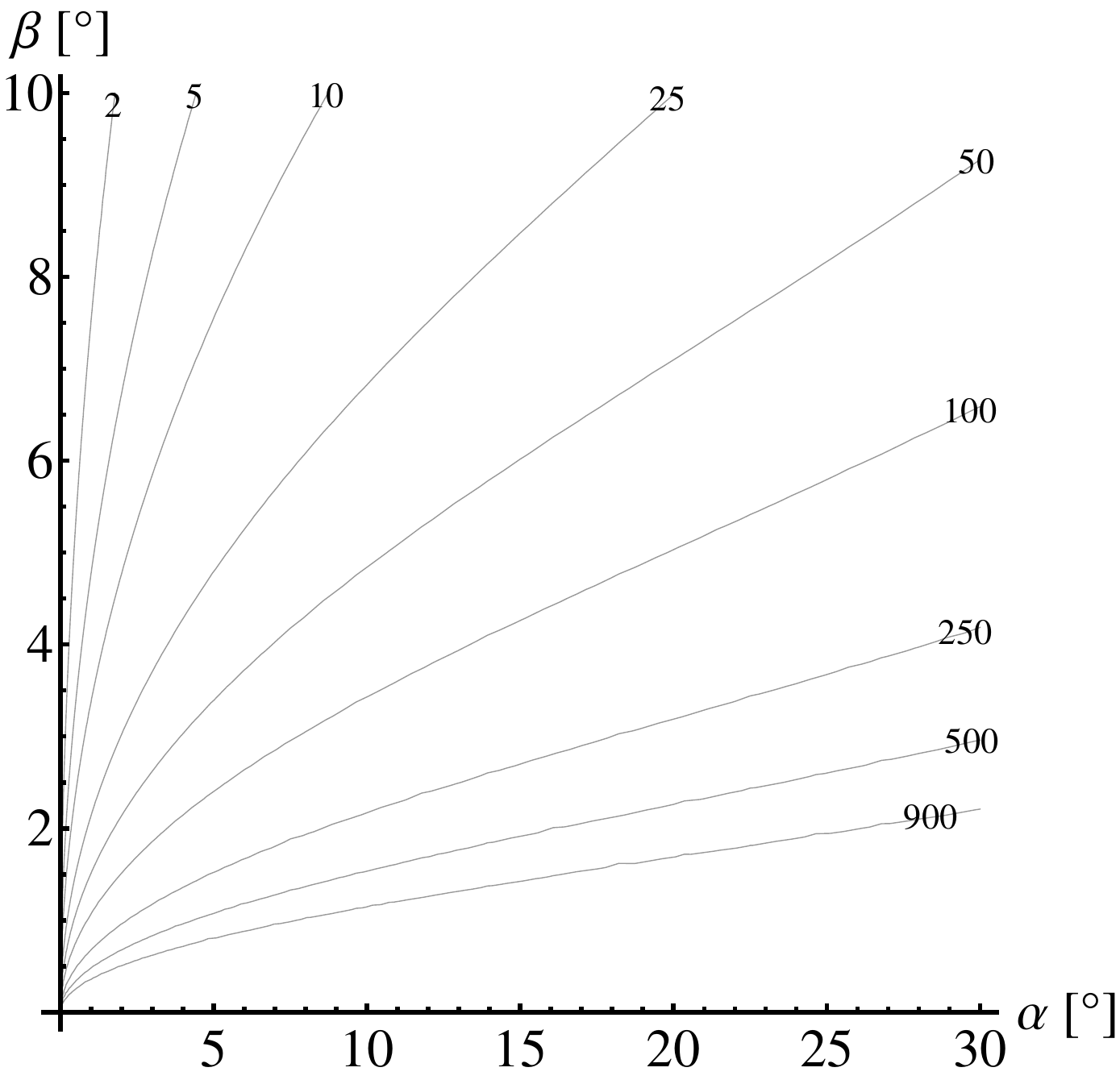,height=6cm}}
\put(8.5,0){\psfig{figure=./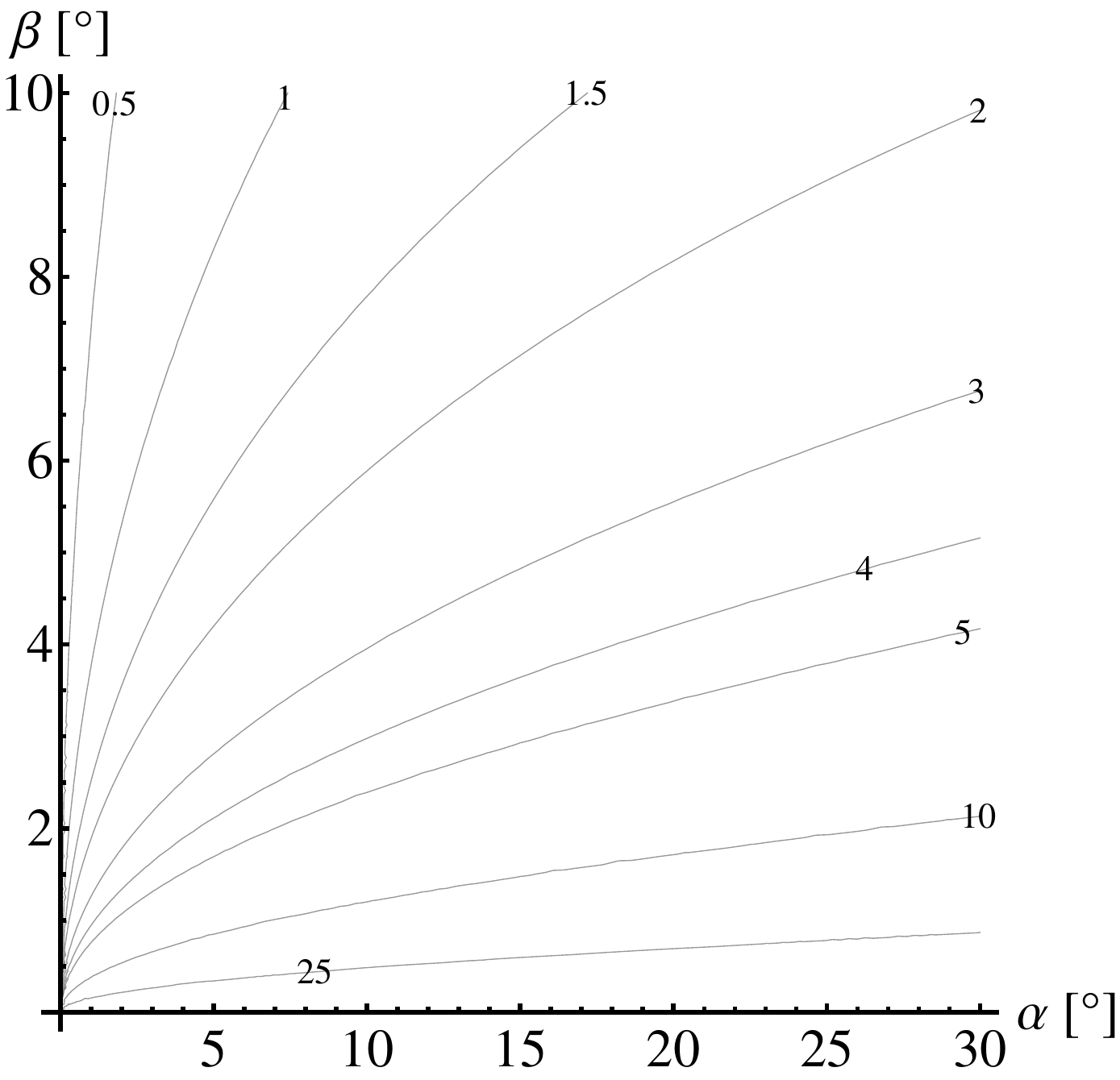,height=6cm}}
\end{picture}
\caption{Contour plots of the functions $\bar \eta_{\alpha \beta}$, (left, Eq. \ref{etabar_alpha_beta}) and $\eta_{\alpha \beta}$, (right, Eq. \ref{eta_alpha_beta}) for different values of the aspect angles $\alpha$ and $\beta$}
\label{fig:contour_etabar_eta}
\end{figure}

\section{Mass of Bridges of Complexity ($n,p,q$) = ($1,p,q$), Under Yield and Buckling Constraints} \label{Sec:n1qp}

Now we consider more complex structures by increasing $p$, $q$. This section finds the minimal mass of \emph{substructure}, and \emph{superstructure} bridges with complexity ($n,p,q$) = ($1,p,q$), for any $p$ and $q$ greater then $1$.

\begin{figure}[hb] 
\unitlength1cm
\begin{picture}(10,5)
\put(0.5,0){\psfig{figure=./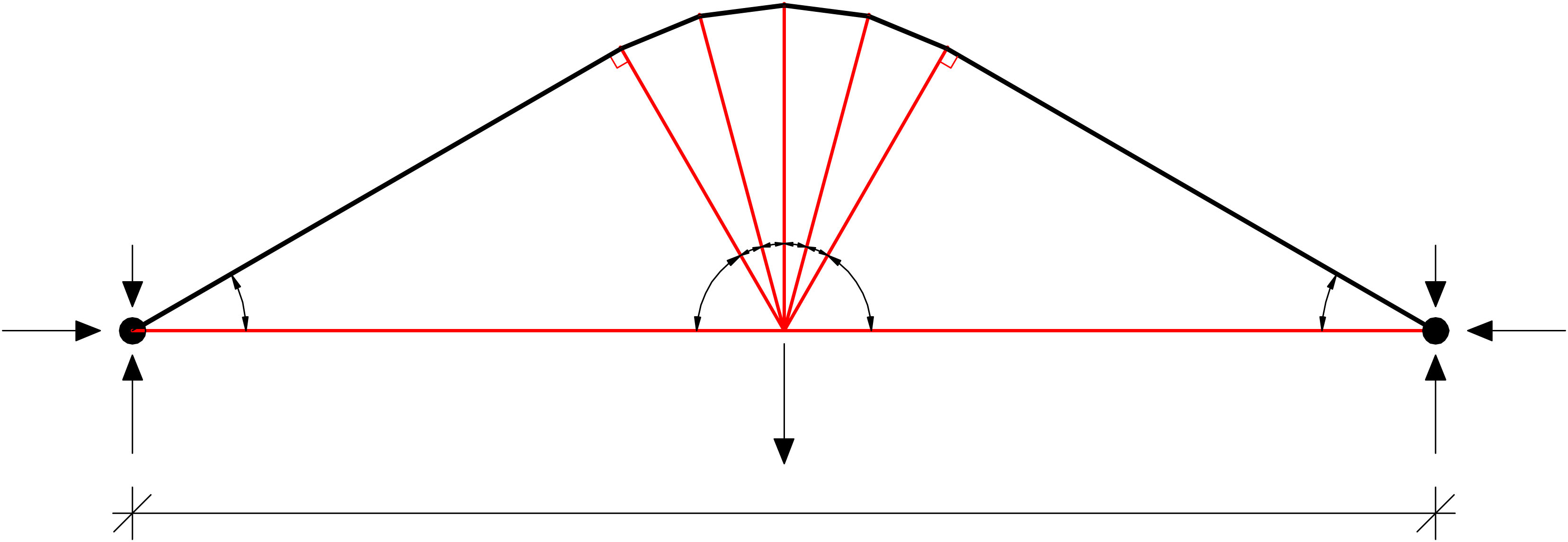,height=5cm}}
\put(7.5,-0.3){{$\mbox{$L$}$}}
\put(7.8,1.2){{$\mbox{$\frac{F}{2}$}$}}
\put(1.85,2.5){{$\mbox{$\frac{F}{4}$}$}}
\put(13.25,2.5){{$\mbox{$\frac{F}{4}$}$}}
\put(1.85,1.25){{$\mbox{$\frac{F}{2}$}$}}
\put(13.25,1.25){{$\mbox{$\frac{F}{2}$}$}}
\put(14.2,2.2){{$\mbox{$w_x$}$}}
\put(0.7,2.2){{$\mbox{$w_x$}$}}
\put(12.3,2.2){{$\mbox{$\alpha$}$}}
\put(2.9,2.2){{$\mbox{$\alpha$}$}}
\put(5.9,2.2){{$\mbox{$\frac{\pi}{2}-\alpha$}$}}
\put(8.7,2.2){{$\mbox{$\frac{\pi}{2}-\alpha$}$}}
\put(8.1,2.95){{$\mbox{$\gamma$}$}}
\put(7.75,2.95){{$\mbox{$\gamma$}$}}
\put(7.15,2.95){{$\mbox{$\gamma$}$}}
\put(7.45,2.95){{$\mbox{$\gamma$}$}}
\put(4.2,1.5){{$\mbox{$t_0,s_0$}$}}
\put(10.2,1.5){{$\mbox{$t_0,s_0$}$}}
\put(3.3,3.5){{$\mbox{$f_1,b_1$}$}}
\put(11.3,3.5){{$\mbox{$f_1,b_1$}$}}
\put(6.2,5){{$\mbox{$f_2,b_2$}$}}
\put(8.3,5){{$\mbox{$f_2,b_2$}$}}
\put(5.9,3.25){{$\mbox{$t_1,s_1$}$}}
\put(8.7,3.25){{$\mbox{$t_1,s_1$}$}}
\put(6.75,4.25){{$\mbox{$t_2,s_2$}$}}
\put(8,4.25){{$\mbox{$t_2,s_2$}$}}
\end{picture}
\caption{Notations for forces and lengths of bars and cables for a \emph{superstructure} with complexity $n = 1$ and $q > 1$.}
\label{theorems_sup_n1p>1}
\end{figure}

\begin{figure}[hbt]
\unitlength1cm
\scalebox{0.8}{
\begin{picture}(14,17.5)
\put(0.5,-0.5){\psfig{figure=./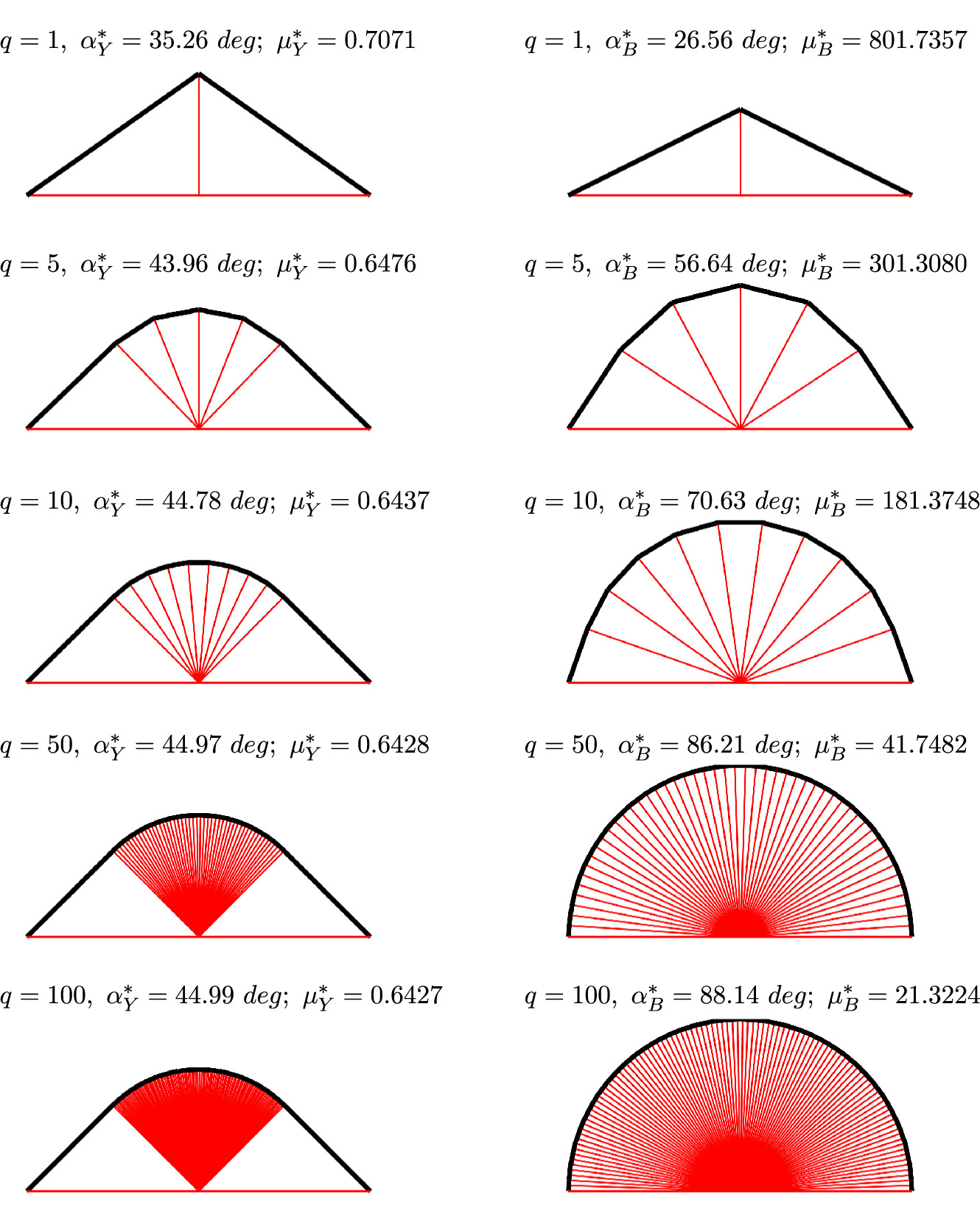,height=17.5cm}}
\end{picture}}
\caption{Optimal topologies of \emph{superstructure} bridges with complexity $(n,p,q)=(1,0,q \rightarrow \infty)$ under yielding constraints (left) and buckling constraints (right) for different $q$, (steel for bars and cables, $F = 1 \ N$, $L = 1 \ m$).}
\label{fig:PANEL1}
\end{figure}

\begin{figure}[hb] 
\unitlength1cm
\begin{picture}(10,5.35)
\put(0.5,1.55){\psfig{figure=./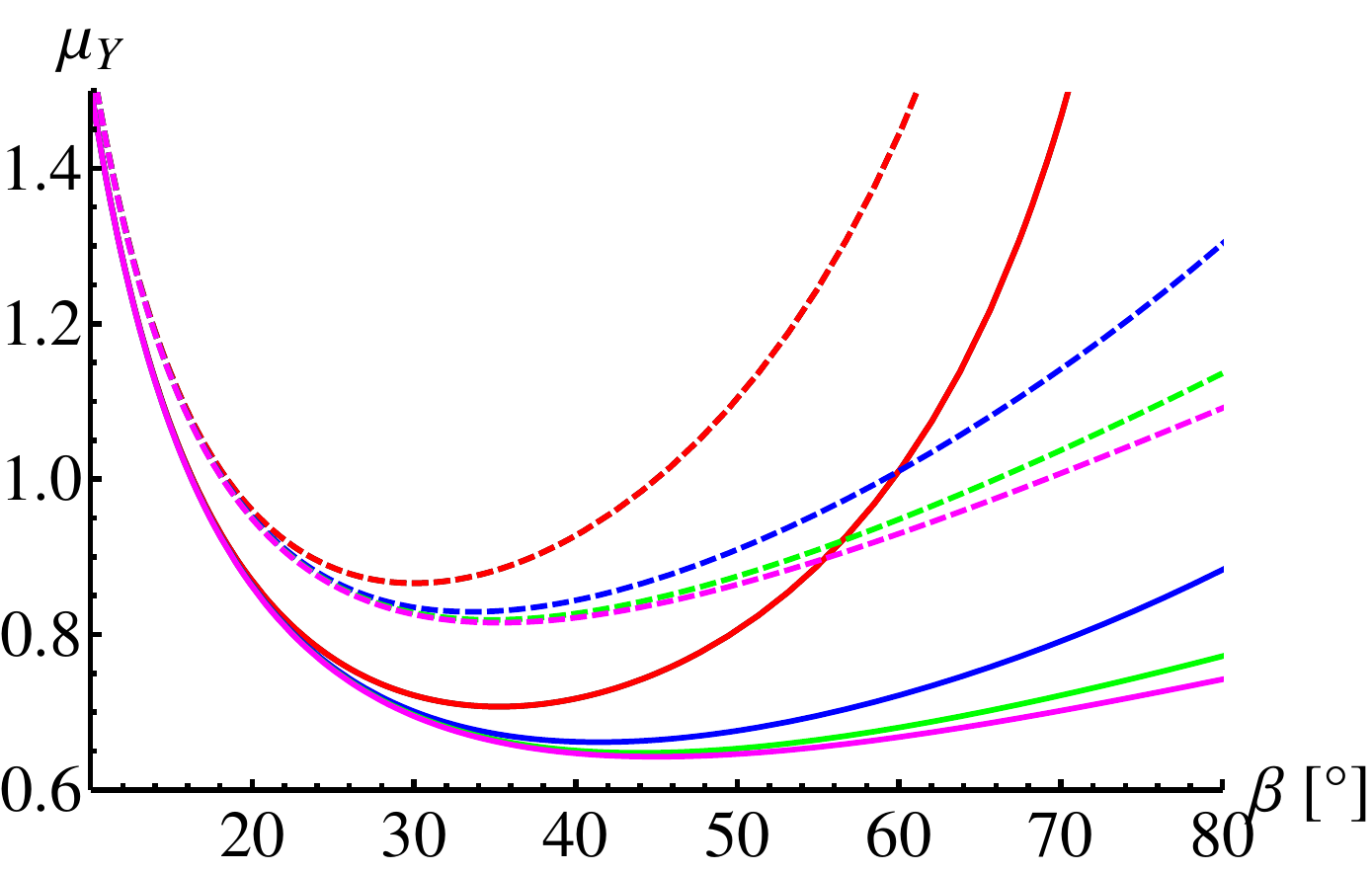,height=4.5cm}}
\put(8,1.55){\psfig{figure=./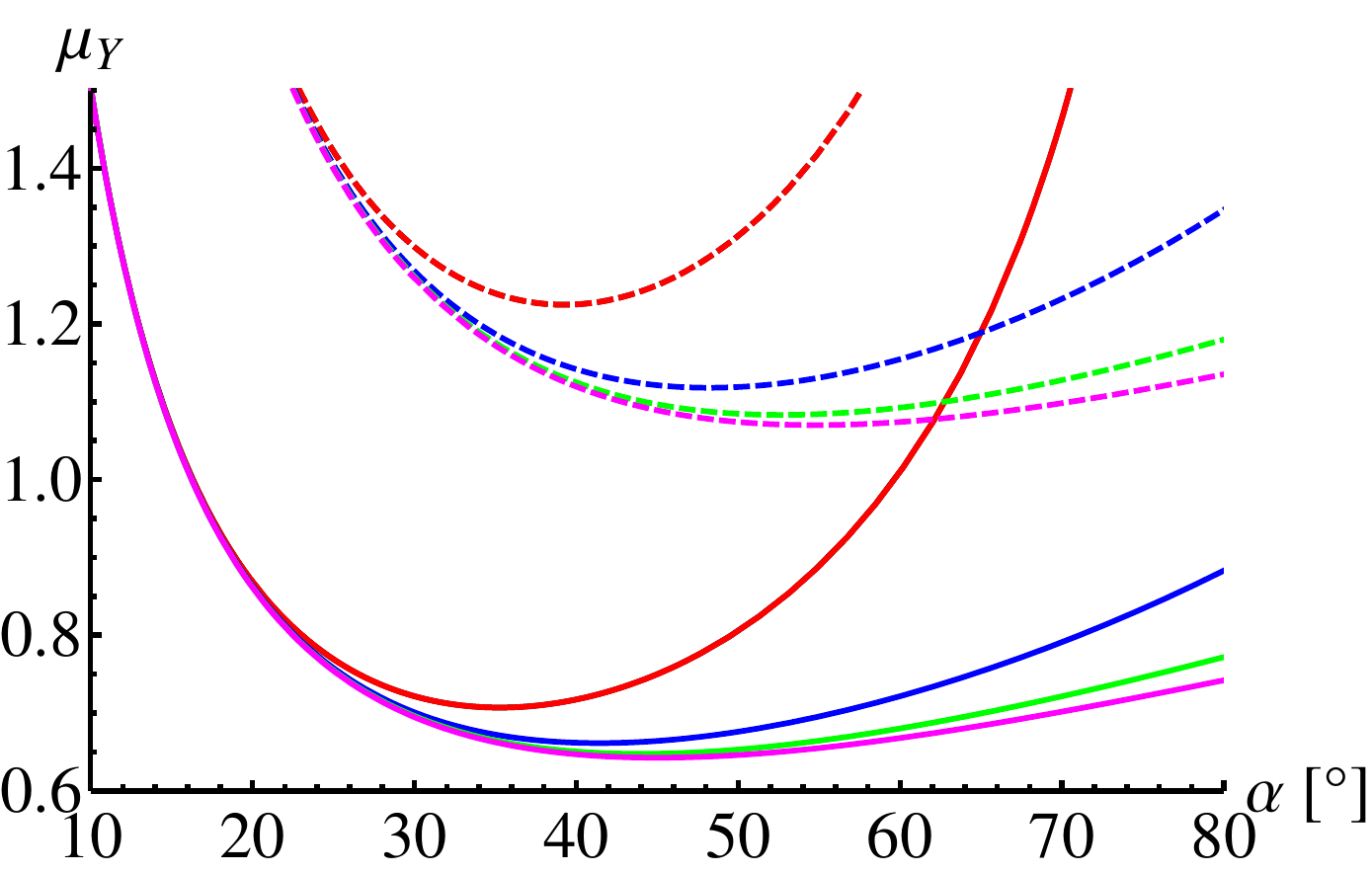,height=4.5cm}}
\put(2.5,-0.25){\psfig{figure=./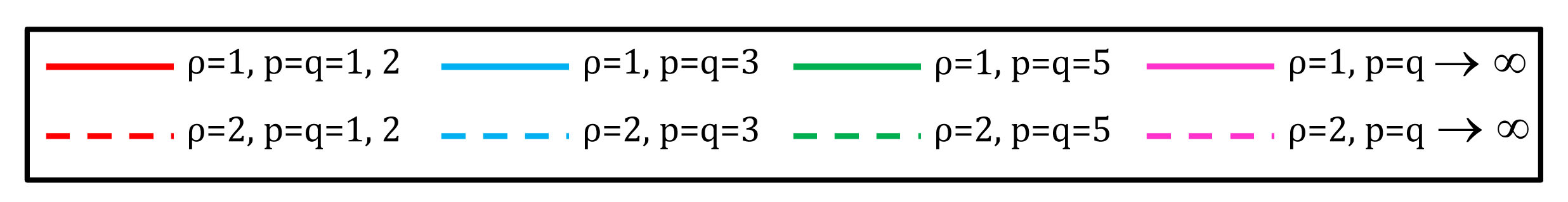,height=1.5cm}}
\end{picture}
\caption{Mass curves under yielding constraints of \emph{substructures} (left) and \emph{superstructures} (right) vs. aspect angle $\beta$ (left) and $\alpha$ (right) for different complexity $p$ (left) and $q$ (right), ($F = 1 \ N$, $L = 1 \ m$).}
\label{plot_sub_n1p>1}
\end{figure}

\begin{figure}[hb] 
\unitlength1cm
\begin{picture}(10,5.4)
\put(0.5,1.5){\psfig{figure=./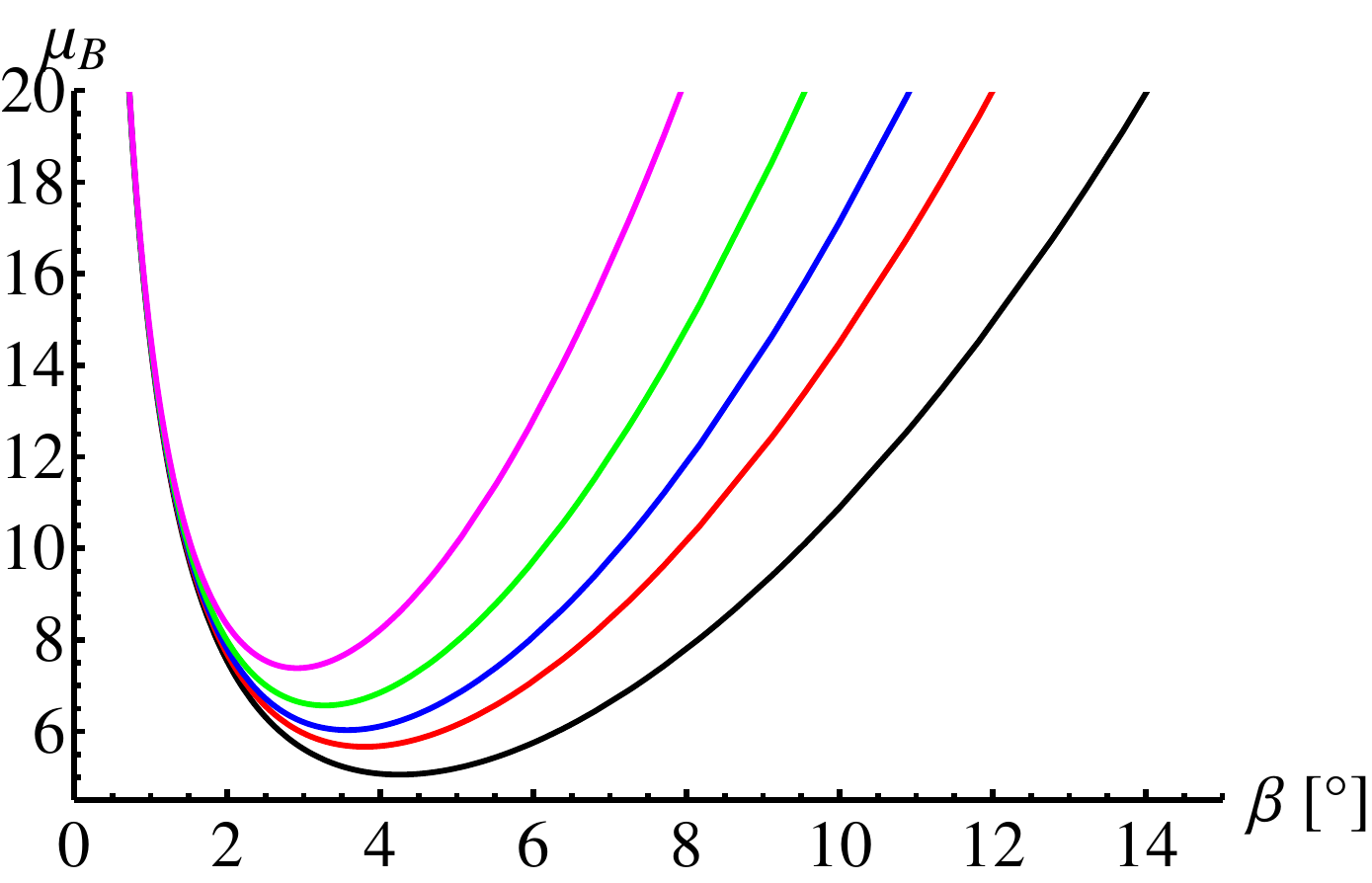,height=4.5cm}}
\put(8,1.5){\psfig{figure=./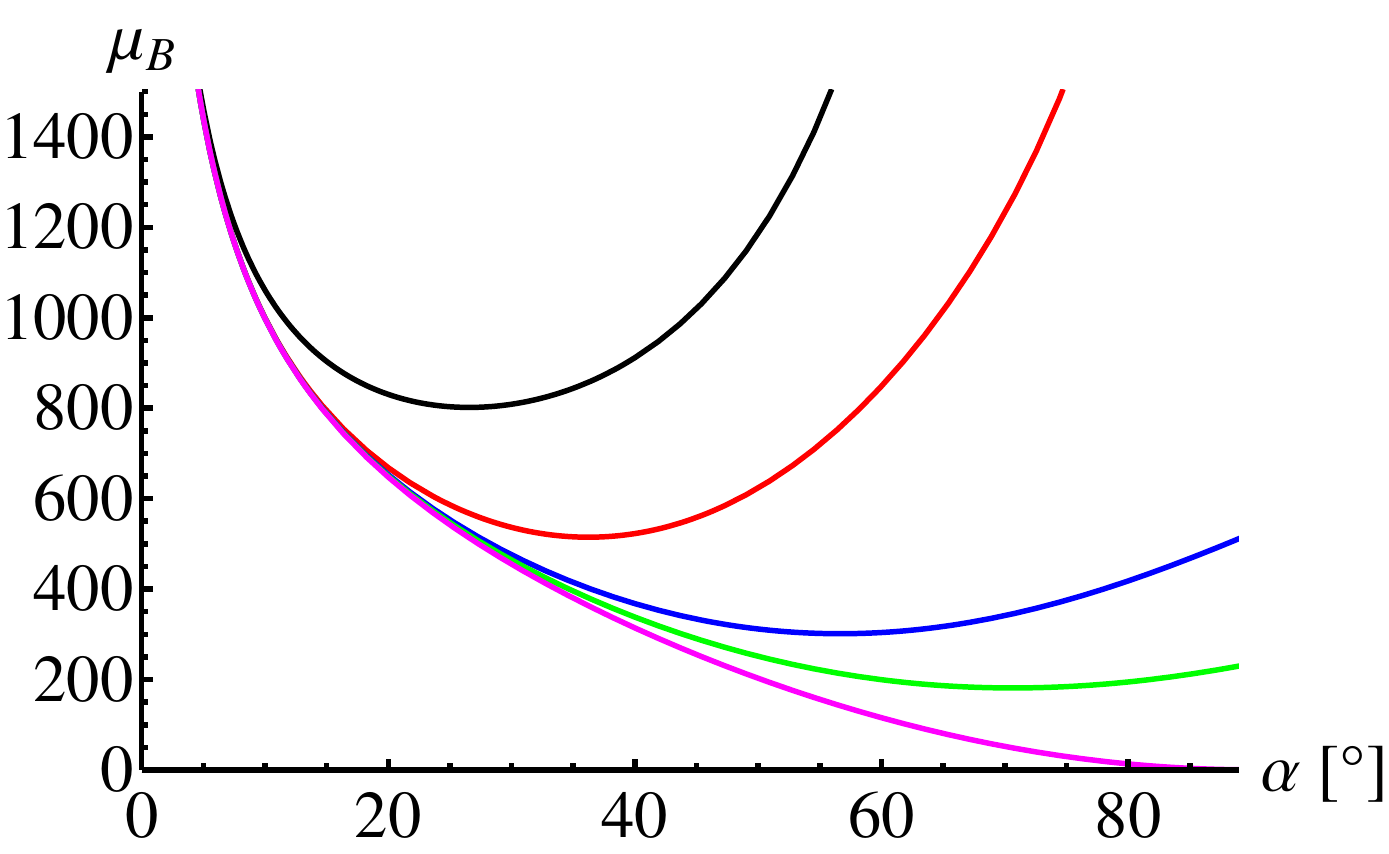,height=4.5cm}}
\put(2,-0.25){\psfig{figure=./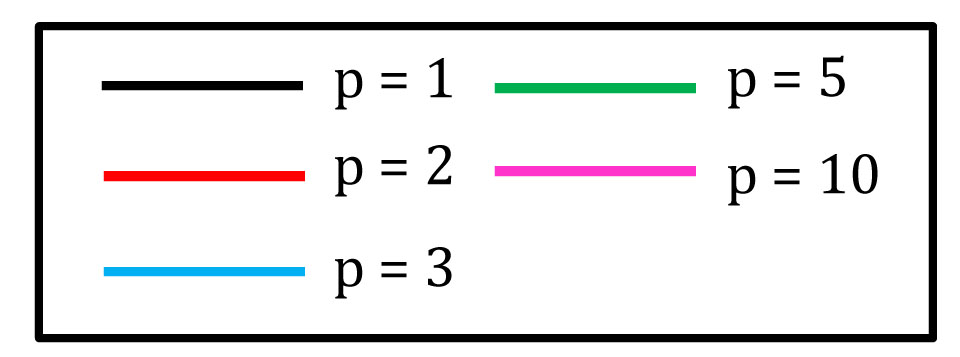,height=1.5cm}}
\put(9.5,-0.25){\psfig{figure=./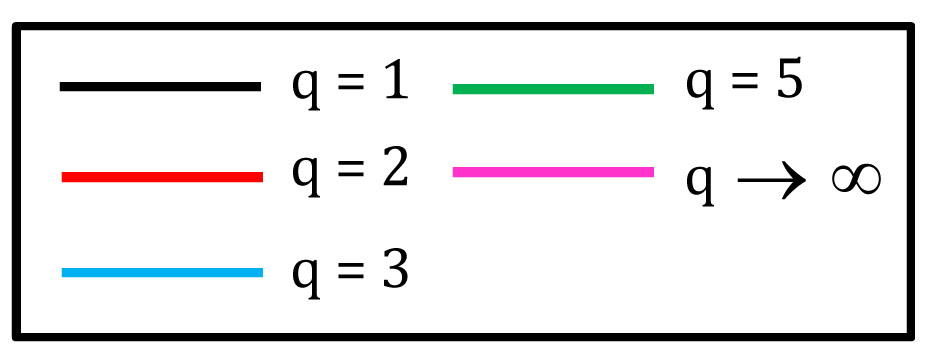,height=1.5cm}}
\end{picture}
\caption{Mass curves under buckling constraints of \emph{substructures} (left) and \emph{superstructures} (right) vs. aspect angle $\beta$ (left) and $\alpha$ (right) for different complexity $p$ (left) and $q$ (right), (steel bars and cables, $F = 1 \ N$, $L = 1 \ m$).}
\label{plot_sup_n1p>1}
\end{figure}

\begin{figure}[hb] 
\unitlength1cm
\begin{picture}(10,5)
\put(0.5,0){\psfig{figure=./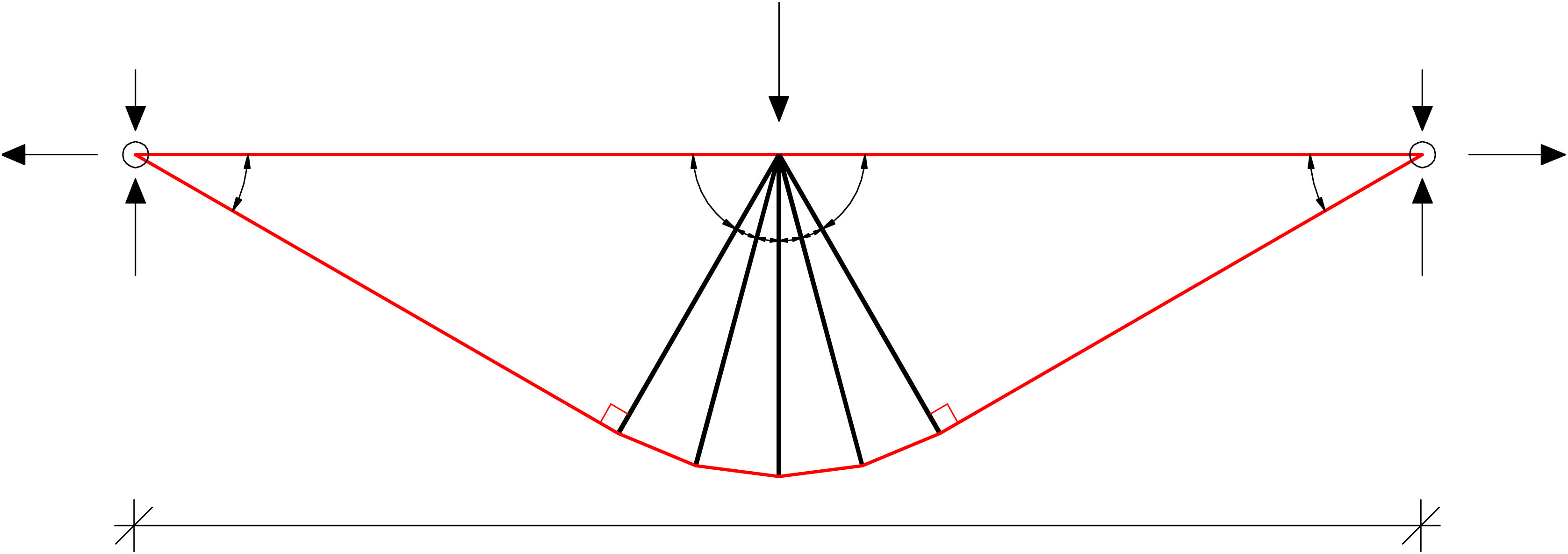,height=5cm}}
\put(7.5,-0.3){{$\mbox{$L$}$}}
\put(7.65,4.5){{$\mbox{$\frac{F}{2}$}$}}
\put(1.85,4){{$\mbox{$\frac{F}{4}$}$}}
\put(12.85,4){{$\mbox{$\frac{F}{4}$}$}}
\put(1.85,2.75){{$\mbox{$\frac{F}{2}$}$}}
\put(12.85,2.75){{$\mbox{$\frac{F}{2}$}$}}
\put(14,3.85){{$\mbox{$w_x$}$}}
\put(0.7,3.85){{$\mbox{$w_x$}$}}
\put(12,3.2){{$\mbox{$\beta$}$}}
\put(2.8,3.2){{$\mbox{$\beta$}$}}
\put(5.7,3.2){{$\mbox{$\frac{\pi}{2}-\beta$}$}}
\put(8.5,3.2){{$\mbox{$\frac{\pi}{2}-\beta$}$}}
\put(7.9,2.4){{$\mbox{$\gamma$}$}}
\put(7.55,2.4){{$\mbox{$\gamma$}$}}
\put(6.95,2.4){{$\mbox{$\gamma$}$}}
\put(7.25,2.4){{$\mbox{$\gamma$}$}}
\put(4.2,3.8){{$\mbox{$t_0,s_0$}$}}
\put(10.2,3.8){{$\mbox{$t_0,s_0$}$}}
\put(3.2,2){{$\mbox{$t_1,s_1$}$}}
\put(11.2,2){{$\mbox{$t_1,s_1$}$}}
\put(6.2,0.5){{$\mbox{$t_2,s_2$}$}}
\put(8.3,0.5){{$\mbox{$t_2,s_2$}$}}
\put(5.6,2){{$\mbox{$f_1,b_1$}$}}
\put(8.6,2){{$\mbox{$f_1,b_1$}$}}
\put(6.5,1.25){{$\mbox{$f_2,b_2$}$}}
\put(7.7,1.25){{$\mbox{$f_2,b_2$}$}}
\end{picture}
\caption{Notations for forces and lengths of bars and cables for a \emph{substructure} with complexity $n = 1$ and $p > 1$.}
\label{theorems_sub_n1p>1}
\end{figure}

\subsection{\emph{Superstructure} Bridge with Complexity $(n,p,q)=(1,0,q>1)$} \label{sup_n1p>1}

\bigskip

\noindent Refer to Fig. \ref{theorems_sup_n1p>1} for the notation. The angle between the bars is:

\beq
\label{gamma_sup}
\gamma= \frac{2 \alpha}{q - 1}.
\eeq

\noindent The lengths of the bars and cables are:

\beq
s_0 = \frac{L}{2}, \ \ \
s_1 = s_2 = \frac{L}{2} \sin \alpha, \ \ \
b_1 = \frac{L}{2} \cos \alpha, \ \ \
b_2 = L \sin \alpha \sin \left( \frac{\alpha}{q - 1} \right). 
\label{geom_n1q>1}
\eeq

\noindent From the equilibrium equations, we obtain the following relations for the forces:

\bea
t_2 = \frac{F}{2 \left[ \cos \alpha + \sin \left( \frac{ \alpha \left( q - 2 \right)}{q-1} \right) / \sin \left( \frac{\alpha}{q-1} \right) \right] }, \ \ \
t_1 = \frac{t_2}{2}, \\
f_2 = \frac{t_2}{2 \sin \left( \frac{\alpha}{q-1} \right)}, \ \ \ 
f_1 = {f_2}{\cos \left( \frac{\alpha}{q-1} \right)}.
\label{forces_n1q>1}
\eea

\begin{theorem} \label{theo:min_yielding_n1q>1}

Consider a \emph{superstructure} bridge, of total span $L$, topology defined by (\ref{geom_n1q>1}), with complexity ($n=1,q>1$), Fig. \ref{theorems_sup_n1p>1}. At the yield condition under a vertical load $F$ the dimensionless total mass is: 

\bea
\mu_{Y} \left(\alpha,q \right) = \frac{t_0}{F} + \frac{ \left( q - 1 \right) \sin \alpha }{4 \left[ \cos \alpha + \sin \left( \frac{\alpha \left( q-2\right)}{q-1} \right) / \sin \left( \frac{\alpha}{q-1} \right) \right] } + \nonumber \\  \frac{\rho}{4} \frac{ \left( q-1 \right) \sin \alpha \sin \left( \frac{\alpha}{q-1} \right) + \cos \alpha \cos \left( \frac{\alpha}{q-1} \right) }{ \sin \left( \frac{\alpha}{q-1} \right) \cos \alpha + \sin \left( \frac{\alpha \left( q-2\right)}{q-1} \right) }.
\label{n1q_muY}
\eea

\end{theorem}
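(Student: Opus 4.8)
The plan is to assemble the dimensionless mass $\mu_Y$ directly from the mass formula (\ref{mass})--(\ref{mu}) by summing the yield-mass contributions of every cable and every bar in the superstructure, using the member lengths (\ref{geom_n1q>1}) and the equilibrium forces (\ref{forces_n1q>1}). Since this is a yield design, each tensile member contributes $m_s=(\rho_s/\sigma_s)\,t\,s$ by (\ref{m_s}), and each compressive member contributes $m_{b,Y}=(\rho_b/\sigma_b)\,f\,b$ by (\ref{m_b}); after dividing by $(\rho_s/\sigma_s)FL$, the bar terms pick up exactly the factor $\rho$ defined in (\ref{rho}), which is why $\rho$ appears only in the third summand of (\ref{n1q_muY}). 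The strategy is therefore bookkeeping-plus-simplification rather than any optimization, since this theorem only records the mass as a function of $(\alpha,q)$; the minimization over $\alpha$ is deferred.

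First I would enumerate the members by type. The superstructure of complexity $(1,0,q)$ has, by (\ref{nconn_super}) with $n=1$, exactly $q+1-1=q$ bars and $q-1+2=q+1$ cables, but the $q$-fold repetition within the fan is what produces the factor $(q-1)$ in (\ref{n1q_muY}). Concretely I would group the contributions as: the top chord cable carrying $t_0$ over length $s_0=L/2$ (contributing the free term $t_0/F$); the fan of $s_1=s_2=\tfrac{L}{2}\sin\alpha$ cables carrying $t_1,t_2$; the outer bars $b_1=\tfrac{L}{2}\cos\alpha$ carrying $f_1$; and the interior bars $b_2=L\sin\alpha\sin(\tfrac{\alpha}{q-1})$ carrying $f_2$. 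Substituting the force solution (\ref{forces_n1q>1}), in which $t_1=t_2/2$, $f_1=f_2\cos(\tfrac{\alpha}{q-1})$, and $f_2=\tfrac{t_2}{2\sin(\alpha/(q-1))}$, collapses every force into $t_2$, and $t_2$ itself is the closed expression in (\ref{forces_n1q>1}) whose denominator $\cos\alpha+\sin(\tfrac{\alpha(q-2)}{q-1})/\sin(\tfrac{\alpha}{q-1})$ is precisely the denominator reappearing in the second (cable) term of (\ref{n1q_muY}).

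The cable sum should then reduce, after inserting $t_2$ and $s_1$, to the second term $\tfrac{(q-1)\sin\alpha}{4[\,\cos\alpha+\sin(\alpha(q-2)/(q-1))/\sin(\alpha/(q-1))\,]}$, the counting factor $(q-1)$ arising from summing the repeated fan cables. The bar sum should reduce to the third term: the $f_1b_1$ contributions generate the $\cos\alpha\cos(\tfrac{\alpha}{q-1})$ piece in the numerator (via $f_1=f_2\cos(\tfrac{\alpha}{q-1})$ times $b_1\propto\cos\alpha$), while the $f_2b_2$ contributions generate the $(q-1)\sin\alpha\sin(\tfrac{\alpha}{q-1})$ piece (via $f_2\propto 1/\sin(\tfrac{\alpha}{q-1})$ times $b_2\propto\sin\alpha\sin(\tfrac{\alpha}{q-1})$, with $(q-1)$ from the count), the common factor $t_2/F$ clearing the denominator down to $\sin(\tfrac{\alpha}{q-1})\cos\alpha+\sin(\tfrac{\alpha(q-2)}{q-1})$.

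The main obstacle I anticipate is the trigonometric consolidation of the bar term: one must verify that the $f_1b_1+\,(\text{sum of }f_2b_2)$ expression, after multiplying through by the shared $t_2$-denominator, simplifies to the stated quotient, which requires recognizing that $\cos\alpha\sin(\tfrac{\alpha}{q-1})+\sin(\tfrac{\alpha(q-2)}{q-1})$ is exactly the denominator of $t_2$ rescaled by $\sin(\tfrac{\alpha}{q-1})$. This identity is what lets the two superficially different denominators in the cable and bar terms of (\ref{n1q_muY}) coexist, and it is the one step where a sign or an index slip in the argument $\tfrac{\alpha(q-2)}{q-1}=\alpha-\tfrac{\alpha}{q-1}$ would propagate. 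Beyond that, the derivation is a routine substitution, and the proof concludes simply by collecting the three grouped terms.
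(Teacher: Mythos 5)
Your proposal is correct and takes essentially the same route as the paper's proof: sum the cable masses $(\rho_s/\sigma_s)\sum t_i s_i$ and the bar masses $(\rho_b/\sigma_b)\sum f_i b_i$ over the member groups, substitute the lengths (\ref{geom_n1q>1}) and the force solution (\ref{forces_n1q>1}), and normalize by $(\rho_s/\sigma_s)FL$, with the $(q-1)$ factors arising exactly as you describe (edge members at half tension plus the interior fan members, and $q-1$ interior bars). One bookkeeping caution: by (\ref{nconn_super}) with $n=1$ the structure has $q+1$ bars and $q+2$ cables (two $b_1$, $q-1$ copies of $b_2$; two $s_0$, two $s_1$, $q-2$ copies of $s_2$), and the free term $t_0/F$ comes from the \emph{two} deck cables, $2t_0s_0=t_0L$, not one --- slips in your narration that do not affect the grouping or the final expression.
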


\begin{proof}

\noindent The total mass of the cables is:

\bea
m_s = \frac{\rho_s}{\sigma_s} \sum_{i=1}^{n_s} t_i s_i = \frac{\rho_s}{\sigma_s} \left( 2 t_0 s_0 + 2 t_1 s_1 + \left( p - 2 \right) t_2 s_2 \right).
\eea

\noindent Substituting (\ref{geom_n1q>1}) and (\ref{forces_n1q>1}) into $m_s$ we get:

\bea
m_s = \frac{\rho_s}{\sigma_s} \left( t_0 L + \frac{F L}{4} \frac{ \left( q - 1 \right) \sin \alpha }{ \left( \cos \alpha + \sin \left( \frac{\alpha \left( q-2 \right)}{q-1} \right) / \sin \left( \frac{\alpha}{q-1} \right) \right) } \right).
\eea

\noindent The total mass of bars is:

\bea
m_b = \frac{\rho_b}{\sigma_b} \sum_{i=1}^{n_b} f_i b_i = \frac{\rho_b}{\sigma_b} \left( 2 f_1 b_1 + \left( p-1 \right) f_2 b_2 \right).
\eea

\noindent Substituting (\ref{geom_n1q>1}) and (\ref{forces_n1q>1}) into $m_b$ we get:

\bea
m_{b} = \frac{\rho_b F L}{4 \sigma_b} \frac{ \left( q-1 \right) \sin \alpha \sin \left( \frac{\alpha}{q-1} \right) + \cos \alpha \cos \left( \frac{\alpha}{q-1} \right) }{ \sin \left( \frac{\alpha}{q-1} \right) \cos \alpha + \sin \left( \frac{\alpha \left( q-2\right)}{q-1} \right) }.
\eea

\noindent Normalizing $m_s$ and $m_b$ and summing we get (\ref{n1q_muY}). 

\end{proof}

\begin{corollary}

The minimal mass in (\ref{n1q_muY}) is achieved at infinite complexity $q \rightarrow \infty$ and $t_0 = 0$. Then the minimal mass at yielding for a \emph{superstructure} bridge is:

\bea
\mu_{Y}^* (\alpha^*_Y, q^*) = \frac{1}{4} \left[ \left( 1 + \rho \right) \arctan \sqrt{\rho} + \sqrt{\rho} \right],
\label{n1q_mustY}
\eea

\noindent where $q^* \rightarrow \infty$ and the optimal angle $\alpha^*_Y$ is:

\bea
\alpha^*_Y = \arctan \sqrt{\rho}.
\label{n1q_alphastY}
\eea

\end{corollary}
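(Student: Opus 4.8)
The plan is to eliminate the free parameter $t_0$, collapse the trigonometric ratios in (\ref{n1q_muY}) into a single transparent expression, read off monotonicity in $q$, and then optimize over $\alpha$. Since the coefficient of $t_0/F$ in (\ref{n1q_muY}) is $+1$ and $t_0\ge 0$ is a free nonnegative parameter, minimality forces $t_0=0$ at once, so I would fix $t_0=0$ from the outset.

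The decisive step is a trigonometric reduction. I would set $\theta=\alpha/(q-1)$, so that the awkward argument becomes $\tfrac{\alpha(q-2)}{q-1}=\alpha-\theta$ and $q-1=\alpha/\theta$. The denominators of both the cable and the bar terms then contain the combination $\sin\theta\cos\alpha+\sin(\alpha-\theta)$, which by the angle-subtraction identity collapses to $\sin\alpha\cos\theta$. Substituting this into (\ref{n1q_muY}) (with $t_0=0$) reduces the mass to
\[
\mu_{Y}(\alpha,q)=\frac{(1+\rho)\,\alpha}{4}\,\frac{\tan\theta}{\theta}+\frac{\rho}{4}\cot\alpha,\qquad \theta=\frac{\alpha}{q-1}.
\]

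With the mass in this form the monotonicity in $q$ is transparent: the function $g(\theta)=\tan\theta/\theta$ is strictly increasing on $(0,\pi/2)$ with $g(0^+)=1$, while $\theta=\alpha/(q-1)$ strictly decreases as $q$ grows, so $\mu_Y$ strictly decreases in $q$. Its infimum is therefore attained in the limit $q\to\infty$ (equivalently $\theta\to0^+$, $g\to1$), giving $\mu_Y^\infty(\alpha)=\tfrac14\bigl[(1+\rho)\alpha+\rho\cot\alpha\bigr]$. I would then minimize over $\alpha$ through the first-order condition $(1+\rho)-\rho\csc^2\alpha=0$, i.e. $\sin^2\alpha=\rho/(1+\rho)$, hence $\tan\alpha=\sqrt{\rho}$ and $\alpha_Y^*=\arctan\sqrt{\rho}$; the second derivative $\tfrac{\rho}{2}\csc^2\alpha\cot\alpha>0$ confirms a genuine minimum. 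Substituting $\cot\alpha_Y^*=1/\sqrt{\rho}$, so that $\rho\cot\alpha_Y^*=\sqrt{\rho}$, yields exactly (\ref{n1q_mustY}).

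The only real obstacle is spotting the identity $\sin\theta\cos\alpha+\sin(\alpha-\theta)=\sin\alpha\cos\theta$; without it the $q$-dependence of (\ref{n1q_muY}) looks intractable, but once it is in hand both the monotonicity argument and the limit computation become routine. Everything after the reduction is elementary single-variable calculus, and the separation of $\mu_Y$ into a $q$-dependent factor $g(\theta)$ times a fixed coefficient plus a $q$-independent remainder is what makes the claim that the optimum sits at $q\to\infty$ essentially immediate.
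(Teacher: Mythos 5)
Your proof is correct, and it is actually more complete than the paper's own argument. The paper proves this corollary by simply substituting $q\rightarrow\infty$ into (\ref{n1q_muY}) to obtain $\mu_Y=\frac{\alpha}{4}(1+\rho)+\frac{\rho}{4\tan\alpha}$ and then doing the same single-variable minimization over $\alpha$ that you do; however, the crucial claim that the mass is minimized \emph{at} $q\rightarrow\infty$ is justified there only by inspection of the numerical curves in Fig.~\ref{plot_sub_n1p>1}, i.e.\ it is a plot-based assertion rather than a proof. Your route closes exactly that gap: the substitution $\theta=\alpha/(q-1)$ together with the identity $\sin\theta\cos\alpha+\sin(\alpha-\theta)=\sin\alpha\cos\theta$ collapses (\ref{n1q_muY}) (with $t_0=0$) to $\mu_Y=\frac{(1+\rho)\alpha}{4}\,\frac{\tan\theta}{\theta}+\frac{\rho}{4}\cot\alpha$, and I verified this reduction is algebraically exact for both the cable and bar terms. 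Since $\tan\theta/\theta$ is strictly increasing on $(0,\pi/2)$ with limit $1$ at $0^+$, strict monotone decrease of $\mu_Y$ in $q$ follows analytically, the infimum is the $q\rightarrow\infty$ limit, and the limit formula (\ref{n1q_mustYq}) drops out for free rather than requiring a separate asymptotic computation. The final stationarity condition $\sin^2\alpha=\rho/(1+\rho)$, the positivity of the second derivative, and the substitution yielding (\ref{n1q_mustY}) all check out. In short: the paper's approach is shorter but leans on a figure for the monotonicity in $q$; yours replaces that figure with a two-line trigonometric argument and is the version one would want if this corollary were to be stated as a theorem with a self-contained proof.
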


The left side of Fig. \ref{fig:PANEL1} illustrates \emph{superstructure} bridges as $q \rightarrow \infty$, where masses are given for any $q$ by (\ref{n1q_muY}).

\begin{proof}

\noindent Substitute $q \rightarrow \infty$ into Eq. (\ref{n1q_muY}) to obtain:

\bea
\mu_{Y}^* (\alpha, q^* \rightarrow \infty)= \frac{\alpha}{4} \left( 1 + \rho \right) + \frac{\rho}{4 \tan \alpha}.
\label{n1q_mustYq}
\eea

The value of $\alpha$ that minimizes (\ref{n1q_mustYq}) is (\ref{n1q_alphastY}). See Fig. \ref{plot_sub_n1p>1} to see how mass (\ref{n1q_muY}) varies with $q$ and $\alpha$. The optimal $q^*$ is deduced from the plot of Fig. \ref{plot_sub_n1p>1} and the optimal angle is computed analytically in Eq. (\ref{n1q_alphastY}).   

\end{proof}

\begin{theorem} \label{theo:min_buckling_n1q>1}

Consider a \emph{superstructure} bridge with topology (\ref{geom_n1q>1}), and complexity $(n,p,q)=(1,0,q>1)$, see Fig. \ref{theorems_sup_n1p>1}. At the buckling condition the dimensionless total mass is:

\bea
\mu_{B} \left(\alpha,q \right) = \frac{t_0}{F} + \frac{ \left( q - 1 \right) \sin \alpha }{4 \left[ \cos \alpha + \sin \left( \frac{\alpha \left( q-2\right)}{q-1} \right) / \sin \left( \frac{\alpha}{q-1} \right) \right] } + \nonumber \\  \eta \left[ \frac{\cos^2 \alpha \sqrt{\cos \left( \frac{\alpha}{q-1} \right) } + 2 \left( q-1 \right) \sin^2 \alpha \sin^2 \left( \frac{\alpha}{q-1} \right) }{2 \sqrt{ \sin \left( \frac{\alpha}{p-1} \right) \cos \alpha + \sin \left( \frac{\alpha \left( q-2\right)}{q-1} \right) } } \right].
\label{n1q_muB}
\eea

\end{theorem}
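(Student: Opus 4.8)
The plan is to follow the proof of Theorem~\ref{theo:min_yielding_n1q>1} line for line, changing only the bar-mass accounting. Cables are still sized against yielding, so their contribution is \emph{identical} to the one already computed there; only the bars are now sized against buckling, using the formula (\ref{m_b}) in place of the yield formula. This means the first two terms of (\ref{n1q_muB}) come for free, and all the new work is concentrated in the bracketed $\eta$-term.

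First I would carry the cable mass over verbatim. Since a cable obeys $m_s=(\rho_s/\sigma_s)\,t_s s$ regardless of how the bars are constrained, the sum $2t_0 s_0 + 2 t_1 s_1 + (q-2)t_2 s_2$ with the substitutions (\ref{geom_n1q>1}) and (\ref{forces_n1q>1}) reproduces exactly the first two terms of (\ref{n1q_muY}), which are the first two terms of (\ref{n1q_muB}). No new computation is needed here.

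Next I would assemble the total bar mass from the buckling law. There are two bars of type $b_1$ and $q-1$ of type $b_2$, so
\beq
m_{b,B}=2\rho_b\Bigl[\,2\,b_1^2\sqrt{\tfrac{f_1}{\pi E_b}}+(q-1)\,b_2^2\sqrt{\tfrac{f_2}{\pi E_b}}\,\Bigr],
\eeq
into which I substitute $b_1=\tfrac{L}{2}\cos\alpha$, $b_2=L\sin\alpha\,\sin(\tfrac{\alpha}{q-1})$ and the forces (\ref{forces_n1q>1}). The one genuinely new feature relative to the yield proof is the square root of the forces: writing $f_1=f_2\cos(\tfrac{\alpha}{q-1})$ gives $\sqrt{f_1}=\sqrt{f_2}\,\sqrt{\cos(\tfrac{\alpha}{q-1})}$, and with $f_2=t_2/(2\sin(\tfrac{\alpha}{q-1}))$, $t_2=F/(2D)$, where $D=\cos\alpha+\sin(\tfrac{\alpha(q-2)}{q-1})/\sin(\tfrac{\alpha}{q-1})$, both bar terms share the common factor $\sqrt{F}/\bigl(2\sqrt{D\sin(\tfrac{\alpha}{q-1})}\bigr)$.

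Finally I would normalise by $(\rho_s/\sigma_s)FL$ and recognise the prefactor $\rho_b L/[(\rho_s/\sigma_s)\sqrt{\pi E_b F}]=\eta$ from (\ref{eta}), then add the cable contribution. The single identity that does all the simplification is $D\sin(\tfrac{\alpha}{q-1})=\cos\alpha\,\sin(\tfrac{\alpha}{q-1})+\sin(\tfrac{\alpha(q-2)}{q-1})$, which converts the common denominator into precisely the radicand of (\ref{n1q_muB}). The main obstacle is purely bookkeeping: tracking the square roots introduced by the buckling law and verifying that the factors of $\tfrac12$ arising from $b_1^2=\tfrac{L^2}{4}\cos^2\alpha$, from $2\rho_b$, and from $\sqrt{f_2}$ combine to leave the stated coefficients (the $b_1$-term keeping a $\tfrac12$ and the $b_2$-term collapsing to coefficient one). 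There is no analytic subtlety beyond this simplification.
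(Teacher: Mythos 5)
Your proposal is correct and follows essentially the same route as the paper's own proof: carry the cable mass over unchanged from the yield theorem, compute the bar mass as $4\rho_b b_1^2\sqrt{f_1/(\pi E_b)}+2(q-1)\rho_b b_2^2\sqrt{f_2/(\pi E_b)}$, substitute the geometry and forces using the identity $\left[\cos\alpha+\sin\left(\tfrac{\alpha(q-2)}{q-1}\right)/\sin\left(\tfrac{\alpha}{q-1}\right)\right]\sin\left(\tfrac{\alpha}{q-1}\right)=\cos\alpha\sin\left(\tfrac{\alpha}{q-1}\right)+\sin\left(\tfrac{\alpha(q-2)}{q-1}\right)$, and normalize to expose $\eta$. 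Your coefficient bookkeeping checks out against the stated formula (and you correctly read the paper's stray $p$'s as $q$'s).
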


\begin{proof}

\noindent The total mass of the cables has been already computed in the proof of Theorem \ref{theo:min_yielding_n1q>1}.

\noindent The total mass of bars is:

\bea
m_b = \sum_{i=1}^{n_b} 2 \rho_b b_i^2 \sqrt{\frac{f_i}{\pi E_b}} = 4 \rho_b b_1^2 \sqrt{\frac{f_1}{\pi E_b}} + 2 \left( p-1 \right) \rho_b b_2^2 \sqrt{\frac{f_2}{\pi E_b}}.
\eea

\noindent Substituting (\ref{geom_n1q>1}) and (\ref{forces_n1q>1}) into $m_b$ we get:

\bea
m_{b} = \frac{\rho_b L^2 \sqrt{F}}{\sqrt{\pi E_b}} \left[ \frac{\cos^2 \alpha \sqrt{\cos \left( \frac{\alpha}{q-1} \right) } + 2 \left( q-1 \right) \sin^2 \alpha \sin^2 \left( \frac{\alpha}{q-1} \right) }{2 \sqrt{ \sin \left( \frac{\alpha}{p-1} \right) \cos \alpha + \sin \left( \frac{\alpha \left( q-2\right)}{q-1} \right) } } \right].
\eea

\noindent Normalizing $m_s$ and $m_b$ and summing we get (\ref{n1q_muB}). 

\end{proof}

\begin{corollary}

\noindent The minimal mass \emph{superstructure} is achieved for $q \rightarrow \infty$ and $t_0 = 0$, leading to the following mass:

\bea
\mu_{B} \left(\alpha, q \rightarrow \infty \right) = \frac{\alpha}{4} + \frac{\eta \cos^2 \alpha}{2 \sqrt{\sin \alpha}}.
\label{n1qinf_muB}
\eea

\end{corollary}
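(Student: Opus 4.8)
The plan is to mirror the proof of the companion yielding corollary: first eliminate the free parameter $t_0$, then pass to the limit $q\to\infty$ in the mass expression (\ref{n1q_muB}) term by term. Since the coefficient of $t_0/F$ in $\mu_B(\alpha,q)$ is $+1$ and $t_0\ge 0$ is a free tension, the cable mass is strictly decreased by taking $t_0=0$; hence every minimal-mass design has $t_0=0$, and I would drop this term at the outset.

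Next I would evaluate the two remaining terms of (\ref{n1q_muB}) as $q\to\infty$. Writing $\theta_q=\alpha/(q-1)\to 0$, the governing asymptotics are $\sin\theta_q=\theta_q+O(\theta_q^3)$, $\cos\theta_q\to 1$, and $\sin(\alpha(q-2)/(q-1))\to\sin\alpha$. For the cable (second) term, the ratio $\sin(\alpha(q-2)/(q-1))/\sin\theta_q$ behaves like $(q-1)\sin\alpha/\alpha$ and dominates the $\cos\alpha$ in the bracket, so the denominator is asymptotic to $4(q-1)\sin\alpha/\alpha$ and the whole term tends to $\alpha/4$. For the bar (third) term, in the numerator $\cos^2\alpha\,\sqrt{\cos\theta_q}\to\cos^2\alpha$ while $2(q-1)\sin^2\alpha\,\sin^2\theta_q\sim 2\alpha^2\sin^2\alpha/(q-1)\to 0$, and in the denominator $\sqrt{\sin\theta_q\cos\alpha+\sin(\alpha(q-2)/(q-1))}\to\sqrt{\sin\alpha}$. (Here I read the stray $p$ in the denominator of (\ref{n1q_muB}) as the evident typo it is, replacing $\sin(\alpha/(p-1))$ by $\sin(\alpha/(q-1))$.) Collecting these limits gives exactly $\mu_B(\alpha,q\to\infty)=\alpha/4+\eta\cos^2\alpha/(2\sqrt{\sin\alpha})$, which is (\ref{n1qinf_muB}).

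The one step that is not purely mechanical is the assertion that the infimum over the complexity $q$ is attained in the limit $q\to\infty$ rather than at some finite $q$. As in the yielding case, I would establish this by showing that $\mu_B(\alpha,q)$ is monotone decreasing in $q$ for each fixed $\alpha$; this is the behavior displayed in the right panel of Fig. \ref{plot_sup_n1p>1}, and the analogous yielding claim is settled the same way by reading monotonicity off Fig. \ref{plot_sub_n1p>1}. I expect this monotonicity to be the main obstacle to a fully analytic argument, since it requires sign control of $\partial\mu_B/\partial q$ uniformly in $q$; absent that, the result is best read as the (numerically confirmed) limiting value toward which the finite-$q$ masses decrease.
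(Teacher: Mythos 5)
Your proposal is correct, and its overall route coincides with the paper's: set $t_0=0$, pass to the limit $q\to\infty$ in (\ref{n1q_muB}), and justify optimality over $q$ by the numerical evidence of Fig.~\ref{plot_sup_n1p>1}. The difference is one of completeness rather than of method: the paper's entire proof is the single sentence that the plot of (\ref{n1q_muB}) versus $\alpha$ for different $q$ shows a global minimum at $q\to\infty$; it never exhibits the asymptotic computation that produces the formula (\ref{n1qinf_muB}), nor the argument for $t_0=0$. You supply both: the observation that the coefficient of $t_0/F$ is $+1$ so any minimal design has $t_0=0$, and the term-by-term limit with $\theta_q=\alpha/(q-1)\to 0$, under which the cable term tends to $\alpha/4$ (the ratio $\sin\bigl(\alpha(q-2)/(q-1)\bigr)/\sin\theta_q\sim(q-1)\sin\alpha/\alpha$ dominating $\cos\alpha$) and the bar term tends to $\eta\cos^2\alpha/(2\sqrt{\sin\alpha})$ (the piece $2(q-1)\sin^2\alpha\,\sin^2\theta_q\sim 2\alpha^2\sin^2\alpha/(q-1)$ vanishing); both limits check out, and your reading of the stray $p$ in the denominator of (\ref{n1q_muB}) as a typo for $q$ is clearly right, since $p=0$ here. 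You are also candid about the one step neither you nor the paper settles analytically: that the infimum over $q$ is attained only in the limit requires monotonicity of $\mu_B(\alpha,q)$ in $q$, which both arguments read off the figure rather than prove. In short, your proof is the paper's proof with the missing mathematics filled in, and it correctly localizes the remaining gap that the paper papers over.
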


\begin{proof}
The plot in Fig. \ref{plot_sup_n1p>1} vs. $\alpha$ for different $q$ shows that (\ref{n1q_muB})  has a global minimum value at $q \rightarrow \infty$.
\end{proof}

It is important to consider that, for the solution $q \rightarrow \infty$, buckling is not the mode of failure since the lengths of the bars approaches zero. Also note that at $\alpha = 90 \ deg$, $\mu_B = \pi/8$.

\begin{figure}[hbt]
\unitlength1cm
\scalebox{0.8}{
\begin{picture}(14,17.5)
\put(-0.5,0){\psfig{figure=./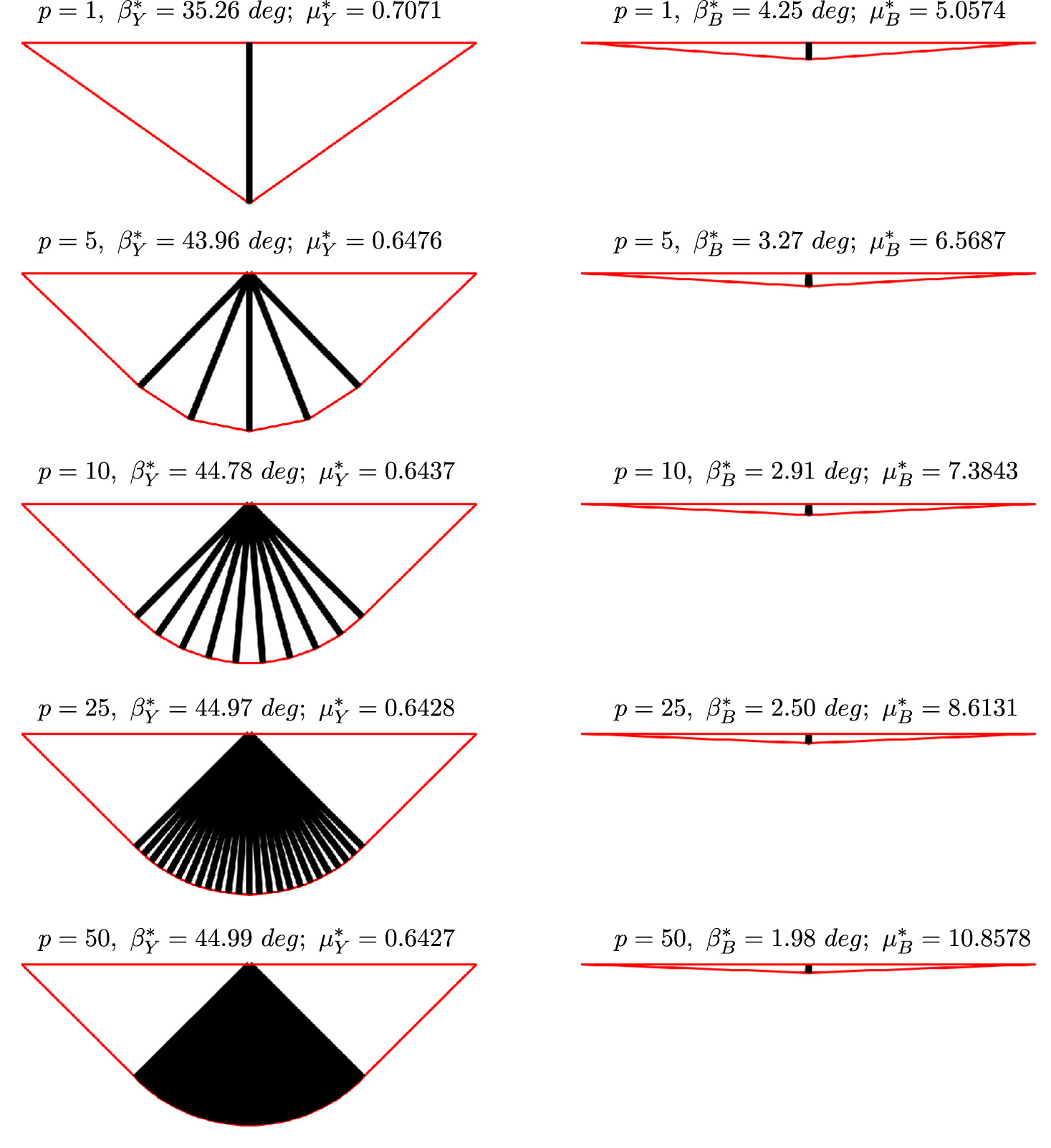,height=17.5cm}}
\end{picture}}
\caption{Optimal topologies of \emph{substructure} bridges with $n=1$ under yielding constraints (left) and buckling constraints (right) for different $p$, (steel for bars and cables, $F = 1 \ N$, $L = 1 \ m$).}
\label{fig:PANEL2}
\end{figure}

The left side of Fig. \ref{fig:PANEL1} shows a sequence of \emph{superstructures} under yielding constraints, as $q$ increases. From (\ref{n1q_muY}) the mass is minimized at $q \rightarrow \infty$ and $\alpha^*_Y = 45 \ deg \ (\rho = 1)$. 
The right side of Fig. \ref{fig:PANEL1} shows a sequence of \emph{superstructures} under buckling constraints, as $q$ increases. From plot in Fig. \ref{plot_sup_n1p>1} the mass is minimized at $\alpha=90 \ deg$ for $q=\infty$ ($\eta = 857.71$, same steel/steel material as above).

\subsection{\emph{Substructure} Bridge with Complexity $(n,p,q)=(1,p>1,0)$} \label{sub_n1p>1}

\bigskip

\noindent Refer to Fig. \ref{theorems_sub_n1p>1} for the notation. The angle between the bars is:

\beq
\label{gamma_sub}
\gamma= \frac{2 \beta}{p - 1}.
\eeq

\noindent The lengths of the bars and cables are:

\beq
s_0 = \frac{L}{2}, \ \ \
s_1 = \frac{L}{2} \cos \beta, \ \ \
s_2 = L \sin \beta \sin \left( \frac{\beta}{p - 1} \right), \ \ \
b_1 = b_2 = \frac{L}{2} \sin \beta.
\label{geom_n1p>1}
\eeq

\noindent From the equilibrium equations, we obtain the following relations for the forces:

\bea
f_1 = \frac{F}{4 \left[ \cos \beta + \sin \left( \frac{ \beta \left( p - 2 \right)}{p-1} \right) / \sin \left( \frac{\beta}{p-1} \right) \right] }, \ \ \
f_2 = 2 f_1, \\
t_2 = \frac{f_2}{2 \sin \left( \frac{\beta}{p-1} \right)}, \ \ \ 
t_1 = {t_2}{\cos \left( \frac{\beta}{p-1} \right)}. 
\eea

\begin{theorem} \label{theo:min_yielding_n1p>1}
Consider a \emph{substructure} bridge with topology described by (\ref{geom_n1p>1}), with complexity $(n,p,q)=(1,p,0)$ (Fig. \ref{theorems_sub_n1p>1}). At the yield condition the dimensionless total mass is: 

\bea
\mu_{Y} \left(\beta,p \right) = \frac{t_0}{F} + \frac{1}{4} \left[ \frac{ \left( p - 1 \right) \sin \beta \sin \left( \frac{\beta}{p-1} \right) + \cos \beta \cos \left( \frac{\beta}{p-1} \right) }{ \cos \beta \sin \left( \frac{\beta}{p-1} \right) + \sin \left( \frac{\beta \left( p - 2 \right)}{p-1} \right)} \right] \nonumber + \\ \rho \frac{\left( p - 1 \right) \sin \beta}{4 \left[ \cos \beta + \sin \left( \frac{\beta \left( p-2 \right)}{p-1} \right) / \sin \left( \frac{\beta}{p-1} \right) \right]}.
\label{n1p_muY}
\eea

\end{theorem}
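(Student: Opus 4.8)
The plan is to follow the same template used in the proof of Theorem~\ref{theo:min_yielding_n1q>1}, computing the cable mass and the bar mass separately and then normalizing by $\left(\rho_s/\sigma_s\right)FL$ as in (\ref{mu}). Starting from (\ref{mass}), I would write the cable mass as $m_s=\frac{\rho_s}{\sigma_s}\sum_i t_i s_i$ and the bar mass as $m_b=\frac{\rho_b}{\sigma_b}\sum_i f_i b_i$, then read off the multiplicities of each member type from the topology of Fig.~\ref{theorems_sub_n1p>1}: two outer cables $s_0$, two cables $s_1$, and the remaining inner cables $s_2$, together with two outer bars $b_1$ and the inner bars $b_2$. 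Substituting the lengths (\ref{geom_n1p>1}) and the equilibrium forces stated just above the theorem then reduces everything to trigonometric sums in $\beta$ and the sub-angle $\beta/(p-1)$.

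The bar contribution is the easy half. Since the substructure has $b_1=b_2=\frac{L}{2}\sin\beta$ (all bars share a common length) and $f_2=2f_1$, the sum $\sum_i f_i b_i$ collapses to a constant multiple of $f_1\sin\beta$. Substituting $f_1/F$ from the equilibrium relations together with the definition of $\rho$ in (\ref{rho}) then produces the $\rho$-scaled third term of (\ref{n1p_muY}) directly, with the bracket $\bigl[\cos\beta+\sin\left(\tfrac{\beta(p-2)}{p-1}\right)/\sin\left(\tfrac{\beta}{p-1}\right)\bigr]$ appearing simply as $F/(4f_1)$.

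The genuine obstacle is the cable half. Here the two cable lengths differ ($s_1\neq s_2$), so $\sum_i t_i s_i$ does not collapse by length alone. I would first eliminate the forces using $t_1=t_2\cos\left(\tfrac{\beta}{p-1}\right)$ and $t_2=f_2/\bigl(2\sin\left(\tfrac{\beta}{p-1}\right)\bigr)$, reducing the cable sum to a single expression proportional to $\bigl[(p-1)\sin\beta\sin\left(\tfrac{\beta}{p-1}\right)+\cos\beta\cos\left(\tfrac{\beta}{p-1}\right)\bigr]$ over the common denominator. The work lies in applying product-to-sum and angle-addition identities (with $\gamma=2\beta/(p-1)$ from (\ref{gamma_sub})) to fold the various $s_1$ and $s_2$ contributions into that compact numerator and to recognize the denominator $\cos\beta\sin\left(\tfrac{\beta}{p-1}\right)+\sin\left(\tfrac{\beta(p-2)}{p-1}\right)$. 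This step is exactly the mirror of the bar simplification carried out for the superstructure in Theorem~\ref{theo:min_yielding_n1q>1}.

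Finally, I would normalize $m_s$ and $m_b$ by $\left(\rho_s/\sigma_s\right)FL$ per (\ref{mu}), retain the free term $t_0/F$, and add the two pieces to obtain (\ref{n1p_muY}). As a consistency check, and an alternative route that sidesteps re-deriving the identities, I would exploit the bar/cable duality between this substructure and the superstructure of Theorem~\ref{theo:min_yielding_n1q>1}: the two mass formulas are interchanged under $\alpha\leftrightarrow\beta$ and $q\leftrightarrow p$ together with the swap of which member carries the equal-length simplification, so the substructure cable term must reproduce the superstructure bar term and vice versa.
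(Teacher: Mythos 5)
Your proposal is correct, but it takes a different route from the paper for this particular theorem. The paper's own proof is a single sentence: it observes that this substructure is the \emph{dual} of the superstructure of Theorem~\ref{theo:min_yielding_n1q>1} and declares the result to follow, leaving all computation implicit. What you do instead is carry out the explicit member-by-member mass computation — which is essentially the argument the paper writes out only for the dual (superstructure) case. Your accounting is right: with multiplicities $2$, $2$, $(p-1)$ for the cables $s_0$, $s_1$, $s_2$ and $2$, $(p-2)$ for the bars $b_1$, $b_2$, the bar sum collapses via $b_1=b_2$ and $f_2=2f_1$ to $(p-1)f_1 L\sin\beta$ with the bracket equal to $F/(4f_1)$, and the cable sum factors as $t_2L\bigl[(p-1)\sin\beta\sin\bigl(\tfrac{\beta}{p-1}\bigr)+\cos\beta\cos\bigl(\tfrac{\beta}{p-1}\bigr)\bigr]$ after eliminating $t_1$; substituting $t_2=F/\bigl(4\bigl[\cos\beta\sin\bigl(\tfrac{\beta}{p-1}\bigr)+\sin\bigl(\tfrac{\beta(p-2)}{p-1}\bigr)\bigr]\bigr)$ and normalizing gives (\ref{n1p_muY}) exactly. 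One small overstatement: you anticipate needing product-to-sum and angle-addition identities, but none are required — factoring out $t_2$ produces the compact numerator directly, and the denominator appears just by clearing the nested fraction in $f_1$. Your closing remark that the substructure cable term and the superstructure bar term interchange under $\alpha\leftrightarrow\beta$, $q\leftrightarrow p$ (with the weight $1\leftrightarrow\rho$) is precisely the duality the paper invokes, so your last paragraph \emph{is} the paper's proof. The trade-off: the paper's duality argument is shorter but asks the reader to trust the dual correspondence; your direct derivation is self-contained, verifies the force and length data above the theorem, and makes the multiplicity bookkeeping (which the paper's superstructure proof obscures with $p$/$q$ typos) explicit.
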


\begin{proof}

Observing that the \emph{substructure} bridge of the present theorem is the \emph{dual} structure of the \emph{superstructure} bridge of Theorem \ref{theo:min_yielding_n1q>1}, we can easily obtain the proof of this theorem. 

\end{proof}

\begin{corollary}

The minimal mass in (\ref{n1p_muY}) is achieved at infinite complexity $p \rightarrow \infty$ and $t_0 = 0$. The minimal mass at yielding for a \emph{substructure} bridge is:

\bea
\mu_{Y}^* (\beta^*_Y, p^*)= \frac{1}{4} \left[ \sqrt{\rho} + \left( 1 + \rho \right) \arctan \frac{1}{\sqrt{\rho}} \right],
\label{n1p_mustY}
\eea

\noindent where $p^* \rightarrow \infty$ and the optimal angle $\beta^*_Y$ is:

\bea
\beta^*_Y = \arctan \left( \frac{1}{\sqrt{\rho}} \right).
\label{n1p_betastY}
\eea

\end{corollary}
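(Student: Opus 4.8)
The plan is to mirror the proof of the superstructure corollary following Theorem~\ref{theo:min_yielding_n1q>1}, exploiting the dual relationship between the two configurations already used to establish Theorem~\ref{theo:min_yielding_n1p>1}. First I would set $t_0=0$: the coefficient of $t_0/F$ in (\ref{n1p_muY}) is $+1>0$, so the mass is strictly increasing in $t_0$ and is minimized at the lower end $t_0=0$ of the feasible range.

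Next I would pass to the limit $p\to\infty$ in (\ref{n1p_muY}). Writing $\theta=\beta/(p-1)\to 0$, I would use $\sin\theta=\theta+O(\theta^3)$, $\cos\theta\to 1$, and $\sin\!\big(\beta(p-2)/(p-1)\big)\to\sin\beta$. In the cable (coefficient $\tfrac14$) term the numerator $(p-1)\sin\beta\sin\theta+\cos\beta\cos\theta\to\beta\sin\beta+\cos\beta$ while the denominator tends to $\sin\beta$, giving $\tfrac14(\beta+\cot\beta)$. In the bar (coefficient $\tfrac{\rho}{4}$) term both numerator and denominator grow like $p$, and the ratio tends to $\beta$, giving $\tfrac{\rho\beta}{4}$. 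Hence
\[
\mu_{Y}(\beta,p\to\infty)=\frac{(1+\rho)\beta}{4}+\frac{1}{4\tan\beta},
\]
which is precisely the superstructure limit (\ref{n1q_mustYq}) with the coefficients $1$ and $\rho$ interchanged, the concrete manifestation of the duality.

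With the limit in hand, the optimisation over the angle is routine. Setting $\partial\mu_Y/\partial\beta=\tfrac{1+\rho}{4}-\tfrac{1}{4\sin^2\beta}=0$ yields $\sin^2\beta=1/(1+\rho)$, equivalently $\tan\beta=1/\sqrt{\rho}$, which is (\ref{n1p_betastY}); checking that the second derivative $\tfrac12\csc^2\beta\cot\beta$ is positive on $(0,\pi/2)$ confirms this is the unique minimum. Substituting $\cot\beta^*=\sqrt{\rho}$ and $\beta^*=\arctan(1/\sqrt{\rho})$ into the displayed limit gives $\mu_Y^*=\tfrac14\big[\sqrt{\rho}+(1+\rho)\arctan(1/\sqrt{\rho})\big]$, i.e.\ (\ref{n1p_mustY}).

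The genuinely non-elementary point, which I expect to be the main obstacle, is the claim that the infimum over the discrete parameter $p$ is attained in the limit $p\to\infty$ rather than at some finite $p$. As in the superstructure corollary, I would not attempt to prove monotonicity of (\ref{n1p_muY}) in $p$ analytically, since the $p$-dependence is buried in the trigonometric ratios and is awkward to treat as a continuous variable; instead I would invoke the numerical evidence of Fig.~\ref{plot_sub_n1p>1}, whose family of mass curves decreases monotonically toward the $p\to\infty$ envelope. The angle optimum, by contrast, is delivered exactly by the closed-form computation above.
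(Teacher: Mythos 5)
Your proposal is correct and follows essentially the same route as the paper: substitute the $p\rightarrow\infty$ limit into (\ref{n1p_muY}) to obtain $\mu_Y=\tfrac{(1+\rho)\beta}{4}+\tfrac{1}{4\tan\beta}$, minimize analytically over $\beta$ to get (\ref{n1p_betastY}) and (\ref{n1p_mustY}), and appeal to Fig.~\ref{plot_sub_n1p>1} for the optimality of infinite complexity $p$. Your added details (the explicit Taylor-expansion justification of the limit, the $t_0=0$ argument, and the second-derivative check) only make the paper's argument more rigorous, and your honest flagging of the discrete optimization over $p$ as the non-elementary step matches exactly how the paper handles it.
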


\begin{proof}

\noindent Substitute $p \rightarrow \infty$ into Eq. (\ref{n1p_muY}) to obtain:

\bea
\mu_{Y}^* (\beta, p^* \rightarrow \infty)= \frac{\beta}{4} \left( 1 + \rho \right) + \frac{1}{4 \tan \beta}.
\label{n1p_mustYp}
\eea

The value of $\beta$ that minimizes (\ref{n1p_mustYp}) is (\ref{n1p_betastY}). Fig. \ref{plot_sub_n1p>1} shows how mass (\ref{n1p_muY}) varies with $p$ and $\beta$. The optimal $p^*$ is deduced from the plot of Fig. \ref{plot_sub_n1p>1} and the optimal angle is computed analytically in Eq. (\ref{n1p_betastY}).   

\end{proof}

\begin{theorem} \label{theo:min_buckling_n1p>1}
Consider a \emph{substructure} bridge with topology defined by (\ref{geom_n1p>1}), with complexity $(n,p,q)=(1,p,0)$, See Fig. \ref{theorems_sub_n1p>1}. At the buckling condition the dimensionless total mass is minimized at $p=2$ and $t_0 = 0$, where:

\bea
\mu_{B} \left(\beta,p=2 \right) = \frac{1+\tan^2 \beta}{4 \tan \beta} + \frac{\eta}{2} \frac{\tan^2 \beta}{\left( 1+\tan^2 \beta \right)^{3/4}}.
\label{n1p2_muB}
\eea

\end{theorem}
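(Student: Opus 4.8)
The plan is to build the dimensionless mass $\mu_B(\beta,p)$ of the substructure under buckling, minimize it over the free cable force $t_0$ and over the complexity $p$, and then specialize to $p=2$. The mass splits into a cable part and a bar part. Since cables always fail by yielding, their contribution is exactly the cable part already found for the substructure under yield in Theorem \ref{theo:min_yielding_n1p>1}, i.e. $t_0/F$ plus the bracketed $1/4$-term in (\ref{n1p_muY}); this is the dual of the superstructure cable computation in Theorem \ref{theo:min_yielding_n1q>1} and can be imported verbatim. The bars, however, now obey the buckling law (\ref{m_b}), so I must recompute their mass.

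For the bar part I would use the topology (\ref{geom_n1p>1}): there are two bars of length $b_1$ carrying $f_1$ and $(p-2)$ bars of length $b_2$ carrying $f_2$, so $m_b = 4\rho_b b_1^2\sqrt{f_1/(\pi E_b)} + 2(p-2)\rho_b b_2^2\sqrt{f_2/(\pi E_b)}$ (this count is the same one that reproduces the known yielding bar term in (\ref{n1p_muY})). The decisive structural fact is that $b_1=b_2=(L/2)\sin\beta$ is \emph{independent of $p$}. Inserting this together with $f_2=2f_1$ and the closed form of $f_1$, then normalizing by $(\rho_s/\sigma_s)FL$ and introducing $\eta$, collapses the bar contribution to $\eta\sin^2\beta\,(1+(p-2)/\sqrt2)/(2\sqrt{\cos\beta+\sin(\beta(p-2)/(p-1))/\sin(\beta/(p-1))})$.

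Minimizing over $t_0$ is immediate: its coefficient in $\mu_B$ is $+1$, so $t_0=0$. The core of the argument is that the minimum over $p$ is attained at the smallest admissible value $p=2$. Here I would stress the contrast with the superstructure of Theorem \ref{theo:min_buckling_n1q>1}: there $b_2=L\sin\alpha\sin(\alpha/(q-1))\to0$ as $q\to\infty$, so the buckling penalty disappeared and $q\to\infty$ was optimal; here the bars have fixed length, so raising $p$ merely splits a fixed compression among more equal-length bars, and the concavity of $\sqrt{\cdot}$ in (\ref{m_b}) makes the total bar mass grow. A quick asymptotic check confirms it: as $p\to\infty$ the bar numerator grows like $p/\sqrt2$ while the denominator grows like $\sqrt p$, so the bar term behaves like $\sqrt p\to\infty$. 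Together with the mass curves of Fig. \ref{plot_sup_n1p>1}, this locates the optimum at $p=2$.

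Finally I would substitute $p=2$, for which $\beta/(p-1)=\beta$ and $\sin(\beta(p-2)/(p-1))=0$, and simplify. The cable term collapses to $1/(4\sin\beta\cos\beta)=(1+\tan^2\beta)/(4\tan\beta)$, and the bar term becomes $\eta\sin^2\beta/(2\sqrt{\cos\beta})=(\eta/2)\tan^2\beta/(1+\tan^2\beta)^{3/4}$, giving exactly (\ref{n1p2_muB}). I expect the main obstacle to be the rigorous $p$-monotonicity: the ratio of successive bar terms depends on $\beta$ and can even favor larger $p$ as $\beta\to\pi/2$, so the clean statement that $p=2$ is optimal really relies on the optimal $\beta$ being small (where monotonicity in $p$ does hold); turning this into a proof that does not merely invoke the figure is the delicate step.
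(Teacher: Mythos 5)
Your proposal is correct and follows essentially the same route as the paper: the paper states this theorem without its own proof and supplies the content in the proof of the ensuing corollary, namely the general mass $\mu_B(\beta,p)$ of (\ref{n1p_muB}) (same cable term as the yield case, bar term recomputed with the buckling law and the count of two $b_1$ bars plus $p-2$ equal-length $b_2$ bars), with $t_0=0$ and the selection of $p=2$ justified only by the mass curves of Fig.~\ref{plot_sup_n1p>1}, after which substitution of $p=2$ gives (\ref{n1p2_muB}) exactly as you derived. Your additions --- the $\sqrt{p}$ asymptotic growth of the bar term, the contrast with the superstructure where $b_2\to 0$ as $q\to\infty$, and the observation that monotonicity in $p$ fails near $\beta=\pi/2$ so the claim truly rests on the optimal $\beta$ being small --- go beyond the paper, which silently relies on the figure at precisely the step you flag as delicate.
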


\begin{corollary}

\noindent The minimal mass \emph{substructure} is achieved for $p=1$.

\end{corollary}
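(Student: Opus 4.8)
The plan is to reduce the statement to a two-way comparison. Theorem~\ref{theo:min_buckling_n1p>1} already establishes that, among all complexities $p>1$, the buckling-constrained substructure mass is minimized at $p=2$, with optimal profile $\mu_B(\beta,p=2)$ given by~(\ref{n1p2_muB}). The only remaining competitor is $p=1$, whose optimal mass was obtained in Corollary~\ref{theo:n1p1:min_buckling}. Hence it suffices to prove that $\min_\beta \mu_B(\beta,p=1)\le \min_\beta \mu_B(\beta,p=2)$; once this is in hand, $p=1$ beats every $p>1$ and the corollary follows.

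First I would exploit the structural similarity of the two candidate masses: both carry the identical cable term $\tfrac{1+\tan^2\beta}{4\tan\beta}$, so the entire comparison collapses onto the bar (buckling) terms. With $x=\tan\beta$, these are $\tfrac{\eta}{2\sqrt2}x^2$ for $p=1$ and $\tfrac{\eta}{2}\,x^2/(1+x^2)^{3/4}$ for $p=2$. A one-line rearrangement shows that the $p=1$ bar term does not exceed the $p=2$ bar term precisely when $(1+x^2)^{3/4}\le\sqrt2$, i.e.\ when $x^2\le 2^{2/3}-1$ (roughly $\beta\lesssim 37^\circ$). Thus on this small-angle window one has the clean pointwise bound $\mu_B(\beta,p=1)\le\mu_B(\beta,p=2)$.

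The remaining work is to certify that the relevant optimizer lives in that window, and here the bar-dominated regime of~(\ref{eta2}) is decisive. Balancing the cable term $\sim 1/(4x)$ against the dominant bar term $\sim \eta x^2/2$ in $\mu_B(\beta,p=2)$ forces the stationary point to satisfy $x^3=O(1/\eta)$, so the minimizing $\beta_2^{*}$ is $O(\eta^{-1/3})$ and, for the large values of $\eta$ occurring in practice, comfortably obeys $\tan^2\beta_2^{*}\le 2^{2/3}-1$. Chaining the inequalities then gives $\min_\beta\mu_B(\cdot,p=1)\le\mu_B(\beta_2^{*},p=1)\le\mu_B(\beta_2^{*},p=2)=\min_\beta\mu_B(\cdot,p=2)$, as required; the mass curves in Fig.~\ref{plot_sup_n1p>1} confirm this ordering across the physical range of $\eta$.

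The hard part will be the non-uniformity of the pointwise inequality: the two bar terms cross at $\tan^2\beta=2^{2/3}-1$, and for steeper angles the ordering reverses, so a naive term-by-term domination fails. The argument must therefore certify that the $p=2$ optimizer falls on the favorable side of this crossing, which is guaranteed only through the largeness of $\eta$. A version valid for arbitrary $\eta$ would instead require substituting the two optimal angles into their respective closed forms and comparing the resulting minimum values directly---still elementary, but more computational.
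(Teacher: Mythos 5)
Your proposal is correct in the physically relevant regime and shares the paper's overall skeleton---both arguments dispose of all $p>2$ by appealing to Theorem \ref{theo:min_buckling_n1p>1} (equivalently, the mass curves of Fig.~\ref{plot_sup_n1p>1}) and thereby reduce the claim to a comparison between $p=1$ and $p=2$---but you diverge at the decisive step. The paper's proof simply asserts that the $p=2$ mass (\ref{n1p2_muB}) is larger than the $p=1$ mass (\ref{n1p1:mustB}); no computation is offered, and the claim is in effect certified only by the numerics (e.g.\ $5.0574$ versus $5.6662$ in Table~\ref{nr_sub_n1p}). You instead prove this comparison analytically: the cable terms coincide, the bar terms obey $\frac{\eta}{2\sqrt{2}}x^{2}\le\frac{\eta}{2}\,x^{2}\left(1+x^{2}\right)^{-3/4}$ exactly when $x^{2}=\tan^{2}\beta\le 2^{2/3}-1$, and the $p=2$ optimizer satisfies $\tan\beta^{*}_{2}=O(\eta^{-1/3})$ in the bar-dominated regime (\ref{eta2}) (for steel, $\beta^{*}_{2}\approx 3.8$ deg, comfortably inside the window), after which your three-link chain of inequalities closes the argument. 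This buys a genuine proof where the paper has only a numerical observation, and your closing caveat is sharper than you may realize: the restriction to large $\eta$ is not an artifact of your method but is necessary for the statement itself. At $x=1$ the $p=2$ bar coefficient $2^{-7/4}$ is strictly smaller than the $p=1$ coefficient $2^{-3/2}$, and as $\eta\to 0$ both optimizers approach $\beta=45$ deg, so for small $\eta$ the ordering of the two minimal masses reverses and the corollary fails; the paper's unqualified statement is thus implicitly conditioned on the same regime that your proof makes explicit.
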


\begin{proof}

\noindent The mass of a \emph{substructure} with topology of $n=1$ defined by (\ref{geom_n1p>1}), for a general $p>1$ is:

\bea
\mu_{B} \left(\beta,p \right) = \frac{t_0}{F} + \frac{1}{4} \left[ \frac{ \left( p - 1 \right) \sin \beta  \sin \left( \frac{\beta}{p-1} \right) + \cos \beta \cos \left( \frac{\beta}{p-1} \right) }{ \cos \beta \sin \left( \frac{\beta}{p-1} \right) + \sin \left( \frac{\beta \left( p - 2 \right)}{p-1} \right)} \right] \nonumber + \\ \frac{\eta}{2 \sqrt{2}} \frac{ \left( p-2+\sqrt{2} \right) \sin^2 \beta }{ \sqrt{ \cos \beta + \sin \left( \frac{\beta \left( p-2 \right)}{p-1} \right) / \sin \left( \frac{\beta}{p-1} \right) } }.
\label{n1p_muB}
\eea

The plot of (\ref{n1p_muB}) in Fig. \ref{plot_sup_n1p>1} vs. $\beta$ for different $p$ shows that (\ref{n1p_muB})  has a minimum value at $p=2$. However, the mass at $p = 2$, (\ref{n1p2_muB}), is larger then the mass (\ref{n1p1:mustB}) at $p =1$ from Corollary \ref{theo:n1p1:min_buckling}.

\end{proof}

The left side of Fig. \ref{fig:PANEL2} shows a sequence of \emph{substructures} under yielding constraints, as $p$ increases. From (\ref{n1p_muY}) the mass is minimized at $p \rightarrow \infty$ and $\beta^*_Y = 45 \ deg \ (\rho = 1)$. The right side of Fig. \ref{fig:PANEL2} shows a sequence of \emph{substructures} under buckling constraints, as $p$ increases. From plot in Fig. \ref{plot_sup_n1p>1} the mass is minimized at $\beta=90 \ deg$ for $p=1$ ($\eta = 857.71$, same steel/steel material as above).

\begin{theorem} \label{theo:BOB06112014_1}

A minimal mass \emph{superstructure} constrained against yielding with hinge/roller boundary conditions, has the same optimal topology as a minimal mass \emph{superstructure} constrained against buckling and hinge/hinge boundary conditions.

\end{theorem}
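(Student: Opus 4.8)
The plan is to compute the optimal topology of each of the two structures explicitly and verify that they coincide, namely that both reduce to the infinite-complexity fan ($q\to\infty$) with aspect angle $\alpha=\pi/2$. The starting point is the observation that the yield mass of a member, (\ref{m_s})--(\ref{m_b}), is independent of the support conditions, whereas the support conditions fix the internal force distribution. Under hinge/hinge supports the horizontal thrust developed by the fan bars is absorbed by the horizontal support reactions, so the horizontal tie $s_0$ may be left slack ($t_0=0$); this is exactly the choice already made in (\ref{n1q_mustYq}). Under hinge/roller supports the roller furnishes no horizontal reaction, so the thrust must instead be carried internally by the tie $s_0$, which forces $t_0=H>0$. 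I would evaluate this thrust from the nodal equilibrium (\ref{forces_n1q>1}): the horizontal component of the support-bar force is $f_1\cos\alpha$, and passing to the continuum limit $q\to\infty$ gives $f_1\to F/(4\sin\alpha)$, hence $t_0=f_1\cos\alpha=F/(4\tan\alpha)$, contributing $t_0/F=1/(4\tan\alpha)$ to the dimensionless mass.

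Adding this tie term to the hinge/hinge yield mass (\ref{n1q_mustYq}) merges the two $1/\tan\alpha$ contributions and yields, for $q\to\infty$,
\[
\mu_Y^{\text{roller}}(\alpha)=\frac{1}{4\tan\alpha}+\frac{\alpha(1+\rho)}{4}+\frac{\rho}{4\tan\alpha}
=\frac{1+\rho}{4}\left(\alpha+\frac{1}{\tan\alpha}\right).
\]
Because $\frac{d}{d\alpha}\!\left(\alpha+1/\tan\alpha\right)=1-\frac{1}{\sin^2\alpha}<0$ on $(0,\pi/2)$, this mass is strictly decreasing in $\alpha$ and is therefore minimized at the endpoint $\alpha=\pi/2$, with optimal complexity $q\to\infty$ (the same limit that already minimizes the cable and bar terms). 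Note that this is precisely what changes the minimizer away from the interior value $\alpha^*_Y=\arctan\sqrt{\rho}$ of (\ref{n1q_alphastY}) obtained in the hinge/hinge yield problem.

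Next I would recall the buckling superstructure under hinge/hinge supports, whose continuum-limit mass is (\ref{n1qinf_muB}), $\mu_B(\alpha)=\alpha/4+\eta\cos^2\alpha/(2\sqrt{\sin\alpha})$. In the physically relevant regime $\eta\gg1$ (which holds for the steel/steel data used throughout) the bar term dominates and drives the minimizer to $\alpha=\pi/2$, $q\to\infty$, as already noted below (\ref{n1qinf_muB}); in that limit the bars degenerate to zero length and buckling ceases to govern. Comparing the two results, both optimal structures are the $q\to\infty$ fan with aspect angle $\alpha=\pi/2$, so they share the same geometric layout, which is the assertion of the theorem (the two optimal masses, $\tfrac{(1+\rho)\pi}{8}$ and $\tfrac{\pi}{8}$, differ, but the topology does not).

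The main obstacle is the first step: recognizing that replacing hinge/hinge by hinge/roller supports is equivalent to switching on the internal tie $s_0$, and then evaluating its force $t_0$ in the continuum limit; everything afterward follows from the closed-form masses already established. The coincidence of topologies is explained by the fact that the extra tie term $1/(4\tan\alpha)$ combines with the bar term so that the total factors cleanly as $\frac{1+\rho}{4}(\alpha+1/\tan\alpha)$, whose minimizer is pushed to the boundary angle $\pi/2$ — exactly the angle selected by the buckling problem. A secondary point to verify carefully is that $t_0=F/(4\tan\alpha)$ indeed holds for every $q$, not merely for the basic module, so that the conclusion is not an artifact of the limit.
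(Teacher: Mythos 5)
Your proposal is correct, but it proves the theorem by a genuinely different route than the paper, which disposes of it in two sentences of citation: Michell (1904) is invoked for the claim that the hinge/roller yield-optimal structure is the $q\to\infty$, $\alpha\to 90\,deg$ fan, and Theorem \ref{theo:min_buckling_n1q>1} (with its corollary and the plots in Fig. \ref{plot_sup_n1p>1}) is invoked for the hinge/hinge buckling side. You instead re-derive the hinge/roller yield optimum inside the paper's own parametric family by activating the deck tie, and your flagged ``secondary point'' in fact closes exactly rather than only in the limit: writing $\delta=\alpha/(q-1)$, the denominator in (\ref{forces_n1q>1}) simplifies via $\sin\delta\cos\alpha+\sin(\alpha-\delta)=\sin\alpha\cos\delta$, so $f_1=F/(4\sin\alpha)$ and hence $t_0=f_1\cos\alpha=F/(4\tan\alpha)$ for \emph{every} $q\ge 2$, not just asymptotically; the tie contribution is therefore $q$-independent, the optimal complexity $q\to\infty$ is inherited from the hinge/hinge analysis, and the clean factorization $\mu=\tfrac{1+\rho}{4}\left(\alpha+\cot\alpha\right)$ with negative derivative $1-\csc^2\alpha$ pushes the angle to the boundary $\pi/2$. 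Your result is corroborated by the paper's own Appendix (Table \ref{nr_sup_HR}): for $\rho=1$ the HR yield mass tends to $0.786\approx\pi/4=(1+\rho)\pi/8$ with $\alpha^*\to 90\,deg$, a check the paper never connects to this theorem. What each approach buys: the paper's citation is shorter and, importantly, stronger in scope on the hinge/roller side, since Michell's theorem asserts optimality over \emph{all} structures, whereas your argument establishes optimality only within the fan family of Fig. \ref{theorems_sup_n1p>1} (the same scope limitation as the paper's own hinge/hinge results); your derivation, on the other hand, is self-contained, produces the explicit optimal mass $(1+\rho)\pi/8$, and explains mechanistically \emph{why} the two topologies coincide — the tie term recombines with the bar term so that the boundary angle $\pi/2$ is selected, the same angle the buckling problem selects. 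One shared caveat: on the buckling side both you and the paper treat the minimizer of (\ref{n1qinf_muB}) as $\alpha=\pi/2$, while for finite $\eta$ the true stationary point sits at $\cos\alpha\approx 1/(4\eta)$, slightly below $90\,deg$; this is harmless in the limit $q\to\infty$, $\eta\gg 1$ intended by the theorem, but worth stating explicitly.
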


\begin{proof}

\cite{Michell1904} proved that the minimal mass structure constrained against yielding with hinge/roller boundary conditions has the topology of the right side of Fig. \ref{fig:PANEL1} as $q \rightarrow \infty$ and $\alpha \rightarrow 90 \ deg$. Theorem \ref{theo:min_buckling_n1q>1} provides the same topology for hinge/hinge constraints.

\end{proof}

\begin{theorem} \label{theo:BOB06112014_2}

The minimal mass \emph{nominal} bridge constrained against yielding is obtained combining the optimal \emph{superstructure} topology (Fig. \ref{fig:PANEL1}, left side as $q \rightarrow \infty$) with the optimal \emph{substructure} topology (Fig. \ref{fig:PANEL1}, left side as $p \rightarrow \infty$).

\end{theorem}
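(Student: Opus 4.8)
The plan is to reduce the \emph{nominal} problem to the two already-solved single-region problems by exploiting the additivity and the degree-one homogeneity of the yield-condition mass. First I would note that, by (\ref{mass}), the yield mass of \emph{any} design is the sum of its cable terms $\tfrac{\rho_s}{\sigma_s}t_i s_i$ and bar terms $\tfrac{\rho_b}{\sigma_b}f_i b_i$, and that in a nominal bridge the members lying above the deck (the \emph{superstructure}) and those lying below it (the \emph{substructure}) form disjoint sets sharing only the deck nodes. Hence $m = m_{\mathrm{sup}} + m_{\mathrm{sub}}$ identically, for every admissible topology and every statically admissible force field.

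Next I would parametrize the coupling by a single scalar. Vertical equilibrium at each deck node forces the upward tension supplied from above and the upward thrust supplied from below to sum to the local applied load; this is exactly the relation $F/2 = t_2 + f_2$ in (\ref{n1q1p1:equilibrium}) and its $(1,p,q)$ analogue. Writing the total vertical load resisted above the deck as $\lambda F$ and below as $(1-\lambda)F$ with $\lambda\in[0,1]$, degree-one homogeneity of the yield mass in the member forces gives $m_{\mathrm{sup}} = \lambda\, m_{\mathrm{sup}}^{\mathrm{full}}$ and $m_{\mathrm{sub}} = (1-\lambda)\, m_{\mathrm{sub}}^{\mathrm{full}}$, where the superscript denotes the corresponding structure carrying the entire load $F$. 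Because the optimal fan angles depend only on $\rho$ and not on the load magnitude, these per-node splits aggregate into the single budget $\lambda$.

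With the split fixed, the two terms decouple and may be optimized over their topologies separately. By the corollary to Theorem \ref{theo:min_yielding_n1q>1} the superstructure term is minimized by the fan $q\to\infty$, $\alpha^*_Y=\arctan\sqrt{\rho}$, giving $c_{\mathrm{sup}}=\tfrac14[(1+\rho)\arctan\sqrt\rho+\sqrt\rho]$, and by the corollary to Theorem \ref{theo:min_yielding_n1p>1} the substructure term is minimized by the dual fan $p\to\infty$, $\beta^*_Y=\arctan(1/\sqrt\rho)$, giving $c_{\mathrm{sub}}=\tfrac14[\sqrt\rho+(1+\rho)\arctan(1/\sqrt\rho)]$. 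The normalized nominal mass then reduces to the affine function $\mu(\lambda)=\lambda\,c_{\mathrm{sup}}+(1-\lambda)\,c_{\mathrm{sub}}$, whose minimizer selects precisely the superposition of these two optimal fans (an interior split being immaterial since $\mu$ is affine, and collapsing to the lighter pure structure when $\rho\neq1$, consistent with the Section~4 conclusion). This establishes the claimed combination.

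The step I expect to be the main obstacle is justifying that the two regions are genuinely independent subproblems, i.e. that no mass is saved by letting members cross the deck or share load beyond the scalar $\lambda$, and that the shared deck nodes together with the horizontal reactions $w_x$ impose no further coupling. I would close this by verifying, as in Theorem \ref{NY11}, that the horizontal deck cable carries $t_1=0$ at the optimum (its coefficient in (\ref{MUY}) being positive), so that the only cross-deck interaction is the vertical balance already encoded in $\lambda$, while the horizontal thrusts of the two fans are equilibrated by the external reactions rather than by added structure.
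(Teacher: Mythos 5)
Your proposal is correct, but it takes a genuinely different route from the paper. The paper offers no internal derivation at all: its entire proof is one sentence stating that Michell (1904) obtained the same configuration by starting from a continuum and optimizing its shape — an appeal to an external classical result to validate the topology shown in Fig.~\ref{fig:optimal_n1qp}(a). Your proof, by contrast, stays inside the paper's own framework: additivity of the yield mass (\ref{mass}) over the disjoint member sets above and below the deck, a single load-split scalar $\lambda$ from vertical equilibrium at the loaded deck node, degree-one homogeneity of the yield mass in the member forces, the two single-region optima (\ref{n1q_mustY}) and (\ref{n1p_mustY}) from the corollaries to Theorems \ref{theo:min_yielding_n1q>1} and \ref{theo:min_yielding_n1p>1}, and the check (as in Theorem \ref{NY11}) that the deck cable carries $t_1=0$ at the optimum so that no coupling beyond $\lambda$ survives. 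This buys two things the paper's citation does not: a self-contained argument (modulo the corollaries, whose scope — optimality within the parametric fan families, not over all conceivable topologies — your bound correctly inherits), and a sharper characterization of when the ``combination'' is genuine: since $\mu(\lambda)=\lambda\,c_{\mathrm{sup}}+(1-\lambda)\,c_{\mathrm{sub}}$ is affine, an interior split is optimal exactly when $c_{\mathrm{sup}}=c_{\mathrm{sub}}$, i.e. $\rho=1$ (the Michell case of equal allowables and the paper's numerical setting, where both equal $\mu^*\rightarrow 0.6427$), while for $\rho\neq 1$ the optimum degenerates to the lighter pure fan — a refinement your proof makes explicit and the paper's loose statement glosses over. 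Conversely, what the citation to Michell buys the paper is an assurance your decomposition cannot deliver on its own: that the fan topology is optimal among all planar structures, not merely within the parametric family, since your lower bound is only as strong as the corollaries it invokes.
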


\begin{proof}

\cite{Michell1904} obtained these same results by starting with a continuum and optimizing the shape.

\end{proof}

\begin{figure}[hb]
\unitlength1cm
\begin{picture}(10,5.5)
\put(0.5,0){\psfig{figure=./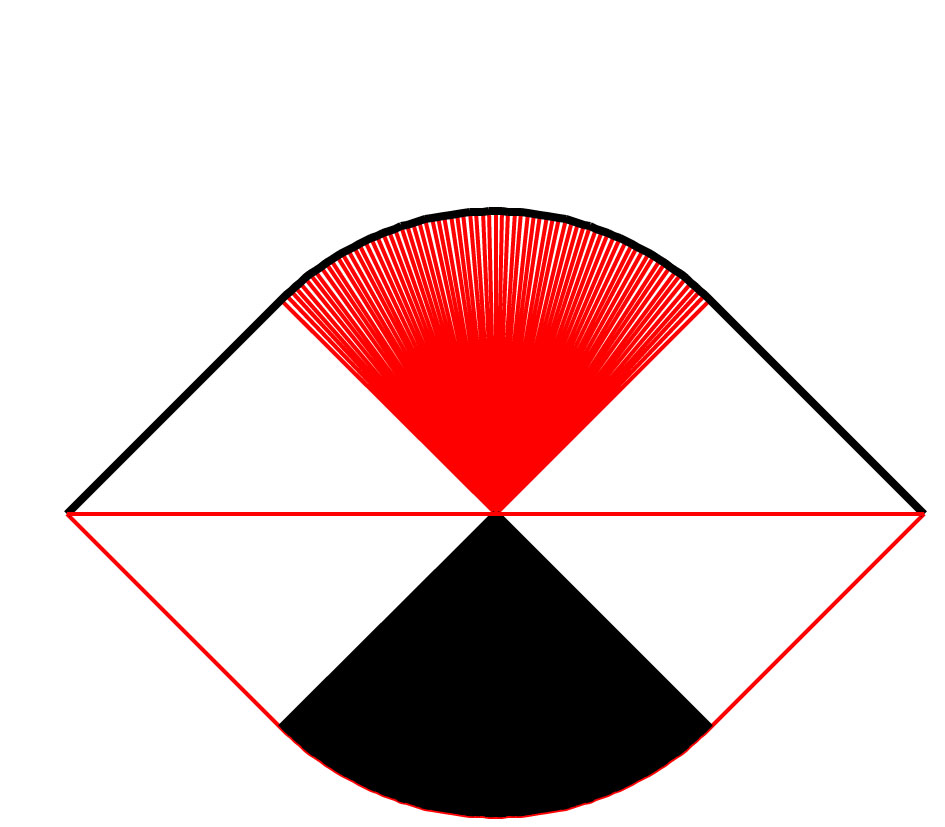,height=5cm}}
\put(8.5,1.2){\psfig{figure=./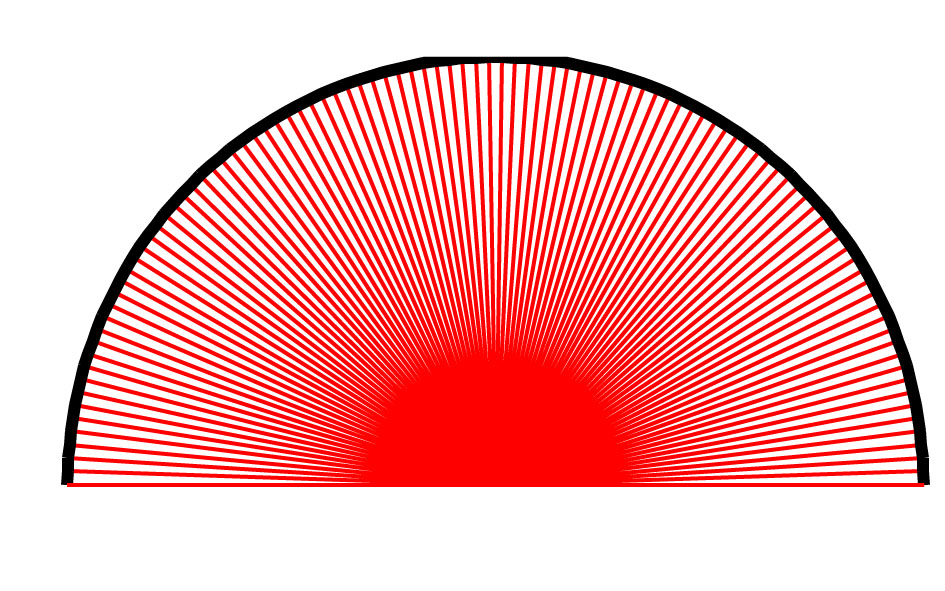,height=3.3cm}}
\put(8.4,-1.25){\psfig{figure=./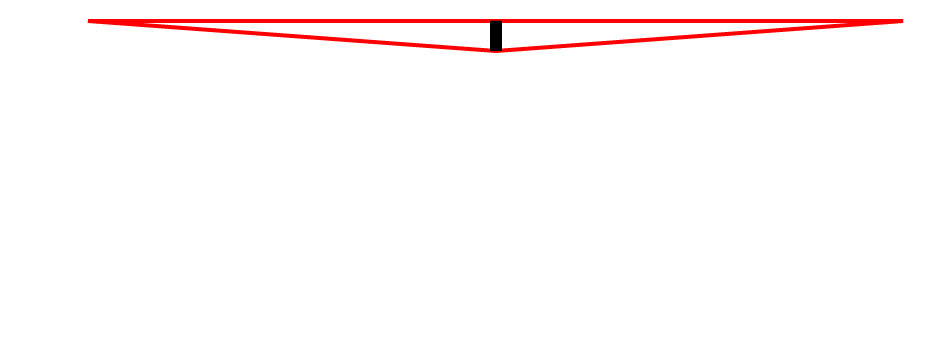,height=2.1cm}}
\put(3.25,4.6){{$\mbox{$(a) $}$}}
\put(11,4.6){{$\mbox{$(b) $}$}}
\put(11,1.1){{$\mbox{$(c) $}$}}
\end{picture}
\caption{Minimal mass bridges under (a) yielding constrained \emph{nominal} bridges, (b) buckling constrained \emph{superstructure} bridge and (c) buckling constrained \emph{substructure} bridge.}
\label{fig:optimal_n1qp}
\end{figure}

Fig. \ref{fig:optimal_n1qp}(a) illustrates the minimal mass \emph{nominal} bridge under yielding constraints (Theorem \ref{theo:BOB06112014_1}), leading to complexity $(n,p,q)=(1,\infty,\infty)$. Fig. \ref{fig:optimal_n1qp}(b) illustrates the minimal mass \emph{superstructure} bridge under buckling constraints, leading to complexity $(n,p,q)=(1,0,q \rightarrow \infty)$. Fig. \ref{fig:optimal_n1qp}(c) illustrates the minimal mass \emph{substructure} bridge under buckling constraints, leading to complexity $(n,p,q)=(1,1,0)$. 

\section{Mass of Bridges of Complexity $(n,p,q)=(n,1 \ or \ 0,1 \ or \ 0)$}
\label{Sec:nqp1}

This section finds the minimal mass of a tensegrity bridge of any complexity $n$. As in previous sections, no deck mass is yet added til the next section. The total external load is a given constant force $F$. Dividing the span into $2^n$ equal sections, creates nodes at each section that carries load $f$, given by,

\beq
\label{fnodes}
f = \frac{F}{2^n}.
\eeq

Distributing the total external load equally among the number of spans ($2^n$) of the subsections requires internal nodes to carry load $f=F/2^n$, and the external nodes of the deck to carry load $f/2$. 

\begin{figure}[hb] 
\unitlength1cm
\begin{picture}(10,5)
\put(-0.1,0.1){\psfig{figure=./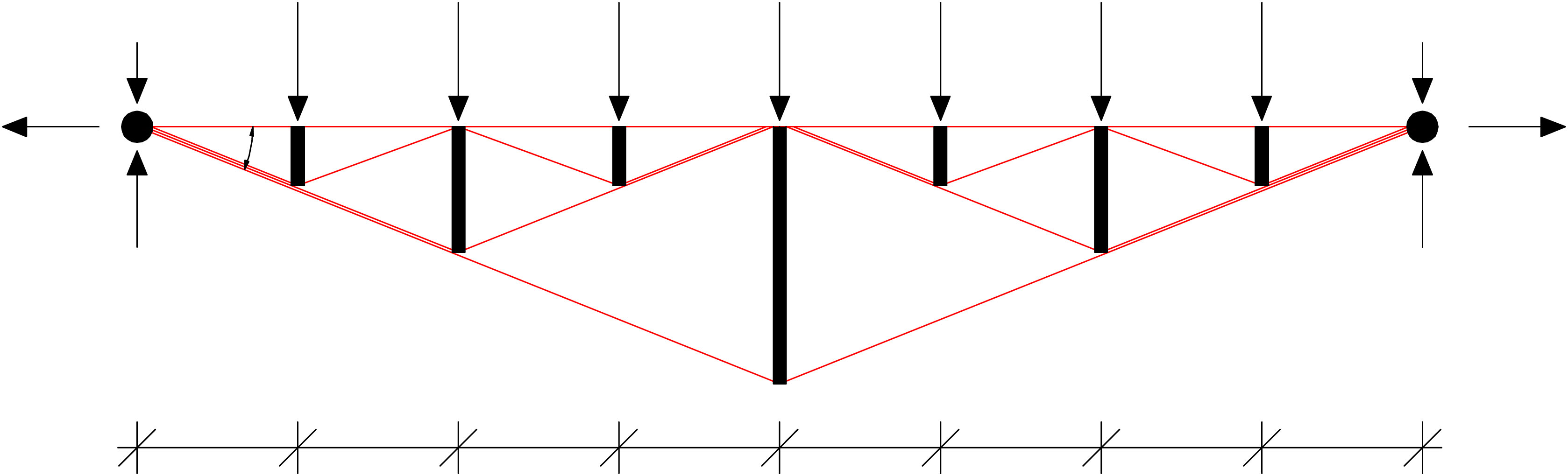,height=4.95cm}}
\put(15,4.35){{$\mbox{$\frac{F}{2^{n+1}}$}$}}
\put(13.2,4.35){{$\mbox{$\frac{F}{2^n}$}$}}
\put(11.5,4.35){{$\mbox{$\frac{F}{2^n}$}$}}
\put(9.85,4.35){{$\mbox{$\frac{F}{2^n}$}$}}
\put(8.15,4.35){{$\mbox{$\frac{F}{2^n}$}$}}
\put(6.45,4.35){{$\mbox{$\frac{F}{2^n}$}$}}
\put(4.75,4.35){{$\mbox{$\frac{F}{2^n}$}$}}
\put(3,4.35){{$\mbox{$\frac{F}{2^n}$}$}}
\put(1.5,4.35){{$\mbox{$\frac{F}{2^{n+1}}$}$}}
\put(2.55,3.35){{$\mbox{$\beta$}$}}
\put(15,2.75){{$\mbox{$\frac{F}{2}$}$}}
\put(1.5,2.75){{$\mbox{$\frac{F}{2}$}$}}
\put(8.25,2.25){{$\mbox{$b_1$}$}}
\put(10.1,1.5){{$\mbox{$t_1$}$}}
\put(5.75,1.5){{$\mbox{$t_1$}$}}
\put(11.6,3){{$\mbox{$b_i$}$}}
\put(10.4,2.35){{$\mbox{$t_i$}$}}
\put(12.2,2.35){{$\mbox{$t_i$}$}}
\put(4.8,3){{$\mbox{$b_i$}$}}
\put(3.5,2.35){{$\mbox{$t_i$}$}}
\put(5.5,2.35){{$\mbox{$t_i$}$}}
\put(6.5,3.4){{$\mbox{$b_n$}$}}
\put(1.9,3.1){{$\mbox{$t_n$}$}}
\put(3.8,3.1){{$\mbox{$t_n$}$}}
\put(5.3,3.1){{$\mbox{$t_n$}$}}
\put(7.2,3.1){{$\mbox{$t_n$}$}}
\put(8.6,3.1){{$\mbox{$t_n$}$}}
\put(10.6,3.1){{$\mbox{$t_n$}$}}
\put(12.1,3.1){{$\mbox{$t_n$}$}}
\put(14.1,3.1){{$\mbox{$t_n$}$}}
\put(3.1,3.4){{$\mbox{$b_n$}$}}
\put(9.9,3.4){{$\mbox{$b_n$}$}}
\put(13.25,3.4){{$\mbox{$b_n$}$}}
\put(13.6,0.6){{$\mbox{${L}/{2^n}$}$}}
\put(12,0.6){{$\mbox{${L}/{2^n}$}$}}
\put(10.2,0.6){{$\mbox{${L}/{2^n}$}$}}
\put(8.6,0.6){{$\mbox{${L}/{2^n}$}$}}
\put(6.8,0.6){{$\mbox{${L}/{2^n}$}$}}
\put(5.1,0.6){{$\mbox{${L}/{2^n}$}$}}
\put(3.5,0.6){{$\mbox{${L}/{2^n}$}$}}
\put(1.8,0.6){{$\mbox{${L}/{2^n}$}$}}
\put(0.25,4){{$\mbox{${w_x}$}$}}
\put(16,4){{$\mbox{${w_x}$}$}}
\end{picture}
\caption{Adopted notations for forces and lengths of bars and cables for a substructure with generic complexity $n$ and $p = 1$.}
\label{theorems_module_sub}
\end{figure}

\begin{figure}[hb] 
\unitlength1cm
\begin{picture}(10,5)
\put(0,0){\psfig{figure=./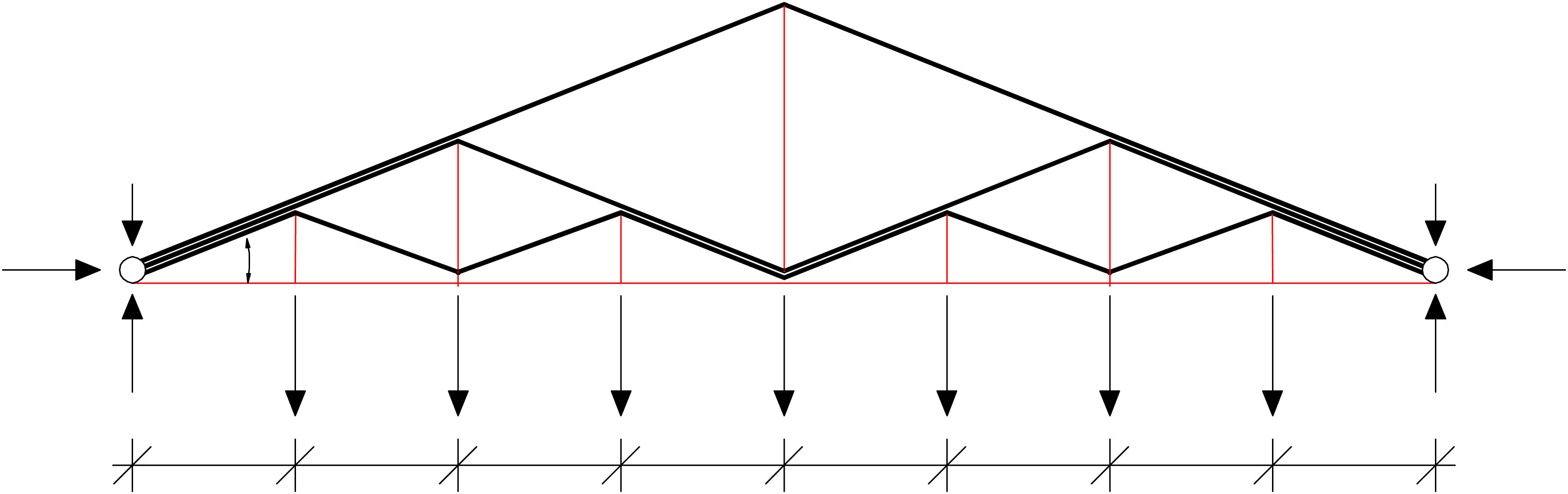,height=5cm}}
\put(14.75,2.85){{$\mbox{$\frac{F}{2^{n+1}}$}$}}
\put(13.2,1.35){{$\mbox{$\frac{F}{2^n}$}$}}
\put(11.5,1.35){{$\mbox{$\frac{F}{2^n}$}$}}
\put(9.85,1.35){{$\mbox{$\frac{F}{2^n}$}$}}
\put(8.15,1.35){{$\mbox{$\frac{F}{2^n}$}$}}
\put(6.45,1.35){{$\mbox{$\frac{F}{2^n}$}$}}
\put(4.75,1.35){{$\mbox{$\frac{F}{2^n}$}$}}
\put(3,1.35){{$\mbox{$\frac{F}{2^n}$}$}}
\put(0.25,2.85){{$\mbox{$\frac{F}{2^{n+1}}$}$}}
\put(0.1,1.8){{$\mbox{${w_x}$}$}}
\put(15.4,1.8){{$\mbox{${w_x}$}$}}
\put(2.6,2.3){{$\mbox{$\alpha$}$}}
\put(15,1.25){{$\mbox{$\frac{F}{2}$}$}}
\put(1.5,1.25){{$\mbox{$\frac{F}{2}$}$}}
\put(8,3.75){{$\mbox{$b_1$}$}}
\put(10.25,4.25){{$\mbox{$t_1$}$}}
\put(5.5,4.25){{$\mbox{$t_1$}$}}
\put(11.3,3){{$\mbox{$b_i$}$}}
\put(10.3,2.9){{$\mbox{$t_i$}$}}
\put(12.1,2.9){{$\mbox{$t_i$}$}}
\put(4.7,3){{$\mbox{$b_i$}$}}
\put(3.6,2.9){{$\mbox{$t_i$}$}}
\put(5.4,2.9){{$\mbox{$t_i$}$}}
\put(2,2.2){{$\mbox{$t_n$}$}}
\put(3.7,2.2){{$\mbox{$t_n$}$}}
\put(5.4,2.2){{$\mbox{$t_n$}$}}
\put(7.1,2.2){{$\mbox{$t_n$}$}}
\put(8.7,2.2){{$\mbox{$t_n$}$}}
\put(10.5,2.2){{$\mbox{$t_n$}$}}
\put(12,2.2){{$\mbox{$t_n$}$}}
\put(13.6,2.2){{$\mbox{$t_n$}$}}
\put(6.4,2.3){{$\mbox{$b_n$}$}}
\put(3.1,2.3){{$\mbox{$b_n$}$}}
\put(9.7,2.3){{$\mbox{$b_n$}$}}
\put(13,2.3){{$\mbox{$b_n$}$}}
\put(13.4,-0.2){{$\mbox{${L}/{2^n}$}$}}
\put(11.8,-0.2){{$\mbox{${L}/{2^n}$}$}}
\put(10.2,-0.2){{$\mbox{${L}/{2^n}$}$}}
\put(8.5,-0.2){{$\mbox{${L}/{2^n}$}$}}
\put(6.7,-0.2){{$\mbox{${L}/{2^n}$}$}}
\put(5,-0.2){{$\mbox{${L}/{2^n}$}$}}
\put(3.4,-0.2){{$\mbox{${L}/{2^n}$}$}}
\put(1.7,-0.2){{$\mbox{${L}/{2^n}$}$}}
\end{picture}
\caption{Adopted notations for forces and lengths of bars and cables for a \emph{superstructure} with complexity $(n,p)=(n,1)$}
\label{theorems_module_super}
\end{figure}

\subsection{\emph{Substructure} Bridge with Complexity $(n,p,q)=(n,1,0)$} \label{sub_p1}

In this case, we make use of the notation illustrated in Fig. \ref{theorems_module_sub} in which complexity $p$ is fixed to be one. Since $n$ is the number of self-similar iteration of the basic module of Fig. \ref{basic_module}c at different scales, it can be defined $n$ orders of bars and cables. The length of the generic $i^{th}$ bar and the length of the generic $i^{th}$ cables are,

\bea
b_i =\frac{L}{2^{i} } \tan \beta,\qquad i=1-n,
\label{lbars}
\eea

\bea
s_i =\frac{L}{2^{i} \cos \beta },\qquad i=1-n.
\label{lstrings}
\eea

\noindent From the equilibrium conditions, the axial force in each bar and the axial force in each cable are given by,

\bea
f_{bi} =\frac{F}{2^{i} },
\label{fbars}
\eea

\bea
t_{si} =\frac{F}{2^{\left(1+i\right)} \sin \beta }.
\label{tstrings}
\eea

\begin{theorem} \label{theo:min_yielding_np1}
Consider a \emph{substructure} bridge with topology defined by (\ref{nnodes}), (\ref{nconn_sub}), (\ref{lbars}) and (\ref{lstrings}), with complexity $(n,p,q)=(n,1,0)$, see Fig. \ref{theorems_module_sub}. The minimal mass design under only yielding constraints is given by the following aspect angle:

\bea
\beta^*_{Y} = \arctan{ \left(\frac{1}{\sqrt{1+\rho } } \right)},
\label{np1_betastY}
\eea

\noindent which corresponds to the following dimensionless minimal mass 

\bea
\mu_{Y}^{*} = \left(1-\frac{1}{2^{n} } \right)\sqrt{1+\rho }.
\label{np1_mustY}
\eea

\end{theorem}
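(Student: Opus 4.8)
The plan is to compute the dimensionless mass $\mu_Y(\beta)$ as an explicit function of the single design variable $\beta$ by summing the bar and cable contributions over the $n$ self-similar scales, and then to minimize the resulting closed form exactly as in Corollary \ref{theo:n1p1:min_yielding}.

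First I would organize the members by order. At order $i$ (with $i=1,\dots,n$) the self-similar subdivision produces $2^{i-1}$ bars, each of length $b_i=\frac{L}{2^i}\tan\beta$ carrying compression $f_{bi}=\frac{F}{2^i}$, together with $2^i$ inclined (load-bearing) cables, each of length $s_i=\frac{L}{2^i\cos\beta}$ carrying tension $t_{si}=\frac{F}{2^{1+i}\sin\beta}$. Summing $2^{i-1}$ and $2^i$ over $i$ recovers the totals $n_b=2^n-1$ and $2^{n+1}-2$ implied by (\ref{nconn_sub}), the remaining $2^n$ cables being the horizontal deck-level chords. As in the $(1,1,0)$ case, those horizontal chords enter the yield mass with a strictly positive coefficient (the analogue of the free parameter $t_1$ in Theorem \ref{NY11}), so the optimum sets their tension to zero and they drop out of the mass.

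Then I substitute the forces (\ref{fbars}), (\ref{tstrings}) and lengths (\ref{lbars}), (\ref{lstrings}) into the yield mass (\ref{mass}) and its normalization (\ref{mu}). After cancellation every scale contributes a factor $2^{-i}$, so the bar and cable sums each collapse to the geometric series $\sum_{i=1}^{n}2^{-i}=1-2^{-n}$, giving
\begin{align}
\mu_Y(\beta)=\left(1-\frac{1}{2^n}\right)\left[\frac{1+\tan^2\beta}{2\tan\beta}+\frac{\rho}{2}\tan\beta\right].
\end{align}
This is precisely $2\left(1-2^{-n}\right)$ times the $n=1$ mass of Corollary \ref{theo:n1p1:min_yielding}, so its dependence on $\beta$ is identical to that case.

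Finally, writing $x=\tan\beta$, the bracket equals $\frac{1}{2}\left(x^{-1}+(1+\rho)x\right)$, whose derivative vanishes at $x=1/\sqrt{1+\rho}$, yielding the optimal angle (\ref{np1_betastY}); the second derivative is positive, so this is the unique minimum, independent of $n$. Evaluating the bracket there gives $\sqrt{1+\rho}$, hence $\mu_Y^*=\left(1-2^{-n}\right)\sqrt{1+\rho}$, which is (\ref{np1_mustY}). The main obstacle is the bookkeeping: correctly counting $2^{i-1}$ bars and $2^i$ inclined cables per scale and justifying that the horizontal chords carry no tension at the optimum. Once this is done, the factorization into a geometric series times a $\beta$-dependent factor makes both the $n$-independence of the optimal angle and the $\left(1-2^{-n}\right)$ scaling of the mass transparent.
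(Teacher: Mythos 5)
Your proposal is correct and follows essentially the same route as the paper's proof: the same per-order counts ($2^{i-1}$ bars, $2^i$ inclined cables), the same substitution of the lengths (\ref{lbars}), (\ref{lstrings}) and forces (\ref{fbars}), (\ref{tstrings}) into the normalized yield mass, collapse of the sums via $\sum_{i=1}^{n}2^{-i}=1-2^{-n}$, and minimization in the variable $\tan\beta$. The only additions beyond the paper's argument are welcome refinements rather than departures: you explicitly justify setting the horizontal deck-chord tensions to zero (which the paper leaves implicit by omitting them from the sum) and you note the factorization of $\mu_Y$ as $2\left(1-2^{-n}\right)$ times the $n=1$ mass of Corollary \ref{theo:n1p1:min_yielding}.
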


\begin{proof}

\noindent Observing the multiscale structure of Fig. \ref{theorems_module_sub} it's clear that the number of bars and the number of cables of $i^{th}$ order are

\bea
n_{si} = 2^i, \ \ \ n_{bi} = 2^{i-1}.
\label{n_bars_n_strings_ith}
\eea

\noindent The total mass of the structure is:

\bea
m_Y = \frac{\rho_s}{\sigma_s} \sum_{i=1}^n {n_{si} t_{si} s_i} + \frac{\rho_b}{\sigma_b} \sum_{i=1}^n {n_{bi} f_{bi} b_i}.
\label{np1_mass}
\eea

\noindent Substituting (\ref{lbars}), (\ref{lstrings}), (\ref{fbars}), (\ref{tstrings}) and (\ref{n_bars_n_strings_ith}) into (\ref{np1_mass}) yields,

\bea
m_Y = \frac{F L}{2} \left( \sum_{i=1}^n {\frac{1}{2^i}} \right) \left( \frac{\rho_s}{\sigma_s} \frac{1}{\sin \beta \cos \beta} + \frac{\rho_b}{\sigma_b} \tan \beta \right).
\label{np1_mass_2}
\eea

\noindent Using the following identities in (\ref{np1_mass_2}),

\bea
\sum_{i=1}^n {\frac{1}{2^i}} = \left( 1 - \frac{1}{2^n} \right), \ \ \ \frac{1}{\sin \beta \cos \beta} = \frac{1+\tan^2 \beta}{\tan \beta},
\label{np1_positions}
\eea

\noindent we obtain:

\bea
m_Y = \frac{F L}{2} \left( 1 - \frac{1}{2^n} \right) \left[ \frac{\rho_s}{\rho_b} \frac{\left( 1+\tan^2 \beta \right)}{\tan \beta} + \frac{\rho_b}{\sigma_b} \tan \beta \right].
\label{np1_mass_3}
\eea

\noindent Switching to the dimensionless mass defined in (\ref{mu}) we have:

\bea
\mu_Y = \frac{1}{2} \left( 1 - \frac{1}{2^n} \right) \left[  \frac{\left( 1+\tan^2 \beta \right)}{\tan \beta}  + \rho \tan \beta \right].
\label{np1_mu}
\eea

\noindent The solution for minimal mass can be achieved from,

\bea
\frac{\partial \mu_{Y}}{\partial \tan \beta}= \frac{1}{2} \left( 1 - \frac{1}{2^n} \right) \left[ - \frac{\left( \tan^2 \beta +1 \right) }{\tan^2 \beta} + 2 + \rho \right] = 0,
\label{np1:dmuy}
\eea

\noindent yielding the optimal angle of (\ref{np1_betastY}). Substituting it into (\ref{np1_mu}) concludes the proof.

\end{proof}

Note from (\ref{n1p1:betastY}) and (\ref{np1_betastY}) that the optimal angle $\beta$ does not depend upon the choice of $n$. The minimal mass solution under yielding constraints depends only on the material choice $\rho$ (\ref{rho}), and the complexity parameter $n$. Note that, since the total external force $F$ is a specified constant, the optimum complexity is $n = 1$.  However if the total vertical force depends upon $n$ as it will in the next section dealing with massive decks, or with massive joints, then the optimal complexity will be shown to be $n>1$.

\begin{theorem} \label{theo:mim_buckling_p1}

Consider a \emph{substructure} bridge with topology defined by (\ref{nnodes}), (\ref{nconn_sub}), (\ref{lbars}) and (\ref{lstrings}), with complexity $(n,p,q)=(n,1,0)$, see Fig. \ref{theorems_module_sub}. The minimal mass design under yielding  and buckling constraints, is given by the following aspect angle:

\bea
\beta _{B}^{*} = \arctan{ \left\{ \frac{1}{12 \alpha_2 \eta} \left[\alpha_3 + \alpha_1 \left( \frac{\alpha_1}{\alpha_3} - 1 \right) \right] \right\} }.
\label{np1_betastB}
\eea

\noindent which corresponds to the following dimensionless minimal mass: 

\bea
\mu_{B}^{*} = \alpha_1 \frac{1+\tan^2 \beta_B^*}{2 \tan \beta_B^*} + \eta \alpha_2 \tan^2 \beta_B^*,
\label{np1_mustB}
\eea

\noindent where:

\bea
\alpha_1 = \left( 1 - \frac{1}{2^n} \right),
\label{np1_alpha_1}
\eea

\bea
\alpha_2 = \left( \frac{1 + 2 \sqrt{2}}{7} \right) \left( 1 - \frac{1}{2^{3 n / 2}} \right),
\label{np1_alpha_2}
\eea

\bea
\alpha_3 = \left( 216 \alpha_1 \alpha_2^2 \eta^2 - \alpha_1^3 + 12 \sqrt{324 \alpha_1^2 \alpha_2^4 \eta^4 - 3 \alpha_1^4 \alpha_2^2 \eta^2} \right)^{1/3}.
\label{np1_alpha_3}
\eea

\end{theorem}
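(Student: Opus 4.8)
The plan is to derive the dimensionless mass as a function of the single design variable $\tan\beta$ (for fixed complexity $n$), then minimize by setting the derivative to zero, recognizing that the optimality condition is a cubic in $\tan\beta$ whose closed-form root (via Cardano's formula) produces the stated expression (\ref{np1_betastB}). The overall structure mirrors the buckling proofs already given in Corollaries \ref{theo:n1p1:min_buckling} and Theorem \ref{theo:min_buckling_n1q>1}, but now with the $n$-dependent geometric sums that appear in $\alpha_1$ and $\alpha_2$.

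\emph{First I would} compute the total mass at the buckling condition, paralleling the yielding computation in Theorem \ref{theo:min_yielding_np1}. The cable mass contribution is identical to that proof: using (\ref{lstrings}), (\ref{tstrings}) and $n_{si}=2^i$ from (\ref{n_bars_n_strings_ith}), the cable mass sums to $\tfrac{\rho_s}{\sigma_s}\tfrac{FL}{2}(1-2^{-n})\tfrac{1+\tan^2\beta}{\tan\beta}$, which after normalization yields the first term of (\ref{np1_mustB}) with the factor $\alpha_1=(1-2^{-n})$. The bar mass now uses the buckling formula $m_{b,B}=2\rho_b b^2\sqrt{f_b/(\pi E_b)}$ rather than the yield formula. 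Substituting the bar length $b_i=(L/2^i)\tan\beta$ from (\ref{lbars}), the bar force $f_{bi}=F/2^i$ from (\ref{fbars}), and $n_{bi}=2^{i-1}$, the $i$-th order contributes a factor $2^{i-1}\cdot(L/2^i)^2\tan^2\beta\cdot\sqrt{F/2^i}$, so the geometric sum becomes $\sum_{i=1}^n 2^{-3i/2}$. \emph{The main obstacle}—though really a bookkeeping one—is evaluating this geometric series cleanly: summing $\sum_{i=1}^n (2^{-3/2})^i = 2^{-3/2}\tfrac{1-2^{-3n/2}}{1-2^{-3/2}}$ and simplifying the constant $1/(1-2^{-3/2})$ to obtain the factor $(1+2\sqrt2)/7$ appearing in $\alpha_2$ (\ref{np1_alpha_2}). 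This requires rationalizing $1/(1-2^{-3/2})=2^{3/2}/(2^{3/2}-1)$ and multiplying through, and I expect the arithmetic to be delicate but deterministic.

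\emph{Next I would} assemble the normalized total mass into the form $\mu_B=\alpha_1\tfrac{1+\tan^2\beta}{2\tan\beta}+\eta\,\alpha_2\tan^2\beta$, matching (\ref{np1_mustB}), where the $\eta$ defined in (\ref{eta}) absorbs the dimensional constant $\rho_b L^2\sqrt{F}/(\sqrt{\pi E_b})$ from the buckling formula. \emph{Then I would} differentiate with respect to $x=\tan\beta$. Setting $\partial\mu_B/\partial x=0$ gives
\begin{equation}
\alpha_1\Bigl(\tfrac{1}{2}-\tfrac{1+x^2}{2x^2}\Bigr)+2\eta\,\alpha_2\,x=0,
\label{plan:stationarity}
\end{equation}
which after multiplying by $2x^2$ becomes the cubic $4\eta\,\alpha_2\,x^3+\alpha_1 x^2-\alpha_1=0$ (structurally identical to (\ref{n1p1:muB_B2}) when $\alpha_1=1$, $\alpha_2=1/(2\sqrt2)$). \emph{The final step} is to solve this depressed-cubic-after-substitution using the standard Cardano route: the real root can be written using the discriminant-related quantity $\alpha_3$ of (\ref{np1_alpha_3}), and collecting terms reproduces $\tan\beta_B^*=\tfrac{1}{12\alpha_2\eta}[\alpha_3+\alpha_1(\alpha_1/\alpha_3-1)]$ as claimed in (\ref{np1_betastB}). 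Positivity of the second variation, which guarantees this root is the global minimizer (as remarked for the $n=1$ case), follows because $\mu_B\to\infty$ as $x\to0^+$ and as $x\to\infty$, with a single stationary point in between.
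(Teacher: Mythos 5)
Your proposal is correct and follows essentially the same route as the paper's own proof: cable mass at yield plus bar mass at buckling, the geometric sums producing $\alpha_1$ and $\alpha_2$ (including the rationalization giving $(1+2\sqrt{2})/7$), stationarity in $x=\tan\beta$ reducing to the cubic $4(\alpha_2/\alpha_1)\eta\,x^3+x^2-1=0$, and Cardano's formula yielding (\ref{np1_betastB}). One small slip to fix: your displayed stationarity equation is not the derivative of $\alpha_1\frac{1+x^2}{2x}$, since $\frac{1}{2}-\frac{1+x^2}{2x^2}=-\frac{1}{2x^2}$ whereas the correct factor is $\frac{1}{2}-\frac{1}{2x^2}=\frac{x^2-1}{2x^2}$ (the paper's own displayed derivative contains a similar typo); the cubic you state immediately afterwards is nevertheless the correct one, so the argument goes through unchanged.
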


\begin{proof}

\noindent The total mass of cables is given by,

\bea
m_s = \sum_{i=1}^n \frac{\rho_s}{\sigma_s} n_{si} t_{si} s_i.
\label{np1_ms_1}
\eea

\noindent Substituting (\ref{lstrings}), (\ref{tstrings}), (\ref{n_bars_n_strings_ith}) into (\ref{np1_ms_1}) and making use of identities (\ref{np1_positions}),

\bea
m_s = \frac{F L}{2} \frac{\rho_s}{\sigma_s} \left( \frac{1 + \tan^2 \beta}{\tan \beta} \right) \left( 1 - \frac{1}{2^n} \right).
\label{np1_ms_2}
\eea

\noindent This corresponds to the following normalized mass

\bea
\mu_s = \left( \frac{1 + \tan^2 \beta}{2 \tan \beta} \right) \left( 1 - \frac{1}{2^n} \right).
\label{np1_mus}
\eea

\noindent The total mass of bars, making use of (\ref{m_b}), is

\bea
m_b = \sum_{i=1}^n n_{bi} m_{bi} = \sum_{i=1}^n n_{bi} \frac{2 \rho_b}{\sqrt{\pi E_b}} b_i^2 \sqrt{f_i}.
\label{np1_mb_1}
\eea

\noindent Substituting (\ref{lbars}), (\ref{n_bars_n_strings_ith}) and (\ref{fbars}) into (\ref{np1_mb_1}) yields

\bea
m_b = \frac{\rho_b L^2 \sqrt{F}}{\sqrt{\pi E_b}} \tan^2 \beta \sum_{i=1}^n {\frac{1}{2^{3 i/2}}}. 
\label{np1_mb_2}
\eea

\noindent Since

\bea
\sum_{i=1}^n {\frac{1}{2^{3 i/2}}} = \left( \frac{1 + 2 \sqrt{2}}{7} \right) \left( 1 - \frac{1}{2^{3 n / 2}} \right),
\label{np1_position_2}
\eea

\noindent normalizing we get the following dimensionless mass of bars,

\bea
\mu_b = \eta \tan^2 \beta \left( \frac{1 + 2 \sqrt{2}}{7} \right) \left( 1 - \frac{1}{2^{3 n / 2}} \right).
\label{np1_mub_1}
\eea

\noindent The total mass is then the sum of (\ref{np1_mus}) and (\ref{np1_mub_1}) and introducing constants $\alpha_1$ and $\alpha_2$ given in (\ref{np1_alpha_1}) and (\ref{np1_alpha_2}):

\bea
\mu_B = \mu_s + \mu_b= \alpha_1 \frac{1+\tan^2 \beta}{2 \tan \beta} + \eta \alpha_2 \tan^2 \beta.
\label{np1_muB}
\eea

\noindent The solution for minimal mass can be achieved from,

\bea
\frac{\partial \mu_{B}}{\partial \tan \beta}= \alpha_1 \left( 1 - \frac{1 - \tan^2 \beta}{2 \tan^2 \beta} \right) + 2 \eta \alpha_2 \tan \beta = 0,
\label{np1:dmuB}
\eea

\noindent yielding the optimal angle (\ref{np1_betastB}) by solving the following cubic equation:

\bea
4 \frac{\alpha_2}{\alpha_1} \eta \tan^3 \beta + \tan^2 \beta - 1 = 0.
\label{np1:dmuB_2}
\eea

Note that the optimal angle given in (\ref{np1_betastB}) reduces to the optimal angle given in (\ref{n1p1:betastB}) for the particular case $n=1$. Then, substituting (\ref{np1_betastB}) into (\ref{np1_muB}) concludes the proof.

\end{proof}

\subsection{\emph{Superstructure} Bridge with Complexity $(n,p,q)=(n,0,1)$} \label{super_nq1}

In this case, we make use of the notation illustrated in Fig. \ref{theorems_module_super} in which complexity $q$ is fixed to be one. Since $n$ is the number of self-similar iteration of the basic module of Fig. \ref{basic_module}b at different scales, it can be defined $n$ orders of bars and cables. The length of the generic $i^{th}$ bar and the length of the generic $i^{th}$ cable, for $i$ ranging from $1$ to $n$, are:

\bea
b_i =\frac{L}{2^{i} \cos \alpha}, \ \ \ 
s_i = \frac{L}{2^i} \tan \alpha.
\label{nq1_lbars_lstrings}
\eea

Moreover, looking at the equilibrium of each node of the structure, we found that the axial force in each bar and the axial force in each cable are given by:

\bea
f_{bi} = \frac{F}{2^{\left(1+i\right)} \sin \alpha }, \ \ \
t_{si} =\frac{F}{2^{i} }.
\label{nq1_fbars_tstrings}
\eea

\noindent Observing the multiscale structure of Fig. \ref{theorems_module_super} it's clear that the number of bars and the number of cables of $i^{th}$ order are:

\bea
n_{si} = 2^{i-1}, \ \ \ n_{bi} = 2^{i}.
\label{nq1_n_bars_n_strings}
\eea

\begin{theorem} \label{theo:min_yielding_nq1}

Consider a \emph{superstructure} bridge with topology defined by (\ref{nnodes}), (\ref{nconn_super}), (\ref{nq1_lbars_lstrings}), with complexity $(n,p,q)=(n,0,1)$, see Fig. \ref{theorems_module_super}. The minimal mass design under yielding constraints is given by the following aspect angle:

\bea
\alpha^*_{Y} = \arctan{ \left(\sqrt{\frac{\rho}{{1+\rho } }} \right)},
\label{nq1_alphastY}
\eea

\noindent which corresponds to the following dimensionless minimal mass: 

\bea
\mu_{Y}^{*} = \left(1-\frac{1}{2^{n} } \right)\sqrt{ \rho \left(1+\rho \right) }.
\label{nq1_mustY}
\eea

\end{theorem}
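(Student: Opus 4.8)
The plan is to follow the same direct-summation strategy used in the proof of Theorem \ref{theo:min_yielding_np1}, exploiting the self-similar (geometric) structure of the \emph{superstructure} so that the complexity $n$ enters only through a single scalar prefactor. First I would assemble the total mass at the yield condition from (\ref{mu}) and (\ref{mass}), writing it as a sum over the $n$ orders,
\bea
m_Y = \frac{\rho_s}{\sigma_s}\sum_{i=1}^{n} n_{si}\, t_{si}\, s_i + \frac{\rho_b}{\sigma_b}\sum_{i=1}^{n} n_{bi}\, f_{bi}\, b_i,
\eea
and substitute the member counts $n_{si}=2^{i-1}$, $n_{bi}=2^{i}$ from (\ref{nq1_n_bars_n_strings}), the forces (\ref{nq1_fbars_tstrings}), and the lengths (\ref{nq1_lbars_lstrings}).

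The key observation, implicit already in the dual \emph{substructure} case, is that the products $n_{si}t_{si}=F/2$ and $n_{bi}f_{bi}=F/(2\sin\alpha)$ are both independent of the order $i$; hence each sum collapses to the geometric series $\sum_{i=1}^{n}2^{-i}=1-2^{-n}$. Using $1/(\sin\alpha\cos\alpha)=(1+\tan^2\alpha)/\tan\alpha$ and normalizing by $(\rho_s/\sigma_s)FL$ as in (\ref{mu}), this should yield the dimensionless mass
\bea
\mu_Y(\alpha)=\frac{1}{2}\left(1-\frac{1}{2^{n}}\right)\left[\tan\alpha+\rho\,\frac{1+\tan^2\alpha}{\tan\alpha}\right].
\eea
Because $n$ appears only through the factor $(1-2^{-n})$, the optimal angle is independent of $n$, exactly as noted after Theorem \ref{theo:min_yielding_np1}.

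Setting $x=\tan\alpha$, the bracket reads $(1+\rho)x+\rho/x$, so that the stationarity condition $\partial\mu_Y/\partial x=0$ becomes $(1+\rho)-\rho/x^2=0$, giving $x^\ast=\sqrt{\rho/(1+\rho)}$, which is (\ref{nq1_alphastY}); the positive second derivative $2\rho/x^3>0$ confirms a minimum. Substituting $x^\ast$ back, each of the two terms $(1+\rho)x^\ast$ and $\rho/x^\ast$ equals $\sqrt{\rho(1+\rho)}$, so their sum is $2\sqrt{\rho(1+\rho)}$ and the prefactor $\frac{1}{2}(1-2^{-n})$ delivers the claimed minimal mass (\ref{nq1_mustY}). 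Since every step is an explicit substitution followed by a one-variable optimization, there is no genuine analytical obstacle; the only point demanding care is the bookkeeping of counts, forces, and lengths across the $n$ scales so that the series factors cleanly. Alternatively, one could deduce the result from Theorem \ref{theo:min_yielding_np1} by the duality interchanging the roles of bars and cables, but the direct computation is the most transparent.
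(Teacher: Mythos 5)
Your proof is correct and follows essentially the same route as the paper's: direct summation of cable and bar masses over the $n$ orders, collapse of the geometric series $\sum_{i=1}^{n}2^{-i}=1-2^{-n}$, normalization by $(\rho_s/\sigma_s)FL$, and a one-variable minimization in $\tan\alpha$ yielding (\ref{nq1_alphastY}) and (\ref{nq1_mustY}). Your explicit observation that $n_{si}t_{si}=F/2$ and $n_{bi}f_{bi}=F/(2\sin\alpha)$ are independent of $i$, and your second-derivative check, are small clarifying additions but not a different argument.
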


\begin{proof}

\noindent The total mass of the structure is:

\bea
m_Y = \frac{\rho_s}{\sigma_s} \sum_{i=1}^n {n_{si} t_{si} s_i} + \frac{\rho_b}{\sigma_b} \sum_{i=1}^n {n_{bi} f_{bi} b_i}.
\label{nq1_mass}
\eea

\noindent Substituting (\ref{nq1_lbars_lstrings}), (\ref{nq1_fbars_tstrings}), and (\ref{nq1_n_bars_n_strings}) into (\ref{nq1_mass}) and considering positions (\ref{np1_positions}) we get:

\bea
m_Y = \frac{F L}{2} \left( 1 - \frac{1}{2^n} \right) \left[ \frac{\rho_s}{\sigma_s} \tan \alpha + \frac{\rho_b}{\sigma_b} \frac{\left( 1 + \tan^2 \alpha \right)}{\tan \alpha} \right].
\label{nq1_mass_2}
\eea

\noindent Switching to the dimensionless mass defined in (\ref{mu}) we have:

\bea
\mu_Y = \frac{1}{2} \left( 1 - \frac{1}{2^n} \right) \left[ \tan \alpha + \rho \frac{\left( 1 + \tan^2 \alpha \right)}{\tan \alpha} \right].
\label{nq1_mu}
\eea

\noindent The solution for minimal mass can be achieved from,

\bea
\frac{\partial \mu_{Y}}{\partial \tan \alpha}= \frac{1}{2} \left( 1 - \frac{1}{2^n} \right) \left[ 1 + \rho \left( 2 - \frac{1 + \tan^2 \alpha}{\tan^2 \alpha} \right) \right] = 0,
\label{nq1:dmuy}
\eea

\noindent yielding the optimal angle of (\ref{nq1_alphastY}). Substituting it into (\ref{nq1_mu}) concludes the proof.

\end{proof}

\begin{theorem} \label{theo:min_buckling_nq1}

Consider a \emph{superstructure} bridge with topology defined by (\ref{nnodes}), (\ref{nconn_super}), (\ref{nq1_lbars_lstrings}), with complexity $(n,p,q)=(n,0,1)$, see Fig. \ref{theorems_module_super}. The minimal mass design under yielding and buckling constraints is given by the following aspect angle:

\bea
\alpha _{B}^{*} = \arctan{ \frac{1}{2} },
\label{nq1_alphastB}
\eea

\noindent which corresponds to the following dimensionless minimal mass: 

\bea
\mu_{B}^{*} = \frac{\gamma_1}{2} + \eta \gamma_2 \frac{5^{5/4}}{4},
\label{nq1_mustB}
\eea

\noindent where:

\bea
\gamma_1 = \frac{1}{2} \left( 1 - \frac{1}{2^n} \right),
\label{nq1_gamma_1}
\eea

\bea
\gamma_2 = \sqrt{2} \left( \frac{1 + 2 \sqrt{2}}{7} \right) \left( 1 - \frac{1}{2^{3 n / 2}} \right).
\label{nq1_gamma_2}
\eea

\end{theorem}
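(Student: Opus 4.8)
The plan is to mirror the structure of the proof of Theorem~\ref{theo:mim_buckling_p1}, replacing the yield bar mass by the round-bar buckling mass, and to borrow the cable-neglect minimization already used in Corollary~\ref{theo:n1q1:min_buckling}. First I would record the cable mass, which is identical to the one computed inside the proof of Theorem~\ref{theo:min_yielding_nq1}: substituting the lengths $s_i$ from (\ref{nq1_lbars_lstrings}), the tensions $t_{si}$ from (\ref{nq1_fbars_tstrings}) and the counts $n_{si}=2^{i-1}$ from (\ref{nq1_n_bars_n_strings}), and using $\sum_{i=1}^n 2^{-i}=1-2^{-n}$ from (\ref{np1_positions}), the normalized cable mass collapses to $\mu_s=\gamma_1\tan\alpha$ with $\gamma_1$ as in (\ref{nq1_gamma_1}).

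The genuinely new computation is the bar mass under buckling. Here I would apply the round-bar buckling mass $m_{b,B}=2\rho_b b^2\sqrt{f_b/(\pi E_b)}$ termwise, writing $m_{b}=\sum_{i=1}^n n_{bi}\,2\rho_b b_i^2\sqrt{f_{bi}/(\pi E_b)}$, and insert $b_i$ from (\ref{nq1_lbars_lstrings}), $f_{bi}$ from (\ref{nq1_fbars_tstrings}) and $n_{bi}=2^{i}$ from (\ref{nq1_n_bars_n_strings}). The powers of two combine as $2^{i-2i-(1+i)/2}=2^{-1/2}\,2^{-3i/2}$, so the sum reduces to $\sum_{i=1}^n 2^{-3i/2}$, which is exactly the geometric series evaluated in (\ref{np1_position_2}); the leading $2^{-1/2}$ together with the factor $2$ in $2\rho_b$ produces the $\sqrt2$ visible in $\gamma_2$. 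After invoking the definitions of $\eta$ in (\ref{eta}) and of $\mu$ in (\ref{mu}), the normalized bar mass becomes $\mu_b=\eta\gamma_2\,(1+\tan^2\alpha)^{5/4}/\sqrt{\tan\alpha}$, with $\gamma_2$ as in (\ref{nq1_gamma_2}); the only nonroutine algebra is rewriting $1/(\cos^2\alpha\,\sqrt{\sin\alpha})$ as $(1+\tan^2\alpha)^{5/4}/\sqrt{\tan\alpha}$.

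At this stage the total dimensionless mass is
\begin{equation}
\mu_B(\alpha)=\gamma_1\tan\alpha+\eta\gamma_2\,\frac{(1+\tan^2\alpha)^{5/4}}{\sqrt{\tan\alpha}}.
\end{equation}
As in Corollary~\ref{theo:n1q1:min_buckling}, and under the standing assumption (\ref{eta2}) that the bars dominate, I would minimize the bar term alone. Writing $x=\tan\alpha$ and differentiating $(1+x^2)^{5/4}x^{-1/2}$ produces a factor $5x^2-(1+x^2)=4x^2-1$ in the numerator, so the stationarity condition is simply $4x^2=1$, giving $\tan\alpha=1/2$, i.e.\ the angle (\ref{nq1_alphastB}). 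Substituting $\tan\alpha=1/2$ (equivalently $\sin\alpha=1/\sqrt5$, $\cos\alpha=2/\sqrt5$) back into the \emph{full} expression for $\mu_B$ yields $\gamma_1\cdot\tfrac12$ for the cable part and $\eta\gamma_2\cdot 5^{5/4}/4$ for the bar part, which is exactly (\ref{nq1_mustB}); a quick sanity check at $n=1$ recovers $\tfrac18(1+5^{5/4}\eta)$ of Corollary~\ref{theo:n1q1:min_buckling}.

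The step I expect to be the main obstacle is not the calculus but the self-consistency of the optimization: because the cable term $\gamma_1\tan\alpha$ is linear in $x$, its derivative $\gamma_1$ never vanishes, so $\tan\alpha=1/2$ is the exact minimizer only of the bar-dominated mass and not of the full $\mu_B(\alpha)$. I would therefore justify discarding the cable contribution in the first-order condition by appealing to (\ref{eta2}) (equivalently $\eta\gg1$), precisely as was done for the $n=1$ superstructure, and remark that the resulting $\alpha_B^*$ is independent of both $n$ and $\eta$. The remaining care is purely bookkeeping: keeping the two distinct geometric series $\sum 2^{-i}$ and $\sum 2^{-3i/2}$ separate, and confirming that the constant $5^{5/4}/4$ emerges from $1/(\cos^2\alpha\,\sqrt{\sin\alpha})$ at $\tan\alpha=1/2$.
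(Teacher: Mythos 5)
Your proposal is correct and follows essentially the same route as the paper's proof: identical cable-mass and buckling bar-mass computations, the same neglect of the cable term (the paper's assumption (\ref{nq1_hyp}), your (\ref{eta2})) to obtain the stationarity condition $4\tan^2\alpha-1=0$, and the same back-substitution of $\tan\alpha=1/2$ into the full mass to obtain (\ref{nq1_mustB}). Your explicit remark that $\tan\alpha=1/2$ minimizes only the bar-dominated mass, not the full $\mu_B(\alpha)$, is a point the paper leaves implicit but handles identically via its dominance assumption.
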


\begin{proof}

\noindent The total mass of cables is given by:

\bea
m_s = \sum_{i=1}^n \frac{\rho_s}{\sigma_s} n_{si} t_{si} s_i.
\label{nq1_ms_1}
\eea

\noindent Substituting (\ref{nq1_lbars_lstrings}), (\ref{nq1_fbars_tstrings}) and (\ref{nq1_n_bars_n_strings}) into (\ref{nq1_ms_1}) and making use of position (\ref{np1_positions}):

\bea
m_s = \frac{F L}{2} \frac{\rho_s}{\sigma_s} \left( 1 - \frac{1}{2^n} \right) \tan \alpha.
\label{nq1_ms_2}
\eea

\noindent That corresponds to the following normalized mass:

\bea
\mu_s = \frac{1}{2} \left( 1 - \frac{1}{2^n} \right) \tan \alpha.
\label{nq1_mus}
\eea

\noindent The total mass of bars, making use of (\ref{m_b}), is:

\bea
m_b = \sum_{i=1}^n n_{bi} m_{bi} = \sum_{i=1}^n n_{bi} \frac{2 \rho_b}{\sqrt{\pi E_b}} b_i^2 \sqrt{f_i}.
\label{nq1_mb_1}
\eea

\noindent Substituting (\ref{nq1_lbars_lstrings}), (\ref{nq1_fbars_tstrings}) and (\ref{nq1_n_bars_n_strings}) into (\ref{nq1_mb_1}):

\bea
m_b = \frac{\sqrt{2} \rho_b L^2 \sqrt{F}}{\sqrt{\pi E_b}} \frac{1}{\cos^2 \alpha \sqrt{\sin \alpha}} \sum_{i=1}^n {\frac{1}{2^{3 i/2}}}.
\label{nq1_mb_2}
\eea

\noindent Since:

\bea
\sum_{i=1}^n {\frac{1}{2^{3 i/2}}} = \left( \frac{1 + 2 \sqrt{2}}{7} \right) \left( 1 - \frac{1}{2^{3 n / 2}} \right), \nonumber \\
\frac{1}{\cos^2 \alpha} = 1 + \tan^2 \alpha, \ \ \ 
\frac{1}{\sqrt{\sin \alpha}} = \frac{\left( 1 + \tan^2 \alpha\right)^{1/4}}{\sqrt{\tan \alpha}},
\label{nq1_position_1}
\eea

\noindent and normalizing we get the following dimensionless mass of bars:

\bea
\mu_b = \sqrt{2} \eta \left( \frac{1 + 2 \sqrt{2}}{7} \right) \left( 1 - \frac{1}{2^{3 n / 2}} \right) \frac{\left( 1 + \tan^2 \alpha \right)^{5/4}}{\sqrt{\tan \alpha}}.
\label{nq1_mub}
\eea

\noindent The total mass is then the sum of (\ref{nq1_mus}) and (\ref{nq1_mub}) and introducing constants $\gamma_1$ and $\gamma_2$ given in (\ref{nq1_gamma_1}) and (\ref{nq1_gamma_2}):

\bea
\mu_B = \mu_s + \mu_b = \gamma_1 \tan \alpha + \eta \gamma_2 \frac{\left( 1 + \tan^2 \alpha \right)^{5/4}}{\sqrt{\tan \alpha}}
\label{nq1_muB}.
\eea

\noindent The solution for minimal mass can be achieved assuming that:

\bea
\gamma_1 \tan \alpha \ll \eta \gamma_2 \frac{\left( 1 + \tan^2 \alpha \right)^{5/4}}{\sqrt{\tan \alpha}}.
\label{nq1_hyp}
\eea

\noindent So that the (\ref{nq1_muB}) becomes:

\bea
\bar \mu_B = \eta \gamma_2 \frac{\left( 1 + \tan^2 \alpha \right)^{5/4}}{\sqrt{\tan \alpha}}.
\label{nq1_muB_bar}
\eea

\noindent The optimal angle can be obtained from:

\bea
\frac{\partial \bar \mu_{B}}{\partial \tan \alpha}= \frac{\eta}{2} \gamma_2 \left( 1 + \tan^2 \alpha \right)^{\left(1/4\right)} \left( \frac{4 \tan^2 \alpha - 1}{\tan \alpha \sqrt{\tan \alpha}} \right) = 0,
\label{np1:dmuB}
\eea

\noindent yielding the optimal angle of (\ref{nq1_alphastB}). Substituting it into (\ref{nq1_muB}) concludes the proof.

\end{proof}

\section{Introducing Deck and Joint Masses}

In previous sections, complexity $n$ was restricted to $1$. This is appropriate only when the external loads are all applied at the midspan. Real bridges cannot tolerate such an assumption. So in this section we consider a distributed load. Part of the load is the mass of the deck that must span the distance between adjacent support structures (complexity $n$ will add $2^n-1$ supports). In the section \ref{joint_mass} we will consider adding mass to make the joints, where high precision joints have less mass then rudely constructed joints.

\subsection{Including Deck Mass}
\label{deck_design}

The total load that the structure must support includes the mass of the deck, which increases with the distance that must be spanned between support points of the structure design (which is determined by the choice of complexity $n$). We therefore consider bridges with increasing complexity $n$. We will show that the smallest $n=1$ yields smallest structural mass and the largest deck mass. The required deck mass obviously approaches zero as the required deck span approaches zero, which occurs as $n \rightarrow \infty$. We will show that the mass of the deck plus the mass of the structure is minimized at a finite value of $n$.

The deck, as illustrated in Fig. \ref{fig:deck}, is composed by $2^n$ simply supported beams connecting the nodes on the deck. Let the deck parameters be labeled as: mass $m_d$, mass density $\rho_d$, yielding strength $\sigma_{d}$, width $w_d$, thickness $t_d$ and length equal to:

\bea
\ell_d = \frac{L}{2^n}.
\label{ell_deck}
\eea

The cross sectional of the deck beam has a moment of inertia equal to: $I_d = w_d t_d^3 / 12$.
\noindent Each beam is assumed to be loaded by a uniformly distributed vertical load summing to the total value $F$ and the total self weight of the deck ($\mathcal{F}$) ($g = 9.81 m s^{-2}$):

\bea
f_d = \frac{F}{L} + \frac{\mathcal{F}}{L} = \frac{F}{L} + \frac{m_d \ g \ 2^n}{L}.
\label{f_deck}
\eea

\begin{figure}[hbt]
\unitlength1cm
\begin{picture}(10,8)
\put(2,4){\psfig{figure=./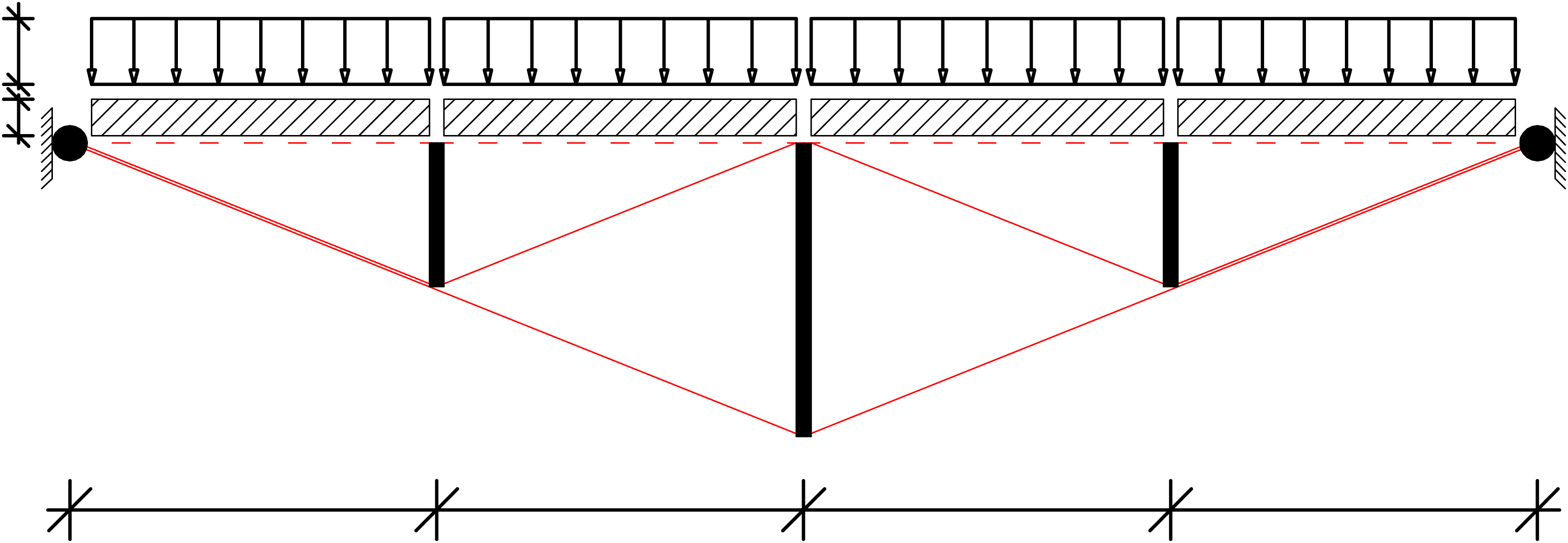,width=12cm}}
\put(5,0){\psfig{figure=./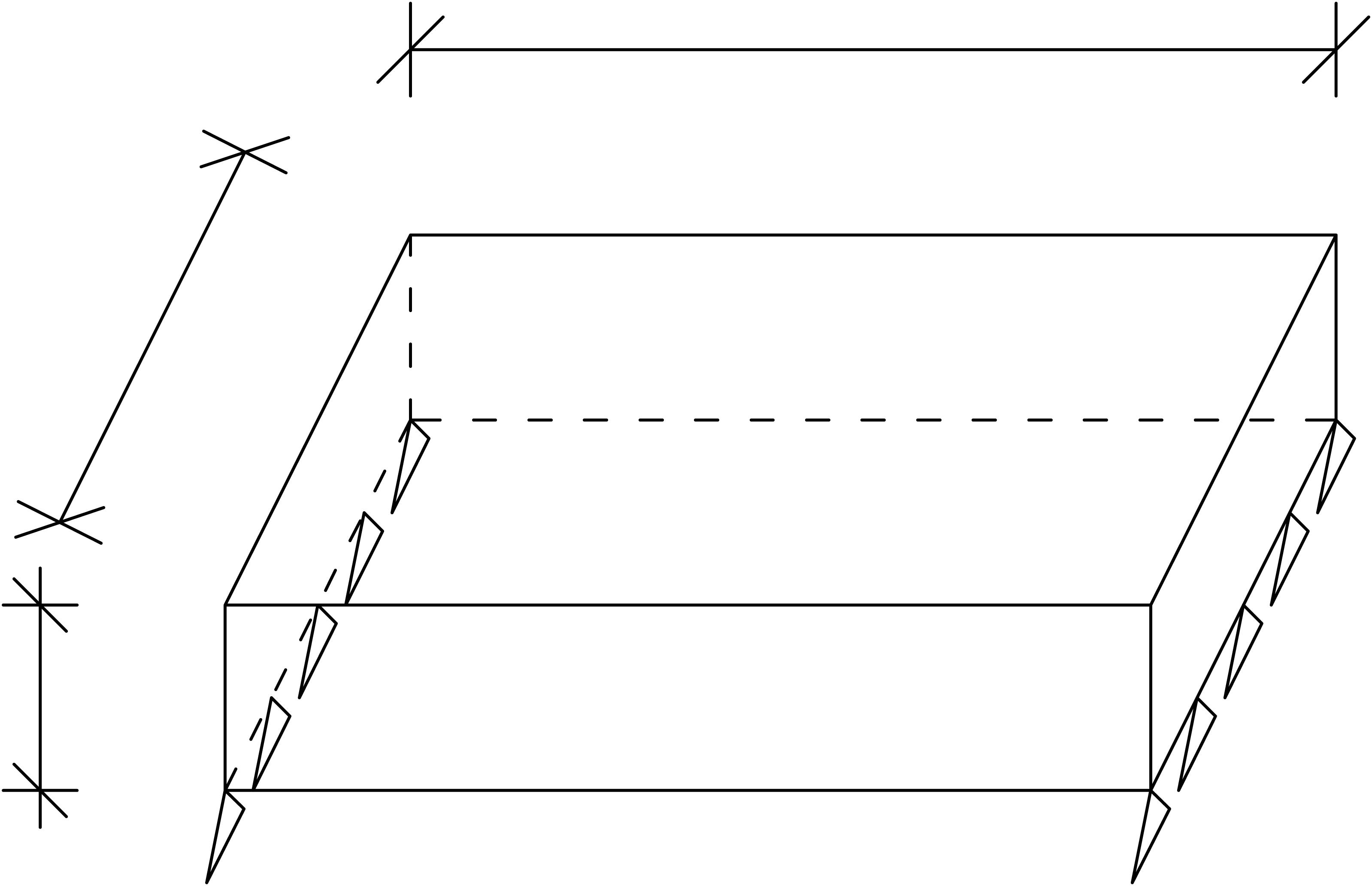,width=5cm}}
\put(1,8.5){{$\mbox{$a)$}$}}
\put(1,3.5){{$\mbox{$b)$}$}}
\put(1.5,7.75){{$\mbox{$f_d$}$}}
\put(1.5,7.15){{$\mbox{$t_d$}$}}
\put(3.75,4.45){{$\mbox{$\ell_d$}$}}
\put(6.75,4.45){{$\mbox{$\ell_d$}$}}
\put(9.5,4.45){{$\mbox{$\ell_d$}$}}
\put(12.25,4.45){{$\mbox{$\ell_d$}$}}
\put(8,3.25){{$\mbox{$\ell_d$}$}}
\put(4.5,0.65){{$\mbox{$t_d$}$}}
\put(4.9,2){{$\mbox{$w_d$}$}}
\put(7.4,1.25){{$\mbox{$m_d,$ $\rho_d$}$}}
\end{picture}
\caption{a) schematic deck system for a \emph{substructure} with complexity $n = 3$ and $p = 1$. b) detail of a single deck module.}
\label{fig:deck}
\end{figure}

\noindent Assuming that the beam of a single deck section is simply supported between two consecutive nodes of the bridge, the maximum bending moment is equal to $f_d \ell_d^2 / 8$ and the maximum stress is given by Navier's equation \citep{Gere1997}:

\bea
\sigma_d = \frac{3}{4} \frac{f_d \ \ell_d^2}{w_d \ t_d^2}.
\label{sigma_deck}
\eea

\noindent The thickness of the deck beam is:

\bea
t_d = \frac{m_d}{\rho_d \ w_d \ \ell_d}.
\label{t_deck}
\eea

Substituting (\ref{ell_deck}), (\ref{f_deck}) and (\ref{t_deck}) into (\ref{sigma_deck}) we get the following equation for the mass of one deck section:

\bea
m_d = \frac{c_1}{2^{3 n}} + \frac{c_1}{2^{2 n}} \sqrt{c_2 + \frac{1}{2^{2 n}}},
\label{m_deck}
\eea

\noindent where:

\bea
c_1 = \frac{3 \ w_d \ g \ \rho_d^2 \ L^3}{8 \ \sigma_d}, \ \ \ 
c_2 = \frac{16 \ \sigma_d \ F}{3 \ w_d \ g^2 \ L^3 \ \rho_d^2}.
\label{c_1_2}
\eea

\noindent Then, the normalized total mass of the deck structure is:

 \bea
\mu_d^* = \frac{2^n \ m_d}{\left( \rho_s / \sigma_s \right) F L}.
\label{mu_d}
\eea

\noindent The total force acting on each internal node on the deck is then the sum of the force due to the external loads and the force due to the deck:

\bea
F_{tot} = F + 2^n \ m_d \ g.
\label{F_tot_deck}
\eea

\subsection{Adding Deck Mass for A \emph{Substructure} Bridge with Complexity $(n,p,q)=(n,1,0)$} \label{np1_deck_sub}

In this case, we make use of the notation illustrated in Fig. \ref{theorems_module_sub} in which complexity $p$ is fixed to be one. Complexity $n$ is defined to be the number of self-similar iterations of the basic module of Fig. \ref{basic_module}c.
Each iteration $n=1,2,...$ generates different lengths of bars and cables. The lengths at the $i^{th}$ iteration are:

\bea
b_i =\frac{L}{2^{i} } \tan \beta,\qquad i=1-n,
\label{lbars}
\eea

\bea
s_i =\frac{L}{2^{i} \cos \beta },\qquad i=1-n.
\label{lstrings}
\eea

\noindent Observing the multiscale structure of Fig. \ref{theorems_module_sub} it's clear that the number of bars and the number of cables at the $i^{th}$ self-similar iteration are

\bea
n_{si} = 2^i, \ \ \ n_{bi} = 2^{i-1}.
\label{n_bars_n_strings_ith}
\eea

\noindent In this case the total force applied to the bridge structure is given by (\ref{F_tot_deck}) and then the forces in each member become:

\bea
f_{bi} =\frac{F + 2^n m_d g}{2^{i} }, \ \ \ t_{si} =\frac{F  + 2^n m_d g}{2^{\left(1+i\right)} \sin \beta }.
\label{np1_forces_deck}
\eea

\begin{theorem} \label{theo:min_yielding_p1_deck}

Consider a \emph{substructure} bridge with deck mass $m_d$ and topology defined by (\ref{nnodes}), (\ref{nconn_sub}), (\ref{lbars}) and (\ref{lstrings}), with complexity $(n,p,q)=(n,1,0)$, see Fig. \ref{theorems_module_sub}. The minimal mass design under yielding constraints is given by:

\bea
\mu_{Y}^{*} = \left(1-\frac{1}{2^{n} } \right) \left( 1 + 2^n g \frac{m_d}{F} \right) \sqrt{1+\rho},
\label{mustY_np1_deck}
\eea

\noindent using the optimal angle:

\bea
\beta^*_{Y} = \arctan{ \left(\frac{1}{\sqrt{1+\rho } } \right)}.
\label{np1_betastY}
\eea

\end{theorem}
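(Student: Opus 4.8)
The plan is to reduce the problem to the deck-free case already established in Theorem \ref{theo:min_yielding_np1}, by recognizing that the deck weight enters only as a multiplicative rescaling of the applied load. The first step is to observe that the member forces (\ref{np1_forces_deck}) are exactly the deck-free forces (\ref{fbars})--(\ref{tstrings}) with the load $F$ replaced throughout by the total load $F_{tot}=F+2^n m_d g$; equivalently, every bar force $f_{bi}$ and cable tension $t_{si}$ is multiplied by the single factor $F_{tot}/F = 1 + 2^n g\,m_d/F$, which is independent both of the summation index $i$ and of the aspect angle $\beta$.

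The second step is to substitute the lengths (\ref{lbars})--(\ref{lstrings}), the scaled forces (\ref{np1_forces_deck}), and the member counts (\ref{n_bars_n_strings_ith}) into the mass (\ref{mass}) and its normalization (\ref{mu}), and to collapse the geometric sum $\sum_{i=1}^{n} 2^{-i} = 1 - 2^{-n}$ exactly as in the deck-free derivation. Because the load factor is common to every term, it pulls out of the entire sum, so that the dimensionless structural mass factors as
\begin{align*}
\mu_Y = \left(1 + 2^n g \frac{m_d}{F}\right)\,\frac{1}{2}\left(1-\frac{1}{2^n}\right)\left[\frac{1+\tan^2\beta}{\tan\beta} + \rho\tan\beta\right],
\end{align*}
in which the bracketed expression is precisely the deck-free mass (\ref{np1_mu}).

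The final step is the minimization over $\beta$. Since the load factor carries no $\beta$-dependence, the stationarity condition $\partial\mu_Y/\partial\tan\beta = 0$ is identical to (\ref{np1:dmuy}), so the optimal angle is unchanged and equals (\ref{np1_betastY}), namely $\beta_Y^* = \arctan(1/\sqrt{1+\rho})$. Evaluating the bracket at this angle, where it collapses to $2\sqrt{1+\rho}$ exactly as in Theorem \ref{theo:min_yielding_np1}, then yields the claimed minimal mass (\ref{mustY_np1_deck}). I anticipate no genuine obstacle in the algebra; the only point requiring care is conceptual, namely that $m_d$ is to be treated as a prescribed constant (the solution of the deck-sizing relation (\ref{m_deck})), so that the apparent circular dependence of $F_{tot}$ upon $m_d$ has already been resolved upstream and does not need to be reopened inside this theorem.
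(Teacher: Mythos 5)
Your proposal is correct and follows essentially the same route as the paper: substitute the lengths, scaled forces, and member counts to obtain the factored mass expression (\ref{np1_muY_deck_2}), observe that the load factor $\left(1 + 2^n g\,m_d/F\right)$ is $\beta$-independent so the stationarity condition coincides with the deck-free one (\ref{np1:dmuy}), and evaluate at $\beta_Y^* = \arctan\left(1/\sqrt{1+\rho}\right)$ where the bracket collapses to $2\sqrt{1+\rho}$. Your explicit framing of this as a reduction to Theorem \ref{theo:min_yielding_np1} via a common load-rescaling factor, and your remark that $m_d$ is held as a prescribed constant (depending on $n$ but not on $\beta$), are exactly the points implicit in the paper's computation.
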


\begin{proof}

\noindent Assuming (\ref{lbars}) and (\ref{lstrings}) for the length of each member, (\ref{np1_forces_deck}) for the forces of each member, and (\ref{n_bars_n_strings_ith}) for the number of members, the dimensionless minimal mass becomes:

\bea
\mu_Y = \frac{1}{2} \left( 1 + 2^n g \frac{m_d}{F} \right) \left[ \frac{1}{\sin \beta \cos \beta} + \rho \tan \beta \right] \left( \sum_{i=1}^n {\frac{1}{2^i}} \right).
\label{np1_muY_deck_1}
\eea

\noindent yielding,

\bea
\mu_Y = \frac{1}{2} \left( 1 - \frac{1}{2^{n}} \right) \left( 1 + 2^n g \frac{m_d}{F} \right) \left[ \frac{\left(1 + \tan^2 \beta\right)}{\tan \beta} + \rho \tan \beta \right].
\label{np1_muY_deck_2}
\eea

\noindent The solution for minimal mass can be achieved from,

\bea
\frac{\partial \mu_{Y}}{\partial \tan \beta}= \frac{1}{2} \left( 1 - \frac{1}{2^n} \right) \left( 1 + 2^n g \frac{m_d}{F} \right) \left[ - \frac{\left( \tan^2 \beta +1 \right) }{\tan^2 \beta} + 2 + \rho \right] = 0,
\label{np1:dmuy}
\eea

\noindent yielding the optimal angle of (\ref{np1_betastY}). Substituting it into (\ref{np1_muY_deck_1})  concludes the proof.

\end{proof}

Observe that (\ref{mustY_np1_deck}) yields mass $\sqrt{1+\rho}/2$ for complexity $n=1$ and mass $\sqrt{1+\rho}$ for complexity $n=\infty$.
Note from (\ref{np1_betastY}), which is the same as (\ref{n1p1:betastY}), that the optimal angle $\beta^*_Y$ does not depend upon the choice of $n$. Indeed, the minimal mass solution under yielding constraints (\ref{mustY_np1_deck}) depends on the material choice $\rho$ (\ref{rho}), the complexity parameter $n$ and the deck properties. Note that, since the total external force $F$ is a specified constant, the mass is minimized by the complexity $n = 1$ if $m_d = 0$.  However since $m_d$ depends upon $n$, the total vertical force including deck mass depends upon $n$, and the optimal complexity will be shown to be $n>1$ in that case.

\begin{theorem} \label{theo:mim_buckling_np1_deck}

Consider a \emph{substructure} bridge with topology defined by (\ref{nnodes}), (\ref{nconn_sub}), (\ref{lbars}) and (\ref{lstrings}), with complexity $(n,p,q)=(n,1,0)$. The minimal mass design under yielding and buckling constraints is given by:

\bea
\mu_{B}^{*} = \beta_1 \frac{\left( 1+\tan^2 \beta_B^* \right)}{2 \tan \beta_B^*} + \eta \beta_2 \tan^2 \beta_B^*,
\label{np1_mustB_deck}
\eea

\noindent using the aspect angle:

\bea
\beta _{B}^{*} = \arctan{ \left\{ \frac{1}{12 \beta_2 \eta} \left[\beta_3 + \beta_1 \left( \frac{\beta_1}{\beta_3} - 1 \right) \right] \right\} },
\label{np1_betastB_deck}
\eea

\noindent where:

\bea
\beta_1 = \left( 1 - \frac{1}{2^n} \right) \left( 1 + 2^n g \frac{m_d}{F} \right),
\label{np1_beta_1}
\eea

\bea
\beta_2 = \left( \frac{1 + 2 \sqrt{2}}{7} \right) \left( 1 - \frac{1}{2^{3 n / 2}} \right) \sqrt{1 + 2^n g \frac{m_d}{F}},
\label{np1_beta_2}
\eea

\bea
\beta_3 = \left( 216 \beta_1 \beta_2^2 \eta^2 - \beta_1^3 + 12 \sqrt{324 \beta_1^2 \beta_2^4 \eta^4 - 3 \beta_1^4 \beta_2^2 \eta^2} \right)^{1/3}.
\label{np1_beta_3}
\eea

\end{theorem}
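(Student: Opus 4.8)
The plan is to mirror the proof of Theorem \ref{theo:mim_buckling_p1} (the deck-free case), the only structural change being that every member force now carries the total nodal load $F_{tot}=F+2^n m_d g$ of (\ref{F_tot_deck}) in place of $F$. Setting $\kappa := 1 + 2^n g\, m_d/F$, the forces (\ref{np1_forces_deck}) are precisely the deck-free forces (\ref{fbars}), (\ref{tstrings}) scaled by $\kappa$. I would exploit this single scaling to transport the earlier computation rather than repeat it from scratch.

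First I would recompute the cable mass at yielding: since $m_s=(\rho_s/\sigma_s)\sum_i n_{si} t_{si} s_i$ is linear in each tension $t_{si}\propto F_{tot}=\kappa F$, the normalized cable mass is just the deck-free expression (\ref{np1_mus}) multiplied by $\kappa$, i.e. $\mu_s = \beta_1\,(1+\tan^2\beta)/(2\tan\beta)$ with $\beta_1=\alpha_1\kappa$ as in (\ref{np1_beta_1}). Next I would recompute the bar mass at buckling: the buckling formula (\ref{m_b}) depends on $\sqrt{f_{bi}}$, so each bar contribution gains only a factor $\sqrt{\kappa}$; carrying out the geometric sum over self-similar orders exactly as in (\ref{np1_mb_2})--(\ref{np1_position_2}) reproduces (\ref{np1_mub_1}) times $\sqrt{\kappa}$, giving $\mu_b=\eta\beta_2\tan^2\beta$ with $\beta_2=\alpha_2\sqrt{\kappa}$ as in (\ref{np1_beta_2}). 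Summing $\mu_s+\mu_b$ yields the total mass (\ref{np1_mustB_deck}).

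With the objective $\mu_B = \beta_1(1+\tan^2\beta)/(2\tan\beta) + \eta\beta_2\tan^2\beta$ in hand, I would impose $\partial\mu_B/\partial\tan\beta=0$. The stationarity condition has the same form as (\ref{np1:dmuB}) with $(\alpha_1,\alpha_2)$ replaced by $(\beta_1,\beta_2)$ and, after clearing the $\tan^2\beta$ denominator, reduces to the cubic $4(\beta_2/\beta_1)\eta\tan^3\beta + \tan^2\beta - 1 = 0$, the analogue of (\ref{np1:dmuB_2}). Solving this cubic in $\tan\beta$ by Cardano's method delivers the closed-form optimal angle (\ref{np1_betastB_deck}), with the radical $\beta_3$ of (\ref{np1_beta_3}) occupying the exact position that $\alpha_3$ held in (\ref{np1_alpha_3}); back-substitution gives the minimal mass (\ref{np1_mustB_deck}).

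The main difficulty here is bookkeeping rather than analysis: one must confirm that the linear-in-$\kappa$ scaling of the cable term and the $\sqrt{\kappa}$ scaling of the bar term combine so that the cubic keeps its earlier shape with only $\beta_1,\beta_2$ swapped in for $\alpha_1,\alpha_2$. Once that is verified, the Cardano root and the positive second-variation argument guaranteeing a unique minimum carry over verbatim from Theorem \ref{theo:mim_buckling_p1}, and the limit $m_d\to 0$ (so $\kappa\to 1$) recovers that theorem as a consistency check.
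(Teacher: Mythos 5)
Your proposal is correct and follows essentially the same route as the paper's proof: the paper likewise computes the normalized cable mass (linear in the total load, giving the factor $\beta_1$) and the normalized bar mass (scaling with the square root of the total load, giving $\beta_2$), forms $\mu_B = \beta_1(1+\tan^2\beta)/(2\tan\beta)+\eta\beta_2\tan^2\beta$, and obtains the optimal angle from the identical cubic $4(\beta_2/\beta_1)\eta\tan^3\beta+\tan^2\beta-1=0$ whose Cardano root is $\beta_3$. Your framing of the load factor $\kappa$ as a single scaling that transports the deck-free computation of Theorem \ref{theo:mim_buckling_p1} is just a tidier bookkeeping of the same argument, not a different method.
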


\begin{proof}

\noindent The total mass of the cables, using (\ref{lstrings}), (\ref{np1_forces_deck}) and (\ref{n_bars_n_strings_ith}), is given by:

\bea
\mu_s = \left( \frac{1 + \tan^2 \beta}{2 \tan \beta} \right) \left( 1 - \frac{1}{2^n} \right) \left( 1 + 2^n g \frac{m_d}{F} \right).
\label{np1_mus}
\eea

\noindent Similarly, making use of (\ref{m_b}), the total mass of bars is:

\bea
\mu_b = \eta \tan^2 \beta \left( \frac{1 + 2 \sqrt{2}}{7} \right) \left( 1 - \frac{1}{2^{3 n / 2}} \right) \sqrt{1 + 2^n g \frac{m_d}{F}} .
\label{np1_mub_1}
\eea

\noindent Introducing constants $\beta_1$ and $\beta_2$ given in (\ref{np1_beta_1}) and (\ref{np1_beta_2}), the total mass is:

\bea
\mu_B = \mu_s + \mu_b = \beta_1 \frac{\left( 1+\tan^2 \beta \right) }{2 \tan \beta} + \eta \beta_2 \tan^2 \beta.
\label{np1_muB_deck}
\eea

\noindent The solution for minimal mass can be achieved from,

\bea
\frac{\partial \mu_{B}}{\partial \tan \beta}= \beta_1 \left( 1 - \frac{1 - \tan^2 \beta}{2 \tan^2 \beta} \right) + 2 \eta \beta_2 \tan \beta = 0,
\label{np1:dmuB_deck}
\eea

\noindent yielding the optimal angle of (\ref{np1_betastB_deck}) by solving the following cubic equation:

\bea
4 \frac{\beta_2}{\beta_1} \eta \tan^3 \beta + \tan^2 \beta - 1 = 0.
\label{np1:dmuB_2_deck}
\eea

\noindent Substituting (\ref{np1_betastB_deck}) into (\ref{np1_muB_deck}) concludes the proof.

\end{proof}

\begin{figure}[hb]
\unitlength1cm
\begin{picture}(10,5.25)
\put(0.25,1.25){\psfig{figure=./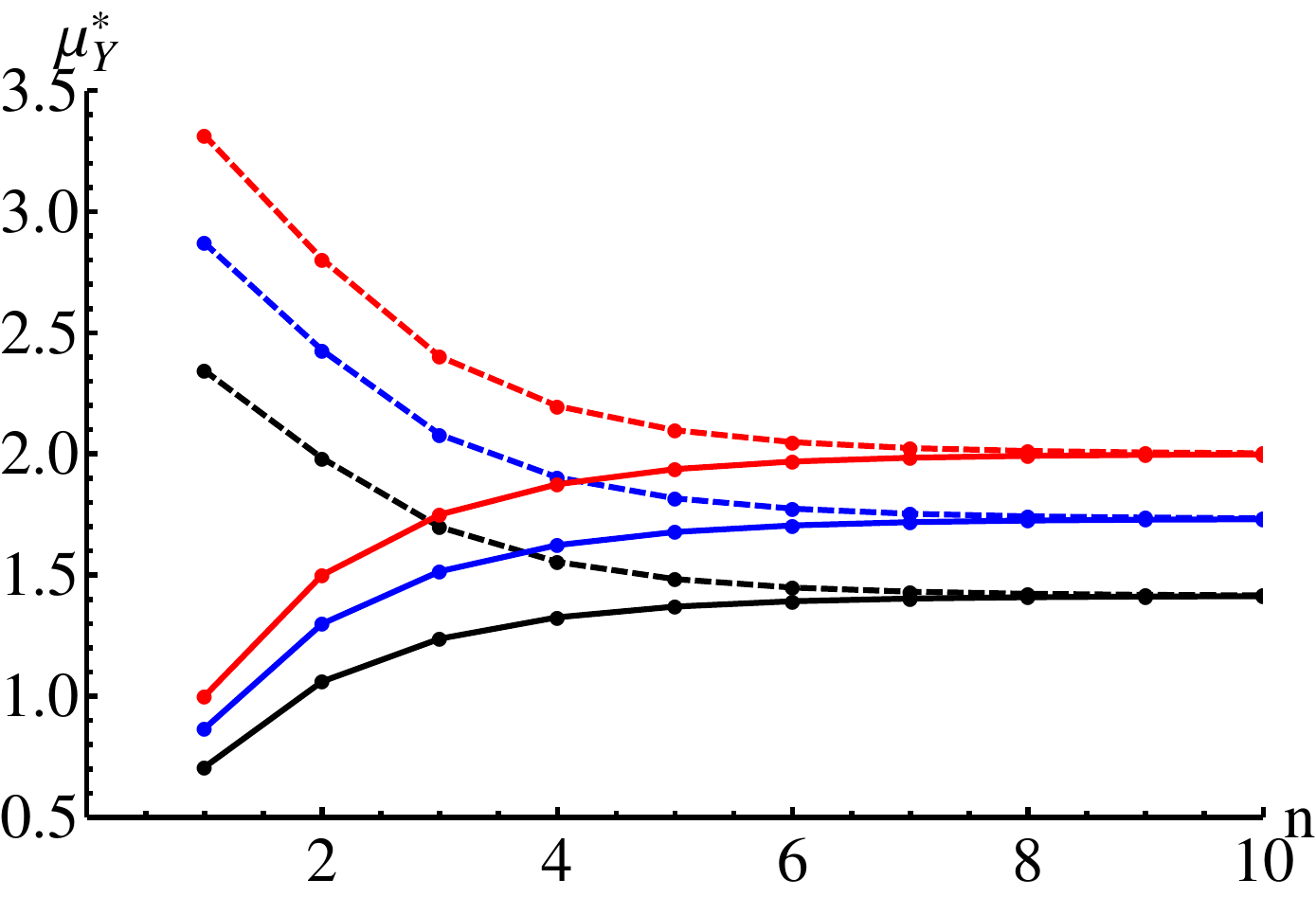,height=4.5cm}}
\put(8.5,1.25){\psfig{figure=./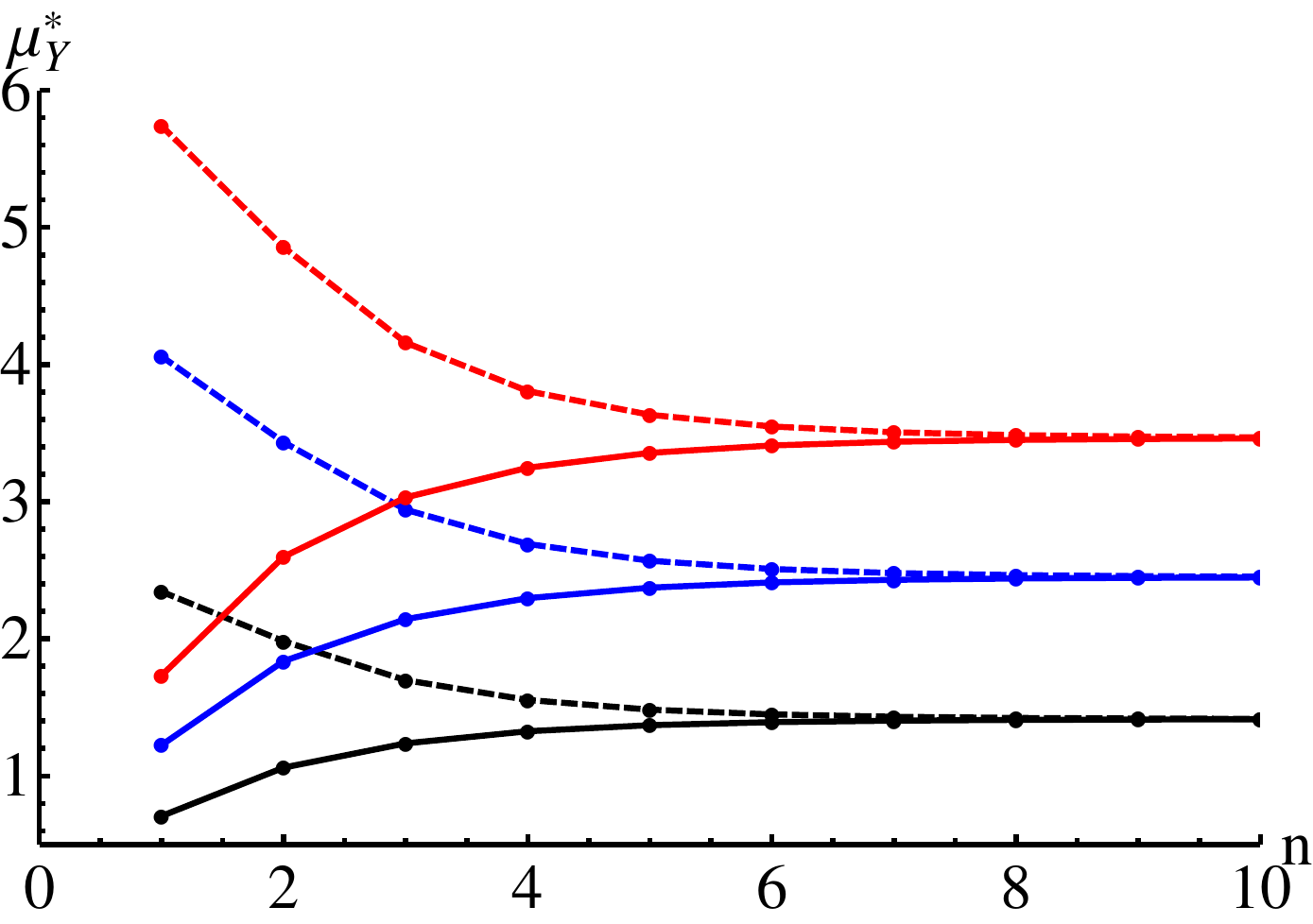,height=4.5cm}}
\put(3.5,-0.4){\psfig{figure=./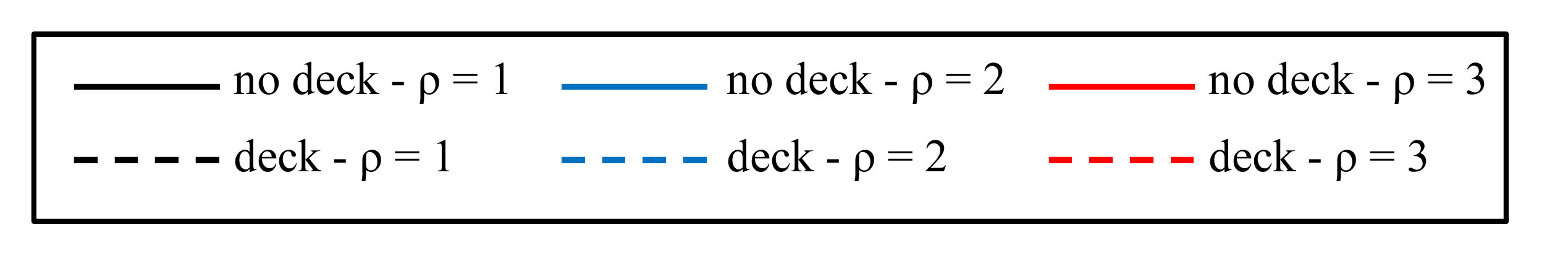,height=1.5cm}}
\end{picture}
\caption{Optimal masses under yielding of the \emph{substructures} (left) and \emph{superstructure} (right) without deck (solid curves) and with deck (dashed curves) for different values of the complexity $n$ and for different values of $\rho$, ($F = 1 \ N$, $w_d = 1 \ m$, steel deck).}
\label{fig:must_n_yielding}
\end{figure}

\begin{figure}[hb]
\unitlength1cm
\begin{picture}(10,5.25)
\put(0.25,1.25){\psfig{figure=./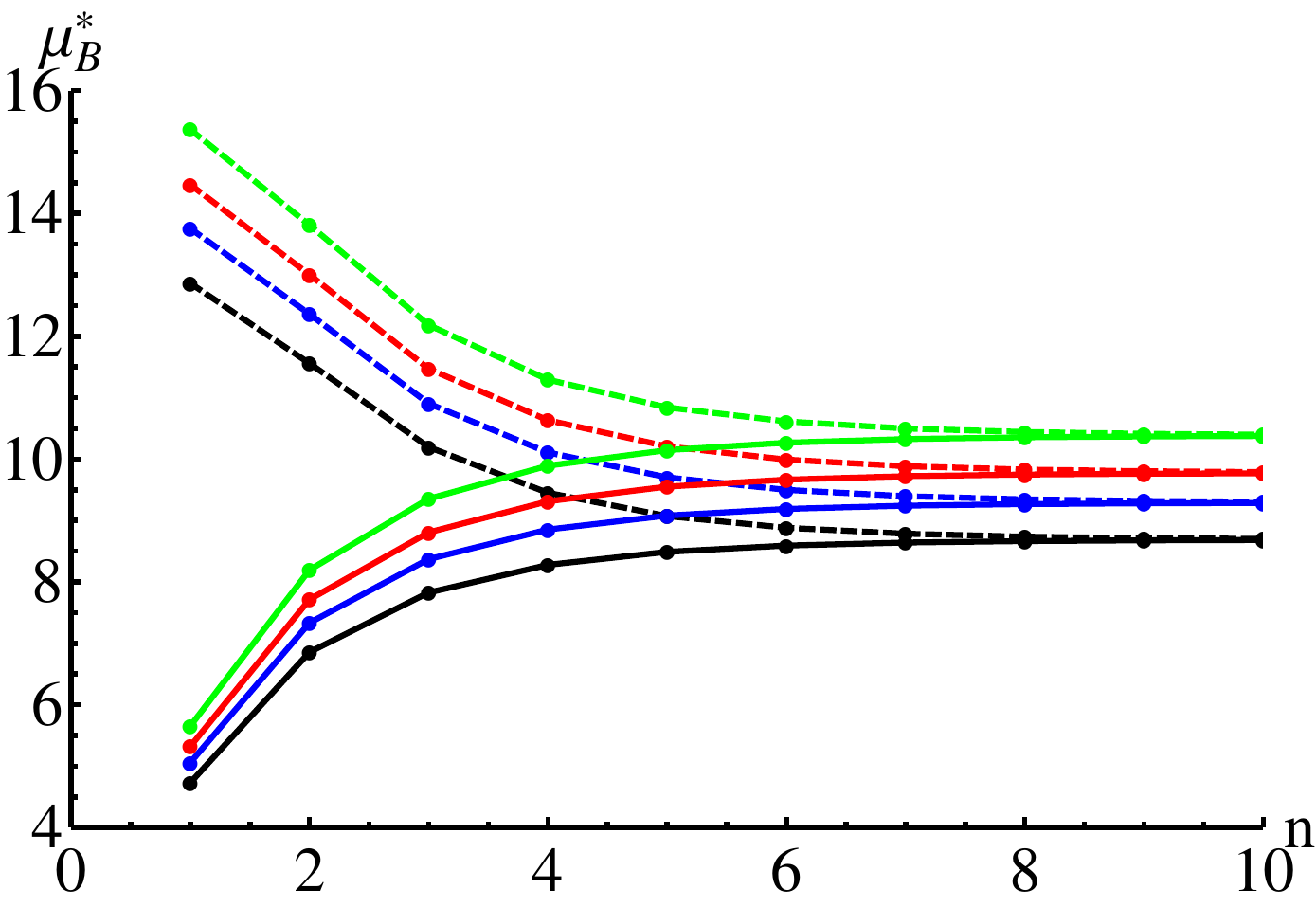,height=4.5cm}}
\put(8.5,1.25){\psfig{figure=./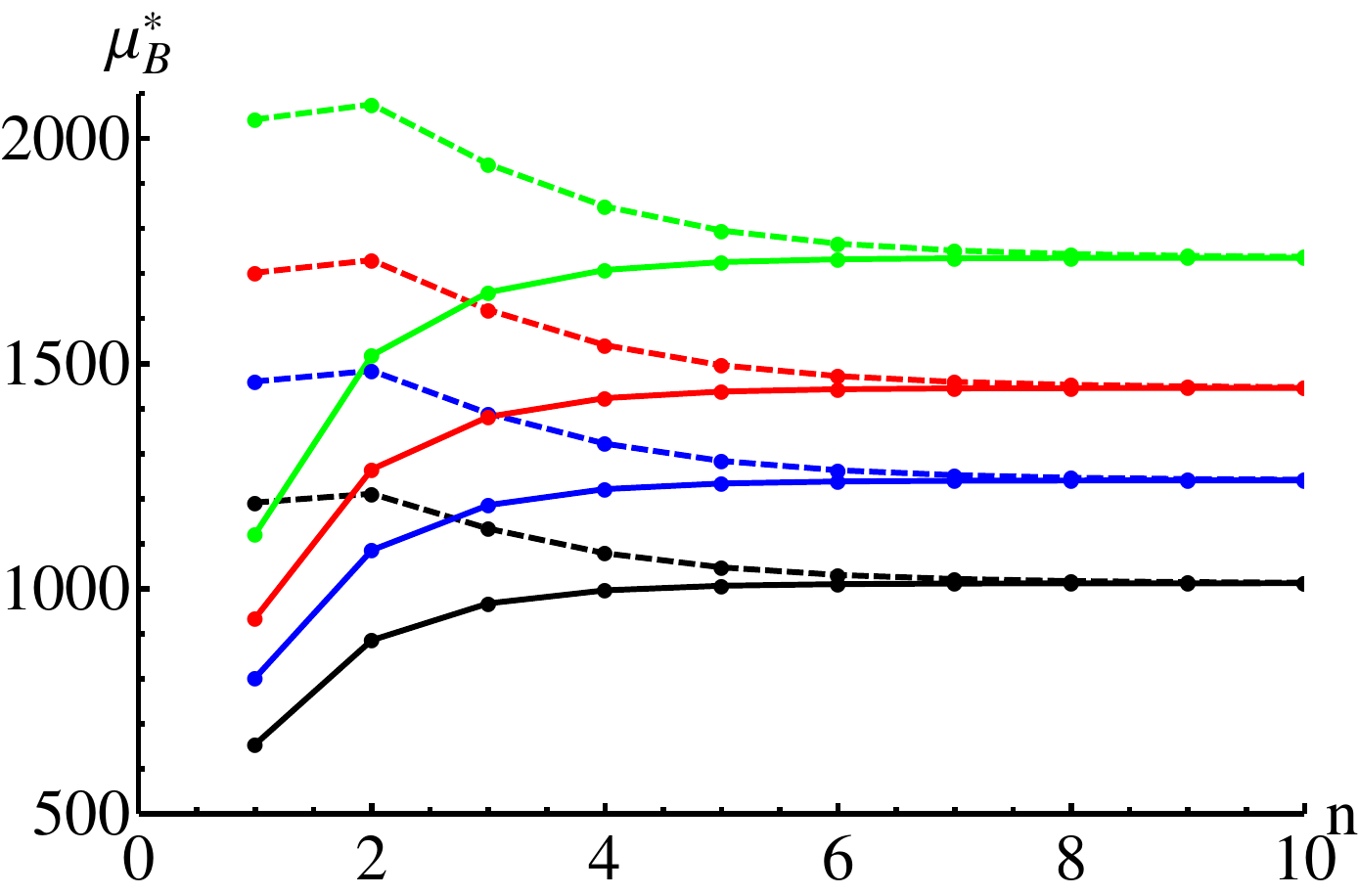,height=4.5cm}}
\put(2,-0.4){\psfig{figure=./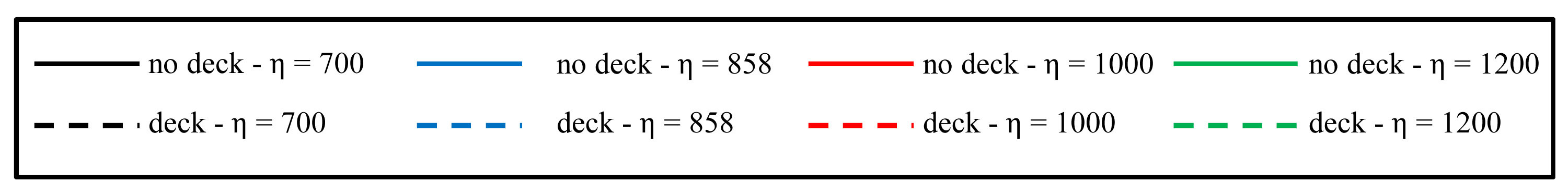,height=1.5cm}}
\end{picture}
\caption{Optimal masses under buckling of the \emph{substructures} (left) and \emph{superstructure} (right) without deck (solid curves) and with deck (dashed curves) for different values of the complexity $n$ and for different values of $\eta$, ($F = 1 \ N$, $L = w_d = 1 \ m$, steel deck).}
\label{fig:must_n_buckling}
\end{figure}

\subsection{Adding Deck Mass for A \emph{Superstructure} Bridge with Complexity $(n,p,q)=(n,0,1)$} \label{super_nq1_deck}

In this case, we make use of the notation illustrated in Fig. \ref{theorems_module_super} in which complexity $q$ is fixed to be one. Complexity $n$ is the number of self-similar iterations of the basic module of Fig. \ref{basic_module}b at different scales. After the $i^{th}$ self-similar iterations, the length of the bars and cables for $i$ ranging from $1$ to $n$, are:

\bea
b_i =\frac{L}{2^{i} \cos \alpha}, \ \ \ 
s_i = \frac{L}{2^i} \tan \alpha.
\label{nq1_lbars_lstrings}
\eea

\noindent Observing the multiscale structure of Fig. \ref{theorems_module_super} it's clear that the number of bars and the number of cables after the $i^{th}$ self-similar iterations are:

\bea
n_{si} = 2^{i-1}, \ \ \ n_{bi} = 2^{i}.
\label{nq1_n_bars_n_strings}
\eea

\noindent In this case the total force applied to the bridge structure is given by (\ref{F_tot_deck}) and then the forces in each member become:

\bea
f_{bi} =\frac{F + 2^n m_d g}{2^{\left(i + 1\right)} \sin \alpha }, \ \ \ t_{si} =\frac{F  + 2^n m_d g}{2^{i} }.
\label{nq1_forces_deck}
\eea

\begin{theorem} \label{theo:min_yielding_nq1_deck}
Consider a \emph{superstructure} bridge with topology defined by (\ref{nnodes}), (\ref{nconn_super}), (\ref{nq1_lbars_lstrings}), with complexity $(n,p,q)=(n,0,1)$, Fig. \ref{theorems_module_super}. Under a given total vertical force (\ref{F_tot_deck}), the minimal mass design under yielding constraints is given by:

\bea
\mu_{Y}^{*} = \left(1-\frac{1}{2^{n} } \right) \left( 1 + 2^n g \frac{m_d}{F} \right) \sqrt{ \rho \left(1+\rho \right) },
\label{nq1_mustY_deck}
\eea

\noindent using the aspect angle:

\bea
\alpha^*_{Y} = \arctan{ \left(\sqrt{\frac{\rho}{{1+\rho } }} \right)}.
\label{nq1_alphastY_deck}
\eea

\end{theorem}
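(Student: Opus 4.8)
The plan is to reduce this to the deck-free superstructure result of Theorem \ref{theo:min_yielding_nq1} by tracking how the deck enters the force balance. The single structural fact that makes this easy is that adding deck mass merely replaces the applied load $F$ by the total vertical force $F_{tot}=F+2^n m_d g$ of (\ref{F_tot_deck}) at every loaded node; consequently each member force in (\ref{nq1_forces_deck}) is exactly its deck-free value in (\ref{nq1_fbars_tstrings}) multiplied by the common, angle-independent factor $F_{tot}/F=1+2^n g\,m_d/F$. Because the yield-condition mass (\ref{mass}) is linear in the forces $f_i,t_i$, the entire mass carries this same multiplicative factor.

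First I would assemble $m_Y=\frac{\rho_s}{\sigma_s}\sum_{i=1}^n n_{si}t_{si}s_i+\frac{\rho_b}{\sigma_b}\sum_{i=1}^n n_{bi}f_{bi}b_i$, substituting the lengths (\ref{nq1_lbars_lstrings}), the deck-loaded forces (\ref{nq1_forces_deck}), and the counts (\ref{nq1_n_bars_n_strings}). Each sum collapses via the geometric identity $\sum_{i=1}^n 2^{-i}=1-2^{-n}$, and applying $1/(\sin\alpha\cos\alpha)=(1+\tan^2\alpha)/\tan\alpha$ to the bar term I expect the normalized mass
\bea
\mu_Y=\frac{1}{2}\left(1-\frac{1}{2^n}\right)\left(1+2^n g\frac{m_d}{F}\right)\left[\tan\alpha+\rho\,\frac{1+\tan^2\alpha}{\tan\alpha}\right], \nonumber
\eea
which is precisely the deck-free mass (\ref{nq1_mu}) times the deck factor.

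Next I would minimize over $\alpha$. Since the deck factor is constant in $\alpha$, the stationarity condition $\partial\mu_Y/\partial\tan\alpha=0$ coincides with (\ref{nq1:dmuy}) and forces $\tan^2\alpha=\rho/(1+\rho)$, i.e. the optimal angle (\ref{nq1_alphastY_deck}); the optimum is therefore identical to the deck-free case. Substituting $\tan\alpha=\sqrt{\rho/(1+\rho)}$ into the bracket and using $\tan\alpha(1+\rho)+\rho/\tan\alpha=2\sqrt{\rho(1+\rho)}$ collapses it to $2\sqrt{\rho(1+\rho)}$, which delivers the claimed minimal mass (\ref{nq1_mustY_deck}). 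As an independent check, this result is the structural \emph{dual} of the substructure statement in Theorem \ref{theo:min_yielding_p1_deck}, obtained by the expected $\sqrt{1+\rho}\mapsto\sqrt{\rho(1+\rho)}$ replacement together with the corresponding swap of the aspect angle.

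There is no genuine obstacle: the argument is a routine linear-scaling reduction to the deck-free theorem. The one point requiring care is to observe that although the deck mass $m_d$ depends on the complexity $n$ through (\ref{m_deck}), it is independent of the aspect angle $\alpha$, so the minimization over $\alpha$ decouples entirely from the deck contribution. This decoupling is exactly what lets the deck-free optimal angle persist unchanged and isolates the deck's effect into the single prefactor $1+2^n g\,m_d/F$.
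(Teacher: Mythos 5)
Your proposal is correct and follows essentially the same route as the paper: the paper likewise substitutes the lengths (\ref{nq1_lbars_lstrings}), deck-loaded forces (\ref{nq1_forces_deck}) and member counts (\ref{nq1_n_bars_n_strings}) into the mass sum, arrives at exactly your expression $\mu_Y=\frac{1}{2}\left(1+2^n g\frac{m_d}{F}\right)\left(1-\frac{1}{2^n}\right)\left[\tan\alpha+\rho\frac{1+\tan^2\alpha}{\tan\alpha}\right]$, and then sets $\partial\mu_Y/\partial\tan\alpha=0$ to recover the deck-free optimal angle. Your explicit observation that the deck enters only as an angle-independent multiplicative prefactor (so the optimization decouples) is a cleaner articulation of what the paper does implicitly, but it is not a different argument.
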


\begin{proof}

\noindent Substituting (\ref{nq1_lbars_lstrings}), (\ref{nq1_forces_deck}), and (\ref{nq1_n_bars_n_strings}) into (\ref{nq1_mass}) and considering positions (\ref{np1_positions}) we get:

\bea
m_Y = \frac{\left( F + 2^n m_d g \right) L}{2} \left( 1 - \frac{1}{2^n} \right) \left[ \frac{\rho_s}{\sigma_s} \tan \alpha + \frac{\rho_b}{\sigma_b} \frac{\left( 1 + \tan^2 \alpha \right)}{\tan \alpha} \right].
\label{nq1_mass_deck}
\eea

\noindent Switching to the dimensionless mass defined in (\ref{mu}) we have:

\bea
\mu_Y = \frac{1}{2} \left( 1 + 2^n g \frac{m_d}{F} \right) \left( 1 - \frac{1}{2^n} \right) \left[ \tan \alpha + \rho \frac{\left( 1 + \tan^2 \alpha \right)}{\tan \alpha} \right].
\label{nq1_mu_deck}
\eea

\noindent The solution for minimal mass can be achieved from,

\bea
\frac{\partial \mu_{Y}}{\partial \tan \alpha}= \frac{1}{2} \left( 1 + 2^n g \frac{m_d}{F} \right) \left( 1 - \frac{1}{2^n} \right) \left[ 1 + \rho \left( 2 - \frac{1 + \tan^2 \alpha}{\tan^2 \alpha} \right) \right] = 0,
\label{nq1:dmuy_deck}
\eea

\noindent yielding the optimal angle of (\ref{nq1_alphastY_deck}). Substituting it into (\ref{nq1_mu_deck}) concludes the proof.

\end{proof}

\begin{theorem} \label{theo:min_buckling_nq1_deck}

Consider a \emph{superstructure} bridge with topology defined by (\ref{nnodes}), (\ref{nconn_super}), (\ref{nq1_lbars_lstrings}), and complexity $(n,p,q)=(n,0,1)$, see Fig. \ref{theorems_module_super}. The structure is loaded with a given total vertical force (\ref{F_tot_deck}) and the minimal bar mass, subject to yield constraints is given by:

\bea
\mu_{B}^{*} = \frac{\delta_1}{2} + \eta \delta_2 \frac{5^{5/4}}{4},
\label{nq1_mustB}
\eea

\noindent using the aspect angle:

\bea
\alpha _{B}^{*} = \arctan{ \frac{1}{2} },
\label{nq1_alphastB_deck}
\eea

\noindent where:

\bea
\delta_1 = \frac{1}{2} \left( 1 + 2^n g \frac{m_d}{F} \right) \left( 1 - \frac{1}{2^n} \right),
\label{nq1_delta_1}
\eea

\bea
\delta_2 = \sqrt{2} \left( \frac{1 + 2 \sqrt{2}}{7} \right) \sqrt{ 1 + 2^n g \frac{m_d}{F} } \left( 1 - \frac{1}{2^{3 n / 2}} \right).
\label{nq1_delta_2}
\eea

\end{theorem}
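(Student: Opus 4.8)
The plan is to reproduce the derivation of the deckless Theorem \ref{theo:min_buckling_nq1} line for line, replacing the external load $F$ everywhere by the total vertical force $F + 2^n m_d g$ of (\ref{F_tot_deck}); this is precisely how the deck enters the member forces (\ref{nq1_forces_deck}). First I would evaluate the total cable mass $m_s = (\rho_s/\sigma_s)\sum_i n_{si}\, t_{si}\, s_i$ using the cable lengths and counts of (\ref{nq1_lbars_lstrings}) and (\ref{nq1_n_bars_n_strings}) and the tensions $t_{si}$ of (\ref{nq1_forces_deck}). Each $t_{si}$ carries the common factor $(F + 2^n m_d g)$, and the sum over $i$ collapses to the geometric factor $(1 - 1/2^n)$ via (\ref{np1_positions}); after normalizing by (\ref{mu}) this gives $\mu_s = \delta_1\tan\alpha$ with $\delta_1$ as in (\ref{nq1_delta_1}).

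Next I would evaluate the bar mass at the buckling condition from $m_{b,B} = \sum_i n_{bi}\, 2\rho_b b_i^2 \sqrt{f_{bi}/(\pi E_b)}$. Substituting $b_i$, $n_{bi}$ from (\ref{nq1_lbars_lstrings}), (\ref{nq1_n_bars_n_strings}) and $f_{bi}$ from (\ref{nq1_forces_deck}), the powers of two collapse to $\sum_i 2^{-3i/2}$, summed by (\ref{np1_position_2}), while the square root of the compression produces the factor $\sqrt{F + 2^n m_d g}$ in place of the $\sqrt{F}$ of the deckless proof. Collecting the angular dependence $\cos^{-2}\alpha\,(\sin\alpha)^{-1/2} = (1+\tan^2\alpha)^{5/4}/\sqrt{\tan\alpha}$ and normalizing gives $\mu_b = \eta\,\delta_2\,(1+\tan^2\alpha)^{5/4}/\sqrt{\tan\alpha}$, with $\delta_2$ as in (\ref{nq1_delta_2}); the square root $\sqrt{1 + 2^n g m_d/F}$ sitting inside $\delta_2$ is exactly the trace of the deck load through the buckling law. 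Summing, $\mu_B = \delta_1\tan\alpha + \eta\,\delta_2\,(1+\tan^2\alpha)^{5/4}/\sqrt{\tan\alpha}$, the deck analogue of (\ref{nq1_muB}).

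To locate the optimum I would pass to the regime (\ref{nq1_hyp}) in which the bar mass dominates, so that the minimization \emph{determining the angle} reduces to that of $(1+\tan^2\alpha)^{5/4}/\sqrt{\tan\alpha}$ alone. The key observation is that the multiplicative constant $\eta\,\delta_2$ — and hence the entire deck dependence — factors out of $\partial\mu_B/\partial\tan\alpha$, so the stationarity condition is again $4\tan^2\alpha - 1 = 0$, giving $\alpha_B^* = \arctan(1/2)$ as in (\ref{nq1_alphastB_deck}), independently of $m_d$. Substituting $\tan\alpha = 1/2$ (so $1+\tan^2\alpha = 5/4$) back into the full $\mu_B$, the bar term evaluates to $\eta\,\delta_2\,5^{5/4}/4$ and the cable term to $\delta_1/2$, yielding the claimed mass (\ref{nq1_mustB}).

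The only real obstacle I anticipate is tracking the deck-load factor consistently through the two different failure laws: it enters the cables linearly but the bars through a square root, so $(1 + 2^n g m_d/F)$ appears to the first power in $\delta_1$ and to the one-half power in $\delta_2$. Getting these two exponents right is precisely what separates this theorem from the deckless Theorem \ref{theo:min_buckling_nq1}; once they are in place the optimization is verbatim, since a constant in $\alpha$ cannot shift the optimal angle.
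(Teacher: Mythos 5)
Your proposal is correct and mirrors the paper's own proof step for step: the same cable-mass and bar-mass computations with the deck factor $(1+2^n g\,m_d/F)$ entering linearly in $\delta_1$ and as a square root in $\delta_2$, the same dominant-bar-mass approximation to isolate the stationarity condition $4\tan^2\alpha-1=0$, and the same back-substitution of $\tan\alpha=1/2$ into the full mass to obtain $\delta_1/2+\eta\,\delta_2\,5^{5/4}/4$. No gaps; the only cosmetic difference is that the paper invokes its deck-version hypothesis (\ref{nq1_hyp_deck}) rather than (\ref{nq1_hyp}), which is the same assumption with $\delta_1,\delta_2$ in place of $\gamma_1,\gamma_2$.
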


\begin{proof}

Substituting (\ref{nq1_lbars_lstrings}), (\ref{nq1_forces_deck}) and (\ref{nq1_n_bars_n_strings}) into (\ref{nq1_ms_1}) and making use of position (\ref{np1_positions}):

\bea
m_s = \frac{\left( F + 2^n m_d g \right) L}{2} \frac{\rho_s}{\sigma_s} \left( 1 - \frac{1}{2^n} \right) \tan \alpha.
\label{nq1_ms_deck}
\eea

\noindent That corresponds to the following normalized mass:

\bea
\mu_s = \frac{1}{2} \left( 1 + 2^n g \frac{m_d}{F} \right) \left( 1 - \frac{1}{2^n} \right) \tan \alpha.
\label{nq1_mus_deck}
\eea

\noindent Substituting (\ref{nq1_lbars_lstrings}), (\ref{nq1_forces_deck}) and (\ref{nq1_n_bars_n_strings}) into (\ref{nq1_mb_1}):

\bea
m_b = \frac{\sqrt{2} \rho_b L^2}{\sqrt{\pi E_b}} \frac{\sqrt{F + 2^n m_d g}}{\cos^2 \alpha \sqrt{\sin \alpha}} \sum_{i=1}^n {\frac{1}{2^{3 i/2}}}.
\label{nq1_mb_deck}
\eea

\noindent Using positions (\ref{nq1_position_1}) into (\ref{nq1_mb_deck}) and normalizing we get the following dimensionless mass of bars:

\bea
\mu_b = \sqrt{2} \eta \sqrt{1 + 2^n g \frac{m_d}{F}} \left( \frac{1 + 2 \sqrt{2}}{7} \right) \left( 1 - \frac{1}{2^{3 n / 2}} \right) \frac{\left( 1 + \tan^2 \alpha \right)^{5/4}}{\sqrt{\tan \alpha}}.
\label{nq1_mub_deck}
\eea

\noindent The total mass is then the sum of (\ref{nq1_mus_deck}) and (\ref{nq1_mub_deck}) and introducing constants $\delta_1$ and $\delta_2$ given in (\ref{nq1_delta_1}) and (\ref{nq1_delta_2}):

\bea
\mu_B = \mu_s + \mu_b = \delta_1 \tan \alpha + \eta \delta_2 \frac{\left( 1 + \tan^2 \alpha \right)^{5/4}}{\sqrt{\tan \alpha}}.
\label{nq1_muB_deck}
\eea

\noindent The solution for minimal mass can be achieved assuming that:

\bea
\delta_1 \tan \alpha \ll \eta \delta_2 \frac{\left( 1 + \tan^2 \alpha \right)^{5/4}}{\sqrt{\tan \alpha}}.
\label{nq1_hyp_deck}
\eea

\noindent So that the (\ref{nq1_muB_deck}) becomes:

\bea
\bar \mu_B = \eta \delta_2 \frac{\left( 1 + \tan^2 \alpha \right)^{5/4}}{\sqrt{\tan \alpha}}.
\label{nq1_muB_bar_deck}
\eea

\noindent The optimal angle can be obtained from:

\bea
\frac{\partial \bar \mu_{B}}{\partial \tan \alpha}= \frac{\eta}{2} \delta_2 \left( 1 + \tan^2 \alpha \right)^{\left(1/4\right)} \left( \frac{4 \tan^2 \alpha - 1}{\tan \alpha \sqrt{\tan \alpha}} \right) = 0,
\label{np1:dmuB_deck}
\eea

\noindent yielding the optimal angle of (\ref{nq1_alphastB_deck}). Substituting it into (\ref{nq1_muB_deck}) concludes the proof.

\end{proof}

Fig. \ref{fig:must_n_yielding} for yielding and Fig. \ref{fig:must_n_buckling} for buckling show as the theorems obtained in this section can be applied to compute the optimal mass of \emph{substructure} or \emph{superstructure} for any choice of the parameter $\rho$ (for yielding) or $\eta$ (for buckling) . We obtained that, with the addition of deck mass to the design, the optimal complexity $n$ becomes greater then 1. In the next section we will show the effect of the addition of joint mass.

\begin{figure}[hb]
\unitlength1cm
\begin{picture}(10,4.5)
\put(3.5,0){\psfig{figure=./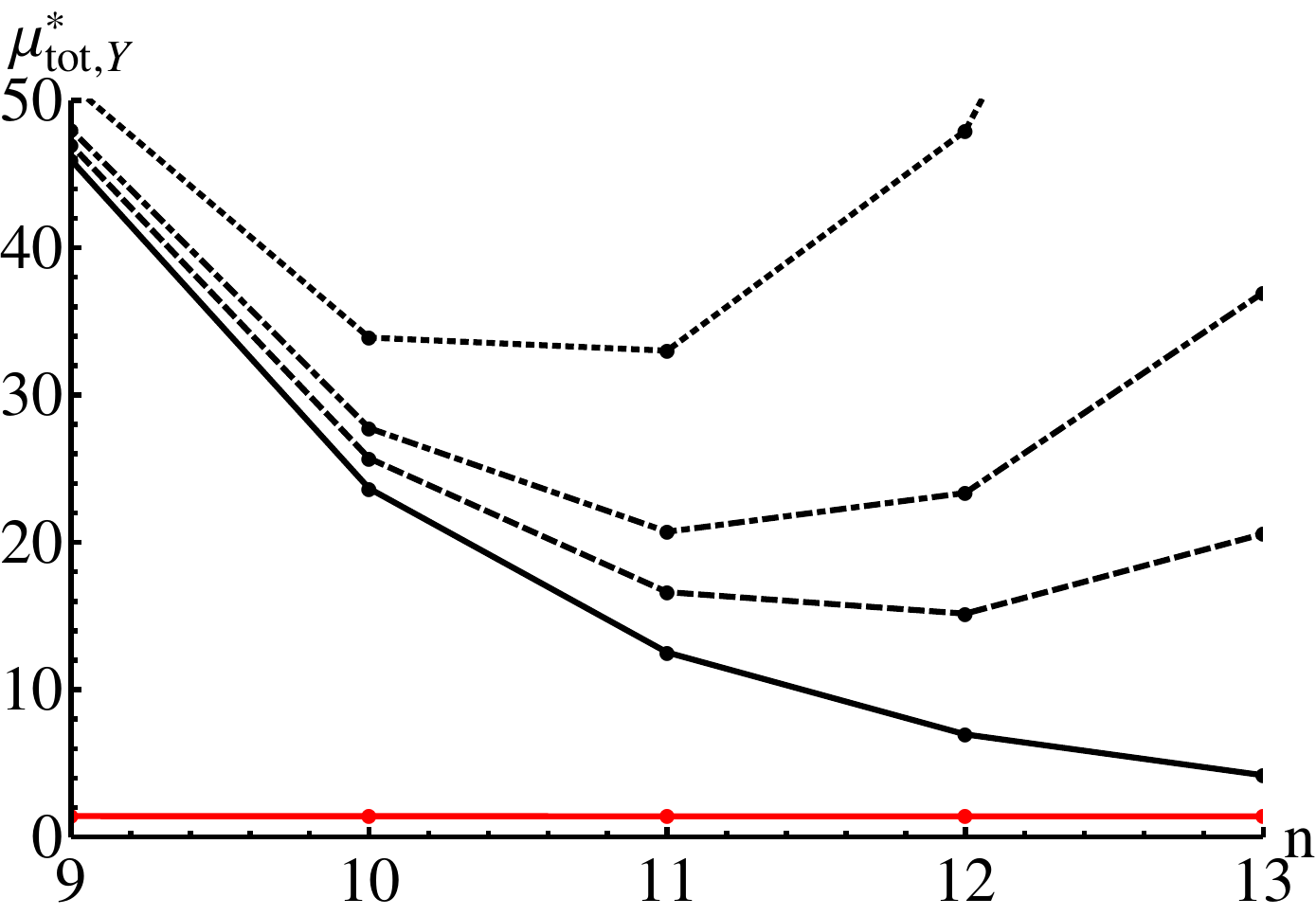,height=4.5cm}}
\put(10.25,1.5){\psfig{figure=./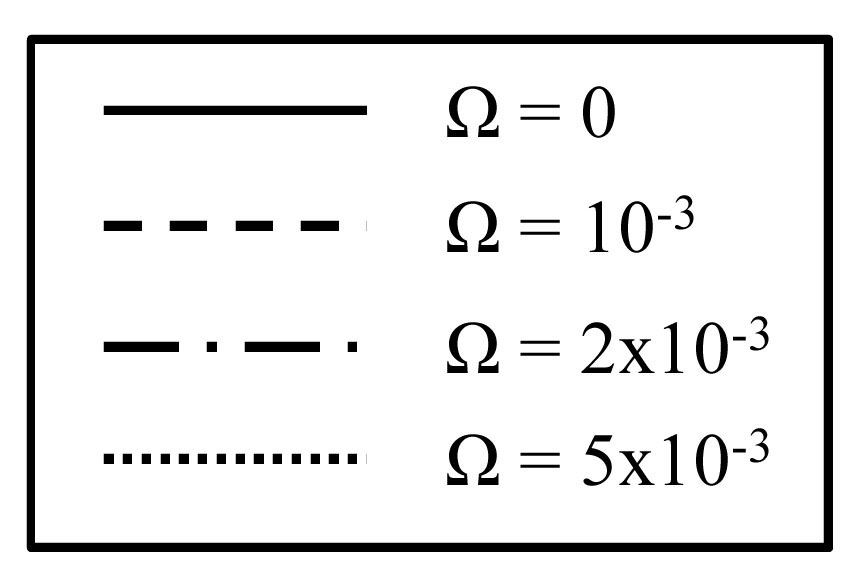,height=2cm}}
\end{picture}
\caption{Optimal masses under yielding of the substructures (left) and superstructure (right) (red curve) and total optimal mass with deck and different joint factors (dashed and dottled curves) for different values of the complexity $n$ (steel for bars, cables, deck, $F = 1 \ N$, $L = w_d = 1 \ m$).}
\label{fig:musttot_n_yielding}
\end{figure}

\begin{figure}[hb]
\unitlength1cm
\begin{picture}(10,5)
\put(0.25,0.5){\psfig{figure=./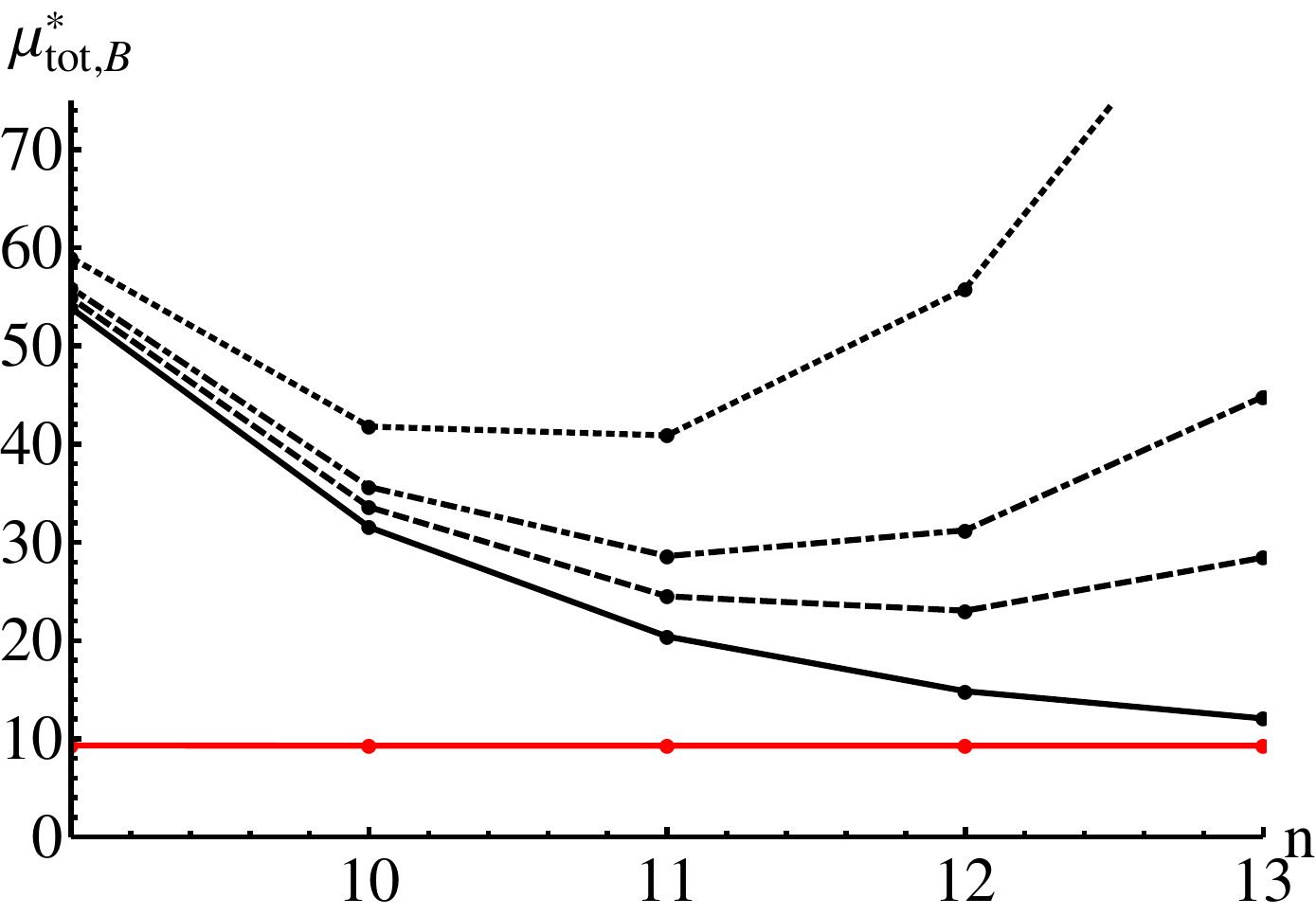,height=4.5cm}}
\put(8.5,0.5){\psfig{figure=./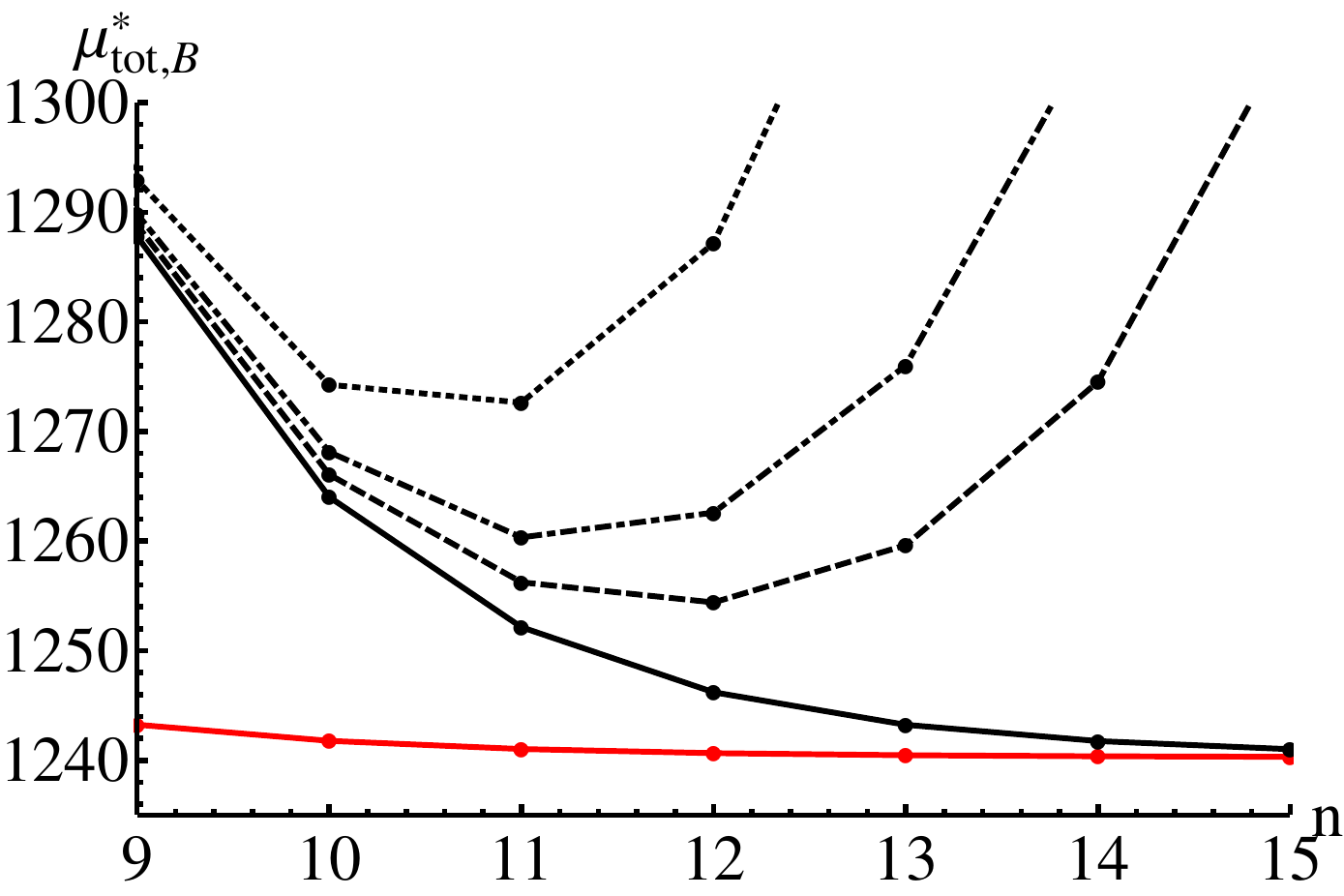,height=4.5cm}}
\put(3.5,-0.35){\psfig{figure=./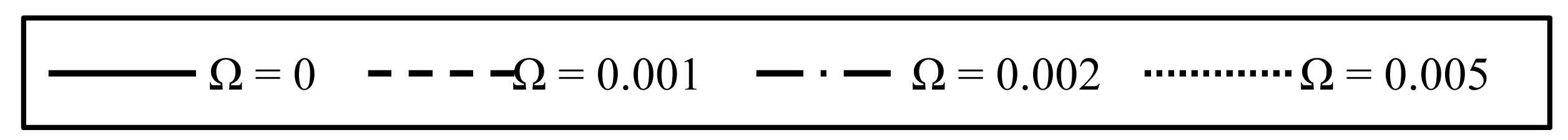,height=0.8cm}}
\end{picture}
\caption{Optimal masses under buckling of the substructures (left) and superstructure (right) (red curves) and total optimal masses with deck and different joint factors (dashed and dotted curves) for different values of the complexity $n$ (steel for bars, cables, deck, $F = 1 \ N$, $L = w_d = 1 \ m$).}
\label{fig:musttot_n_buckling}
\end{figure}

\subsection{Penalizing Complexity with cost considerations: Adding Joint Mass} 
\label{joint_mass}

Theorem \ref{theo:min_yielding_p1_deck}, for $m_d = 0$, leads to an optimal complexity $n = 1$ which corresponds to a minimal mass equal to $\sqrt{1+\rho}/2$. As complexity $n$ approaches infinity, instead, the mass given in (\ref{mustY_np1_deck}), for $m_d = 0$, go to a limit equal to $\sqrt{1+\rho}$. However, the addition of the deck mass in Theorem \ref{theo:min_yielding_p1_deck} switches the optimal complexity from $n=1$ to $n=\infty$, so small complexities $n$ are penalized by massive decks. Also in this latter case, the resulting optimal minimal mass is then $\sqrt{1+\rho}$, as can be verified looking the (\ref{mustY_np1_deck}) or considering that as $n$ goes to infinity the deck mass given in (\ref{m_deck}) approaches zero. As a matter of fact, neither $n = 1$ or $n = \infty$ are believable solutions due to practical reasons: the first solution leads only to a single force at the middle of the span, the second solution leads to an infinite number of joints and connections. The minimal masses obtained from (\ref{mustY_np1_deck}) with or without deck correspond to perfect massless joints. The addition of the joint masses to a tensegrity structure with $n_n$ nodes, as illustrated in \cite{Skelton2010c}, leads to the following total normalized mass:

\bea
\mu^*_{Y,tot} = \mu^*_Y + \mu^*_d + \Omega n_n,
\label{mu_tot_deck_joint}
\eea

Let $\$_j$ be the cost per $kg$ of making joints and let $\$_b$ be the cost per $kg$ of making bars. Then define $\Omega=\$_b/\$_j$. For perfect joints $\Omega=0$, for rudely made low cost joints $\$_j$ is small and $\Omega$ is larger. Hence $\Omega$ is also approximatively the ratio of material cost per joint divided by material cost per structural member being joined.

Consider the minimal masses of the \emph{substructure} bridge ($\mu^*_Y$) constrained against yielding, for the cases with or without deck, see Eq. (\ref{mustY_np1_deck}). Assume steel material for cables, bars and deck beams and set $F = 1 \ N$, $L = w_d = 1 \ m$. Without deck the optimal aspect angle $\beta_Y^*$ (\ref{np1_betastY}) is $35.26$ $deg$. For the case with neither deck nor joint mass, the optimum complexity $n$ is $1$, which corresponds to an optimal mass $\mu^*_Y = \sqrt{2}/2$. As $n$ approaches infinity the mass tends to a limit equal to $\sqrt{2}$, which is also the optimal mass for the case with deck mass and perfectly manufactured joints, since $\mu^*_d$ approaches zero for $n \rightarrow \infty$. Note that with the addition of joint masses as illustrated in (\ref{mu_tot_deck_joint}), the optimal complexity $n^*$ can become a finite value. The above procedure can be also used for the design under buckling constraints.

Figs. \ref{fig:musttot_n_yielding} (for yielding) and Fig. \ref{fig:musttot_n_buckling} (for buckling) show the total minimal masses obtained by using (\ref{mu_tot_deck_joint}). In both Figs. \ref{fig:musttot_n_yielding} and \ref{fig:musttot_n_buckling} we also show with red curves the minimal mass of \emph{substructures} or \emph{superstructures} only. In either case, the total mass of the structure with deck (but no joint mass), is shown by black continuous lines in Figs. \ref{fig:musttot_n_yielding} and \ref{fig:musttot_n_buckling}, reaching minimum for an infinite complexity $n$. It is worth nothing that, for infinite $n$, the mass of the deck is zero and the total minimum mass is just the mass of the bridge structure. Then, with the dotted and dashed lines, we show that a finite optimal complexity can be achieved if the joint's masses are considered.

From Fig. \ref{fig:musttot_n_yielding} note that the minimal mass ($\mu \cong 21$) bridge has complexity $n = 11$ for $\Omega=0.002$, and has minimal mass $\mu \cong 15$ with complexity $n=12$ for $\Omega=0.001$. Economic costs would decide if saving $25 \ \%$ structural mass is worth the extra cost of improving the joint precision by a factor of $2$.

\section{Numerical results without deck} \label{results}

In this section we show the minimal masses and the optimal angles of tensegrity bridges with several complexities $n$, $p$ and $q$. The numerical results are presented in terms of $\mu^*_B$, $\alpha^*_B$ and $\beta^*_B$ denoting respectively the minimal masses and the optimal aspect angles under combined yielding and buckling constraints for each bar and yielding constraints for each cable; and in terms of $\mu^*_Y$, $\alpha^*_Y$ and $\beta^*_Y$ denoting respectively the minimal masses and the optimal aspect angles under yielding constraints only for each member. The results are obtained numerically through a MatLab\textsuperscript \textregistered \ program written employing the algorithm illustrated in Sect. 3 of \cite{Skelton2014}. The optimization problems presented in Tables \ref{nr_combined_n1p}, \ref{nr_combined_np}, \ref{nr_sub_n1p}, \ref{nr_sub_np}, \ref{nr_super_n1q}, \ref{nr_super_nq},  are solved assuming $L = 1 \ m$, $F = 1 \ N$, no deck mass, and steel for both cables and bars (refer to Table \ref{tab_materials} for the material properties; $\rho = 1$; $\eta = 857.71$). The examined topologies are distinguished in three categories: 1) \emph{nominal} bridges with both structure above and below the roadway (Sect. \ref{numerical_nominal}; Fig \ref{fig:dual_exemplaries}); 2) \emph{substructure} only bridges (Sect. \ref{numerical_sub}; Fig. \ref{fig:superstructures_exemplaries}) and 3) \emph{superstructure} only bridges (Sect. \ref{numerical_super}; Fig. \ref{fig:substructures_exemplaries}). In all the optimized cases, we set step increments of complexities $n$, $p$ and $q$ to $1$ and step increments of $0.01$ $deg$ for the aspect angles $\alpha$ and $\beta$. It is worth noting that, as showed for the basic module (Fig. \ref{basic_module}) analyzed in Sect. \ref{n1q1p1_model}, the cables placed on the deck have zero mass at the solution for minimal mass basically thanks to the adopted constraints ($HH$: double fixed hinges). Appendix A reports some numerical results for rolling hinge at one end of the bridge and fixed hinge at the other end ($HR$). We also report, for each optimized structure, the masses of cables under buckling constraints ($\mu^*_{B,s}$) to show, as will be more clear in the following, as their order of magnitude with respect to the mass total mass of the structure ($\mu^*_{B}$) increase towards the global optimum. In other words, the principal source of mass savings of a tensegrity structure for buckling is placed in the mass of bars.

\begin{table}[tb]
\begin{center}
\begin{tabular}{cc} \hline \hline 
\multicolumn{2}{c}{\textit{steel}} \\ \hline \hline 
$\rho$ $[kg/m^{3}]$ & 7862 \\ \hline 
$\sigma$ $[N/m^{2}]$ & 6.9 x 10${}^{8}$ \\ \hline 
$E$ $[N/m^{2}]$ & 2.06 x 10${}^{11}$ \\ \hline \hline 
\multicolumn{2}{c}{\textit{Spectra\textsuperscript \textregistered - UHMWPE}} \\ \hline \hline
$\rho$ $[kg/m^{3}]$ & $970$ \\ \hline 
$\sigma$ $[N/m^{2}]$ & $2.7x10^9$ \\ \hline 
$E$ $[N/m^{2}]$ & $120x10^9$ \\ \hline \hline
\end{tabular}
\caption{Material properties.}
\label{tab_materials}
\end{center}
\end{table}

\subsection{Nominal Bridges} \label{numerical_nominal}

We have performed several numerical results for the \emph{nominal bridges}, illustrated in Fig. \ref{fig:dual_exemplaries}, in which both structure above and below the roadway are allowed. Starting from the basic unit in Fig. \ref{basic_module}a, we have considered different complexities $n$, $p$ and $q$  and different aspect angles $\alpha$ and $\beta$, in order to get the combination of such parameters that ensures the minimal mass solution. First of all, we start fixing parameter $n$ to unity and let parameters $p$, $\alpha$ and $\beta$ ranging in the following intervals:

\beq
(p, \ q) \in [1, 100],  \ \ \
\alpha \in (0, 90) \ \mbox{deg}, \ \ \
\beta \in (0, 90) \ \mbox{deg}.
\label{domain1}
\eeq

The results of this first design are presented in Table \ref{nr_combined_n1p}. For what concern the design under only yielding constraints for all members, the global minimum is achieved for a complexity $q^*_Y = p^*_Y \rightarrow  \infty$ and for aspect angles $\alpha^*_Y = \beta^*_Y \rightarrow 45 \ deg$, which corresponds to a minimal mass $\mu^*_Y \rightarrow 0.6427$ (Table \ref{nr_combined_n1p}). Such a result confirms the minimal mass solution for a centrally loaded loaded beam reported in Fig. 2 by \cite{Michell1904}. In particular, for a beam of total span $2a_M$ loaded in the middle with a force $F_M$ and made of tensile and compressive members with allowable yielding stresses equal respectively to $P$  and $Q$; \cite{Michell1904} predicted a limit volume equal to:

\beq
v_M = F_M a_M \left( \frac{1}{2} + \frac{\pi}{4} \right) \left( \frac{1}{P} + \frac{1}{Q} \right).
\label{Michell_limit}
\eeq

Substituting in the (\ref{Michell_limit}), as in the present case, $F_M = F/2$, $a_M = L/2 \ m$, $P = Q = 6.9x10^8 \ N/m^2$, we obtain a volume $v_M = 9.31448x10^{-10} \ m^3$. On the other hand, the minimal mass $\mu^*_Y \rightarrow 0.6427$ corresponds to a volume $v^*_Y = 0.6427 {F L}/{\sigma_s} =  9.31449x10^{-10} m^3$. We will show in the next Sects. \ref{numerical_sub} and \ref{numerical_super} that the same minimal mass can be achieved also starting from \emph{superstructure only} bridges showed in Fig. \ref{fig:substructures_exemplaries} and \emph{superstructure} bridges showed in Fig. \ref{fig:superstructures_exemplaries}. The equivalence between \emph{substructure} and \emph{superstructure} under yielding constraints can be justified by the assumption of bars and cables made of the same materials ($\rho = 1$). An example of this equivalence can be obtained assuming, eg., $\rho = 1$ in the Eq. (\ref{np1_mustY}) of Theorem \ref{theo:min_yielding_np1} for \emph{substructures} with complexity $(n,p)= (n,1)$ and in the Eq. (\ref{nq1_mustY}) of Theorem \ref{theo:min_yielding_nq1} for \emph{superstructures} with  complexity $(n,q)=(n,1)$.

Under buckling constraints, the global optimum in the domain (\ref{domain1}) is achieved for a finite complexity $p^*_B = 1$, which corresponds to a minimal mass of $\mu^*_B = 5.0574$ and an aspect angle of the substructure equal to $\beta^*_B = 4.25 \ deg$. In all the combined cases under buckling, we have obtained that the optimal solutions keep only the substructure and the total mass of the superstructure is negligible if compared with the total mass. In fact, for the global optimum with $q^*_B = p^*_B = 1$ and $\beta^*_B = 4.25 \ deg$, we have obtained a mass of the superstructure equal to $4.3983x10^{-10}$. For the other cases, we have obtained similar negligible values of the mass of superstructures ranging from a minimum of $1.9019x10^{-12}$ for the case with $q = p = 45$ and a maximum of $3.1483x10^{-6}$ for the case with $p = q = 10$. The analyzed cases of domain (\ref{domain1}), then, reduce to the substructure only cases for buckling. As a matter of fact, the angles $\beta^*_B$ decrease from $4.25 \ deg$ to $1.98 \ deg$ as the complexities $q = p$ increase from $1$ to $100$. The reduction of $\beta^*_B$ corresponds to an increase of the tensile forces of the cables constituting the substructure and, consequently, also the total mass of cables (indicated with $\mu^*_{B,s}$ in Table \ref{nr_combined_n1p}) increases. In other words, for the combined bridges under buckling or, equivalently, for the substructures bridges under buckling, the total mass of the cables is the big part of the total mass of the structure.

\begin{table}[tb]
\begin{center}
\begin{tabular}{ccccccccc} \hline \hline
$n$ & $p$ & $\alpha^{*}_{Y}$ [deg] & $\beta^{*}_{Y}$ [deg] & $\mu^{*}_{Y}$ & $\alpha^{*}_{B}$ [deg] & $\beta^{*}_{B}$ [deg] & $\mu^{*}_{B}$ & $\mu^{*}_{B,s}$ \\ \hline \hline
1 & 1 & 35.26 & 35.26 & 0.7071 & - & 4.25 & 5.0574 & 3.3827 \\ \hline 
1 & 2 & 35.26 & 35.26 & 0.7071 & - & 3.80 & 5.6662 & 3.7805 \\ \hline
1 & 3 & 41.41 & 41.41 & 0.6614 & - & 3.57 & 6.0309 & 4.0227 \\ \hline
1 & 4 & 43.23 & 43.23 & 0.6514 & - & 3.40 & 6.3260 & 4.2228 \\ \hline
1 & 5 & 43.96 & 43.96 & 0.6476 & - & 3.27 & 6.5683 & 4.3899 \\ \hline
1 & 10 & 44.78 & 44.78 & 0.6437 & - & 2.91 & 7.3839 & 4.9308 \\ \hline
1 & 15 & 44.91 & 44.91 & 0.6431 & - & 2.72 & 7.9054 & 5.2741 \\ \hline
1 & 20 & 44.95 & 44.95 & 0.6429 & - & 2.59 & 8.2960 & 5.5380 \\ \hline
1 & 25 & 44.97 & 44.97 & 0.6428 & - & 2.50 & 8.6127 & 5.7369 \\ \hline
1 & 30 & 44.98 & 44.98 & 0.6428 & - & 2.42 & 8.8796 & 5.9260 \\ \hline
1 & 35 & 44.98 & 44.98 & 0.6428 & - & 2.36 & 9.1115 & 6.0763 \\ \hline
1 & 40 & 44.99 & 44.99 & 0.6428 & - & 2.31 & 9.3173 & 6.2076 \\ \hline 
1 & 45 & 44.99 & 44.99 & 0.6427 & - & 2.26 & 9.5025 & 6.3446 \\ \hline
1 & 50 & 44.99 & 44.99 & 0.6427 & - & 2.22 & 9.6712 & 6.4587 \\ \hline
1 & 100 & 45.00 & 45.00 & 0.6427 & - & 1.98 & 10.8574 & 7.2401 \\ \hline \hline
\end{tabular}
\caption{Numerical results of nominal bridge with complexities $n = 1$ and different $p = q$ under yielding ($_Y$) and combined yielding and buckling constraints ($_B$), ($F = 1 \ N$; $L = 1 \ m$; steel bars and steel strings).}
\label{nr_combined_n1p}
\end{center}
\end{table}

The second optimization domain let the parameters $n$, $p$, $\alpha$ and $\beta$ ranging in the following intervals: 

\beq
n \in [1, 5],  \ \ \
p \in [1, 3],  \ \ \
\alpha \in (0, 90) \ \mbox{deg}, \ \ \
\beta \in (0, 90) \ \mbox{deg}.
\label{domain2}
\eeq
 
The results of this second design are presented in Table \ref{nr_combined_np}. For what concern the yielding case, we can observe that, for fixed values of the complexity $p$, the global optimum in the domain (\ref{domain2}) is achieved for $n^*_Y = 1$ and $p^*_Y = 3$. Moreover, the optimal aspect angles $\alpha^*_Y$ and $\beta^*_Y$ appear not depending on the complexity $n$. Merging the optimization carried out in both the domains (\ref{domain1}) and (\ref{domain2}), we can conclude that the global optimum for yielding is for $n^*_Y = 1$ and $q^*_Y = p^*_Y \rightarrow  \infty$ and for aspect angles $\alpha^*_Y = \beta^*_Y \rightarrow 45 \ deg$, which corresponds to a minimal mass $\mu^*_Y \rightarrow 0.6427$ (Table \ref{nr_combined_n1p}). It is worth noting that such a solution brings to a mass reduction, from the case with $n = 1$ and $(p,q) = (1,1)$ ($\mu^*_Y = 0.7071$), of only $9.1\%$. Moreover the above minimum doesn't take care of manufacture processes that becomes relevant for structures with numerous joints and members. Then, a finite optimal complexity $p = q$ could be achieved a posteriori by adding, eg., joint masses as illustrated in Sect. \ref{joint_mass}.

The optimizations under buckling constraints reported in Table \ref{nr_combined_np} show that the global optimum in the domain (\ref{domain2}) is the same obtained in the domain (\ref{domain1}), i.e. for $p^*_B = 1$, $\beta^*_B = 4.25 \ deg$ and $\mu^*_B = 5.0574$. Also in each complexities ranging in the intervals (\ref{domain2}) we obtain, for buckling, local minimal masses solutions that keeps only the substructures. In fact, also in domain (\ref{domain2}), we have obtained negligible values of the mass of superstructures under buckling ranging from a minimum of $3.7436x10^{-13}$ for the case with $n = 5, q = p = 1$ and a maximum of $2.3257x10^{-6}$ for the case with $n = 1, p = q = 3$. Such a results for buckling are also confirmed in the next Sect. \ref{numerical_sub}. In Table \ref{nr_combined_np}, we observe that the optimal aspect angles for buckling $\beta^*_B$ increase as complexity $n$ increase and decrease as complexity $q = p$ increase.

\begin{table}[tb]
\begin{center}
\begin{tabular}{ccccccccc} \hline \hline
$n$ & $p$ & $\alpha^{*}_{Y}$ [deg] & $\beta^{*}_{Y}$ [deg] & $\mu^{*}_{Y}$ & $\alpha^{*}_{B}$ [deg] & $\beta^{*}_{B}$ [deg] & $\mu^{*}_{B}$ & $\mu^{*}_{B,s}$ \\ \hline \hline
1 & 1 & 35.26 & 35.26 & 0.7071 & - & 4.25 & 5.0574 & 3.3827 \\ \hline 
1 & 2 & 35.26 & 35.26 & 0.7071 & - & 3.80 & 5.6662 & 3.7805 \\ \hline
1 & 3 & 41.41 & 41.41 & 0.6614 & - & 3.57 & 6.0309 & 4.0227 \\ \hline \hline
2 & 1 & 35.26 & 35.26 & 1.0607 & - & 4.40 & 7.3326 & 4.9024 \\ \hline
2 & 2 & 35.26 & 35.26 & 1.0607 & - & 3.93 & 8.2143 & 5.4843 \\ \hline
2 & 3 & 41.41 & 41.41 & 0.9922 & - & 3.69 & 8.7426 & 5.8388 \\ \hline \hline
3 & 1 & 35.26 & 35.26 & 1.2374 & - & 4.49 & 8.3705 & 5.6058 \\ \hline
3 & 2 & 35.26 & 35.26 & 1.2374 & - & 4.02 & 9.3762 & 6.2561 \\ \hline
3 & 3 & 41.41 & 41.41 & 1.1575 & - & 3.77 & 9.9791 & 6.6682 \\ \hline \hline
4 & 1 & 35.26 & 35.26 & 1.3258 & - & 4.55 & 8.8520 & 5.9276 \\ \hline
4 & 2 & 35.26 & 35.26 & 1.3258 & - & 4.07 & 9.9149 & 6.6211 \\ \hline
4 & 3 & 41.41 & 41.41 & 1.2402 & - & 3.82 & 10.5523 & 7.0516 \\ \hline \hline
5 & 1 & 35.26 & 35.26 & 1.3700 & - & 4.59 & 9.0790 & 6.0723 \\ \hline
5 & 2 & 35.26 & 35.26 & 1.3701 & - & 4.10 & 10.1689 & 6.7921 \\ \hline
5 & 3 & 41.41 & 41.41 & 1.2816 & - & 3.85 & 10.8226 & 7.2302 \\ \hline \hline
\end{tabular}
\caption{Numerical results of nominal bridges with different complexities $n$ and $p$ under yielding ($_Y$) and combined yielding and buckling constraints ($_B$), ($F = 1 \ N$; $L = 1 \ m$; steel bars and steel strings).}
\label{nr_combined_np}
\end{center}
\end{table}

\subsection{Substructures} \label{numerical_sub}

In this Section, we show the results obtained for the optimizations of the \emph{substructure} bridges showed in Fig. \ref{fig:substructures_exemplaries} in which only structure below the roadway is allowed. Starting from the basic module illustrated in Fig. \ref{basic_module}c, we have considered different complexities $n$, $p$ and different aspect angles $\alpha$ and $\beta$ ranging in two domains with the aims to get the global minimum mass design both under yielding constraints and under buckling constraints.

First of all, we start fixing parameter $n$ to unity and let parameters $p$, $\alpha$ and $\beta$ ranging in the following intervals:

\beq
p \in [1, 500],  \ \ \
\alpha \in (0, 90) \ \mbox{deg}, \ \ \
\beta \in (0, 90) \ \mbox{deg}.
\label{domain1_sub}
\eeq

The results of this first design are presented in Table \ref{nr_sub_n1p}. For what concern the design under only yielding constraints the global minimum is achieved for a complexity $p^*_Y \rightarrow  \infty$ and for an aspect angle $\beta^*_Y \rightarrow 45 \ deg$, which corresponds to a minimal mass $\mu^*_Y \rightarrow 0.6427$ (Table \ref{nr_sub_n1p}). The optimizations for buckling constraints, instead, allow to identify a global minimum for complexity $p^*_B = 1$ and for an aspect angle $\beta^*_B = 4.25 \ deg$, which corresponds to a minimal mass $\mu^*_B = 5.0574$ (Table \ref{nr_sub_n1p}).

\begin{table}[tb]
\begin{center}
\begin{tabular}{ccccccc} \hline \hline
$n$ & $p$ & $\beta^{*}_{Y}$ [deg] & $\mu^{*}_{Y}$ & $\beta^{*}_{B}$ [deg] & $\mu^{*}_{B}$ & $\mu^{*}_{B,s}$ \\ \hline \hline
1 & 1 & 35.26 & 0.7071 & 4.25 & 5.0574 & 3.3827 \\ \hline 
1 & 2 & 35.26 & 0.7071 & 3.80 & 5.6666 & 3.7805 \\ \hline
1 & 3 & 41.41 & 0.6614 & 3.57 & 6.0312 & 4.0227 \\ \hline 
1 & 4 & 43.23 & 0.6514 & 3.40 & 6.3265 & 4.2228 \\ \hline
1 & 5 & 43.96 & 0.6476 & 3.27 & 6.5687 & 4.3899 \\ \hline
1 & 10 & 44.78 & 0.6437 & 2.91 & 7.3843 & 4.9308 \\ \hline 
1 & 15 & 44.91 & 0.6431 & 2.72 & 7.9058 & 5.2741 \\ \hline
1 & 20 & 44.95 & 0.6429 & 2.59 & 8.2969 & 5.5380 \\ \hline
1 & 25 & 44.97 & 0.6428 & 2.50 & 8.6131 & 5.7368 \\ \hline 
1 & 30 & 44.98 & 0.6428 & 2.42 & 8.8800 & 5.9260 \\ \hline
1 & 35 & 44.98 & 0.6428 & 2.36 & 9.1120 & 6.0763 \\ \hline
1 & 40 & 44.99 & 0.6428 & 2.31 & 9.3177 & 6.2076 \\ \hline 
1 & 45 & 44.99 & 0.6427 & 2.26 & 9.5029 & 6.3446 \\ \hline
1 & 50 & 44.99 & 0.6427 & 2.22 & 9.6716 & 6.4587 \\ \hline
1 & 100 & 45.00 & 0.6427 & 1.98 & 10.8578 & 7.2401 \\ \hline
1 & 200 & 45.00 & 0.6427 & 1.76 & 12.1881 & 8.1437 \\ \hline
1 & 300 & 45.00 & 0.6427 & 1.65 & 13.0402 & 8.6860 \\ \hline
1 & 400 & 45.00 & 0.6427 & 1.57 & 13.6806 & 9.1281 \\ \hline
1 & 500 & 45.00 & 0.6427 & 1.51 & 14.1994 & 9.4904 \\ \hline \hline
\end{tabular}
\caption{Numerical results of substructures with complexities $n = 1$ and different $p$ under yielding ($_Y$) and combined yielding and buckling constraints ($_B$), ($F = 1 \ N$; $L = 1 \ m$; steel bars and steel cables).}
\label{nr_sub_n1p}
\end{center}
\end{table}

Then, we let parameters $n$, $p$, $\alpha$ and $\beta$ ranging in the following intervals:

\beq
n \in [1, 5],  \ \ \
p \in [1, 3],  \ \ \
\alpha \in (0, 90) \ \mbox{deg}, \ \ \
\beta \in (0, 90) \ \mbox{deg}.
\label{domain2_sub}
\eeq

The results of this second design are presented in Table \ref{nr_sub_np}. For what concern the design under only yielding constraints, the global minimum in domain (\ref{domain2_sub}), is achieved for complexities $n^*_Y = 1, p^*_Y = 3$ and for an aspect angle $\beta^*_Y \rightarrow 41.41 \ deg$, which corresponds to a minimal mass $\mu^*_Y = 0.6614$ (Table \ref{nr_sub_np}). For the optimizations under buckling constraints, instead, we have obtained a global minimum for complexities $n^*_B = p^*_B = 1$ and for an aspect angle $\beta^*_B = 4.25 \ deg$, which corresponds to a minimal mass $\mu^*_B = 5.0574$ (Table \ref{nr_sub_np}). 

\begin{table}[tb]
\begin{center}
\begin{tabular}{ccccccc} \hline \hline
$n$ & $p$ & $\beta^{*}_{Y}$ [deg] & $\mu^{*}_{Y}$ & $\beta^{*}_{B}$ [deg] & $\mu^{*}_{B}$ & $\mu^{*}_{B,s}$ \\ \hline \hline
1 & 1 & 35.26 & 0.7071 & 4.25 & 5.0574 & 3.3827 \\ \hline 
1 & 2 & 35.26 & 0.7071 & 3.80 & 5.6662 & 3.7805 \\ \hline
1 & 3 & 41.41 & 0.6614 & 3.57 & 6.0308 & 4.0227 \\ \hline \hline
2 & 1 & 35.26 & 1.0607 & 4.40 & 7.3326 & 4.9024 \\ \hline
2 & 2 & 35.26 & 1.0607 & 3.93 & 8.2143 & 5.4843 \\ \hline
2 & 3 & 41.41 & 0.9922 & 3.69 & 8.7426 & 5.8388 \\ \hline \hline
3 & 1 & 35.26 & 1.2374 & 4.49 & 8.3705 & 5.6058 \\ \hline
3 & 2 & 35.26 & 1.2374 & 4.02 & 9.3763 & 6.2562 \\ \hline
3 & 3 & 41.41 & 1.1575 & 3.77 & 9.9791 & 6.6682 \\ \hline \hline
4 & 1 & 35.26 & 1.3258 & 4.55 & 8.8531 & 5.9929 \\ \hline
4 & 2 & 35.26 & 1.3258 & 4.07 & 9.9149 & 6.6212 \\ \hline
4 & 3 & 41.41 & 1.2402 & 3.82 & 10.5523 & 7.0517 \\ \hline \hline
5 & 1 & 35.26 & 1.3700 & 4.59 & 9.0790 & 6.0723 \\ \hline
5 & 2 & 35.26 & 1.3701 & 4.10 & 10.1690 & 6.7921 \\ \hline
5 & 3 & 41.41 & 1.2816 & 3.85 & 10.8226 & 7.2302 \\ \hline \hline
\end{tabular}
\caption{Numerical results of substructures with different complexities $n$ and $p$ under yielding ($_Y$) and combined yielding and buckling constraints ($_B$), ($F = 1 \ N$; $L = 1 \ m$; steel bars and steel cables).}
\label{nr_sub_np}
\end{center}
\end{table}

Such a results retrace the results in Table \ref{nr_combined_n1p} already obtained for the nominal bridges. This can be explained considering the assumption $\rho = 1$ and the symmetry of the constraints of the bridge (double fixed hinges, $HH$). It is shown, eg., in the numerical results of Tables \ref{nr_nom_HR}, \ref{nr_sub_HR}, \ref{nr_sup_HR} in Appendix \ref{App_A} that changing constraints from double fixed hinges ($HH$) to fixed hinge and rolling hinge ($HR$), the equivalence between \emph{nominal} bridge, \emph{substructure} and \emph{superstructure} bridge never subsists.

\subsection{Superstructures} \label{numerical_super}

We end the numerical results without deck showing the optimizations of the superstructure bridges showed in Fig. \ref{fig:superstructures_exemplaries} in which only structure above the roadway is allowed. Starting from the basic module illustrated in Fig. \ref{basic_module}b, we have considered different complexities $n$, $q$ and different aspect angles $\alpha$ and $\beta$ ranging in two domains with the aims to get the global minimum mass design both under only yielding constraints and under buckling constraints.

First of all, we start fixing parameter $n$ to unity and let parameters $q$, $\alpha$ and $\beta$ ranging in the following intervals:

\beq
q \in [1, 500],  \ \ \
\alpha \in (0, 90) \ \mbox{deg}, \ \ \
\beta \in (0, 90) \ \mbox{deg}.
\label{domain1_sup}
\eeq

Table \ref{nr_super_n1q} shows the results obtained considering parameters ranging in the domain (\ref{domain1_sup}). For what concern the design under only yielding constraints, the numerical results in Table \ref{nr_super_n1q} show that the global minimum is achieved for a complexity $q^*_Y \rightarrow  \infty$ and for an aspect angle $\alpha^*_Y \rightarrow 45 \ deg$, which corresponds to a minimal mass $\mu^*_Y \rightarrow 0.6427$ (Table \ref{nr_super_n1q}). The optimizations for buckling constraints identify a global minimum for complexity $q^*_B \rightarrow \infty$ and for an aspect angle $\alpha^*_B \rightarrow 90 \ deg$, which corresponds to a minimal mass $\mu^*_B \rightarrow 4.6151$ (Table \ref{nr_super_n1q}).

\begin{table}[tb]
\begin{center}
\begin{tabular}{ccccccc} \hline \hline
$n$ & $q$ & $\alpha^{*}_{Y}$ [deg] & $\mu^{*}_{Y}$ & $\alpha^{*}_{B}$ [deg] & $\mu^{*}_{B}$ & $\mu^{*}_{B,s}$ \\ \hline \hline
1 & 1 & 35.26 & 0.7071 & 26.56 & 801.7357 & 0.1250 \\ \hline 
1 & 2 & 35.26 & 0.7071 & 36.22 & 514.7336 & 0.1231 \\ \hline
1 & 3 & 41.41 & 0.6614 & 45.31 & 410.5778 & 0.2087 \\ \hline 
1 & 4 & 43.23 & 0.6514 & 51.70 & 346.8507 & 0.2326 \\ \hline
1 & 5 & 43.96 & 0.6476 & 56.64 & 301.3080 & 0.2523 \\ \hline
1 & 10 & 44.78 & 0.6437 & 70.63 & 181.3748 & 0.3101 \\ \hline 
1 & 15 & 44.91 & 0.6431 & 76.86 & 128.6606 & 0.3364 \\ \hline
1 & 20 & 44.95 & 0.6429 & 80.19 & 99.3742 & 0.3505 \\ \hline
1 & 25 & 44.97 & 0.6428 & 82.25 & 80.8126 & 0.3593 \\ \hline 
1 & 30 & 44.98 & 0.6428 & 83.56 & 68.0759 & 0.3649 \\ \hline
1 & 35 & 44.98 & 0.6428 & 84.51 & 58.8000 & 0.3690 \\ \hline
1 & 40 & 44.99 & 0.6428 & 85.23 & 51.7491 & 0.3721 \\ \hline 
1 & 45 & 44.99 & 0.6427 & 85.78 & 46.2113 & 0.3744 \\ \hline
1 & 50 & 44.99 & 0.6427 & 86.21 & 41.7482 & 0.3763 \\ \hline
1 & 100 & 45.00 & 0.6427 & 88.14 & 21.3224 & 0.3846 \\ \hline
1 & 200 & 45.00 & 0.6427 & 89.07 & 10.9156 & 0.3886 \\ \hline
1 & 300 & 45.00 & 0.6427 & 89.38 & 7.4204 & 0.3900 \\ \hline
1 & 400 & 45.00 & 0.6427 & 89.53 & 5.6680 & 0.3907 \\ \hline
1 & 500 & 45.00 & 0.6427 & 89.62 & 4.6151 & 0.3910 \\ \hline \hline
\end{tabular}
\caption{Numerical results of superstructures with complexities $n = 1$ and different $q$ under yielding ($_Y$) and combined yielding and buckling constraints ($_B$), ($F = 1 \ N$; $L = 1 \ m$; steel bars and steel cables).}
\label{nr_super_n1q}
\end{center}
\end{table}

Then, we let parameters $n$, $q$, $\alpha$ and $\beta$ ranging in the following intervals:

\beq
n \in [1, 5],  \ \ \
q \in [1, 3],  \ \ \
\alpha \in (0, 90) \ \mbox{deg}, \ \ \
\beta \in (0, 90) \ \mbox{deg}.
\label{domain2_sup}
\eeq

Refer to Table \ref{nr_super_nq} for the results of the optimizations over the domain (\ref{domain2_sup}). For yielding constraints, the global minimum is obtained for complexities $n^*_Y = 1, q^*_Y = 3$ and for an aspect angle $\alpha^*_Y = 41.41 \ deg$, which corresponds to a minimal mass $\mu^*_Y = 0.6614$ (Table \ref{nr_super_nq}). For the optimizations under buckling constraints, instead, we have obtained a global minimum for complexities $n^*_B = 1, p^*_B = 3$ and for an aspect angle $\alpha^*_B = 45.31 \ deg$, which corresponds to a minimal mass $\mu^*_B = 410.5778$ (Table \ref{nr_super_nq}).

\begin{table}[tb]
\begin{center}
\begin{tabular}{ccccccc} \hline \hline
$n$ & $q$ & $\alpha^{*}_{Y}$ [deg] & $\mu^{*}_{Y}$ & $\alpha^{*}_{B}$ [deg] & $\mu^{*}_{B}$ & $\mu^{*}_{B,s}$ \\ \hline \hline
1 & 1 & 35.26 & 0.7071 & 26.56 & 801.7349 & 0.1250 \\ \hline 
1 & 2 & 35.26 & 0.7071 & 36.22 & 514.7336 & 0.1231 \\ \hline
1 & 3 & 41.41 & 0.6614 & 45.31 & 410.5778 & 0.2087 \\ \hline \hline
2 & 1 & 35.26 & 1.0607 & 26.56 & 1085.2 & 0.1875 \\ \hline
2 & 2 & 35.26 & 1.0607 & 36.22 & 696.7464 & 0.2747 \\ \hline
2 & 3 & 41.41 & 0.9922 & 45.31 & 555.7697 & 0.3130 \\ \hline \hline
3 & 1 & 35.26 & 1.2374 & 26.56 & 1185.4 & 0.2187 \\ \hline
3 & 2 & 35.26 & 1.2374 & 36.22 & 761.1108 & 0.3204 \\ \hline
3 & 3 & 41.41 & 1.1575 & 45.31 & 607.1179 & 0.3652 \\ \hline \hline
4 & 1 & 35.26 & 1.3258 & 26.56 & 1220.9 & 0.2343 \\ \hline
4 & 2 & 35.26 & 1.3258 & 36.22 & 783.8740 & 0.3433 \\ \hline
4 & 3 & 41.41 & 1.2402 & 45.31 & 625.2800 & 0.3913 \\ \hline \hline
5 & 1 & 35.26 & 1.3700 & 26.56 & 1233.4 & 0.2421 \\ \hline
5 & 2 & 35.26 & 1.3701 & 36.22 & 791.9251 & 0.3548 \\ \hline
5 & 3 & 41.41 & 1.2816 & 45.31 & 631.7049 & 0.4043 \\ \hline \hline
\end{tabular}
\caption{Numerical results of superstructures with different complexities $n$ and $p$ under yielding ($_Y$) and combined yielding and buckling constraints ($_B$), ($F = 1 \ N$; $L = 1 \ m$; steel bars and steel cables).}
\label{nr_super_nq}
\end{center}
\end{table}

The optimizations for yielding conducted for the superstructure only bridges over the domains (\ref{domain1_sup}) and (\ref{domain2_sup}) allow to find a global minimum ($q^*_Y \rightarrow  \infty, \ \alpha^*_Y \rightarrow 45 \ deg, \ \mu^*_Y \rightarrow 0.6427$) that matches the minimum founded starting from \emph{nominal} bridges and \emph{substructure} only bridges. As validation of the adopted numerical solution, the here found global minimum corresponds to the result illustrated in Fig. 2 by \cite{Michell1904}. 
It's interesting to note that the results under buckling constraints ($q^*_B \rightarrow  \infty, \ \alpha^*_B \rightarrow 90 \ deg, \ \mu^*_B \rightarrow 4.6151$) show that, differently for what obtained from the combined or the substructure bridges ($p^*_B = 1, \ \beta^*_B = 4.25 \ deg, \ \mu^*_B = 5.0574$), the optimal complexity $q$ is at infinite. Moreover, it is worth noting that increasing complexity $q$ allows a strong reduction of the mass, that is reducing from $\mu^*_B = 801.7357$ for $q = 1$ to $\mu^*_B = 4.6151$ for $q = 500$. Then, with the optimizations carried out in Sects. \ref{numerical_nominal}, \ref{numerical_sub} and \ref{numerical_super}, the case of a centrally loaded beam illustrated in Fig. 2 of \cite{Michell1904} has been extended to accomplish the buckling case.
Tables \ref{nr_sub_n1p}, \ref{nr_sub_np} for substructure bridges and Tables \ref{nr_super_n1q}, \ref{nr_super_nq} for superstructure bridges also show the total masses of cables ($\mu^*_{B,s}$) obtained under buckling for each optimized case. We show that, for the substructures, the total mass of the cables ($\mu^*_{B,s}$) is the most part of the total mass of the structure under buckling ($\mu^*_{B}$). For the case $n= q = p = 1$, eg., $\mu^*_{B,s}/\mu^*_{B} = 0.67$ for the substructure (see Table \ref{nr_sub_n1p}) while $\mu^*_{B,s}/\mu^*_{B} = 1.56x10^{-4}$ for the superstructure (see Table \ref{nr_super_n1q}). This makes clear that the substructure bridges under buckling work mainly with cables and the length and the forces (and then the mass) of the bars can be extremely reduced playing with the aspect angle $\beta$. Moreover, Table \ref{nr_super_n1q} shows that the global minimum for buckling for the superstructure ($q^*_B \rightarrow  \infty, \ \alpha^*_B \rightarrow 90 \ deg, \ \mu^*_B \rightarrow 4.6151$) corresponds to a maximum of the ratio $\mu^*_{B,s}/\mu^*_{B} = 0.085$ over the domain (\ref{domain1_sup}).

\section{Numerical results with deck} \label{results_deck}

In the present Section, we report numerical results of the tensegrity bridges including deck mass and joints mass. Taking into account the results obtained in the case without deck, we performed the numerical simulation only for buckling constraints, since it has been shown that this is the mode of failure in all cases. Moreover, the optimizations will be performed only for \emph{substructure} bridges (Sect. \ref{numerical_sub_deck}) and \emph{superstructure} bridges (Sect. \ref{numerical_super}) and not for \emph{nominal} bridges, since these optimizations bring to solutions keeping only substructure (for GC: to be double checked). The numerical results are presented in terms of $\mu^*_d$, $\mu^*_{B,S}$, $\mu^*_{B,tot}$, $\alpha^*_B$ and $\beta^*_B$ denoting respectively the mass of deck ($2^n \ m_d \ \sigma_s / (\rho_s \ F \ L)$), the mass of the bridge structure, the total minimal mass including bridge structure, deck and joints ($\mu^*_{B,S} + \mu^*_d + \mu^*_J$) and the optimal aspect angles. The total mass of joints $\mu^*_J$ is computed as the product between the number of joints ($n_n$) and a fixed joint factor ($\Omega$). The results are obtained numerically through the MatLab\textsuperscript \textregistered \ program written employing the algorithm illustrated in Sect. 3 of \cite{Skelton2014}. The optimization problems presented in Tables \ref{nr_sub_np_deck}, \ref{nr_sup_nq_deck} are solved assuming $L = 30 \ m$, $F = 450 \ kN$, deck mass computed as defined in (\ref{m_deck}), steel for bars and deck beams, Spectra\textsuperscript \textregistered \ for cables (refer to Table \ref{tab_materials} for the material properties; $\rho = 31.72$; $\eta = 1216.55$). In all the optimized cases, we set step increments of complexities $n$, $p$ and $q$ to $1$ and step increments of $0.01$ $deg$ for the aspect angles $\alpha$ and $\beta$.

\subsection{Substructures} \label{numerical_sub_deck}

In this Section, we show the results obtained for the optimizations of the \emph{substructure} bridges showed in Fig. \ref{fig:substructures_exemplaries} including deck and joints masses. First of all, we let parameters $n$, $p$, $\alpha$ and $\beta$ ranging in the following intervals:

\beq
n \in [1, 5],  \ \ \
p \in [1, 3],  \ \ \
\alpha \in (0, 90) \ \mbox{deg}, \ \ \
\beta \in (0, 90) \ \mbox{deg}.
\label{domain1_sub_deck}
\eeq

A first set of results are presented in Table \ref{nr_sub_np_deck} in which we didn't consider jet joint masses. The global minimum in the domain (\ref{domain1_sub_deck}) is obtained for complexities $n^*_B = 5$, $p^*_B = 1$ and for an aspect angle $\beta^*_B = 4.11 \ deg$, which corresponds to a total minimal mass $\mu^*_{B,tot} = 334.7613$ (Table \ref{nr_sub_np_deck}).

\begin{table}[tb]
\begin{center}
\begin{tabular}{cccccc} \hline \hline
$n$ & $p$ & $\mu^{*}_{d}$ & $\beta^{*}_{B}$ [deg] & $\mu^{*}_{B,S}$ & $\mu^{*}_{B,tot}$ \\ \hline \hline
1 & 1  & 6659.9 & 4.13 & 8.8585 & 6668.8 \\ \hline 
1 & 2  & 6659.9 & 3.69 & 9.9260 & 6669.9 \\ \hline
1 & 3  & 6659.9 & 3.47 & 10.5649 & 6670.5 \\ \hline \hline
2 & 1  & 2917.9 & 4.09 & 10.3025 & 2928.2 \\ \hline
2 & 2  & 2917.9 & 3.66 & 11.5443 & 2929.5 \\ \hline
2 & 3  & 2917.9 & 3.44 & 12.2875 & 2930.2 \\ \hline \hline
3 & 1  & 1364.3 & 4.09 & 10.5151 & 1347.8 \\ \hline
3 & 2  & 1364.3 & 3.66 & 11.7826 & 1376.1 \\ \hline
3 & 3  & 1364.3 & 3.43 & 12.5411 & 1376.9 \\ \hline \hline
4 & 1  & 659.5 & 4.10 & 10.5120 & 670.0878 \\ \hline
4 & 2  & 659.5 & 3.66 & 11.7791 & 671.3548 \\ \hline
4 & 3  & 659.5 & 3.44 & 12.5373 & 672.1131 \\ \hline \hline
5 & 1  & 324.28 & 4.11 & 10.4841 & 334.7613 \\ \hline
5 & 2  & 324.28 & 3.67 & 11.7459 & 336.0232 \\ \hline
5 & 3  & 324.28 & 3.45 & 12.5021 & 336.7793 \\ \hline \hline
\end{tabular}
\caption{Numerical results of substructures with deck for different complexities $n$ and $p$ under buckling constraints ($_B$), ($F = 450 \ kN$; $L = 30 \ m$; $w_d = 3 \ m$, steel bars and deck, Spectra\textsuperscript \textregistered - UHMWPE cables).}
\label{nr_sub_np_deck}
\end{center}
\end{table}

The results in Table \ref{nr_sub_np_deck} identify an optimal complexity $n$ lying on the boundary of the domain (\ref{domain1_sub_deck}). Then we have performed another optimization keeping $p = 1$ and increasing only complexity $n$. In this case, since the number of nodes given in (\ref{nnodes}) is exponentially increasing with $n$, the numerical simulation of such structure would be computationally heavy. For that reason, we made use of the analytical solution given in Theorem \ref{theo:mim_buckling_np1_deck}. In this case, we have also added the mass of joints considering increasing values of the joint factor $\Omega$ and the results are showed in Fig. \ref{fig:NR_deck_n_qp1}. The red curve reports the masses of substructure bridge only ($\mu^*_{B,S}$), the solid curve is the total mass without joints and the dashed and dotted curves include the joint masses. We obtained a finite complexity $n$ ranging between $11$ and $12$ considering joint masses.

\begin{figure}[hb]
\unitlength1cm
\begin{picture}(10,5)
\put(4,0){\psfig{figure=./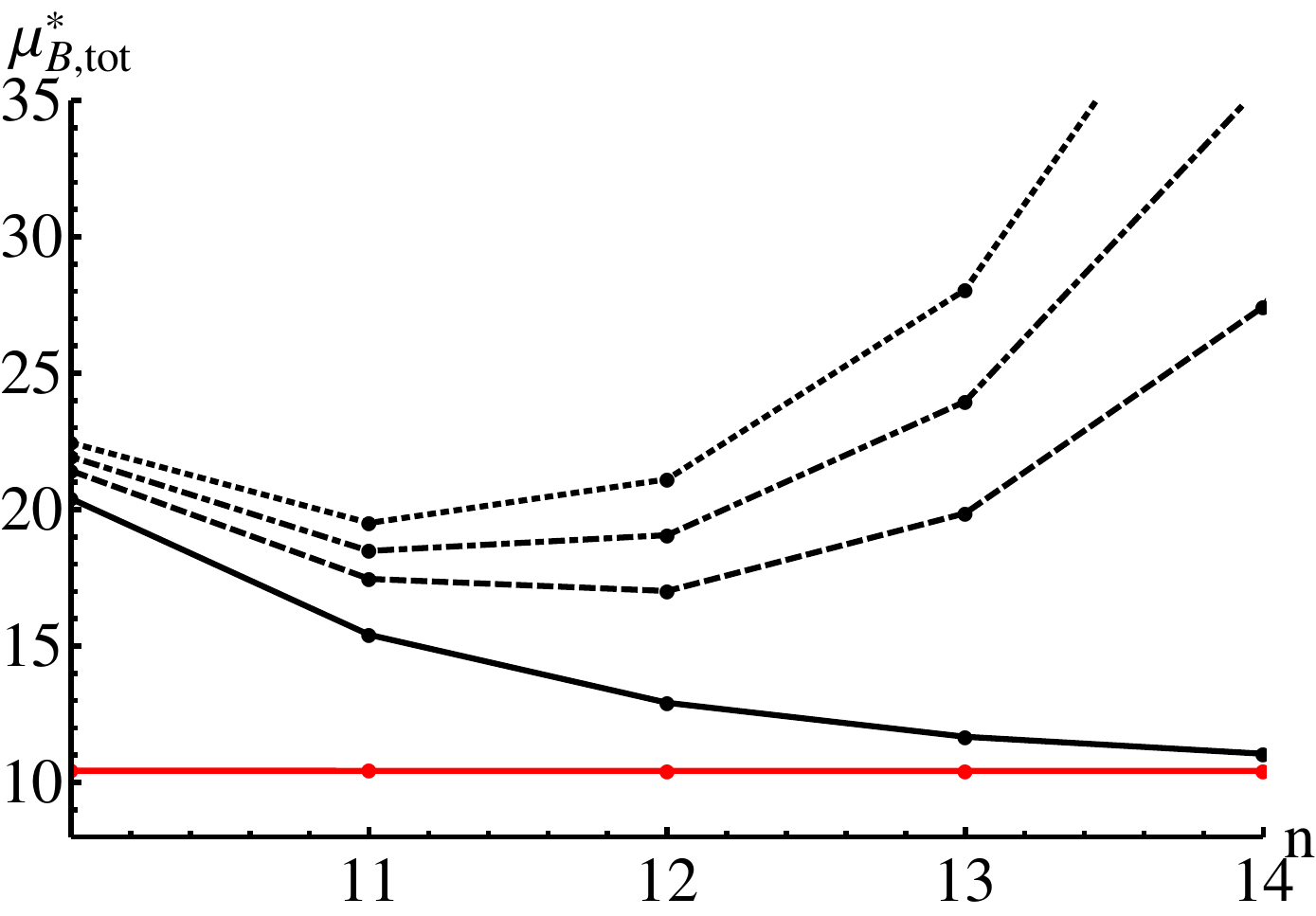,height=4.5cm}}
\put(10.5,1){\psfig{figure=./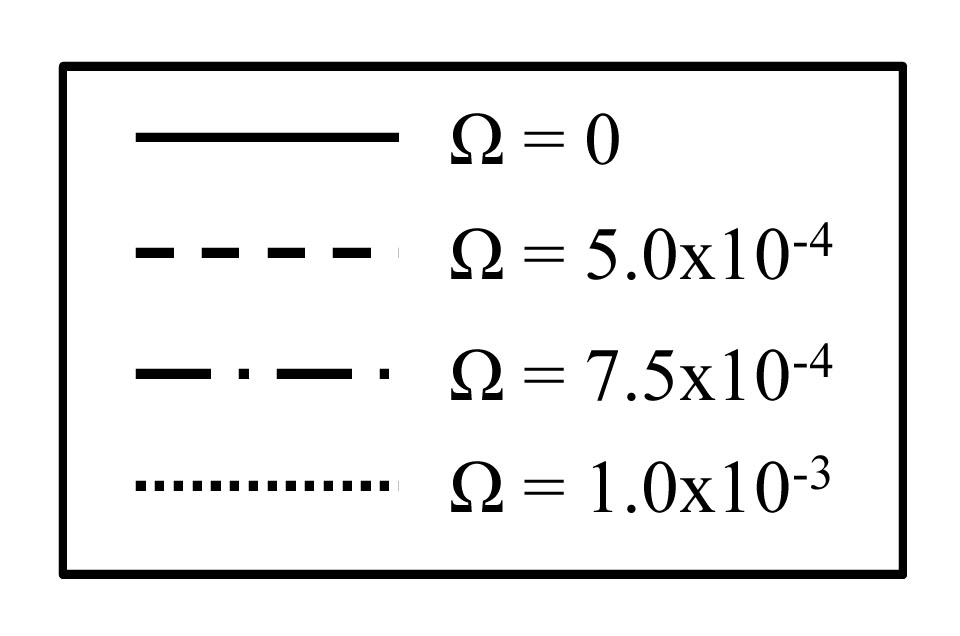,height=2cm}}
\end{picture}
\caption{Total masses ($\mu^*_{B,tot}$, black curves) for different values of the joint factor $\Omega$ and structural masses ($\mu^*_{B,S}$, red curves) under buckling constraints for $p = q = 1$ and different $n$.}
\label{fig:NR_deck_n_qp1}
\end{figure}

\subsection{Superstructures} \label{numerical_super}

In this Section, we show the results obtained for the optimizations of the \emph{superstructure} bridges showed in Fig. \ref{fig:superstructures_exemplaries} including deck and joints masses. First of all, we let parameters $n$, $q$, $\alpha$ and $\beta$ ranging in the following intervals:

\beq
n \in [1, 5],  \ \ \
q \in [1, 3],  \ \ \
\alpha \in (0, 90) \ \mbox{deg}, \ \ \
\beta \in (0, 90) \ \mbox{deg}.
\label{domain1_sup_deck}
\eeq

A first set of results are presented in Table \ref{nr_sup_nq_deck} in which we didn't consider jet joint masses. The global minimum in the domain (\ref{domain1_sup_deck}) is obtained for complexities $n^*_B = 5$, $p^*_B = 3$ and for an aspect angle $\alpha^*_B = 45.31 \ deg$, which corresponds to a total minimal mass $\mu^*_{B,tot} = 1235.3$ (Table \ref{nr_sup_nq_deck}).

\begin{table}[tb]
\begin{center}
\begin{tabular}{cccccc} \hline \hline
$n$ & $q$ & $\mu^{*}_{d}$ & $\alpha^{*}_{B}$ [deg] & $\mu^{*}_{B,S}$ & $\mu^{*}_{B,tot}$ \\ \hline \hline
1 & 1  & 6659.9 & 26.56 & 1484.5 & 8144.4 \\ \hline 
1 & 2  & 6659.9 & 36.22 & 953.0505 & 7613.0 \\ \hline
1 & 3  & 6659.9 & 45.31 & 760.1925 & 7420.1 \\ \hline \hline
2 & 1  & 2917.9 & 26.56 & 1760.7 & 4678.6 \\ \hline
2 & 2  & 2917.9 & 36.22 & 1130.4 & 4048.3 \\ \hline
2 & 3  & 2917.9 & 45.31 & 901.6280 & 3819.6 \\ \hline \hline
3 & 1  & 1364.3 & 26.56 & 1798.5 & 3162.8 \\ \hline
3 & 2  & 1364.3 & 36.22 & 1154.7 & 2519.0 \\ \hline
3 & 3  & 1364.3 & 45.31 & 921.0011 & 2285.3 \\ \hline \hline
4 & 1  & 659.5 &  26.56& 1790.9 & 2450.5 \\ \hline
4 & 2  & 659.5 & 36.22 & 1149.8 & 1809.4 \\ \hline
4 & 3  & 659.5 & 45.31 & 917.1257 & 1576.7 \\ \hline \hline
5 & 1  & 324.25 & 26.56 & 1779.1 & 2103.4 \\ \hline
5 & 2  & 324.25 & 36.22 & 1142.2 & 1466.5 \\ \hline
5 & 3  & 324.25 & 45.31 & 911.0597 & 1235.3 \\ \hline \hline
\end{tabular}
\caption{Numerical results of superstructures with deck for different complexities $n$ and $q$ under buckling constraints ($_B$), ($F = 450 \ kN$; $L = 30 \ m$; $w_d = 3 \ m$, steel bars and deck, Spectra\textsuperscript \textregistered - UHMWPE cables).}
\label{nr_sup_nq_deck}
\end{center}
\end{table}

We then fix parameter $n=5$, and let $q$, $\alpha$ and $\beta$ ranging in the following intervals:

\beq
q \in [1, 50],  \ \ \
\alpha \in (0, 90) \ \mbox{deg}, \ \ \
\beta \in (0, 90) \ \mbox{deg}.
\label{domain2_sup_deck}
\eeq

Results of such optimizations are reported in Table \ref{nr_sup_n5q_deck}, in which we didn't consider jet joint masses. The global minimum in the domain (\ref{domain2_sup_deck}) is obtained for complexities $n^*_B = 5$, $p^*_B = 50$ and for an aspect angle $\alpha^*_B = 86.21 \ deg$, which corresponds to a total minimal mass $\mu^*_{B,tot} = 416.8388$ (Table \ref{nr_sup_n5q_deck}).

\begin{table}[tb]
\begin{center}
\begin{tabular}{cccccc} \hline \hline
$n$ & $q$ & $\mu^{*}_{d}$ & $\alpha^{*}_{B}$ [deg] & $\mu^{*}_{B,S}$ & $\mu^{*}_{B,tot}$ \\ \hline \hline
5 & 1  & 324.28 & 26.56 & 1779.1 & 2103.4 \\ \hline 
5 & 2  & 324.28 & 36.22 & 1142.2 & 1466.5 \\ \hline
5 & 3  & 324.28 & 45.31 & 911.0597 & 1235.3 \\ \hline 
5 & 4  & 324.28 & 51.71 & 769.6990 & 1093.9 \\ \hline
5 & 5  & 324.28 & 56.64 & 668.5721 & 992.8493 \\ \hline
5 & 10  & 324.28 & 70.64 & 402.4183 & 726.6955 \\ \hline 
5 & 15  & 324.28 & 76.86 & 285.4356 & 609.7128 \\ \hline
5 & 20  & 324.28 & 80.19 & 220.3840 & 544.6612 \\ \hline
5 & 25  & 324.28 & 82.21 & 179.2518 & 503.5291 \\ \hline 
5 & 30  & 324.28 & 83.56 & 150.9876 & 475.2646 \\ \hline 
5 & 35  & 324.28 & 84.52 & 130.4024 & 454.6796 \\ \hline 
5 & 40  & 324.28 & 85.23 & 114.7554 & 439.0326 \\ \hline 
5 & 45  & 324.28 & 85.78 & 102.4660 & 426.7433 \\ \hline 
5 & 50  & 324.28 & 86.21 & 92.5615 & 416.8388 \\ \hline \hline
\end{tabular}
\caption{Numerical results of superstructures with deck for $n = 5$ and different complexities $q$ under buckling constraints ($_B$), ($F = 450 \ kN$; $L = 30 \ m$; $w_d = 3 \ m$, steel bars and deck, Spectra\textsuperscript \textregistered - UHMWPE cables).}
\label{nr_sup_n5q_deck}
\end{center}
\end{table}

The results in Table \ref{nr_sup_n5q_deck} identify an optimal complexity $q$ lying on the boundary of the domain (\ref{domain2_sup_deck}). It is worth noting that the above solution is without joint masses. Then, we have performed another optimization over the same domain (\ref{domain2_sup_deck}) but considering joint masses with increasing joint factor $\Omega$ and their results are showed in Fig. \ref{fig:NR_deck_n5_q}. The red curve reports the masses of superstructure bridge ($\mu^*_{B,S}$), the solid curve is the total mass without joints and the dashed and dotted curves include the joint masses. We obtained a finite complexity $q$ ranging between $10$ and $20$ considering joint masses. It must be noticed that, however, the minimum mass obtained with superstructure is bigger then the minimum mass obtained with the substructure, that has been confirmed as the most convenient bridge.

\begin{figure}[hb]
\unitlength1cm
\begin{picture}(10,5)
\put(2.5,0){\psfig{figure=./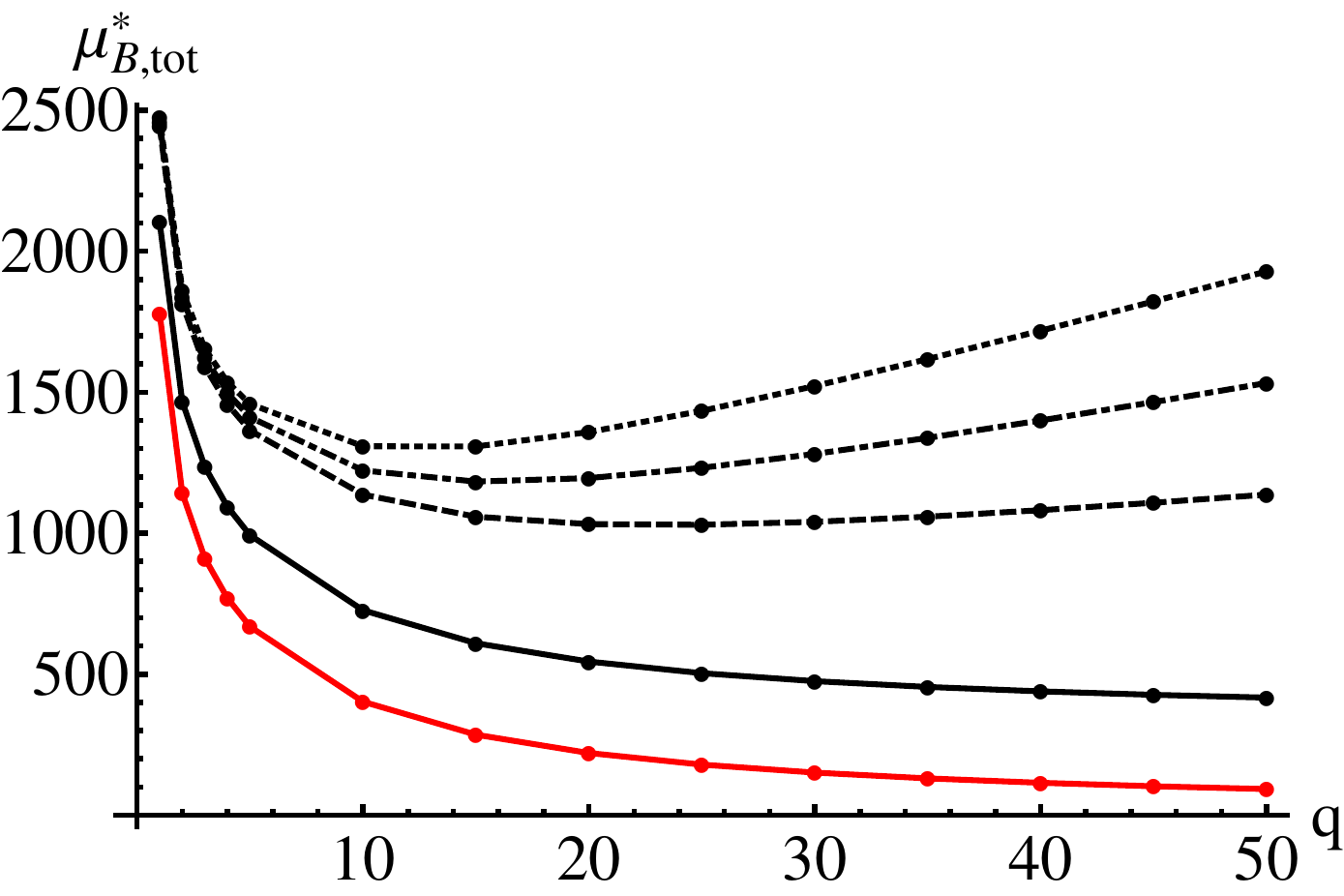,height=5cm}}
\put(10,1.5){\psfig{figure=./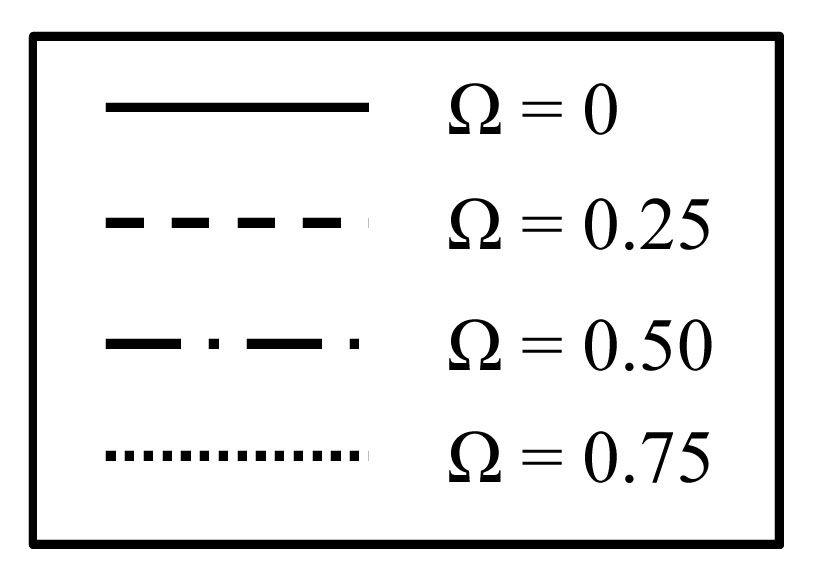,height=2cm}}
\end{picture}
\caption{Total masses ($\mu^*_{B,tot}$, black curves) for different values of the joint factor $\Omega$ and structural masses ($\mu^*_{B,s}$, red curve) under buckling constraints for $n = 5$ vs $q$ for superstructure.}
\label{fig:NR_deck_n5_q}
\end{figure}

\section{Concluding remarks}
\label{conclusions}

This study provides closed form solutions (analytical expressions) for minimal mass tensegrity bridge designs. The forces, locations, and number of members are optimized to minimize mass subject to both buckling and yielding constraints for a planar structure with fixed-hinge/fixed-hinge boundary conditions. 

We designed bridges from the elementary consideration of i) yielding constraints, ii) buckling constraints, iii) without deck mass, iv) with deck mass, v) \emph{superstructure} only, vi) \emph{substructure} only, vii) without joint mass, viii) with joint mass. 

We optimize the complexity of the structure, where structural complexity as the number of members in the design. This can be related to $3$ parameters $(n,p,q)$, where $2^n$ is the number of deck sections along the span; $p$ is the number of compressive members (bars) reaching from the span center to the \emph{substructure}; and $q$ is the number of cables reaching from the span center to the \emph{superstructure}. Hence we refer to $(n,p,q)$ as the three different kinds of \emph{complexities} of the structure. We used a tensegrity structural paradigm which allowed these several kinds of complexities. 
The complexity $n$ is determined by a self-similar law to fill the space of the bridge. As the number of self-similar iterations go to infinity we get a tensegrity fractal topology. However, the number of self-similar iterations $n$ and the complexities $p$ and $q$ required to minimize mass, under different circumstances within the set of $8$ possibilities i),...,viii) listed above, go to an optimal number between $1$ and infinity, where an infinite complexity fills the define space with a continuum.

First we optimized structures under yielding constraints for the simply-supported case ($n=1$) with no deck. The number of self-similar iterations $n$ of the given tensegrity module goes to infinity as the mass approaches the minimum. Our result produces the same topology as \cite{Michell1904}, where there is a compressive member at $45 \ deg$ attached at each boundary, connecting to a $1/4$ pie shaped continuum material piece at the center. The bottom half of the bridge (the \emph{substructure}) is the \emph{dual} of the \emph{superstructure} (\emph{dual} meaning flip the structure about the horizontal axis and replace all tension members with compression members and all tension members with compressive members). We showed that the top half of this structure is the optimal topology for bridge designs which do not allow any \emph{substructure}, and conversely that the bottom half of this structure is the optimal topology for bridges allowing no \emph{superstructure}.

Secondly, we optimized the simply supported bridge ($n=1$) under buckling constraints with no deck. For the \emph{superstructure} design we proved that the minimal mass is achieved at high values of $q$, approaching a continuum (where the shape of the structure is a half disk). It is interesting that this shape (designed under buckling constraints) is the same as the result of \cite{Michell1904}, which was derived under \emph{yielding} constraints and different boundary conditions (our conditions were hinge/hinge and his were hinge/roller).
We also optimized the \emph{substructure} bridge (without deck) to find an optimal complexity $(n,p,q)=(1,1,0)$. This \emph{substructure} bridge has less mass than the \emph{superstructure} bridge except for extremely high complexity ($q>400$). At $q=3000$, the \emph{superstructure} has one fifth the mass of the \emph{substructure} design.
Thirdly, we consider adding a deck to the bridge, since this is the only practical possibility to carry distributed loads. Under \emph{yield} constraints the minimal mass bridge requires infinite complexity $n$ (infinite self-similar iterations of the tensegrity module). The bridge has \emph{superstructure} and \emph{substructure} that are duals of each other. The angle of departure from the boundaries is $35.26 \ deg$ (as opposed to $45 \ deg$ for the no deck mass discussed above). Under buckling constraints the structure $(n,p,q)=(n,1,1)$ has minimal mass at $n=\infty$. The \emph{superstructure} has a departure angle (from the boundary) of approximatively $26.56 \ deg$ as opposed to larger angles for yielding designs and no-deck designs. The \emph{substructure} under buckling constraints has an even more streamlined profile with departure angle approximatively of $5.18 \ deg$. Furthermore the mass of a \emph{substructure} design is much smaller that the mass of a \emph{superstructure} design. 

In all of the design cases studied, we conclude that the infinite complexity \emph{substructure} bridge is the solution which minimizes the sum of deck mass and structural mass. 

Finally, we consider the impact of assigning a mass penalty to the number of required joints. We suppose that the cost per $kg$ of compressive members is $\$_b$, and that the cost per $kg$ of fabricated joints is $\$_j$. The ratio $\Omega=\$_b/\$_j$ is used as a weighting factor to add joint mass to member mass and this sum is minimized. The total minimal mass is always at a finite complexity $n<\infty$ and $p=q=1$. Again, buckling is always the mode of failure in our study, leading to the conclusion that with deck mass and joint mass, this study describes the optimal complexity to obtain a minimal mass bridge, and this bridge is not a continuum (as Michell produced under yield assumptions), but, has finite complexity $n$. The optimal complexity $n$ is given in terms of fabrication costs and material properties.

The tensegrity paradigm used for bridges in this study allows the marriage of composite structures within the design. Our tensegrity approach creates a network of tension and compressive members distributed throughout the system at many different scales (using tensegrity fractals generates many different scales). Furthermore, these tension and compression members can simultaneously serve multiple functions, as load-carrying members of the structure, and as sensing and actuating functions. Moreover, the choice of materials for each member of the network can form a system with special electrical properties, special acoustic properties, special mechanical properties (stiffness, etc). The mathematical tools of this study can be used therefore to design metamaterials and composite materials with unusual and very special properties not available with normal design methods.

\section{Appendix: Numerical results for the bridge constrained with a fixed hinge and rolling hinge ($HR$)}
\label{App_A}

This Appendix reports some numerical results for the cases of \emph{nominal}, \emph{substructure} and \emph{superstructure} bridges illustrated in the study and constrained with a fixed hinge at one end and a rolling hinge at the other end. Both the optimization under yielding and under buckling constraints are illustrated. Table \ref{nr_nom_HR} shows the results obtained for the \emph{nominal} bridges, Table \ref{nr_sub_HR} shows the results obtained for the \emph{substructure} and Table \ref{nr_sup_HR} shows the results obtained for the \emph{superstructure}. For the $HR$ case, the deck elements play an important rule stabilizing the structure. For $HR$ constraints, the so-called \emph{bi-directional} elements must me used since deck elements can be contemporary cables or bars (see \cite{Skelton2013}, \cite{Skelton2014}).
 
\begin{table}[tb]
\begin{center}
\begin{tabular}{cccccccc} \hline \hline
$n$ & $p = q$ & $\alpha^{*}_{Y}$ [deg] & $\beta^{*}_{Y}$ [deg] & $\mu^{*}_{Y}$ & $\alpha^{*}_{B}$ [deg] & $\beta^{*}_{B}$ [deg] & $\mu^{*}_{B}$ \\ \hline \hline
1 & 1 & 35.26 & 35.26 & 0.707 & 16.31 & 29.37 & 592.154 \\ \hline 
1 & 2 & 35.26 & 35.26 & 0.707 & 33.46 & 21.25 & 447.577 \\ \hline
1 & 3 & 41.41 & 41.41 & 0.661 & 43.42 & 15.75 & 380.153 \\ \hline
1 & 4 & 43.23 & 43.23 & 0.651 & 50.35 & 12.14 & 330.506 \\ \hline
1 & 5 & 43.96 & 43.96 & 0.648 & 55.61 & 9.78 & 291.635 \\ \hline
1 & 6 & 44.32 & 44.32 & 0.646 & 59.79 & 8.09 & 260.365 \\ \hline
1 & 7 & 44.52 & 44.52 & 0.645 & 63.18 & 6.86 & 234.711 \\ \hline
1 & 8 & 44.65 & 44.65 & 0.644 & 65.98 & 5.93 & 213.346 \\ \hline
1 & 9 & 44.73 & 44.73 & 0.644 & 68.32 & 5.17 & 195.321 \\ \hline
1 & 10 & 44.79 & 44.78 & 0.644 & 70.29 & 4.57 & 179.943 \\ \hline
1 & 15 & 44.91 & 44.91 & 0.643 & 76.71 & 2.70 & 128.319 \\ \hline
1 & 20 & 44.96 & 44.98 & 0.643 & 80.12 & 1.82 & 99.240 \\ \hline
1 & 25 & 44.96 & 44.98 & 0.643 & 82.18 & 1.37 & 80.763 \\ \hline
1 & 30 & 44.97 & 44.99 & 0.643 & 83.54 & 1.01 & 68.061 \\ \hline
1 & 35 & 44.97 & 44.99 & 0.643 & 84.51 & 0.90 & 58.799 \\ \hline 
1 & 40 & 44.98 & 45.00 & 0.643 & 85.22 & 0.73 & 51.754 \\ \hline 
1 & 45 & 44.98 & 45.00 & 0.643 & 85.78 & 0.61 & 46.219 \\ \hline 
1 & 50 & 44.98 & 45.00 & 0.643 & 86.22 & 0.55 & 41.758 \\ \hline \hline
\end{tabular}
\caption{Numerical results of nominal bridges constrained with a fixed hinge and a rolling hinge ($HR$) and with complexities $n = 1$ and different $p = q$ under yielding ($_Y$) and combined yielding and buckling constraints ($_B$), ($F = 1 \ N$; $L = 1 \ m$; steel bars and steel cables).}
\label{nr_nom_HR}
\end{center}
\end{table}

\begin{table}[tb]
\begin{center}
\begin{tabular}{c||cccc||cccc} \hline \hline
 & $n = 1$ & &  &  & $n = 2$ &  &  & \\ \hline \hline
$p$ & $\beta^{*}_{Y}$ [deg] & $\mu^{*}_{Y}$ & $\beta^{*}_{B}$ [deg] & $\mu^{*}_{B}$ & $\beta^{*}_{Y}$ [deg] &  $\mu^{*}_{Y}$ & $\beta^{*}_{B}$ [deg] & $\mu^{*}_{B}$ \\ \hline \hline
1 & 45.00 & 1.000 & 33.42 & 660.522 & 45.00 & 1.500 & 25.69 & 474.588 \\ \hline
2 & 45.00 & 1.000 & 36.56 & 668.307 & 45.00 & 1.500 & 25.25 & 494.442 \\ \hline
3 & 45.25 & 0.912 & 35.28 & 691.574 & 60.00 & 1.299 & 23.73 & 513.060 \\ \hline
4 & 67.50 & 0.828 & 32.88 & 716.173 & 67.50 & 1.243 & 22.30 & 529.458 \\ \hline
5 & 72.00 & 0.812 & 31.02 & 736.332 & 72.00 & 1.218 & 21.21 & 542.791 \\ \hline
6 & 75.00 & 0.804 & 29.59 & 753.097 & 75.00 & 1.206 & 20.36 & 553.928 \\ \hline
7 & 77.14 & 0.799 & 28.45 & 767.398 & 77.14 & 1.198 & 19.67 & 563.481 \\ \hline
8 & 78.75 & 0.796 & 27.51 & 779.861 & 78.75 & 1.193 & 19.09 & 571.847 \\ \hline
9 & 80.00 & 0.793 & 26.72 & 790.909 & 80.00 & 1.190 & 18.60 & 579.295 \\ \hline
10 & 81.00 & 0.792 & 26.04 & 800.834 & 81.00 & 1.188 & 18.17 & 586.010 \\ \hline
15 & 84.00 & 0.788 & 23.63 & 839.445 & 84.00 & 1.182 & 16.63 & 612.330 \\ \hline
20 & 85.50 & 0.787 & 22.11 & 867.300 & 85.50 & 1.181 & 15.64 & 631.487 \\ \hline
25 & 86.40 & 0.786 & 21.01 & 889.214 & 86.40 & 1.180 & 14.91 & 646.642 \\ \hline
30 & 87.00 & 0.786 & 20.17 & 907.341 & 87.00 & 1.179 & 14.34 & 659.224 \\ \hline
35 & 87.43 & 0.786 & 19.49 & 922.833 & 87.43 & 1.179 & 13.88 & 670.010 \\ \hline
40 & 87.75 & 0.786 & 18.92 & 936.383 & 87.75 & 1.179 & 13.50 & 679.464 \\ \hline
45 & 88.00 & 0.786 & 18.44 & 948.441 & 88.00 & 1.179 & 13.17 & 687.893 \\ \hline
50 & 88.18 & 0.786 & 18.02 & 959.313 & 88.20 & 1.178 & 12.88 & 695.504 \\ \hline \hline
\end{tabular}
\caption{Numerical results of substructures constrained with a fixed hinge and a rolling hinge ($HR$) with different complexities $n$ and $p$ under yielding ($_Y$) and combined yielding and buckling constraints ($_B$), ($F = 1 \ N$; $L = 1 \ m$; steel bars and steel cables).}
\label{nr_sub_HR}
\end{center}
\end{table}

\begin{table}[tb]
\begin{center}
\begin{tabular}{c||cccc||cccc} \hline \hline
 & $n = 1$ & &  &  & $n = 2$ &  &  & \\ \hline \hline
$q$ & $\alpha^{*}_{Y}$ [deg] & $\mu^{*}_{Y}$ & $\alpha^{*}_{B}$ [deg] & $\mu^{*}_{B}$ & $\alpha^{*}_{Y}$ [deg] &  $\mu^{*}_{Y}$ & $\alpha^{*}_{B}$ [deg] & $\mu^{*}_{B}$ \\ \hline \hline
1 & 45.00 & 1.000 & 26.58 & 802.235 & 45.00 & 1.500 & 26.58 & 1085.960 \\ \hline
2 & 45.00 & 1.000 & 36.24 & 515.075 & 45.00	& 1.500 & 36.24 & 697.258 \\ \hline
3 & 60.00 & 0.866 & 45.32 & 410.825 & 60.00	& 1.299 & 45.32 & 556.141 \\ \hline
4 & 67.50 & 0.828 & 51.72 & 347.048 & 67.50	& 1.243 & 51.72 & 469.811 \\ \hline
5 & 72.00 & 0.812 & 56.65 & 301.473 & 72.00	& 1.218 & 56.65 & 408.120 \\ \hline
6 & 75.00 & 0.804 & 60.60 & 266.610 & 75.00	& 1.206 & 60.60 & 360.931 \\ \hline
7 & 77.14 & 0.799 & 63.82 & 238.878 & 77.14	& 1.198 & 63.83 & 323.393 \\ \hline
8 & 78.75 & 0.796 & 66.50 & 216.235 & 78.75	& 1.193 & 66.50 & 292.744 \\ \hline
9 & 80.00 & 0.794 & 68.75 & 197.388 & 80.00	& 1.190 & 68.75 & 267.234 \\ \hline
10 & 81.00 &	0.792 & 70.65 & 181.463 & 81.00 & 1.188 & 70.65 & 245.678 \\ \hline
15 & 84.00 & 0.788 & 76.88 & 128.719 & 84.00 & 1.182 & 76.88 & 174.286 \\ \hline
20 & 85.50 & 0.787 & 80.21 & 99.391 & 85.50	& 1.181 & 80.21 & 134.588 \\ \hline
25 & 86.40 & 0.786 & 82.23 & 80.847 & 86.40	& 1.180 & 82.23 & 109.488 \\ \hline
30 & 87.00 & 0.786 & 83.58 & 68.104 & 87.00	& 1.179 & 83.58 & 92.240 \\ \hline
35 & 87.43 & 0.786 & 84.53 & 58.824 &	 87.43	& 1.179 & 84.53 & 79.679 \\ \hline
40 & 87.75 & 0.786 & 85.24 & 51.770 & 87.75	& 1.179 & 85.24 & 70.131 \\ \hline
45 & 88.00 & 0.786 & 85.79 & 46.223 & 88.00	& 1.179 & 85.79 & 62.632 \\ \hline
50 & 88.18 & 0.786 & 86.23 & 41.765 & 88.20	& 1.178 & 86.23 & 56.588 \\ \hline \hline
\end{tabular}
\caption{Numerical results of superstructures constrained with a fixed hinge and a rolling hinge ($HR$) with different complexities $n$ and $q$ under yielding ($_Y$) and combined yielding and buckling constraints ($_B$), ($F = 1 \ N$; $L = 1 \ m$; steel bars and steel cables).}
\label{nr_sup_HR}
\end{center}
\end{table}

\section{References}

\small{ 
\bibliographystyle{apalike}

\end{document}